\documentclass[11pt]{article}

\newif\ifdraftversion
\draftversionfalse    % final: hide the to-do list

\usepackage[utf8]{inputenc}
\usepackage[letterpaper,margin=1in]{geometry}   % 6.5in text block, as in sample.pdf
\usepackage{ifthen}
\usepackage{enumerate}
\usepackage[shortlabels]{enumitem}
\usepackage{color}
\usepackage{array,float}
\usepackage{url}
\usepackage{amsfonts,amsmath,amssymb,mathtools,graphicx}
\usepackage{amsbsy}
\usepackage{bm}
\usepackage{pgfplots}
\usepgfplotslibrary{fillbetween}
\pgfplotsset{compat=1.16}
\usepackage{tikz}
\usetikzlibrary{shapes,arrows,arrows.meta,patterns,decorations.markings,calc,positioning}
\usepackage{footnote}
\usepackage{sidecap}
\usepackage{soul}
\usepackage{comment}
\makesavenoteenv{tabular}
\makesavenoteenv{table}
\usepackage[breakable,skins]{tcolorbox}
\usepackage{csquotes}
\usepackage[authoryear,round]{natbib}
\usepackage{ragged2e}
\usepackage{setspace}
\usepackage{amsthm}
\usepackage{thmtools}
\usepackage{thm-restate}
\usepackage{xurl}

\makeatletter
\@ifundefined{newcounteralias}{}{%
  \renewcommand\thmt@autorefsetup{%
    \@xa\def\csname\thmt@envname autorefname\@xa\endcsname\@xa{\thmt@thmname}%
  }%
}
\makeatother

\usepackage{algorithm}
\usepackage[noend]{algpseudocode}   % [noend]: suppress \EndWhile/\EndIf lines (as in sample.pdf)
\allowdisplaybreaks

\graphicspath{{tex-old/}{tex-new/}{./}}

\definecolor{myLightGray}{gray}{0.97}
\definecolor{cornellred}{rgb}{0.7, 0.11, 0.11}
\definecolor{maroon}{rgb}{0.52, 0, 0}
\definecolor{dgreen}{rgb}{0.0, 0.5, 0.0}
\definecolor{ballblue}{rgb}{0.13, 0.67, 0.8}
\definecolor{royalblue(web)}{rgb}{0.25, 0.41, 0.88}
\definecolor{bleudefrance}{rgb}{0.19, 0.55, 0.91}
\definecolor{royalazure}{rgb}{0.0, 0.22, 0.66}

\theoremstyle{plain}
\newtheorem{theorem}{Theorem}[section]
\newtheorem*{theorem*}{Theorem}
\newtheorem{definition}{Definition}[section]

\newtheorem{fact}{Fact}[section]
\newtheorem{proposition}[theorem]{Proposition}

\newtheorem{lemma}[theorem]{Lemma}

\newtheorem{corollary}[theorem]{Corollary}

\newtheorem{example}[theorem]{Example}

\newenvironment{proofof}[1]{\begin{proof}[Proof of #1]}{\end{proof}}

\newenvironment{APPENDIX}[1]{\par\appendix}{\par}

\newcommand{\SingleSpacedXI}{\singlespacing}

\makeatletter
\gdef\@thetitle{}\gdef\@therunningtitle{}\gdef\@therunningauthor{}
\newcommand{\TITLE}[1]{\gdef\@thetitle{#1}\title{#1}}
\newcommand{\RUNTITLE}[1]{\gdef\@therunningtitle{#1}}
\newcommand{\RUNAUTHOR}[1]{\gdef\@therunningauthor{#1}}
\long\gdef\@articleauthors{}
\newcommand{\ARTICLEAUTHORS}[1]{\long\gdef\@articleauthors{#1}}
\long\gdef\@theabstract{}
\newcommand{\ABSTRACT}[1]{\long\gdef\@theabstract{#1}}

\newcounter{authornum}
\newcommand{\AUTHOR}[1]{%
  \ifnum\value{authornum}>0
    \ifhmode\ifodd\value{authornum}\hspace{0.03\textwidth}\else\\[1.6em]\fi\fi
  \fi
  \stepcounter{authornum}%
  \begin{minipage}[t]{0.46\textwidth}\centering
    {\large #1}\\[0.15em]%
}
\newcommand{\AFF}[1]{#1\end{minipage}}
\newcommand{\EMAIL}[1]{\\[0.2em]\texttt{#1}}

\renewcommand{\maketitle}{%
  \thispagestyle{plain}%
  \begin{center}%
    {\fontsize{17}{21}\selectfont\@thetitle\par}%
    \vspace{2.3em}%
    \setcounter{authornum}{0}%
    \@articleauthors\par
  \end{center}%
  \vspace{2.6em}%
  \begin{center}{\bfseries\large Abstract}\end{center}%
  \begin{quote}\@theabstract\end{quote}%
  \vspace{1.2em}%
}
\makeatother

\newtcolorbox{lpbox}[1]{%
  enhanced, colback=myLightGray, colbacktitle=myLightGray,
  colframe=myLightGray, boxrule=0pt, frame hidden,
  coltitle=black, fonttitle=\bfseries, arc=1pt,
  left=8pt, right=8pt, top=5pt, bottom=5pt,
  halign title=center, title={#1}}

\makeatletter
\renewenvironment*{displayquote}
  {\begingroup
   \setlength{\leftmargini}{0.25cm}%
   \linespread{1.3}\selectfont
   \csq@getcargs{\csq@bdquote{}{}}}
  {\csq@edquote\endgroup}
\makeatother

\tikzset{
  > = stealth,
  midarrow/.style={
    line width=1.2pt,
    decoration={markings, mark=at position 0.5 with {\arrow[scale=1.5]{stealth}}},
    postaction={decorate}
  }
}

\newcommand{\E}[1]{\mathbb{E}\left[ #1 \right]}
\newcommand{\PP}[1]{\mathbb{P}\left( #1 \right)}

\newcommand{\offopt}[0]{\textup{OMN}}
\newcommand{\offoptt}[0]{\textup{OMN}(t)}
\newcommand{\offrbrel}[0]{\textup{LP}^{\textup{OMN}}}

\newcommand{\plp}[0]{\textup{LP}^{\textup{POL}}}
\newcommand{\plpm}[0]{\textup{LP}^{\textup{POL}}(\mathcal{M})}
\newcommand{\plpt}[0]{\textup{LP}^{\textup{POL}}(\mathcal{T \times T})}

\newcommand{\plblp}[0]{{\textup{LP}^{\textup{LB}}}}
\newcommand{\dlblp}[0]{{\textup{DLP}^{\textup{LB}}}}

\newcommand{\dlplong}[0]{\textup{DLP}^{\textup{POL}}(\m{T} \times \m{T})}
\newcommand{\dlpm}[0]{\textup{DLP}^{\textup{POL}}(\m{M})}

\newcommand{\bx}{{\bf x}}
\newcommand{\bn}{{\bf n}}

\newcommand{\m}[1]{\mathcal{#1}}
\newcommand{\prefixes}[1]{\text{Prefixes}(#1)}

\newif\ifshowcoauthorcomments
\showcoauthorcommentstrue   % <-- \showcoauthorcommentsfalse hides all comments
\ifshowcoauthorcomments
  \newcommand{\nb}[3]{{\colorbox{#2}{\bfseries\sffamily\scriptsize\textcolor{white}{#1}}}%
   {\textcolor{#2}{\sf\small\textit{#3}}}}
  \newcommand{\na}[1]{\nb{Nick}{blue}{#1}}
  \newcommand{\fs}[1]{\nb{Felipe}{orange}{#1}}
  \newcommand{\pn}[1]{\nb{Pranav}{dgreen}{#1}}
  \newcommand{\rn}[1]{\nb{Rad}{royalazure}{#1}}
  \newcommand{\todo}[1]{\nb{TODO}{red}{#1}}
  
\else
  \newcommand{\nb}[3]{}
  \newcommand{\na}[1]{}
  \newcommand{\fs}[1]{}
  \newcommand{\pn}[1]{}
  \newcommand{\rn}[1]{}
  \newcommand{\todo}[1]{}
  
\fi

\newcommand{\revcolor}[1]{{\color{maroon}#1}}
\usepackage{pdfcolfoot} % separate color stack for footnotes: prevents \revcolor from leaking into the main text when a colored footnote breaks across pages

\newcommand{\inst}{\mathcal{I}}                  % problem instance  I = {T,lambda,mu,r}
\newcommand{\types}{\mathcal{T}}                 % set of agent types
\newcommand{\TT}{\mathcal{T}\times\mathcal{T}}   % all ordered type pairs T x T
\newcommand{\agentset}{\mathcal{A}}              % set of agents  (use \agentset(t))
\newcommand{\hist}{\mathcal{H}}                  % history          (use \hist(t))
\newcommand{\graph}{G}                           % compatibility graph (use \graph(t))
\newcommand{\reals}{\mathbb{R}}
\newcommand{\Rnn}{\mathbb{R}_{\ge 0}}            % non-negative reals

\newcommand{\arate}[1]{\lambda_{#1}}             % arrival rate of type #1
\newcommand{\drate}[1]{\mu_{#1}}                 % abandonment (departure) rate of type #1
\newcommand{\rew}[2]{r_{#1#2}}                   % reward of an (#1,#2) match
\newcommand{\atype}[1]{i_{#1}}                   % type of agent #1
\newcommand{\atime}[1]{t_{#1}}                   % arrival time of agent #1
\newcommand{\soj}[1]{s_{#1}}                     % maximum sojourn time of agent #1
\newcommand{\dtime}[1]{d_{#1}}                   % unmatched-departure time of agent #1
\newcommand{\pol}{\pi}                           % a (generic) policy
\newcommand{\porder}[1]{\succ_{#1}}              % the priority order of type #1 (as a relation)
\newcommand{\val}[1]{V(#1)}          % steady-state reward rate of policy #1
\newcommand{\valt}[1]{V(#1,t)}       % reward rate of policy #1 by time t

\newcommand{\Npres}[1]{N_{#1}}                   % number of present type-#1 agents
\newcommand{\nss}[1]{n_{#1}}                     % E[number present] in steady state
\newcommand{\Xmatch}[2]{X_{#1#2}}                % cumulative number of (#1,#2) matches
\newcommand{\xrate}[2]{x_{#1#2}}                 % steady-state rate of (#1,#2) matches

\newcommand{\gam}[1]{\gamma_{#1}}                % gamma coefficient for set/type #1
\newcommand{\gammaform}[1]{%
  \frac{1-e^{-\sum_{i\in #1}\lambda_i/\mu_i}}{\sum_{i\in #1}\lambda_i/\mu_i}} % closed form

\newcommand{\score}[2]{r_{#1#2}-v_{#1}-v_{#2}}   % matching score of an (#1,#2) match
\newcommand{\half}{\tfrac{1}{2}}
\newcommand{\indic}[1]{\mathbf{1}\!\left(#1\right)}  % indicator of an event  [Version 4: parentheses, matching the 1(...) used in Sections 5--6 and Appendix B]

\usepackage[hypertexnames=false]{hyperref}
\hypersetup{
  colorlinks      = true,
  linkcolor       = cornellred,        % internal links / refs -> cornell-red text
  citecolor       = royalazure,        % citations            -> royal-azure text
  urlcolor        = bleudefrance,      % URLs                 -> blue text
  linkbordercolor = white,   % cornellred box around links / refs
  citebordercolor = white,   % royalazure box around citations
  urlbordercolor  = {0.19 0.55 0.91}   % bleudefrance box around URLs
}
\makeatletter
\AtBeginDocument{\def\@pdfborder{0 0 1}\def\@pdfborderstyle{}}% keep boxes despite colorlinks
\makeatother
\usepackage[capitalize]{cleveref}
\crefname{fact}{Fact}{Facts}\Crefname{fact}{Fact}{Facts}% Version 2: Facts are now referenced with \Cref

\bibpunct[, ]{(}{)}{,}{a}{}{,}%

\usepackage{microtype}
\usepackage{ragged2e}
\usepackage{graphicx}
\usepackage{booktabs}      
\SingleSpacedXI            % single spacing, as in 
\begin{document}

\TITLE{Dynamic Matching with Abandonment:\\ Optimal Guarantees via Static Priority Policies}
\RUNTITLE{Dynamic Matching with Abandonment: Optimal Guarantees via Static Priority Policies}
\RUNAUTHOR{Arnosti, Niazadeh, Nuti and Simon}

% NOTE: list authors with NO blank lines between \AUTHOR/\AFF entries.
\ARTICLEAUTHORS{%
\AUTHOR{Nick Arnosti}%
\AFF{Industrial and Systems Engineering, University of Minnesota, \EMAIL{arnosti@umn.edu}}%
\AUTHOR{Rad Niazadeh}%
\AFF{The University of Chicago Booth School of Business, \EMAIL{rad.niazadeh@chicagobooth.edu}}%
\AUTHOR{Pranav Nuti}%
\AFF{The University of Chicago Booth School of Business, \EMAIL{pranav.nuti@chicagobooth.edu}}%
\AUTHOR{Felipe Simon}%
\AFF{School of Economics and Business, University of Chile, \EMAIL{fsimon@fen.uchile.cl}}%
}

\ifdraftversion
\todo{List of to-dos:}
\begin{itemize}

% \item Write an abstract
\item Finalize notation/names for $\plpm$ and $\plblp$. Maybe change $\plblp$ to $\textup{LP}^\pi$?
\end{itemize}

\na{Further comments on notation and terminology}
\begin{itemize}

\item allowed vs. accepted (and maybe we should just call $M$ restriction everywhere?)
\end{itemize}

\newpage
\fi

\ABSTRACT{
We study stochastic dynamic matching with abandonment, a fundamental model in which agents of different types arrive over time, may abandon the market while waiting, and can be matched in pairs for rewards that depend on their types. No online policy can guarantee more than 0.5 of the omniscient benchmark, which knows all arrival and departure times. Whether this guarantee is achievable has been a central open problem, even in special cases. We resolve this problem by showing that, on every instance, a simple static priority policy, which is computable in polynomial time, earns at least 0.5 of the omniscient benchmark. This establishes that the optimal competitive ratio is exactly $0.5$, improving the previous best guarantee of 0.192 for the general model~\citep{li2023fully,collina2020dynamic}.

Our analysis rests on a key stochastic result, which is a new lower bound on the probability that some agent from a given set of types is available. 
Building on this bound, we develop two approaches: a primal approach that reads the priorities from an optimal solution of a novel linear program, and a dual approach that assigns a ``value'' to each type, and prioritizes matches with high ``surplus'' (reward minus value of involved agents). These values are the unique solution of a fixed-point equation derived from linear programming duality. We also show that the two approaches are closely connected.%

}
\pagenumbering{gobble}

\maketitle

\newpage
{\hypersetup{linkcolor=royalazure}
\setcounter{tocdepth}{2}
\tableofcontents}

\newpage
\pagenumbering{arabic}  
\section{Introduction}
\label{sec:intro}
%Rad: final paragraph
Many allocation problems in practice are, at their core, instances of dynamic matching with \emph{abandonment}: participants (or goods) arrive over time and may leave the system before being matched. Consider, for example, kidney exchange programs in which patients with incompatible donors enter a pool over time. Such a program repeatedly searches the pool for pairs that can exchange donors, so that both patients receive compatible kidneys~\citep{roth2004kidney,roth2005pairwise,agarwal2019market,ashlagi2021kidney}.\footnote{Two-way swaps capture one form of kidney exchange; in practice, programs also arrange longer cycles and chains.}
% \na{This is not correct in the US, where most transplants through exchanges happen through chains. I believe that in some countries in Europe, only pairwise exchange is allowed.}}
A central challenge is to decide which matches to make, as the urgency of each patient's condition limits how long they can wait in the pool. Timing is even more critical in the allocation of kidneys from deceased donors: a recovered kidney remains viable for at most a day or two and must be discarded if not placed in time \citep{bertsimas2013fairness,agarwal2021equilibrium}. The same basic problem arises over longer horizons in municipal homeless-response systems. In most major U.S. cities, ``Continuum of Care'' programs refer people experiencing homelessness to housing opportunities with various providers. These programs must decide which opportunities to offer to which applicants, accounting for compatibility and the possibility that both applicants and housing opportunities might become unavailable over time \citep{azizi2018designing,arnosti2026target}; similar problems arise in the allocation of public and affordable housing~\citep{arnosti2020design,waldinger2021targeting,leshno2022dynamic}.
%\footnote{A Continuum of Care is a regional planning body, mandated by the U.S.\ Department of Housing and Urban Development, that coordinates housing and services for people experiencing homelessness. Participants may exit the system unmatched---often because they lose contact with the system---and opportunities that remain unfilled for too long may be withdrawn from the coordinated entry system. In the city of Chicago, All Chicago, the lead agency of the Chicago Continuum of Care, operates such a system; see \url{https://allchicago.org/continuum-of-care/}.}
Beyond these high-stakes settings, online chess and video-game platforms solve a similar problem at a massive scale: within moments, they pair each arriving player, who may log off at any time, with an opponent of comparable strength who is waiting in the lobby~\citep{herbrich2006trueskill,li2023fully}. Ride-sharing platforms likewise make real-time matches, pairing riders with nearby drivers before they close the app or choose an alternative platform~\citep{ozkan2020dynamic,hu2022dynamic,aveklouris2025matching}.

%Rad: final paragraph
Despite their differences, all of the above applications share a common structure as centralized matching systems: participants of different types arrive over time, the value of a match depends on the types of both parties and can vary substantially  (patients in need of an organ transplant have different expected outcomes from different compatible donors; homeless households might find certain housing opportunities much more attractive than others; and game players generally prefer playing someone of similar skill), and participants may abandon the system if they wait too long (patient health deteriorates, housing applicants disengage, and players log off). Given such a structure, each platform must decide, dynamically and in real time, how to make matching decisions in order to form the best matches before opportunities disappear.

% \todo{Return to this paragraph.}
%\rn{The central trade-off in dynamic matching is that matching two available agents collects $\rew{i}{j}$ at once, but removes both from the pool, possibly losing more valuable matches each agent could form later. Waiting preserves those options but risks losing agents to abandonment before any matches are made. The planner must balance these considerations without knowledge of future arrival or abandonment times.}

%Rad: final paragraph
These matching decisions involve two intertwined tradeoffs. The first tradeoff concerns \emph{whom to match with whom}: the platform must weigh the immediate reward generated from each match against its opportunity cost. Committing an agent to a partner is irreversible and forecloses every other match that either agent might otherwise have formed, potentially with agents that have not yet arrived. Matching well therefore requires correctly valuing scarce agents under uncertainty---that is, balancing the reward collected today against the future value sacrificed \citep{baccara2020optimal,collina2020dynamic}. The second tradeoff concerns \emph{when to match}: delaying available matches allows the market to ``thicken'' as more agents arrive~\citep{akbarpour2020thickness,ashlagi2023matching}, but an agent held back for a better match may abandon before such a partner arrives~\citep{kohlenberg_2025}.\footnote{Deliberate waiting to thicken a market is common in practice, from the periodic match runs of many kidney exchange programs to the batching windows of ride-sharing platforms~\citep{ashlagi2018matchrun,ashlagi2021kidney,yan2020pricing,feng2025batching, ozkan2020dynamic,aveklouris2025matching}.} This risk is significant because departure times are typically not known in advance: the platform is uncertain not only about future arrivals but also about how long current participants will remain in the system.

%Rad: final paragraph
In this paper, we study a fundamental and parsimonious model that captures these tradeoffs: the \emph{stochastic dynamic matching with abandonment} model, introduced independently by \citet{collina2020dynamic} and \citet{aouad2022dynamic}, and studied, often in special cases, by many subsequent papers \citep{kessel2022stationary,patel2024combinatorial,amanihamedani2024improved,li2023fully}. %The model is built from a simple, minimal set of primitives.
In this model, agents of finitely many types $i\in\types$ arrive according to independent Poisson processes---the canonical model for random yet statistically predictable arrivals%, with rates that the platform can estimate from its data
---with rate $\lambda_i$ for type $i$, and wait in a common pool. Each waiting agent abandons the market after an exponentially distributed sojourn time, at a rate $\mu_i$ that depends on its type~$i$.
%This gives the agents a natural, memoryless model of patience (we sketch how we can relax this assumption in \cref{appendix:general-sojourn-time}).
At any moment, the planner may irrevocably match two waiting agents, collecting a reward that depends on their types and on which of the two arrived first: matching an agent of type $i$ with a later-arriving agent of type $j$ yields reward $r_{ij}$. The planner knows the arrival rates, departure rates, and rewards, and observes arrivals and abandonments as they occur, but does not know the future: an agent's arrival time is learned when it joins the system, and its departure time is revealed only upon departure, when it is too late to act. The objective is to maximize the long-run average reward. %; we formalize the model in \Cref{sec:preliminaries}.

%Rad: final paragraph
A key strength of this model is its generality: it captures several important, practically motivated market structures as special cases. In its most general form---in which any two compatible types can match---the model captures kidney exchange and player matchmaking. Splitting the types into two sides, such as supply and demand, and allowing matches only across sides yields \emph{two-sided dynamic matching}, a market structure studied extensively in operations research~\citep{hu2022dynamic,kohlenberg2024cost,aveklouris2025matching}. This captures applications such as housing and ride-sharing, where participants on both sides arrive, wait, and may leave unmatched. If, in addition, one side must be matched or discarded immediately upon arrival, this becomes a natural model of organ allocation from deceased donors, as organs must be matched soon after recovery to a waiting patient. This special case, studied by \citet{aouad2022dynamic}, \citet{patel2024combinatorial}, and \citet{amanihamedani2024improved}, among others, is known as \emph{stationary  bipartite matching}. Restricting the waiting side further to a single type---consider, for instance, unit-demand buyers arriving to purchase units of a single good that are replenished and perish over time---yields the (single-good) \emph{stationary prophet inequality} problem of \citet{kessel2022stationary}, the Markovian continuous-time counterpart of the classical prophet inequality in optimal stopping \citep{krengel-sucheston_1978}. A guarantee for the general model we study therefore applies to all of these special cases.

\subsection{Competitive Analysis and State of the Art}
\label{sec:framework}

%Rad: final paragraph
Following the literature on Bayesian online algorithms and stochastic online optimization, we evaluate a matching policy by its \emph{competitive ratio} against the \emph{omniscient} benchmark. This benchmark is a ``policy'' with hindsight information: it knows the times of all arrivals and departures and matches optimally given this information. The competitive ratio of a policy is the worst-case ratio, over all instances (i.e., all possible types, rewards, and arrival and departure rates), between its long-run average reward and that of the omniscient benchmark.
%This  is a more demanding standard than the \emph{optimal online} policy, which is the policy obtaining the maximum expected long-run average reward obtainable by any feasible online policy.
%\footnote{We note that the optimal online policy is itself given by an intractable dynamic program whose state encodes the entire vector of queue lengths for all types---hence, the state space is exponential in size. While we do not have a formal statement establishing the hardness of computing this policy, similar exponential-sized dynamic programs in stochastic online optimization are known to be PSPACE-hard~\citep{papadimitriou1987complexity,papadimitriou2024online}.}
%Thus, any guarantee against the omniscient benchmark holds, a fortiori, against the optimal online policy.
It thus measures the fraction of the omniscient benchmark's reward that can be secured without information about future arrivals and departures. 
%\todo{wordsmith}

Prior work shows how demanding this standard is: no online policy can achieve a competitive ratio greater than $\half$, even for the highly restricted single-good stationary prophet inequality problem~\citep{kessel2022stationary}. Whether the ratio of $\half$ is \emph{achievable} has remained open in all but the simplest settings, although a sequence of results has made progressive improvements in various special cases. For the single-good stationary prophet inequality problem, \citet{kessel2022stationary} designed an optimal $\half$-competitive policy. For stationary bipartite matching with multiple types on the waiting side, they obtained a ratio of $\tfrac{15}{56} \approx 0.268$; \citet{patel2024combinatorial} improved this ratio to $1-\tfrac{1}{\sqrt{e}} \approx 0.393$, and \citet{amanihamedani2024improved} obtained a further small improvement. For the general model, \citet{collina2020dynamic} gave a $\tfrac{1}{8}$-competitive policy, and the best known guarantee is the $0.192$-competitive policy of \citet{li2023fully}.\footnote{\citet{li2023fully} allow general sojourn-time distributions; with Poisson arrivals, their guarantee of $0.192$ requires the variance of each type's sojourn time to be at most the square of its mean, which holds (with equality) for exponential sojourn times. Their model assumes that the reward of a match does not depend on which agent arrived first.} Thus, a substantial gap remains between the best known guarantees and the upper bound of $\half$; \citet{patel2024combinatorial} posed achieving $\half$ as an open question even for the special case of stationary bipartite matching. We should also note that a parallel line of work evaluates policies against the weaker benchmark of the optimal online policy. For the general model, the $0.192$ guarantee of \citet{li2023fully} against the omniscient benchmark is also the best known guarantee against the optimal online policy, improving the earlier $\tfrac{1}{4}(1-\tfrac{1}{e})\approx 0.158$ guarantee of \citet{aouad2022dynamic}.
%, and the slightly larger guarantee of \citet{amanihamedani2024improved}.
\Cref{fig:positioning} summarizes the established guarantees for each setting and benchmark.

\begin{figure}[htb]
\centering
\hypersetup{pdfborder={0 0 0}}% suppress link borders inside this figure (they show against the tinted fills)
\definecolor{resultgreen}{rgb}{0.0,0.45,0.2}
\resizebox{0.98\textwidth}{!}{%
\begin{tikzpicture}[font=\footnotesize,
  setting/.style={rounded corners=2pt, draw=black!55, line width=0.4pt},
  slabel/.style={font=\footnotesize\bfseries, text=black!75},
  entry/.style={font=\scriptsize, align=center},
  old/.style={font=\scriptsize, align=center, text=black!45}]
  % column headers
  \node[font=\small\bfseries, text=royalazure] at (3.4,8.45) {vs.\ omniscient benchmark};% [Version 4: headers moved up 0.2 to clear the red frame]
  \node[font=\small\bfseries, text=royalazure] at (10.2,8.45) {vs.\ optimal online policy};
  % nested setting boxes (outermost = most general)
  \draw[setting, fill=ballblue!7]  (0.00,0.00) rectangle (13.60,7.90);
  \draw[setting, fill=ballblue!14] (0.45,0.35) rectangle (13.15,5.95);
  \draw[setting, fill=ballblue!22] (0.90,0.65) rectangle (12.70,4.25);
  \draw[setting, fill=ballblue!30] (1.35,0.95) rectangle (12.25,2.22); % (revised: box enlarged slightly to fit the new entry)
  % ---- outermost: general -------------------------------------------------
  \node[slabel] at (6.8,7.55) {General: non-bipartite, all types may wait};
  \node[old]   at (3.4,6.98) {${\tfrac{1}{8}=0.125}$ \ (\citealp{collina2020dynamic})};
  \node[entry] at (3.4,6.42) {${0.192}$ \ (\citealp{li2023fully})};
  \node[old]   at (10.2,6.98) {${\tfrac{1}{4}(1-\tfrac{1}{e}){\color{cornellred}+\varepsilon_0}\approx 0.158{\color{cornellred}+\varepsilon_0}}$\\[2.5pt](\citealp{aouad2022dynamic})};
  \node[entry] at (10.2,6.42) {${0.192}$ \ (\citealp{li2023fully})};
  % ---- two-sided, both sides patient --------------------------------------
  \node[slabel] at (6.8,5.62) {Two-sided: both sides may wait};
  \node[entry] at (3.4,5.00) {${\tfrac{1}{2}(1-\tfrac{1}{\sqrt{e}}){\color{cornellred}+\varepsilon_1}\approx 0.197{\color{cornellred}+\varepsilon_1}}$\\[2.5pt](\citealp{amanihamedani2024improved})};
  \node[entry] at (10.2,5.00) {${\tfrac{1}{2}(1-\tfrac{1}{e}){\color{cornellred}+\varepsilon_2}\approx 0.316{\color{cornellred}+\varepsilon_2}}$\\[2.5pt](\citealp{aouad2022dynamic})};
  % ---- two-sided, one side online ------------------------------------------
  \node[slabel] at (6.8,3.95) {Stationary bipartite  matching: two-sided, one side online};
  \node[old]   at (3.4,3.42) {${\tfrac{15}{56}\approx 0.268}$ \ (\citealp{kessel2022stationary})};
  \node[entry] at (3.4,2.80) {${1-\tfrac{1}{\sqrt{e}}{\color{cornellred}+\varepsilon_3}\approx 0.393{\color{cornellred}+\varepsilon_3}}$\\[2.5pt](\citealp{patel2024combinatorial})};
  \node[entry] at (10.2,3.12) {${1-\tfrac{1}{e}{\color{cornellred}+\varepsilon_4}\approx 0.632{\color{cornellred}+\varepsilon_4}}$\\[2.5pt](\citealp{aouad2022dynamic})};
  % ---- stationary prophet inequality ---------------------------------------
  \node[slabel] at (6.8,1.93) {{Stationary prophet inequality: bipartite, single waiting type}};
  \node[entry] at (3.7,1.47) {\color{dgreen}{$\mathbf{\half (\textbf{\emph{optimal}})}$} \ (\citealp{kessel2022stationary})};
  % (revised: 0.656 is now shown in gray, as superseded by the approximation scheme of AmaniHamedani et al. 2025)
  \node[old]   at (9.7,1.55) {${0.656}$ \ (\citealp{kessel2022stationary})};
  \node[entry] at (9.7,1.22) {$1-\varepsilon$ \ (\citealp{amanihamedani2025adaptive})};
  % ---- our result: red frame around all settings, with a red label below it -----
  % [Version 4: added at the authors' request; the label replaces the last sentence of the caption.]
  \draw[draw=cornellred, line width=1.2pt, rounded corners=4pt] (-0.16,-0.16) rectangle (13.76,8.06);
  \node[font=\small\bfseries\boldmath, text=cornellred] at (6.8,-0.55) {Our work: guarantee of $\tfrac{1}{2}$ in all these settings};
  % % ---- banner ---------------------------------------------------------------
  % \node[font=\small\bfseries, text=resultgreen, align=center] at (6.8,-0.60)
  %   {This paper: a static priority policy guaranteeing $\half$ of the omniscient benchmark\\ on every instance---optimal in all of the settings above.};
\end{tikzpicture}}
\caption{Competitive ratios known before this work, against the omniscient benchmark (left column) and against the optimal online policy (right column). The nested boxes range from the general model (outermost) to the stationary prophet inequality (innermost). Red terms $+\varepsilon_k$ mark the small improvements of \citet{amanihamedani2024improved}, and gray entries have been superseded by later work. The guarantee against the omniscient benchmark when both sides may wait is implied by the results of \citet{amanihamedani2024improved} through a standard reduction to the case in which one side is online. In the innermost setting, the adaptive approximation scheme of \citet{amanihamedani2025adaptive} supersedes the $0.656$ guarantee of a posted-price policy \citep{kessel2022stationary}.}
\label{fig:positioning}
\end{figure}

Obtaining performance guarantees is challenging, especially without the kind of large-market assumptions used by \citet{ozkan2020dynamic} and \citet{aveklouris2025matching}. Greedy policies that make every feasible match may match scarce agents with frequent, low-value arrivals. Batching policies, which are common in large-market regimes, may lose impatient agents while the market thickens; in a cost-minimization version of the problem, even optimally tuned batching policies can perform arbitrarily poorly~\citep{aouad2022dynamic}. Moreover, analyzing the reward obtained by a policy is difficult because the availabilities of different types of agents in the pool are correlated through their shared matching history. 

To address these challenges, existing approaches obtain the guarantees above through a common strategy: they first solve a linear programming relaxation (or a variant of it) that prescribes target match rates, and then use these rates to design randomized online policies, such as policies that match with carefully tuned probabilities \citep{collina2020dynamic,aouad2022dynamic,kessel2022stationary,patel2024combinatorial,li2023fully}. In these papers, randomization plays a key role: it simplifies the analysis and ensures that some matching capacity remains available for favorable future arrivals.

These randomized policies, however, differ sharply from those used in practice, where platforms typically rely on \emph{priority rules}: coordinated entry systems refer compatible housing applicants in priority order, deceased-donor allocation follows point-based rankings, and matchmaking systems rank feasible opponents~\citep{fajardo2026coordinated,israni2014new,graepel2006ranking}. Such rules are deterministic, transparent, and fixed in advance. This gap between theory and practice leads to our central research question: can a simple priority-based policy of the kind practitioners already use attain an optimal performance guarantee? More precisely, we ask the following algorithmic question:

\vspace{-2mm}
\begin{displayquote}
\emph{Can a simple priority-based policy, computable in polynomial time, achieve the best possible competitive ratio of $\half$ for stochastic dynamic matching with abandonment---in full generality?} %computable in polynomial time,
\end{displayquote}

\subsection{Main Result: Static Priority Policies Offer Optimal Guarantees}
\label{sec:main-result}
%Rad: final paragraph
We answer this question in the affirmative. We study the simple class of \emph{static priority} policies.\footnote{We borrow this term from the literature on dynamic matching without abandonment, where \citet{kerimov2023optimality} show that well-designed greedy static priority policies are hindsight optimal in certain matching networks.} Such a policy is defined by a ranked list of acceptable partners for each type, and it acts greedily with respect to these lists: an arriving agent is matched to a waiting agent of the highest-ranked acceptable type that is available, and joins the pool if no such agent is available. 
%priority list for each type: a ranked subset of the types it deems acceptable. An arriving agent is matched to a waiting agent of the highest-ranked available type on its list, and joins the waiting pool if no listed type is available.
Here, ``static'' refers to the fact that priority lists are fixed in advance and do not depend on time, queue lengths, or any other aspect of history; the policy still matches based on which types are available (see \Cref{fig:priority}). Our main result is the following.

\vspace{-2mm}
\begin{displayquote}
%\begin{informal_theorem}[See \Cref{thm:main-result}]
%\label{informal:main}
{\bf Main Result.} {\it There is a polynomial-time algorithm that, given any instance of dynamic matching with abandonment, computes a static priority policy whose long-run average reward is at least $\half$ of the omniscient benchmark. No online policy can guarantee more.}

%\end{informal_theorem}
\end{displayquote}

\begin{figure}[htb]
\centering
\begin{tikzpicture}[font=\small,
  agent/.style={circle, draw, fill=ballblue!40, minimum size=2.8mm, inner sep=0pt},
  jagent/.style={circle, draw=cornellred, line width=0.45mm, fill=cornellred!16,
                 minimum size=8mm, inner sep=0pt},
  row/.style={draw, rounded corners, minimum width=5.0cm, minimum height=0.74cm,
              anchor=west, fill=black!2}]

  % j's priority order (header), kept left of the priority arrow
  \node[anchor=south] at (5.0,4.05)
    {$j$'s priority list:\ \ $5 \porder{j} 4 \porder{j} 3 \porder{j} 2 \porder{j} \emptyset \porder{j} 1$};

  % priority list: one row per type, highest priority at the top
  \node[row] (r5) at (2.9, 3.40) {};
  \node[row] (r4) at (2.9, 2.48) {};
  \node[row] (r3) at (2.9, 1.56) {};
  \node[row] (r2) at (2.9, 0.64) {};
  \node[row] (r1) at (2.9,-0.70) {};

  % type labels (inside, left)
  \foreach \y/\t in {3.40/5, 2.48/4, 1.56/3, 0.64/2, -0.70/1}
    \node[anchor=west] at (3.05,\y) {type $\t$};

  % queues: types 5 and 4 are empty; types 3, 2, 1 hold waiting agents
  \node[anchor=base, gray] at (5.7,3.34) {(empty)};
  \node[anchor=base, gray] at (5.7,2.42) {(empty)};
  % type 3 holds five waiting agents, type 2 a single one, type 1 three
  \foreach \x in {4.5,4.88,5.26,5.64,6.02} { \node[agent] at (\x,1.56){}; }
  \node[agent] at (4.5,0.64){};
  \foreach \x in {4.5,4.88,5.26} { \node[agent] at (\x,-0.70){}; }

  % annotations (outside, right)
  \node[anchor=west, font=\footnotesize, gray] at (8.85,3.40) {empty};
  \node[anchor=west, font=\footnotesize, gray] at (8.85,2.48) {empty};
  \node[anchor=west, font=\footnotesize] at (8.85,1.56) {available, matched};
  \node[anchor=west, font=\footnotesize] at (8.85,0.64) {available};
  \node[anchor=west, font=\footnotesize] at (8.85,-0.70) {available, unacceptable};

  % acceptability cutoff (red) between type 2 and type 1
  \draw[dashed, cornellred, line width=0.3mm] (2.7,-0.03) -- (7.9,-0.03);
  \node[anchor=east, font=\footnotesize, cornellred] at (2.65,-0.03) {$\emptyset$};
  \node[anchor=west, font=\footnotesize, cornellred] at (8.85,-0.03) {acceptability cutoff};

  % priority direction (top = highest)
  \draw[->, line width=0.4mm] (8.05,3.78) -- (8.05,-1.05);
  \node[rotate=90, anchor=south, font=\footnotesize] at (8.55,1.37) {priority};

  % arriving type-j agent (left), aligned with the matched row
  \node[jagent] (j) at (0.3,3.20) {$j$};
  \node[anchor=north, align=center, font=\footnotesize] at (0.3,2.66)
    {arriving\\type-$j$ agent};

  % the match: j to the highest-ranked available type (type 3)
  \draw[->, line width=0.45mm, cornellred] (0.7,3.0) -- (2.82,1.56);
  \node[anchor=south, font=\footnotesize, cornellred] at (1.9,2.45) {match};
\end{tikzpicture}
\caption{Illustration of the matching process of a static priority policy. The blue dots indicate the number of agents of each type that are available to match when a new agent of type $j$ arrives. Since the highest-priority types ($5$ and $4$) are not available, the arriving agent matches to an agent of type $3$. If no acceptable agents were available, the arriving agent would wait. %\rn{I like this figure, but I can also imagine dropping it.}
}
\label{fig:priority}
\end{figure}

%Rad: final paragraph
This establishes that the optimal competitive ratio for general stochastic dynamic matching with abandonment is \emph{exactly} $\half$, and that a simple deterministic policy attains it. Our result improves the best known guarantee for the general problem from $0.192$ to the optimal $\half$, and extends the optimal single-type guarantee of \citet{kessel2022stationary} to a much larger class of markets---with arbitrarily many types on both sides, non-bipartite compatibility structures, and order-dependent rewards. It achieves the best possible guarantee in every special case considered so far, and closes the central open problem in this line of work, resolving the question of \citet{patel2024combinatorial} beyond the setting in which it was posed. Since the omniscient benchmark dominates the optimal online policy, our policy also guarantees $\half$ of the reward of the optimal online policy, improving upon the best prior guarantees of $0.192$ for the general model and about $0.316$ for two-sided markets in which both sides may wait.

\medskip
\noindent\textbf{Managerial insights.}
%Rad: final paragraph
Our main result shows that the value of knowing the future in these markets, measured in the worst case over instances, is precisely a factor of two. It is no greater, regardless of the complexity of the instance, and no smaller, regardless of the sophistication of the policy. Furthermore, properly designed priority rules of the kind favored in practice provide optimal worst-case guarantees, which cannot be improved by clever policy design or unbounded computational power.
%are worst-case optimal, and no policy, however adaptive or computationally powerful, can guarantee more than what a well-chosen static priority list already secures in the worst-case.
In other words, simplicity entails no loss in worst-case performance.

In fact, we show that the guarantee of $\half$ can be achieved by a particularly simple class of static priority policies, which we call \emph{value-based} policies. Such a policy is described by a single value $v_i$ for each type $i$, capturing the opportunity cost of using a waiting agent of that type. (By contrast, a generic static priority policy has a description whose size is quadratic in the number of types.)
%compute actuall even stronger sense: it has a one-number-per-type description. Our algorithm assigns each type $i$ a \emph{value} $\dvar{i}$, interpreted as the opportunity cost of consuming a waiting type-$i$ agent. These values are computed once by solving a polynomial-time linear program (as we discuss later).
A match between an agent of type $i$ and a later-arriving agent of type $j$ is given the score $\score{i}{j}$. This score is used both to identify acceptable matches (all matches with positive scores, and possibly some with zero score) and to rank them. Our results show that assigning a single number to each type can address the complex tradeoffs identified earlier with no loss to worst-case performance.

\subsection{Summary of the Techniques}
\label{sec:techniques}
%Establishing a worst-case guarantee for a static priority policy runs into an immediate obstacle. Once the priority lists are fixed, 

Given a static priority policy, the numbers of waiting agents of different types evolve as a continuous-time Markov chain. An exact analysis of the policy's performance would require computing the stationary distribution of this chain. This is difficult because the counts of different types are correlated through their shared history of matches, and their joint distribution depends on the entire collection of priority lists. Our analysis sidesteps this difficulty: it never computes a stationary distribution and instead indirectly certifies the performance of a policy, as discussed below.

\medskip
\noindent\textbf{An upper bound on the omniscient benchmark.} Our starting point is a linear programming relaxation~$\offrbrel$ of the omniscient benchmark (\Cref{sec:benchmark-guarantee}), whose variables are the long-run rates of matches between types. Its \emph{balance} constraints ensure that no type is matched more often than it arrives. Its \emph{availability} constraints limit how often arrivals of each type can be matched with agents from a given set of types. Such a match requires an agent from the set to be waiting, and since matches only remove agents, an agent from the set is waiting at most as often as in a market where no one is ever matched. In that market, the numbers of waiting agents of different types are independent Poisson random variables, which gives an explicit bound. Similar linear programs appear in prior work. Ours is tighter than those proposed by \citet{collina2020dynamic} and \citet{kessel2022stationary}, which only constrain matches with single types, and it covers a more general setting than the relaxation of \citet{amanihamedani2024improved} for stationary bipartite matching.

\medskip
\noindent\textbf{Properties of static priority policies and a lower bound on agent availability.} These linear programs build on a few observations about static priority policies. Suppose an agent of a given type arrives, and consider a \emph{prefix} of its ranked list of types. Because these policies match the arriving agent to some type within this prefix whenever at least one type in the prefix is available, the analysis of their reward reduces to a question about agent availability: how often does a given set of types have at least one agent waiting? Calculating this probability exactly is difficult in general, but \Cref{sec:priority-policies} (\Cref{lem:gammabound}) provides a simple lower bound:
the {\em probability} that some agent from a set $S$ is waiting is at least the {\em expected} number of waiting agents from $S$, times an explicit constant $\gamma_S$ that depends on the arrival and departure rates but not on the policy. Interestingly, this bound can be viewed as a converse, up to the factor $\gamma_S$, to Markov's inequality.
%If $N_S$ denotes the number of waiting agents with types in the set $S$, it establishes that $\mathbb{P}(N_S > 0)$ is at least $\gamma_S \mathbb{E}[N_S]$, where $\gamma_S$ is an explicit constant (which depends on arrival and departure rates, but not the policy).

This is the key stochastic result of the paper. Our proof relies on a property satisfied by static priority policies, which we call \emph{independence of irrelevant alternatives}: removing a waiting agent from the history does not change what happens to any other agent.
Combining this property with the Slivnyak--Mecke formula for Poisson processes, we show that if $N_S$ is the number of waiting agents with types in $S$ in steady state, then the conditional distribution of $N_S$ given $N_S > 0$ is stochastically dominated by the same conditional distribution under the policy that makes no matches, in which $N_S$ is simply a Poisson random variable; comparing conditional means then yields the bound.

% With this lower bound in its availability constraints, $\plpm$ can certify the reward of a policy, but it does not say which policy to run. We give two answers, primal and dual, and then show that they are closely connected.

\medskip
\noindent\textbf{A new family of linear programs to identify a policy and bound its performance.} From here, our approach departs from prior work.
%\rn{Our optimal guarantee requires none of the machinery used in previous approaches---randomized thinning, LP-prescribed matching probabilities, or adaptation to queue lengths.}
Rather than using the solution of such a relaxation to define a randomized policy, we use a new family of linear programs, denoted by $\plpm$ and indexed by a set $\m{M}$ of allowed matches, to identify static priority policies. Unlike $\offrbrel$, these programs are not relaxations. Their values are \emph{lower} bounds on the rewards of the policies they identify, so each such policy and the omniscient benchmark are squeezed into a ``linear programming sandwich,'' with $\plpm$ below and $\offrbrel$ above. We further show that by properly selecting the set $\m{M}$ of allowed matches (including the case $\m{M}=\TT$, when all matches are allowed), the optimal objective values of these linear programs never differ by a factor greater than two.

\medskip
\noindent\textbf{The primal approach: priorities from tight constraints.}  To design a policy via the above ingredients, our first approach (\Cref{sec:primal}) uses the primal $\plpm$ directly and reads the priorities from its optimal solution. When the availability constraint of an arriving type and a set of types is tight, the program matches these arrivals with the set as often as the constraint allows, just as a static priority policy does when the set is a prefix of the arriving type's priority list. This suggests a simple recipe: turn tight sets into prefixes of priority lists. We call a policy \emph{consistent} with the solution if it follows this recipe and accepts only the matches that the solution uses. We show that every consistent policy earns at least the value of $\plpm$, even though it may make some matches less often than the program prescribes. Geometrically, the optimal solution lies on a line segment between the policy's long-run outcome, which need not be feasible, and a feasible solution of $\plpm$---see \Cref{fig:piercing}; since the latter earns no more than the optimum, the policy earns no less.

The catch is that a consistent policy may not exist. Using a diminishing-returns (submodularity) property of the availability constraints, we show that the tight sets of each arriving type are nested, so they always fit into a single list. The recipe therefore fails exactly when a tight set contains a type that the solution never matches with the arriving type, since this match would then have to be both accepted and rejected. We call solutions for which the recipe works \emph{suitable}, and we find one by \emph{pruning}: starting with all matches allowed, we repeatedly solve $\plpm$ and remove the matches that its solution does not use. We show that pruning never lowers the value of $\plpm$ and ends, after polynomially many rounds, with a suitable optimal solution; the same diminishing-returns property lets us solve each program in polynomial time, despite its exponentially many constraints.

For the set $\m{M}$ produced by pruning and any policy $\pi$ that the primal approach constructs from it, our results yield the following chain, where $\offopt$ denotes the omniscient benchmark and $\val{\pi}$ the long-run average reward of $\pi$:
\[
\offrbrel\;\ge\;\offopt\;\ge\;\val{\pi}\;\ge\;\plpm\;\ge\;\plpt\;\ge\;\half\offrbrel.
\]
The first inequality holds because $\offrbrel$ is a relaxation, and the second because the omniscient benchmark is at least as good as any policy. The last three are the main steps of the primal approach: a consistent policy earns at least the value of its program, pruning never lowers this value, and halving turns any solution of $\offrbrel$ into a solution of $\plpt$. Each inequality may be loose on a given instance; however, their combined loss is at most a factor of two. The primal approach, therefore, yields a $\half$-competitive policy in polynomial-time.

\medskip
\noindent\textbf{The dual approach: priorities from values.} Our second approach (\Cref{sec:values}) works directly with the value-based policies of \Cref{sec:main-result}, so the question becomes how to set the values. If they are too low, the policy uses up its agents on poor matches; if they are too high, it holds agents back until they abandon. To find the right balance, for every static priority policy $\pi$, we introduce a linear program $\plblp(\pi)$ that resembles $\plpm$ but is tied to $\pi$: it is a minimization program, and its availability constraints are flipped and imposed only on the prefixes of the priority lists of $\pi$. With these changes, it is easy to see that $\plblp(\pi)$ is a lower bound on the long-run average reward $\val{\pi}$ of $\pi$. We then use linear programming duality: the dual of $\plblp(\pi)$ gives each type a variable that plays the role of its value. Its constraints also identify a set of inequalities that tell us when a vector of values certifies that $\pi$ earns at least the dual objective $\sum_i\lambda_iv_i$ by weak duality. Interestingly, reversing these inequalities---including for the matches that $\pi$ never makes---gives a certificate in the other direction: we show that values that satisfy the reversed inequalities guarantee that the omniscient benchmark earns at most twice $\sum_i\lambda_iv_i$. A \emph{tight} certificate, which satisfies both sets of inequalities, therefore shows that $\pi$ earns at least half of the omniscient benchmark.

Which values admit a tight certificate? We show that they are exactly the solutions of a fixed-point equation $\mathbf{v}=\mathbf{F}(\mathbf{v})$ for a particular piecewise affine map $\mathbf{F}$. In economic terms, the equation balances, for each type, the rate at which a waiting agent loses value to abandonment against a pessimistic estimate of the rate at which its availability creates surplus (positive scores) for later arrivals, in the spirit of the balanced thresholds and prices used for prophet inequalities~\citep{kleinberg2012matroid,dutting2020prophet}. The fixed point is also unique and efficiently computable. The map $\mathbf{F}$ is not always a contraction, but we show that a damped version of it brings any two vectors of values closer together by a fixed factor under an appropriately weighted norm. Hence, the fixed-point equation has exactly one solution. Moreover, because $\mathbf{F}$ is piecewise affine, Ellipsoid-based algorithms for contractions~\citep{sikorski1993ellipsoid,daskalakis2026polynomial} find this solution in polynomial time. The dual approach, therefore, yields a $\half$-competitive value-based policy in polynomial time.

\medskip
\noindent\textbf{Primal meets dual.} One might wonder whether the two approaches above are connected in any way. The primal approach reads priorities from the tight constraints of $\plpm$, while the dual approach computes them from a fixed point, without solving $\plpm$ at all. Despite these differences, it turns out that the two approaches are mathematically linked through the dual of $\plpm$ (\Cref{sec:connections}). This dual assigns a shadow price to each type, and its constraints have the same form as the reversed inequalities of the dual approach, restricted to the matches in $\m{M}$. We show that the \emph{restricted} fixed point $\mathbf{v}^{\m{M}}$, which solves the fixed-point equation when only the matches in $\m{M}$ are allowed, is always feasible for this dual. Remarkably, suitability, which the primal approach critically needs to read a policy from an optimal solution, is \emph{exactly} what turns the shadow prices into fixed-point values: we show that if $\plpm$ has a suitable optimal solution, the optimal shadow prices are unique and equal $\mathbf{v}^{\m{M}}$, so solving the dual recovers the restricted fixed point. We also show by an example that, without suitability, they need not solve this equation.

For the set $\m{M}$ produced by pruning, the connection extends to policies. Through complementary slackness, we show that every policy that the primal approach constructs is indeed a value-based policy restricted to $\m{M}$: it ranks the matches in $\m{M}$ by their scores under $\mathbf{v}^{\m{M}}$. The primal approach is thus a value-based approach \emph{in disguise}, even though it never computes a value. Notably, this restricted value-based policy may differ from the unrestricted value-based policy of the dual approach, but it is still a $\half$-competitive policy, as explained earlier.

As a byproduct of the above investigation, we identify a broader class of (restricted) value-based policies that includes the policies of both approaches as special cases. Specifically, for every set of allowed matches $\m{M}$ with $\plpm\ge\plpt$, we show that every value-based policy restricted to $\m{M}$ whose values are the restricted fixed point $\mathbf{v}^{\m{M}}$ earns at least half of the omniscient benchmark. The policies of the dual approach are exactly the members of this class with $\m{M}=\TT$. The set produced by pruning also satisfies $\plpm\ge\plpt$, so, by the value-based interpretation above, every policy of the primal approach is also a member of this class. We view this class as a unifying lens on our two approaches and believe that it may be of independent interest.

\subsection{Organization}
%\rn{this could be dropped!}
The rest of the paper is organized as follows. \Cref{sec:preliminaries} includes the model, preliminaries, and backgrounds. \Cref{sec:priority-policies} defines static priority policies and establishes their key properties, including our lower bound on agent availability (\Cref{lem:gammabound}). \Cref{sec:primal} presents the primal approach, \Cref{sec:values} presents the dual approach, and \Cref{sec:connections} connects the two approaches through the primal-dual lens. \Cref{sec:conclusion} concludes. The appendices contain a discussion of further related work in operations research, economics, and computer science (\Cref{sec:related-work}), the definitions and proofs for Poisson processes (\Cref{sec:Poisson process}), the omitted proofs (\Cref{sec:omitted-proofs}), and additional examples (\Cref{sec:additional-examples}). %,  and a sketch of how our results extend to a large class of sojourn-time distributions (\Cref{sec:general-sojourn-time}).

\section{Preliminaries}
\label{sec:preliminaries}
In this section, we define our model (\Cref{sec:model-notation}), state some key facts about Poisson processes that our analysis uses (\Cref{sec:poisson-background}), and introduce the omniscient benchmark and a linear programming relaxation that upper-bounds it (\Cref{sec:benchmark-guarantee}).

%---------------------------------------------------------------------
\subsection{Model and Notation}
\label{sec:model-notation}

We study the continuous-time model of stochastic dynamic matching with abandonment
introduced by \citet{collina2020dynamic} and
\citet{aouad2022dynamic}. This model has Poisson arrivals, exponential departures, and pairwise matching. Formally, an
\emph{instance} is a tuple $\inst = (\types, \boldsymbol{\lambda}, \boldsymbol{\mu}, \mathbf{r})$, where
\begin{itemize}
    \item $\types$ is a finite set of agent types.
    \item $\boldsymbol{\lambda} \in \mathbb{R}_{>0}^\types$ specifies agent arrival rates. Agents of each type arrive according to Poisson processes that are independent across types, with $\lambda_i > 0$ denoting the arrival rate of type $i \in \types$.
    \item $\boldsymbol{\mu} \in \mathbb{R}_{>0}^\types$ specifies departure rates. Each agent has a {\em maximum sojourn time} (or simply \emph{sojourn time}), after which the agent will abandon the system if not matched. These times are independent of each other and of the arrival processes, and exponentially distributed. The expected maximum sojourn time for an agent of type $i \in \types$ is $1/\mu_i$.
    \item $\mathbf{r} \in \Rnn^{\types \times \types}$ specifies matching rewards. Each match involves a pair of agents. For any $i,j \in \types$, $r_{ij} \ge 0$ specifies the planner's reward for the match $(i,j)$, that is, from matching an agent of type $i$ to an agent of type $j$ who arrived later. We allow $\rew{i}{j} \neq \rew{j}{i}$, so the reward may depend on which agent has waited longer. (Since arrivals follow independent Poisson processes, almost surely no two agents arrive at the same time.) %\footnote{Prior work assumes symmetric rewards $r_{ij} = r_{ji}$. We allow our rewards to depend on which type arrived first.}
\end{itemize}
%Agents arrive over time, wait in a common pool, and eventually leave the market---either matched by the planner to another agent or, after waiting too long, abandoning the market unmatched. The planner wishes to maximize the long-run average reward. Throughout, $i$ and $j$ denote types and $a$ and $b$ denote individual agents.

%\paragraph{Arrivals, abandonment, and histories.}
%Agents of type $i \in \types$ arrive according to a Poisson process with rate $\arate{i} > 0$ (see \Cref{sec:Poisson process} for a formal definition). Arrivals are independent across types. 

For $t \ge 0$ and $i \in \types$, let
$\agentset_i(t)$ be the set of type-$i$ agents that have arrived before time $t$,
and $\agentset(t) = \bigcup_{i \in \types}\agentset_i(t)$ the set of all
arrivals. For an agent $a$, let $\atype{a}$ denote its type, $\atime{a}$ denote its arrival time, $\soj{a}$ denote its maximum sojourn time, and $\dtime{a} = \atime{a} + \soj{a}$
denote the time at which $a$ will depart if not matched. We next define the notions of \emph{present} and \emph{available} agents.
%The planner does not observe sojourn times and cannot anticipate when an unmatched agent will abandon.
%Each agent $a$ of type $i$ has a \emph{sojourn time} $\soj{a}$: if it is not matched within $\soj{a}$ time units of its arrival, it \emph{abandons} the market unmatched. Type-$i$ agents abandon at rate $\drate{i} > 0$, i.e., the sojourn times $\{\soj{a}\}$ are independent across agents and exponentially distributed with mean $1/\drate{i}$. We let $\dtime{a} = \atime{a} + \soj{a}$ denote the time at which $a$ would depart if not matched. 
\begin{definition}[Present/available agents]\label{def:present-available}
An agent $a$ is \textbf{present} at time $t$ if $\atime{a} \le t < \dtime{a}$, that is, if it has arrived and has not reached its departure time. Under an online policy, which is a policy that makes irrevocable matches over time (see below), an agent is \textbf{available} at time $t$ if it is present and has not yet been matched.
\end{definition}
These definitions of {\em present} and {\em available} are consistent with the use of these terms in past work---see, for example \citet{kessel2022stationary,patel2024combinatorial}. 
Presence depends only on the arrival and departure times, and not on the matching policy. Availability, in contrast, depends on the policy: a present agent stops being available as soon as the policy matches it.

We now define matching policies formally. The \emph{history} up to time $t$,
\[
  \hist(t) = \{(\atype{a}, \atime{a}, \dtime{a})\}_{a \in \agentset(t)},
\]
records the type, arrival time, and departure time of every agent that has arrived before time $t$; for notational simplicity, it includes departure times that occur after~$t$. From the history $\hist(t)$, we define the \emph{compatibility graph} $\graph(t)$ with vertices $\agentset(t)$, and an edge between agents $a$ and $b$ if there is a time at which they are both present. Each edge has a weight, which is simply the reward earned from matching the two agents involved. A \emph{matching} of $\graph(t)$ is a subset of edges with no two edges sharing a vertex, and its weight is the sum of the rewards on those edges.

Given the above primitives, a  policy \emph{with hindsight information} is a  function $\pi$ that maps each history $\hist(t)$ to a matching of $\graph(t)$. This definition allows the policy to make decisions for early arrivals based on future arrival and departure information, and does not require the policy to be consistent across
horizons.\footnote{For instance, if only two agents have arrived by time $t$, a
policy may match them to each other when optimizing for horizon $t$, yet leave them for other partners when optimizing for a later horizon. This gives great power to our omniscient benchmark, while static priority policies do not use this power.  %Such a policy may be even more powerful than an online clairvoyant policy that has look-ahead access to the entire future arrivals and departures.
} An \emph{online}  policy, in contrast, makes irrevocable matches between available agents (\Cref{def:present-available}) using only the information revealed so far: the types and arrival times of the agents that have arrived, and the departures that have already occurred. In particular, it learns the departure time of an agent only when the agent abandons. The matches that an online policy---possibly randomized---makes during $[0,t]$ form a matching of $\graph(t)$ in each realization.\footnote{
Models that instead
reveal when an agent is about to abandon \citep{akbarpour2020thickness,
huang2018match}, or that disallow abandonment altogether
\citep{kerimov2023optimality}, differ substantially from ours; see \Cref{sec:related-work}. 
}

Fix a policy $\pol$, types $i,j \in \types$ and a horizon $t > 0$. Let
$\Xmatch{i}{j}^{\pol}(t)$ denote the number of $(i,j)$ matches that $\pi$
makes during $[0,t]$ (which is a random variable). Let $\xrate{i}{j}^{\pol}(t)$ be the corresponding
\emph{expected match rate} over the period $[0,t]$, where the expectation is over the arrival and abandonment processes:
\begin{align}
  \Xmatch{i}{j}^{\pol}(t) &= \big|\{ (a,b) \in \pol(\hist(t)) :
      \atype{a}=i,\ \atype{b}=j,\ \atime{a} < \atime{b} \}\big|,
  \nonumber\\
  \xrate{i}{j}^{\pol}(t) &= \tfrac{1}{t} \E{\Xmatch{i}{j}^{\pol}(t)}.
  \label{eq:xijt}
\end{align}
%where the expectation is over the arrival and abandonment processes and any randomness in $\pol$. 
From this, we can compute the
\emph{expected reward rate} of $\pol$ on $[0,t]$,
\begin{equation}
  \valt{\pol} = \sum_{i,j \in \types} \rew{i}{j} \xrate{i}{j}^{\pol}(t).
\end{equation}
We evaluate a policy by its long-run average reward $\val{\pol}=\liminf_{t\to\infty}\valt{\pol}$; for the static priority policies of \Cref{sec:priority-policies}, this limit exists (\Cref{lemma:stationaryctmc}).

\subsection{Background on Poisson Processes}
\label{sec:poisson-background}

Our analysis relies on three results about Poisson processes, which we collect here. \Cref{sec:Poisson process} contains the formal definition of a Poisson process and some associated proofs. Informally, a Poisson process on a set $\mathcal{X}\subseteq\reals^n$ with rate function $\lambda(\cdot)$ is a random collection of points in $\mathcal{X}$ such that the numbers of points in disjoint regions are independent Poisson random variables, and the expected number of points in a region is the integral of $\lambda(\cdot)$ over that region (\Cref{def:Poisson process}). In our model, the arrival times of type-$i$ agents form a Poisson process on $[0,\infty)$ with constant rate $\lambda_i$.

% [Version 5, second round: the marking fact (formerly Fact 2.1) moved to Appendix A, and the occupancy lemma (formerly Lemma A.1) moved here in its place; its proof remains in Appendix A.]
The first result concerns the agents that are present at a given time. It follows from the marking property of Poisson processes (\Cref{fact:productppp} in \Cref{sec:Poisson process}), which, applied to the arrivals of type $i$ with their sojourn times as marks, shows that the pairs of arrival times and sojourn times of type-$i$ agents form a Poisson process on $[0,\infty)^2$ with rate function $(s,\ell)\mapsto\lambda_i\mu_ie^{-\mu_i\ell}$.

\begin{restatable}{lemma}{occupancy}
\label{fact:poisson-occupancy}
The number of type-$i$ agents present at time $t$ is Poisson with mean
\[
  \tfrac{\arate{i}}{\drate{i}}\big(1-e^{-\drate{i}t}\big).
\]
These counts are independent across types.
\end{restatable}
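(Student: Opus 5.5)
The plan is to realize the set of type-$i$ agents present at time $t$ as the number of points of a marked Poisson process falling in a fixed region. By the marking property (\Cref{fact:productppp}), the pairs $(\atime{a},\soj{a})$ for type-$i$ agents form a Poisson process on $[0,\infty)^2$ with rate function $(s,\ell)\mapsto\lambda_i\mu_ie^{-\mu_i\ell}$. By \Cref{def:present-available}, an agent with arrival time $s$ and sojourn time $\ell$ is present at time $t$ exactly when $s\le t<s+\ell$. So the count of present type-$i$ agents equals the number of points of this process in the region
\[
R_t=\{(s,\ell): 0\le s\le t,\ \ell>t-s\}.
\]

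Because a Poisson process has a Poisson number of points in any measurable region, with mean equal to the integral of the rate function (\Cref{def:Poisson process}), the count is Poisson. Its mean is
\[
\int_0^t\int_{t-s}^\infty \lambda_i\mu_i e^{-\mu_i\ell}\,d\ell\,ds=\int_0^t\lambda_i e^{-\mu_i(t-s)}\,ds=\tfrac{\lambda_i}{\mu_i}\big(1-e^{-\mu_i t}\big).
\]
This is a routine computation, and the boundary convention $s\le t$ versus $s<t$ is a null set.

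For independence across types, the arrival processes of different types are independent by assumption. The sojourn times are independent of each other and of the arrivals, so the marked processes for different types are independent. The present counts are functions of distinct independent processes, hence independent.

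The only step needing care is justifying the marked-process description, namely that independent exponential marks attached to a Poisson process give a Poisson process on the product space with the stated intensity. This is exactly the cited marking fact, so I would invoke it and not reprove it. If needed, I would verify measurability of $R_t$, which is immediate since it is a Borel set.
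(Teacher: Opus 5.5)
Your proposal is correct and matches the paper's proof: the marking fact gives the Poisson process of (arrival, sojourn) pairs, the present agents are the points in $\{(s,\ell): 0\le s\le t,\ \ell>t-s\}$, and the same integral yields the mean $\tfrac{\lambda_i}{\mu_i}(1-e^{-\mu_i t})$. Independence across types then follows, as in the paper, from the independence of the per-type arrival and sojourn processes.
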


Since presence does not depend on the policy, and an agent can be available only if it is present, this lemma limits the availability of agents under every policy. We prove it in \Cref{sec:Poisson process}.

The second result is a formula for the expected value of a sum over the points of a Poisson process~\citep{slivnyak1962some,mecke1967stationare}; see also \citet[Theorem~4.1]{lastpenrose2017lectures}. Intuitively, it says that, conditional on a Poisson process having a point at a given location, the other points still form a Poisson process with the same rate function. Below, $\mathcal{N}(\mathcal{X})$ denotes the collection of locally finite subsets of $\mathcal{X}$.

% [Version 2: the following fact was moved here from Appendix A (formerly Fact A.2).]
\begin{fact}[Campbell--Mecke formula, Mecke equation, or Slivnyak--Mecke formula]
\label{fact:poisson-arrival-calculus}
Let $\Phi$ be a Poisson process on a Borel set
$\mathcal{X}\subseteq\mathbb R^n$ with rate function $\lambda(x)$, and let
$g:\mathcal{X}\times\mathcal{N}(\mathcal{X})\to\Rnn$ be a measurable function of a distinguished
point and a point set. Then
\[
  \E{\sum_{x\in\Phi} g(x,\Phi \setminus \{x\})}
  =
  \int_{\mathcal{X}} \E{g(x,\Phi)}\lambda(x)dx.
\]
\end{fact}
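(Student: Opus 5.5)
The plan is to prove the identity first when the total mass $\Lambda(\mathcal{X})=\int_{\mathcal{X}}\lambda(x)\,dx$ is finite, using the mixed-binomial representation of a Poisson process. We then extend it to locally finite (hence $\sigma$-finite) rate functions by splitting $\mathcal{X}$ into pieces and conditioning on the part of the process outside each piece. Nonnegativity of $g$ lets us interchange sums, integrals and expectations freely by Tonelli's theorem throughout.

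\emph{Finite-mass case.} Suppose $\Lambda:=\Lambda(\mathcal{X})<\infty$. Then $\Phi$ has the same law as $\{X_1,\dots,X_N\}$, where $N\sim\mathrm{Poisson}(\Lambda)$ and, given $N$, the $X_k$ are i.i.d.\ with density $\lambda(\cdot)/\Lambda$. For finite $\Lambda$ this follows from \Cref{def:Poisson process} by checking the joint law of counts in disjoint regions. Write $\Psi_m$ for the set of $m$ i.i.d.\ points with density $\lambda/\Lambda$. By exchangeability,
\[
\E{\sum_{x\in\Phi} g(x,\Phi\setminus\{x\})}=\sum_{n\ge1}e^{-\Lambda}\frac{\Lambda^n}{n!}\,n\,\E{g(X_1,\{X_2,\dots,X_n\})}.
\]
Integrating out $X_1$ turns the right-hand side into
\[
\sum_{n\ge1}e^{-\Lambda}\frac{\Lambda^{n-1}}{(n-1)!}\int_{\mathcal{X}}\lambda(x)\,\E{g(x,\Psi_{n-1})}\,dx.
\]
Substituting $m=n-1$ and exchanging sum and integral gives $\int_{\mathcal{X}}\lambda(x)\sum_{m\ge0}e^{-\Lambda}\frac{\Lambda^m}{m!}\E{g(x,\Psi_m)}\,dx$. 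The inner sum is exactly $\E{g(x,\Phi)}$, which proves this case.

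\emph{General case.} Since $\lambda$ is locally integrable, we can partition $\mathcal{X}$ into countably many disjoint Borel sets $A_1,A_2,\dots$, each with $\Lambda(A_k)<\infty$. Let $\Phi_k=\Phi\cap A_k$ and $\Phi_k^c=\Phi\setminus A_k$. These are independent Poisson processes with rate functions $\lambda\mathbf{1}_{A_k}$ and $\lambda\mathbf{1}_{A_k^c}$, by the independence of counts on disjoint regions. Split the left-hand side as $\sum_k \E{\sum_{x\in\Phi_k} g(x,\Phi_k\setminus\{x\}\cup\Phi_k^c)}$. Condition on $\Phi_k^c=\psi$ and apply the finite-mass case to $\Phi_k$ with the function $g_\psi(x,\phi)=g(x,\phi\cup\psi)$. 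This yields $\int_{A_k}\lambda(x)\,\E{g(x,\Phi_k\cup\psi)}\,dx$. Integrating over the law of $\Phi_k^c$, using independence and Fubini, gives $\int_{A_k}\lambda(x)\,\E{g(x,\Phi)}\,dx$. Summing over $k$ completes the proof.

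The main obstacle is bookkeeping rather than computation. We must justify that the restrictions $\Phi_k$ and $\Phi_k^c$ are independent Poisson processes, which requires identifying the law of a point process from its finite-dimensional count distributions. We must also check that $(x,\phi,\psi)\mapsto g(x,\phi\cup\psi)$ is jointly measurable on the space of locally finite sets. I would handle both with the standard $\pi$–$\lambda$ argument on the $\sigma$-algebra generated by counting maps, citing \citet{lastpenrose2017lectures} for these routine facts.
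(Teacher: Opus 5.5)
Your argument is correct. The paper gives no proof of this statement: it records it as a known fact and cites \citet{slivnyak1962some,mecke1967stationare}, \citet[Theorem~4.1]{lastpenrose2017lectures}, and equation~(17) of \citet{coeurjolly2017tutorial} for this deletion form. What you have written is essentially the standard textbook proof behind those citations. The mixed binomial representation handles finite total mass, and the rest is handled by a countable partition into finite-mass pieces, conditioning on the independent part of $\Phi$ outside each piece. That conditioning step is also the device the paper uses when it applies the fact, in the proofs of \Cref{cor:poisson-arrival-count} and \Cref{lemma:iia}.

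Relative to the citation, your route buys self-containment. It uses only the two properties in \Cref{def:Poisson process} plus routine $\pi$--$\lambda$ bookkeeping. It also proves the deletion form directly, instead of deriving it from the addition form $\E{\int f(x,\eta)\,\eta(dx)}=\int\E{f(x,\eta+\delta_x)}\,\lambda(dx)$, which requires passing between sets and counting measures. The citation buys generality at no cost: $s$-finite intensity measures on arbitrary measurable spaces, and the converse statement that the equation characterizes Poisson processes. The paper needs neither.

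Two small points should be made explicit. First, the i.i.d.\ points $X_1,\dots,X_N$ are almost surely distinct because their law has a density. You need this both for the mixed binomial set to satisfy property~1 of \Cref{def:Poisson process} and to write $\Phi\setminus\{X_k\}=\{X_j:j\ne k\}$. Second, the case $\Lambda=0$, where $\lambda/\Lambda$ is undefined, should be noted as trivial.
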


\Cref{fact:poisson-arrival-calculus} is the main tool in the proof of our key stochastic result, \Cref{lem:gammabound} in \Cref{sec:priority-policies}. Together with the marking property (\Cref{fact:productppp}), it also implies the third result, which is known as PASTA, for ``Poisson arrivals see time averages''~\citep{wolff1982poisson}: in the long run, the fraction of arrivals of a given type that find the system in a given set of states equals the fraction of time that the system spends in these states. To state it, for an online policy $\pol$, let $N^{\pol}_i(t)$ be the number of type-$i$ agents available just before time $t$, and let $\mathbf{N}^{\pol}(t)=(N^{\pol}_i(t))_{i\in\types}$ be the \emph{state} of the system.

% [Version 2: the following corollary was moved here from Appendix A (formerly Corollary A.3); its proof is in Appendix A.]
\begin{restatable}[PASTA]{corollary}{pasta}
\label{cor:pasta}
Let $\pol$ be an online policy, and fix a type $j$ and a set of states $C$. Then for every $t\ge 0$,
\[
  \E{\sum_{a\in\agentset_j(t)} \indic{\mathbf{N}^{\pol}(t_a)\in C }}
  =
  \arate{j}\int_0^t \PP{\mathbf{N}^{\pol}(s) \in C}ds.
\] 
In particular, if $\PP{\mathbf{N}^{\pol}(s) \in C} \to p$ as $s\to\infty$, then
\[
  \lim_{t \to \infty}\frac{\E{\sum_{a\in\agentset_j(t)} \indic{\mathbf{N}^{\pol}(t_a)\in C }}}{t}
  = \lambda_jp.
\]
\end{restatable}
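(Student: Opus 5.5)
The plan is to apply \Cref{fact:poisson-arrival-calculus} to the marked Poisson process of type-$j$ arrivals and then integrate over time. By the marking property (\Cref{fact:productppp}), the pairs $(t_a,s_a)$ of type-$j$ agents form a Poisson process $\Phi_j$ on $[0,\infty)^2$ with rate $\lambda_j\mu_j e^{-\mu_j\ell}$. The arrival and sojourn processes of the other types, together with any independent randomization of the policy, play the role of a random environment $\Psi$ that is independent of $\Phi_j$. I will condition on $\Psi$ throughout and then average over it at the end.

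The key step is to express the indicator $\indic{\mathbf{N}^{\pol}(t_a)\in C}$ as a measurable function $g\big((t_a,s_a),\Phi_j\setminus\{(t_a,s_a)\}\big)$, where the function may also depend on $\Psi$. This works because $\mathbf{N}^{\pol}(t_a)$ is the state just before time $t_a$. Since $\pol$ is online, the state is determined by everything revealed strictly before $t_a$. That information consists of the arrivals before $t_a$ and the departures that have occurred before $t_a$, and none of it involves agent $a$ itself. So the state is a function of the other points of $\Phi_j$ (and of $\Psi$), evaluated at time $t_a$.

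With this in hand, the computation is as follows. Set
\[
g\big((u,\ell),\phi\big)=\indic{u\le t}\,\indic{\mathbf{N}^{\pol}(u;\phi,\Psi)\in C}.
\]
\Cref{fact:poisson-arrival-calculus} then gives
\[
\int_0^t\int_0^\infty \PP{\mathbf{N}^{\pol}(u;\Phi_j,\Psi)\in C}\,\lambda_j\mu_je^{-\mu_j\ell}\,d\ell\,du .
\]
The state at $u$ does not depend on $\ell$, so the inner integral over $\ell$ equals $1$. What remains is $\lambda_j\int_0^t \PP{\mathbf{N}^{\pol}(u)\in C}\,du$. The state here is computed from the unperturbed process, which is exactly the state of the true system. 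Taking expectation over $\Psi$ gives the first identity.

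The main obstacle is justifying the measurability and the "no-anticipation" claim rigorously. In particular, I need that inserting the point $(u,\ell)$ does not change the state strictly before $u$, and that a randomized policy's internal randomness can be absorbed into $\Psi$. Both follow from the definition of online policies, once a small amount of care is taken with left limits at $u$.

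For the limit statement, divide the identity by $t$ and apply Cesàro averaging. If $\PP{\mathbf{N}^{\pol}(s)\in C}\to p$, then $\tfrac1t\int_0^t \PP{\mathbf{N}^{\pol}(s)\in C}\,ds\to p$, because the integrand is bounded by $1$ and converges. This yields the limit $\lambda_j p$.
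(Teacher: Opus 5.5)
Your proposal is correct and follows essentially the paper's route: the paper proves \Cref{cor:poisson-arrival-count} by applying the Slivnyak--Mecke formula to the marked type-$j$ process conditionally on the other types (your computation, inner $\ell$-integral equal to one), then uses that for an online policy the event $\{\mathbf{N}^{\pol}(s)\in C\}$ is determined by the history before $s$, conditioning on any independent randomization. The limit statement follows by the same Ces\`aro argument you give.
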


\subsection{Benchmark and Guarantee}
\label{sec:benchmark-guarantee}

\paragraph{The omniscient benchmark.}
We evaluate policies by comparing their long-run average reward against an \emph{omniscient}
benchmark: the offline optimum that observes the entire realized sequence of
arrivals and abandonments and matches optimally in hindsight. For each horizon $t$, the largest reward attainable by time $t$ is the weight of a maximum-weight matching of the compatibility graph $\graph(t)$; see \Cref{fig:omn-benchmark} for an illustration.

\begin{definition}[Omniscient benchmark]
\label{def:omniscient}
For an instance $\inst$ and time $t > 0$, let $\offoptt$ be the expected weight
of a maximum-weight matching of $\graph(t)$, divided by $t$. The
\textbf{omniscient benchmark} is $\offopt = \limsup_{t \to \infty} \offoptt$.
\end{definition}

% ---------- colours ----------
\definecolor{typefill}{RGB}{191,224,231}
\definecolor{typedraw}{RGB}{60,96,110}
\definecolor{edgecol}{RGB}{125,178,193}
\definecolor{matchcol}{RGB}{176,30,36}
\definecolor{panelframe}{RGB}{190,198,204}
\definecolor{paneltitle}{RGB}{238,243,246}

% ---------- inline type symbols ----------
% Each symbol is centered 0.7ex above the baseline, so the circle, diamond and
% triangle line up with each other and with the digits in each row.
\newcommand{\typecircle}{%
  \tikz[baseline={([yshift=-0.7ex]current bounding box.center)}]\node[circle,draw=typedraw,line width=0.5pt,fill=typefill,minimum size=1.2em,inner sep=0pt]{};%
}
\newcommand{\typediamond}{%
  \tikz[baseline={([yshift=-0.7ex]current bounding box.center)}]\node[diamond,draw=typedraw,line width=0.5pt,fill=typefill,aspect=1,minimum size=1.3em,inner sep=0pt]{};%
}
\newcommand{\typetriangle}{%
  \tikz[baseline={([yshift=-0.7ex]current bounding box.center)}]\node[regular polygon,regular polygon sides=3,draw=typedraw,line width=0.5pt,fill=typefill,minimum size=1.4em,inner sep=0pt]{};%
}
\newcommand{\smalltypecircle}{%
  \tikz[baseline=-0.45ex]\node[circle,draw=typedraw,line width=0.45pt,fill=typefill,minimum size=0.9em,inner sep=0pt]{};%
}
\newcommand{\smalltypediamond}{%
  \tikz[baseline=-0.5ex]\node[diamond,draw=typedraw,line width=0.45pt,fill=typefill,aspect=1,minimum size=1.0em,inner sep=0pt]{};%
}
\newcommand{\smalltypetriangle}{%
  \tikz[baseline=-0.62ex]\node[regular polygon,regular polygon sides=3,draw=typedraw,line width=0.45pt,fill=typefill,minimum size=1.05em,inner sep=0pt]{};%
}
% priority symbol with even spacing between type symbols
\newcommand{\prioarrow}{\,$\succ$\,}

% ---------- graph node styles ----------
\tikzset{
  circnode/.style={circle,draw=typedraw,line width=0.7pt,fill=typefill,minimum size=7.2mm,inner sep=0pt},
  diamondnode/.style={diamond,draw=typedraw,line width=0.7pt,fill=typefill,aspect=1,minimum size=8.4mm,inner sep=0pt},
  trianglenode/.style={regular polygon,regular polygon sides=3,draw=typedraw,line width=0.7pt,fill=typefill,minimum size=10.2mm,inner sep=0pt},
  compatedge/.style={draw=edgecol,line width=1.3pt},
  matchedge/.style={draw=matchcol,line width=2.2pt},
  weight/.style={font=\normalsize,fill=white,inner sep=1.2pt,rounded corners=1pt}
}

% ---------- panel wrapper ----------
% \panelbox[<tcolorbox options>]{<width>}{<title>}{<content>}
\newcommand{\panelbox}[4][]{%
  \begin{tcolorbox}[enhanced,nobeforeafter,box align=top,valign=center,width=#2,
    colback=white,colframe=panelframe,boxrule=0.6pt,arc=1.5mm,
    left=5pt,right=5pt,top=5pt,bottom=4pt,
    colbacktitle=paneltitle,coltitle=black,fonttitle=\small\bfseries,
    halign title=center,toptitle=2pt,bottomtitle=2pt,lefttitle=3pt,righttitle=3pt,titlerule=0.4pt,
    title={\strut #3},#1]
  #4
  \end{tcolorbox}%
}

% ---------- panel contents ----------
\newcommand{\AgentTablePanel}{%
\centering\small
\setlength{\tabcolsep}{2.4pt}%
\renewcommand{\arraystretch}{1.12}%
\begin{tabular}{cccc}
\toprule
Agent & Type & Arrival & Departure\\
\midrule
$A$ & \smalltypecircle & $0.3$ & $2.9$\\
$B$ & \smalltypediamond & $1.7$ & $4.8$\\
$C$ & \smalltypetriangle & $2.5$ & $4.5$\\
$D$ & \smalltypediamond & $4.2$ & $5.8$\\
$E$ & \smalltypecircle & $5.3$ & $5.6$\\
\bottomrule
\end{tabular}%
}

\newcommand{\RewardTablePanel}{%
\centering\small
\setlength{\tabcolsep}{5pt}%
\renewcommand{\arraystretch}{1.3}%
\begin{tabular}{c|ccc}
 & \typecircle & \typediamond & \typetriangle\\[1pt]
\hline
\typecircle & $0$ & $5$ & $4$\\
\typediamond & $5$ & $8$ & $10$\\
\typetriangle & $4$ & $10$ & $0$\\
\end{tabular}%
}

% #1: style of the B--C edge (compatedge in Figure 2, matchedge in Figure 3)
\newcommand{\GraphPanel}[1][compatedge]{%
\centering
\resizebox{\linewidth}{!}{%
\begin{tikzpicture}
  \node[circnode] (A) at (0.3,0) {$A$};
  \node[diamondnode] (B) at (1.7,1.75) {$B$};
  \node[trianglenode] (C) at (2.6,0) {$C$};
  \node[diamondnode] (D) at (4.2,1.75) {$D$};
  \node[circnode] (E) at (5.4,0) {$E$};
  \draw[compatedge] (A) -- node[weight,above left=-1pt and 0pt,fill=none] {$5$} (B);
  \draw[compatedge] (A) -- node[weight,below=0.5pt,fill=none] {$4$} (C);
  \draw[#1] (B) -- node[weight,pos=0.48] {$10$} (C);
  \draw[compatedge] (B) -- node[weight,above=0.5pt,fill=none] {$8$} (D);
  \draw[compatedge] (C) -- node[weight,pos=0.52] {$10$} (D);
  \draw[compatedge] (D) -- node[weight,above right=-1pt and 0pt,fill=none] {$5$} (E);
\end{tikzpicture}}%
}

\newcommand{\PolicyTablePanel}{%
\centering\small
\setlength{\tabcolsep}{2.4pt}%
\renewcommand{\arraystretch}{1.6}%
\begin{tabular}{cl}
\toprule
Type $j$ & Priority list $\succ_j$\\
\midrule
\smalltypecircle & \smalltypetriangle\prioarrow$\varnothing$\\
\smalltypetriangle & \smalltypediamond\prioarrow\smalltypecircle\prioarrow$\varnothing$\\
\smalltypediamond & \smalltypetriangle\prioarrow\smalltypediamond\prioarrow$\varnothing$\\
\bottomrule
\end{tabular}%
}

\newcommand{\WalkthroughPanel}{%
\raggedright\footnotesize
\begin{enumerate}[leftmargin=*,itemsep=1.5pt,topsep=0pt,parsep=0pt,partopsep=0pt,label=\arabic*.]
\item \textbf{$A$ arrives}. No available \smalltypetriangle, so $A$ waits.
\item \textbf{$B$ arrives}. No available \smalltypetriangle{} or \smalltypediamond, so $B$ waits.
\item \textbf{$C$ arrives}. It prioritizes \smalltypediamond{} over \smalltypecircle, so $C$ matches with $B$.
\item \textbf{$A$ abandons}.
\item \textbf{$D$ arrives}. No available agents, so $D$ waits.
\item \textbf{$E$ arrives}. Only $D$ is available, and \smalltypediamond{} is not acceptable, so $E$ waits.
\end{enumerate}%
}

% ----------------------------------------------------------------
% Figure 1: OMN / omniscient policy
% Reorder the three panel macros below if you want a different left-to-right order.
% ----------------------------------------------------------------
\begin{figure}[!t]
\centering
\panelbox[equal height group=omnexample]{0.345\textwidth}{History}{\AgentTablePanel}\hfill
\panelbox[equal height group=omnexample]{0.37\textwidth}{Compatibility graph $G(t)$}{\GraphPanel}\hfill
\panelbox[equal height group=omnexample]{0.265\textwidth}{Match rewards}{\RewardTablePanel}
\caption{An example with three types and five arriving agents. Given the history shown on the left, the corresponding graph $G(t)$ is shown in the middle, with edge weights determined by the match rewards. There are two maximum-weight matchings: $\{(A,B),(C,D)\}$ and $\{(B,C),(D,E)\}$, both with weight $15$; $\textup{OMN}(t)$ is the expected weight of a maximum-weight matching of $G(t)$, divided by $t$.}
\label{fig:omn-benchmark}
\end{figure}

Because any matching policy must choose a matching of $G(t)$, the omniscient benchmark bounds the performance of any policy, including the optimal online policy.

\begin{definition}[Competitive ratio]
\label{def:competitive-ratio}
A policy $\pol$ is \textbf{$c$-competitive}, for $c \in [0,1]$, if for every instance $\inst$, $\val{\pol} \ge c \cdot \offopt$. Its
\textbf{competitive ratio} is the largest $c$ for which this guarantee holds, 
$\inf_{\inst} \val{\pol}/\offopt$.
\end{definition}
%\na{This definition is fine, but we don't use the phrase ``competitive ratio'' much in the rest of the paper.}

Our goal is to find an online policy whose competitive ratio is as large as possible.
\paragraph{A linear programming relaxation of the benchmark.}
% The benchmark $\offopt$ is itself difficult to analyze ---it is the expected weight of a maximum-weight matching of a random graph---so we work instead with the linear
% program below, which upper-bounds it. The program has one variable $\xrate{i}{j} \ge 0$ per ordered pair of types, interpreted as the long-run
% rate of $(i,j)$ matches, and one variable $n_i \ge 0$ for each type, interpreted as the long-run average number of type $i$ agents waiting in the pool.
The benchmark $\offopt$ is itself difficult to analyze---it is the expected weight of a maximum-weight matching of a random graph---so we work instead with the linear
program below, which upper-bounds it. The program has one variable $\xrate{i}{j} \ge 0$ per ordered pair of types, interpreted as the long-run
rate of $(i,j)$ matches. %, and one variable $\alpha_i \ge 0$ for each type, interpreted as the long-run rate at which type-$i$ agents leave unmatched. 

%\na{Why do we introduce $\alpha_i$?}
\begin{lpbox}{Upper-bound LP --- $\offrbrel$}
\begin{align}
  \offrbrel \;\triangleq\; \max_{\bx \ge 0}\quad
   & \sum_{i,j \in \types} \rew{i}{j}\xrate{i}{j}
   \nonumber \\[2pt]
  \textrm{s.t.}\quad
   & \sum_{i \in \types} \xrate{i}{j} + \sum_{i \in \types} \xrate{j}{i}
     \;\le\; \arate{j}
   & &\forall j \in \types, \label{eq:omn-balance}\tag{\textsc{OMN-Supply}}\\
   & \sum_{i \in S} \xrate{i}{j}
     \;\le\; \arate{j}\Big( 1 - e^{-\sum_{i \in S} \arate{i}/\drate{i}} \Big)
   & &\forall j \in \types,\ S \subseteq \types. \label{eq:omn-match}\tag{\textsc{OMN-Presence}}
\end{align}
\end{lpbox}

% \begin{lpbox}{Upper-bound LP --- $\offrbrel$}
% \begin{align}
%   \offrbrel \;\triangleq\; \max_{\bn,\bx \ge 0}\quad
%    & \sum_{i,j \in \types} \rew{i}{j}\xrate{i}{j}
%    \nonumber \\[2pt]
%   \textrm{s.t.}\quad
%    &  n_i \mu_i +\sum_{i \in \types} \xrate{i}{j} + \sum_{i \in \types} \xrate{j}{i}
%      \;=\; \arate{j}
%    & &\forall j \in \types, \label{eq:omn-balance}\tag{\textsc{Balance-I}}\\
%    & \sum_{i \in S} \xrate{i}{j}
%      \;\le\; \arate{j}\Big( 1 - e^{-\sum_{i \in S} \arate{i}/\drate{i}} \Big)
%    & &\forall j \in \types,\ S \subseteq \types. \label{eq:omn-match}\tag{\textsc{Availability-I}}
% \end{align}
% \end{lpbox}

\noindent
%\na{I propose that we introduce language to describe these constraints. For example, we could call \eqref{eq:omn-balance} "supply" constraints and \eqref{eq:omn-match} "availability" constraints. I am open to other names as well. But these names suggest the meaning of the constraint, and allow us to say things like "take $v_i$ to be the dual variables associated with the supply constraints", or "look at the tight availability constraints".}
%Constraint~\eqref{eq:omn-balance} is a flow-balance bound: the total rate of matches involving a type-$j$ agent cannot exceed its arrival rate $\arate{j}$. Constraint~\eqref{eq:omn-match} bounds the rate at which arriving type-$j$ agents match a previously available agent of a type in $S$. Such a match requires that some agent of a type in $S$ be present, an event whose probability we can calculate from \cref{fact:poisson-occupancy} which showed that distribution of the number of such agents is Poisson with mean $\tfrac{\arate{i}}{\drate{i}}\big(1-e^{-\drate{i}t}\big)$. The complete proof of \Cref{prop:offline-bound} can be found in \Cref{apx:offline-bound}. We present a proof sketch that captures the main idea behind this result. 

With these variables, Constraint~\eqref{eq:omn-balance} is a flow-balance constraint; in particular, the total rate of matches involving type-$j$ agents is no more than the rate $\arate{j}$ for arrivals of type-$j$ agents.\footnote{As we will see in \Cref{sec:priority-policies}, this constraint in the equality from---which can be equivalently derived by introducing new variables $\boldsymbol{\alpha}$ for its non-negative slack, where $\alpha_j$ is interpreted as the total rate at which type-$j$ agents leave unmatched---parallels the balance equation of static priority policies.} Constraint~\eqref{eq:omn-match}
bounds the rate at which arriving type-$j$ agents are matched with earlier arrivals whose types are in $S$. The following proposition shows that $\offrbrel$ upper-bounds the omniscient benchmark. We sketch its proof below and postpone the full proof to \Cref{apx:omittedproofs-prelim}. 

% Constraint~\eqref{eq:omn-balance} is a flow-balance bound: the total rate of
% matches involving a type-$j$ agent plus the total rate at which type-$j$ agents leave unmatched cannot exceed its arrival rate $\arate{j}$. \na{This description matches the version of the constraint without $\alpha_i$; with $\alpha_i$, we might want to say total matches + abandonment = arrivals. } Constraint~\eqref{eq:omn-match}
% bounds the rate at which arriving type-$j$ agents match a previously available
% agent of a type in $S$. Such a match requires that some agent of a type in $S$ be present, an event whose probability can be calculated exactly, and shown to be at most $1 - e^{-\sum_{i \in S} \lambda_i/\mu_i}$. This is established in \Cref{prop:offline-bound}. We present a proof sketch below, and the full proof is postponed to \Cref{apx:omittedproofs-prelim}. 

% \na{The paragraph above and proof sketch below are saying many of the same things, and maybe could be combined.}

\begin{restatable}{proposition}{offlinebound}
\label{prop:offline-bound}
For every instance $\inst$, $\offopt \le \offrbrel$.
\end{restatable}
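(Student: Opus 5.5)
The plan is to take, for each horizon $t$, a maximum-weight matching of $\graph(t)$ and define $\bar x_{ij}(t)$ as $1/t$ times the expected number of its $(i,j)$ edges (the edge joins an earlier type-$i$ agent with a later type-$j$ agent). Then $\offoptt=\sum_{i,j}\rew{i}{j}\bar x_{ij}(t)$. I will show that $\bar{\mathbf{x}}(t)$ nearly satisfies the constraints of $\offrbrel$, with violations vanishing as $t\to\infty$. Next, I will pass to a subsequence $t_k$ along which $\offopt(t_k)\to\offopt$ and $\bar{\mathbf{x}}(t_k)$ converges. The limit is feasible for $\offrbrel$, which gives $\offopt\le\offrbrel$.

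\textbf{Supply.} Every type-$j$ agent arriving before $t$ lies in at most one matched edge. So $\sum_i \bar x_{ij}(t)+\sum_i\bar x_{ji}(t)\le \frac1t\E{|\agentset_j(t)|}=\arate{j}$. This holds exactly and also bounds the $\bar x$ uniformly, so compactness gives a convergent subsequence.

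\textbf{Presence.} Fix $j$ and $S$. Each $(i,j)$ edge with $i\in S$ pairs a type-$j$ agent $b$ with an earlier agent $a$ of type in $S$, and $a$ and $b$ are both present at some common time. Since $\atime{a}<\atime{b}$, this means $a$ is present at time $\atime{b}$. Each type-$j$ agent is in at most one edge, so the number of such edges is at most the number of type-$j$ arrivals $b$ at whose arrival time some agent with type in $S$ is present. This uses only presence, not availability: a matching in hindsight places no constraint on agents being unmatched. Then I apply \Cref{fact:poisson-arrival-calculus} to the type-$j$ arrival process, which is independent of other types. I could instead use the type-$j$ arrivals together with their marks via \Cref{fact:productppp}, but the presence of type-$S$ agents does not depend on type-$j$ agents when $j\notin S$.

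When $j\in S$, the presence of other type-$j$ agents at $\atime{b}$ involves $\Phi\setminus\{b\}$. This is exactly what the Mecke formula handles: with $g(x,\Phi)$ the indicator that some agent with type in $S$ other than $x$ is present at time $x$, the formula gives
\[
\E{\#\text{such edges}}\le \arate{j}\int_0^t \PP{\text{some type-}S\text{ agent present at } s}\,ds .
\]
By \Cref{fact:poisson-occupancy}, the number of present type-$S$ agents is Poisson with mean at most $\sum_{i\in S}\arate{i}/\drate{i}$. Hence the probability inside the integral is at most $1-e^{-\sum_{i\in S}\arate{i}/\drate{i}}$, and dividing by $t$ gives the presence constraint exactly, even for finite $t$.

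The main obstacle is the careful application of the Mecke formula to the marked process, including the distinguished point's own sojourn mark and the $j\in S$ case. This needs a clean joint Poisson description: the type-$j$ arrivals and the marked processes of the other types are independent, so conditioning on a type-$j$ arrival at $s$ leaves the rest distributed as before. Measurability of the max-weight matching also needs a tie-breaking rule. Since both constraints then hold for every $t$ exactly, the limiting step is simple. The LP is a maximization over a compact polytope, so $\offoptt\le\offrbrel$ for every $t$, and taking $\limsup$ finishes the proof.
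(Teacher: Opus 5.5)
Your proposal is correct and follows essentially the same route as the paper: for every fixed $t$, the hindsight match rates satisfy \eqref{eq:omn-balance} exactly because each agent is matched at most once. They also satisfy \eqref{eq:omn-match} exactly, because an $(i,j)$ edge with $i\in S$ forces some $S$-type agent to be present when the type-$j$ agent arrives; the bound then follows from the Mecke formula applied to the marked type-$j$ process conditional on the other types, which the paper packages as \Cref{cor:poisson-arrival-count}, together with \Cref{fact:poisson-occupancy}. Hence $\offoptt\le\offrbrel$ for every $t$ and the $\limsup$ follows. As you already observe, this makes the subsequence/compactness step unnecessary, and for $j\in S$ the marked process (which carries the other type-$j$ agents' sojourn times) is required rather than optional, which is exactly how the paper sets it up.
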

%\todo{Return and modify, explicitly addressing what Poisson Facts we are using.}
\begin{proof}[Proof sketch]
It suffices to show that for every policy $\pol$ (even with hindsight information) and every horizon $t$, the match rates $\xrate{i}{j}^{\pol}(t)$ satisfy \eqref{eq:omn-balance} and \eqref{eq:omn-match}. Constraint~\eqref{eq:omn-balance} holds because each agent is matched at most once, so the rate of matches involving type-$j$ agents is at most $\arate{j}$; the variable $\alpha_j$ fills the slack. Turning to \eqref{eq:omn-match}, an arriving type-$j$ agent can contribute to the left side of \eqref{eq:omn-match} only if some earlier arrival with a type in $S$ is still present upon its arrival. By \cref{fact:poisson-occupancy} (see \Cref{sec:poisson-background}), the number of type-$i$ agents present at time $s$ is Poisson with mean $\tfrac{\arate{i}}{\drate{i}}(1 - e^{-\drate{i} s}) \le \arate{i}/\drate{i}$, independently across types. Since a sum of independent Poisson random variables is Poisson, the probability that no earlier arrival with a type in $S$ is present at time $s$ is at least $e^{-\sum_{i \in S}\arate{i}/\drate{i}}$. By \Cref{cor:poisson-arrival-count}, a version of PASTA property (\Cref{cor:pasta}) for events that depend on the history up to any arbitrary time $t$, the expected number of type-$j$ agents that arrive during $[0,t]$ and find some earlier arrival with a type in $S$ still available (and hence present) is therefore at most $\arate{j}t\big(1 - e^{-\sum_{i \in S}\arate{i}/\drate{i}}\big)$. Dividing by $t$ gives~\eqref{eq:omn-match}.
\end{proof}

The relaxation $\offrbrel$ is tighter than those used in prior work, which bound matches with only one type at a time: \citet{kessel2022stationary} use the singleton cases $|S|=1$ of \eqref{eq:omn-match}, and \citet{collina2020dynamic} use a weaker form of these singleton constraints. Constraints for larger sets matter: an arriving agent can be matched with at most one partner, and the probability that some type in $S$ is present can be much smaller than the sum of the probabilities that each type in $S$ is present. 
% [Version 2: the sentence above addresses the following comment.]
% \na{Could say a little more.} 
The work of  \citet{amanihamedani2024improved} allows for all subsets $S$, but assumes a restricted setting where types are partitioned into two sides, all valid matches involve one type from each side, and one side must be matched immediately upon arrival.

%\na{Could say a little more.} 

\section{Static Priority Policies}
\label{sec:priority-policies}
We now introduce the class of simple deterministic policies that are the focus of this paper. Formally, a \emph{static priority policy} $\pol$ specifies, for each type $j \in \types$, a strict order $\porder{j}$ over $\types \cup \{\emptyset\}$, called the \emph{priority list} of type $j$. We say that type $i$ is {\em acceptable} to type $j$ if $i \succ_j \emptyset$, and let 
\[\m{M}(\pi) = \{(i,j) : i \succ_j \emptyset\}\]
be the set of acceptable matches. The policy will never make an unacceptable match.

Recall from \Cref{def:present-available} that an agent is available at time $t$ if it is present and has not yet been matched. We say that a {\em type} is  available if there is an available agent of that type. When an agent of type $j$ arrives, the policy matches this agent to the acceptable and available type with the highest priority (according to $\succ_j$). If  no acceptable type is available, the arriving agent joins the pool and waits. Therefore, every match made by a static priority policy pairs a newly arrived agent with a waiting agent (two waiting agents are never matched to each other). Ties among available agents of the same type are broken by an arbitrary but fixed rule.\footnote{\label{footnote:tie-break}For example, among agents of a given type, we might match the earliest arrival. The choice of tie-breaking rule does not affect any of our analysis or results, which depend only on the number of available agents of each type $\mathbf{N}^{\pol}$.} \Cref{fig:static-priority} illustrates how a static priority policy operates on a simple example.

% ----------------------------------------------------------------
% Figure 2: static priority policy
% Reorder the three panel macros below if you want a different left-to-right order.
% ----------------------------------------------------------------

\begin{figure}[!t]
\centering
\panelbox[equal height group=spexample]{0.29\textwidth}{Compatibility graph $G(t)$}{\GraphPanel[matchedge]}\hfill
\panelbox[equal height group=spexample,left=3pt,right=3pt]{0.43\textwidth}{Static priority walkthrough}{\WalkthroughPanel}\hfill
\panelbox[equal height group=spexample]{0.26\textwidth}{Static priority policy}{\PolicyTablePanel}
\caption{Illustration of a static priority policy on the same example from \Cref{fig:omn-benchmark}, with five agents belonging to three types. The policy is given in the box on the right. The middle panel explains how the policy operates on this example; the policy's only match, between $B$ and $C$, is highlighted in red in the graph on the left. The policy earns a total reward of 10, whereas the maximum possible reward is 15. The best choice of policy depends on the arrival rates, departure rates, and rewards. We show how to identify a policy that earns at least  half of the omniscient benchmark.}
\label{fig:static-priority}
\end{figure}

\paragraph{Steady state.}
Under a static priority policy, whether a match occurs depends only on which
types are currently available, so the count of available agents of each type  evolves as a
Markov process. Let $\Npres{i}^{\pol}(t)$ be the number of type-$i$ agents available just before time $t$, as in \Cref{sec:poisson-background}. We refer to 
$\mathbf{N}^{\pol}(t) = (\Npres{i}^{\pol}(t))_{i \in \types}$ as the \emph{state}. We show that this state converges to a stationary distribution.

\begin{restatable}{lemma}{stationaryctmc}
\label{lemma:stationaryctmc}
For any static priority policy $\pol$, the state $\mathbf{N}^{\pol}(t)$ is a continuous-time Markov chain. This chain is ergodic, and has a unique stationary distribution. Therefore, the following limits exist:
\begin{equation}\label{eq:nxpi}
\nss{i}^{\pol} \triangleq
  \lim_{t \to \infty} \E{\Npres{i}^{\pol}(t)},
  \quad \xrate{i}{j}^{\pol} \mathrel{\triangleq} \lim_{t \to \infty} \tfrac{1}{t}
      \E{\Xmatch{i}{j}^{\pol}(t)}, \quad \val{\pi} \triangleq\lim_{t \to \infty} \valt{\pi}
\end{equation}
\end{restatable}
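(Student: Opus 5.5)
We will argue in three steps: the Markov property, positive recurrence via domination by the no-match system, and the existence of the limits via ergodic theorems.

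\textbf{Markov property.} The state $\mathbf{N}^{\pol}(t)$ takes values in $\mathbb{Z}_{\ge 0}^{\types}$. From state $\mathbf{n}$, the possible transitions are the following.
\begin{itemize}
\item An arrival of type $j$ occurs at rate $\arate{j}$. If some type $i$ with $i\succ_j\emptyset$ has $n_i>0$, let $i^\star$ be the $\succ_j$-highest such type; then $n_{i^\star}$ decreases by one. Otherwise $n_j$ increases by one.
\item An abandonment of a type-$i$ agent occurs at rate $\drate{i}n_i$, and then $n_i$ decreases by one.
\end{itemize}
By memorylessness of exponential sojourn times and independence of the Poisson arrivals, each available agent abandons at rate $\drate{i}$ regardless of its past. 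The tie-breaking rule does not matter because it only selects which agent of type $i^\star$ leaves, and all such agents have identical residual sojourn distributions. Hence the process is a continuous-time Markov chain. All rates are finite and the total exit rate from $\mathbf{n}$ is at most $\sum_j\arate{j}+\sum_i\drate{i}n_i$, so the chain is non-explosive.

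\textbf{Irreducibility and positive recurrence.} Take the closed class containing $\mathbf 0$. State $\mathbf 0$ is reachable from every state through abandonments alone. Since the chain starts at $\mathbf 0$, we restrict attention to the communicating class of $\mathbf 0$; this class is closed, because every reachable state leads back to $\mathbf 0$. For positive recurrence we use a coupling with presence: available agents are a subset of present agents, so $\sum_i N_i^{\pol}(t)$ is dominated by the number of present agents. By \Cref{fact:poisson-occupancy}, that number is Poisson with bounded mean $\sum_i\arate{i}/\drate{i}$.

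We would make this rigorous in one of two ways.
\begin{itemize}
\item \emph{Foster–Lyapunov.} Take $V(\mathbf n)=\sum_i n_i$. Its drift is at most $\sum_j\arate{j}-\sum_i\drate{i}n_i$, which is at most $-1$ outside a finite set. By Foster's criterion, the chain is positive recurrent on its class.
\item \emph{Return times.} Bound the expected return time to $\mathbf 0$ by comparison with an $M/M/\infty$-type system.
\end{itemize}
The Lyapunov route is cleanest. The main subtlety to handle carefully is that the state space is countable and unbounded. Positive recurrence plus irreducibility on the class gives a unique stationary distribution $\nu$ and $\mathbf N^{\pol}(t)\Rightarrow\nu$. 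Aperiodicity is automatic for continuous-time chains.

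\textbf{Existence of the limits.}
\begin{itemize}
\item \emph{Mean counts $\nss{i}^{\pol}$.} Convergence in distribution alone is not enough for $\E{\Npres{i}^{\pol}(t)}\to\sum_{\mathbf n}n_i\nu(\mathbf n)$. We add uniform integrability. Since $N_i^{\pol}(t)$ is dominated by a Poisson variable with mean at most $\arate{i}/\drate{i}$, its second moments are uniformly bounded, and this yields the limit. The same domination gives finiteness of the stationary mean.
\item \emph{Match rates $\xrate{i}{j}^{\pol}$.} An $(i,j)$ match occurs exactly when a type-$j$ agent arrives in a state belonging to $C_{ij}=\{\mathbf n: n_i>0,\ n_k=0\ \forall k\succ_j i,\ k\succ_j\emptyset\}$, provided $i\succ_j\emptyset$. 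Therefore $\E{\Xmatch{i}{j}^{\pol}(t)}$ equals the expected number of type-$j$ arrivals in $[0,t]$ that see the state in $C_{ij}$. By \Cref{cor:pasta} with $p=\nu(C_{ij})$, we obtain $\xrate{i}{j}^{\pol}=\arate{j}\nu(C_{ij})$.
\item \emph{Reward $\val{\pol}$.} The limit $\val{\pol}$ exists as the finite linear combination $\sum_{i,j}\rew{i}{j}\xrate{i}{j}^{\pol}$.
\end{itemize}
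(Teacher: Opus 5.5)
Your proposal is correct and takes essentially the same route as the paper. Both arguments use the Markov property from memorylessness, irreducibility on the class reachable from $\mathbf{0}$, the Foster--Lyapunov criterion with $W(\mathbf{n})=\sum_i n_i$, uniform integrability via domination by the Poisson number of present agents, and PASTA (\Cref{cor:pasta}) on the sets $C_{ij}$ for the match rates; the one step to state explicitly is that $\mathbb{P}(\mathbf{N}^{\pi}(s)\in C_{ij})\to\nu(C_{ij})$ follows from convergence in distribution because the state space is countable, as the paper notes.
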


These quantities will be our key summary statistics of interest. The quantity $n_i^\pi$ is the expected (or average) number of available agents of type $i$. Meanwhile, the quantity $x_{ij}^\pi$ is the long-run rate at which the policy makes $(i,j)$ matches. The full proofs of the results of this section are in \Cref{apx:omittedproofs-priority}.  

We now identify some key properties of static priority policies. These will be used to motivate two novel families of linear programs: $\plpm$ (in \Cref{sec:primal}) and $\plblp(\pi)$ (in \Cref{sec:values}). We then use these linear programs to identify a good policy and bound its performance.

%The proof is postponed to \cref{apx:stationaryctmc}.

%We now prove two key properties of static priority policies on which the analysis rests. The first is a system of balance equations relating arrival, match, and abandonment rates in steady state.

\begin{definition}[Prefix]
\label{def:prefix}
A set $S \subseteq \types$ is a \textbf{prefix} of an order $\succ$ over
$\types \cup \{\emptyset\}$ if there is an acceptable type $i \succ \emptyset$
with $S = \{ i' \in \types : i' \succeq i \}$. We write $\prefixes{\succ}$
for the collection of all prefixes of $\succ$, and for a static priority policy $\pol=\{\porder{j}\}_{j\in\types}$ we write $\mathcal{P}_j(\pol)=\prefixes{\porder{j}}$.
\end{definition}

\begin{restatable}{lemma}{balancemultiple}
\label{lemma:balance:multiple}
Let $\pol = \{\porder{j}\}_{j \in \types}$ be a static priority policy and let
$\mathbf{N}^{\pol}$ be drawn from its stationary distribution. Then %$V(\pi) = \sum_{i,j\in \types} r_{ij} x_{ij}^\pi$, and 
the following hold:
\begin{alignat}{2}
 & V(\pi)
   = \sum_{i,j\in\types} r_{ij} x_{ij}^\pi
 && \label{eq:Vpi}\\[2pt]
 & \nss{j}^{\pol}\drate{j}
   + \sum_{i\in\types}
     \xrate{i}{j}^{\pol}+ \sum_{i\in\types}\xrate{j}{i}^{\pol}
   = \arate{j}
 &\qquad & \forall j\in\types,
 \label{eq:balance}\tag{\textsc{SP-Balance}}\\[2pt]
 & \sum_{i\in S}\xrate{i}{j}^{\pol}
   \le \arate{j}
      \PP{\textstyle\sum_{i\in S}\Npres{i}^{\pol}>0}
 &\qquad & \forall j\in\types,\ S\subseteq\types.
 \label{eq:matchrate}\tag{\textsc{SP-Availability}}
\end{alignat}
Moreover, \eqref{eq:matchrate} holds with equality whenever $S \in \prefixes{\porder{j}}$.

%\na{Might be nice to note that $x_{ij} > 0 \iff i \succ_j \emptyset$, but not necessary.}
%, and if \eqref{eq:matchrate} holds with equality, then $\xrate{i}{j}^{\pol} > 0$ for every $i \in S$.
\end{restatable}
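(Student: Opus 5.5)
The plan is to derive all three statements from ergodicity (\Cref{lemma:stationaryctmc}) together with PASTA (\Cref{cor:pasta}), using a counting identity over $[0,t]$ and then dividing by $t$. First, \eqref{eq:Vpi} is immediate. By \Cref{lemma:stationaryctmc} each limit $\xrate{i}{j}^{\pol}$ exists, and $\valt{\pol}$ is a finite linear combination of the $\xrate{i}{j}^{\pol}(t)$. Passing to the limit termwise gives $\val{\pol}=\sum_{i,j}\rew{i}{j}\xrate{i}{j}^{\pol}$.

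For \eqref{eq:balance}, I will use that every type-$j$ agent arriving by time $t$ ends in exactly one of four situations: it is matched on arrival to a waiting agent, it is matched later as the waiting partner, it abandons unmatched by time $t$, or it is still available at time $t$. Taking expectations gives
\[
\arate{j}t=\E{\textstyle\sum_i \Xmatch{i}{j}^{\pol}(t)+\sum_i\Xmatch{j}{i}^{\pol}(t)}+\E{\text{\# type-$j$ abandonments in }[0,t]}+\E{\Npres{j}^{\pol}(t)}.
\]
Because sojourn times are exponential and memoryless, each available type-$j$ agent abandons at rate $\drate{j}$. The abandonment compensator then gives $\E{\text{abandonments}}=\drate{j}\int_0^t\E{\Npres{j}^{\pol}(s)}ds$. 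I will divide by $t$ and let $t\to\infty$. The term $\E{\Npres{j}^{\pol}(t)}/t\to0$ because $\Npres{j}^{\pol}(t)$ is dominated by the Poisson present count of \Cref{fact:poisson-occupancy}, whose mean is bounded. The Cesàro average of $\E{\Npres{j}^{\pol}(s)}$ converges to $\nss{j}^{\pol}$, where I use ergodicity plus uniform integrability, again coming from the Poisson domination. This yields \eqref{eq:balance}.

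For \eqref{eq:matchrate}, I will again use that a static priority policy only ever matches an arriving agent to a waiting one. So every $(i,j)$ match with $i\in S$ is made at the arrival of a type-$j$ agent $a$ whose arrival state satisfies $\sum_{i\in S}\Npres{i}^{\pol}(t_a)>0$. Let $C_S=\{\mathbf{n}:\sum_{i\in S}n_i>0\}$. Then
\[
\textstyle\sum_{i\in S}\Xmatch{i}{j}^{\pol}(t)\le\sum_{a\in\agentset_j(t)}\indic{\mathbf{N}^{\pol}(t_a)\in C_S}.
\]
I will take expectations, divide by $t$, and apply \Cref{cor:pasta} with $p=\PP{\mathbf{N}^{\pol}\in C_S}$ under stationarity; this limit exists by ergodicity. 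The result is the inequality.

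The equality case for $S\in\prefixes{\porder{j}}$ follows from the structure of the policy. Such an $S$ consists exactly of the types ranked at least as high as some acceptable type, so every type in $S$ is acceptable to $j$. Whenever an arriving type-$j$ agent finds some type in $S$ available, the policy matches it to the highest-priority available acceptable type, and that type lies in $S$ because $S$ is an upper set of $\porder{j}$. Conversely, any match to a type in $S$ requires that event. The displayed inequality is therefore an equality pathwise, and the same limit argument gives equality in \eqref{eq:matchrate}.

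The main obstacle is the interchange of limits: showing that the expectations converge to their stationary values and that boundary terms vanish. I expect to handle this by dominating all queue counts by the policy-independent present counts of \Cref{fact:poisson-occupancy}, whose moments are uniformly bounded, together with the convergence already provided by \Cref{lemma:stationaryctmc}.
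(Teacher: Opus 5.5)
Your proposal is correct and follows essentially the same route as the paper. For \eqref{eq:balance} you use the same four-way accounting identity with the abandonment compensator $\drate{j}\int_0^t \Npres{j}^{\pol}(s)\,ds$, and for \eqref{eq:matchrate} you use PASTA (\Cref{cor:pasta}) together with an event inclusion that becomes an equality for prefixes. The only cosmetic difference is that you bound the matches into $S$ pathwise by the arrivals that find $C_S$, whereas the paper first writes $\xrate{i}{j}^{\pol}=\arate{j}\PP{\mathbf{N}^{\pol}\in C_{ij}^{\pol}}$ and sums over the disjoint events $C_{ij}^{\pol}$; you also spell out the limit interchanges that the paper leaves implicit.
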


%\pn{The “only if” direction of showing the inequality is strict if S is not a prefix is actually false. Consider the following set up. Suppose there are 3 types. Both 1 and 2, when they arrive, match with each other and nobody else. Type 3, when it arrives, chooses 1 first, and 2 next. Then I think actually $x_{23} = \lambda_3 Pr(N_2 > 0)$ even though $\{2\}$ is not a prefix. This is because 1 and 2 never coexist in the system.}

% \na{Maybe name these constraints?}

%We present a proof sketch capturing the main ideas behind this result, and postpone the full proof to \Cref{apx:balance-multiple}. 

%We present a proof sketch capturing the main ideas behind this result, and postpone the full proof to \Cref{apx:balance-multiple}. 

\begin{proof}[Proof sketch]
Most of the proof follows the same lines as in the proof of \Cref{prop:offline-bound}. For constraint~\eqref{eq:balance}, note that every arriving agent of type $i$ eventually matches or abandons the system. Because departures follow an exponential distribution and are not known in advance, the long-run rate of total departure from the system is the average number of available agents $\nss{i}^{\pol}$, multiplied by the departure rate $\mu_i$. The rate of matches is $\sum_{j \in \types} \big( \xrate{i}{j}^{\pol} + \xrate{j}{i}^{\pol} \big)$. Adding these together, therefore, gives the arrival rate of type-$i$ agents. For~\eqref{eq:matchrate}, an arriving type-$j$ agent can match into $S$ and contribute to the left side of the constraint only if some agent of a type in $S$ is available upon its arrival. This happens with probability $\PP{\textstyle\sum_{i \in S} \Npres{i}^{\pol} > 0}$, by PASTA (\Cref{cor:pasta}). Multiplying by the arrival rate of type-$j$ agents gives \eqref{eq:matchrate}. If $S$ is a prefix, then if some agent of a type in $S$ is available, $j$ \textit{must} match with such an agent. This gives the equality.
\end{proof}

\Cref{lemma:balance:multiple}---and \eqref{eq:matchrate} in particular---says that a key to analyzing the performance of a static priority policy is understanding the probability with which various sets of types are available. This can be quite challenging, as availability is correlated across types. However, our next result establishes a lower bound on the probability that a type from $S$ is available, expressed in terms of the {\em expected} number of available agents from this set; notably, the expected number of available type-$j$ agents is connected to the total departure rate of type $j$ (through a linear function in the case of exponential departures), which itself is connected to matching rates $\mathbf{x}^\pol$ via \eqref{eq:balance}.

To state this lower bound, for $S \subseteq \types$ write $\rho(S) = \sum_{i \in S}\arate{i}/\drate{i}$ and define
\begin{equation}
  \gam{S} \triangleq \frac{1 - e^{-\rho(S)}}{\rho(S)}
          = \gammaform{S}.
  \label{eq:gammadef}
\end{equation}
For $S=\emptyset$, we set $\gamma_\emptyset=1$, the limit of this expression as $\rho(S)\to0$.

\begin{restatable}{lemma}{gammabound}
\label{lem:gammabound}
Let $\pol = \{\porder{j}\}_{j \in \types}$ be any static priority policy and let $\mathbf{N}^{\pol}$ be drawn from its
stationary distribution. Then for every $S \subseteq \types$,
\begin{equation}
  \PP{\textstyle\sum_{i \in S} \Npres{i}^{\pol} > 0}
    \ge \gam{S} \sum_{i \in S} \nss{i}^{\pol}.
  \label{eq:gammabound}
\end{equation}
\end{restatable}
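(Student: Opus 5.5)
The plan is to prove a likelihood-ratio comparison between the stationary law of $N_S\triangleq\sum_{i\in S}\Npres{i}^{\pol}$ and a Poisson law with mean $\rho(S)$, and then compare conditional means. Write $p_k=\PP{N_S=k}$ in steady state. The target is the recursion
\[
(k+1)\,p_{k+1}\;\le\;\rho(S)\,p_k\qquad\text{for all } k\ge 0 .
\]
If this holds, then for $k\ge1$ we get $p_k/p_1\le \rho(S)^{k-1}/k! = q_k/q_1$, where $q$ is the Poisson$(\rho(S))$ pmf. So the law of $N_S$ given $N_S>0$ is dominated in likelihood-ratio order, hence stochastically, by the law of a Poisson$(\rho(S))$ variable $P$ given $P>0$. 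It follows that
\[
\E{N_S\mid N_S>0}\le \E{P\mid P>0}=\rho(S)/(1-e^{-\rho(S)})=1/\gam{S},
\]
and therefore $\PP{N_S>0}=\E{N_S}/\E{N_S\mid N_S>0}\ge\gam{S}\,\E{N_S}$, which is \eqref{eq:gammabound}. The case $\E{N_S}=0$ is trivial.

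\textbf{Step 1 (independence of irrelevant alternatives).} Fix a history $\hist$ and an agent $x$ with type in $S$, and suppose $x$ is still available at time $t$ when the policy runs on $\hist\cup\{x\}$. I will show that the run on $\hist$ makes exactly the same matches among the other agents. The argument is an induction over arrivals in time order. When an agent arrives, the set of available acceptable types in the two runs can differ only in $x$'s type. But if that type were the highest-priority available type and only $x$ made it available, the arrival would match $x$, contradicting that $x$ stays available. Otherwise the arrival chooses the same type in both runs. If that type is $x$'s type, the fixed tie-breaking rule selects some agent other than $x$ (again since $x$ is never matched), and it selects the same agent in both runs, because removing an unchosen agent does not change which agent the fixed rule picks. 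Consequently $N_S(t;\hist\cup\{x\})=1+N_S(t;\hist)$ whenever $x$ is available at $t$.

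\textbf{Step 2 (Slivnyak--Mecke).} First fix a finite horizon $t$. Let $\Phi$ be the marked Poisson process of (type, arrival, sojourn) triples. Apply \Cref{fact:poisson-arrival-calculus} with
\[
g(x,\Phi\setminus\{x\})=\indic{x\text{ available at }t \text{ in } \Phi}\;\indic{N_S(t;\Phi)=k+1},
\]
summed over points $x$ with type in $S$. The left side equals $(k+1)\PP{N_S(t)=k+1}$. On the right side, Step 1 turns the integrand into
\[
\PP{x \text{ available in } \Phi\cup\{x\},\ N_S(t;\Phi)=k}\;\le\;\indic{x\text{ present at }t}\,\PP{N_S(t;\Phi)=k}.
\]
Integrating the presence indicator against the rate $\lambda_i\mu_ie^{-\mu_i\ell}$ gives $\sum_{i\in S}\frac{\lambda_i}{\mu_i}(1-e^{-\mu_it})\le\rho(S)$. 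This yields
\[
(k+1)\PP{N_S(t)=k+1}\le\rho(S)\,\PP{N_S(t)=k}.
\]
Letting $t\to\infty$ and using \Cref{lemma:stationaryctmc}, the same recursion holds for the stationary probabilities $p_k$. For the final step I also need the stationary means: $N_S(t)$ is dominated by the Poisson present-count of \Cref{fact:poisson-occupancy}, so the family is uniformly integrable and $\E{N_S(t)}\to\sum_{i\in S}\nss{i}^{\pol}$.

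\textbf{Main obstacle.} I expect the delicate part to be Step 1, namely the pathwise claim that an agent who is never matched is irrelevant to every decision. This requires care with the tie-breaking rule (it must be a fixed rule such as earliest arrival) and with the convention that availability is measured just before $t$. The measurability and limit-exchange details in Step 2 should be routine given \Cref{fact:poisson-arrival-calculus} and the marking property.
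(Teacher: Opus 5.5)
Your proposal is correct and follows essentially the paper's route. The steps match: IIA for static priority policies under an IIA-respecting tie-breaking rule (harmless, since the law of $\mathbf{N}^{\pol}$ does not depend on tie-breaking); the Slivnyak--Mecke formula at a finite horizon to get $(k+1)\PP{N_S(t)=k+1}\le\rho(S)\,\PP{N_S(t)=k}$; passage to stationarity via uniform integrability; and a likelihood-ratio comparison with a Poisson$(\rho(S))$ variable conditioned to be positive. The differences are cosmetic: you apply Mecke once to the type-marked process (which needs the general intensity-measure form), whereas the paper applies it per type conditionally on the other types, and you specialize directly to $\indic{N_S=k+1}$ rather than using a general test function.

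One small fix: the likelihood-ratio order requires the monotonicity $p_{k+1}/q_{k+1}\le p_k/q_k$ for all $k\ge1$. Your recursion gives this directly. The intermediate bound $p_k/p_1\le q_k/q_1$ that you display is weaker and does not by itself imply the stochastic dominance you use.
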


This is the key stochastic result of our work. There are several ways to interpret it. One interpretation is as a sort of converse to Markov's inequality, which states that the probability that a random variable on $\mathbb{N}$ is positive is at most its expectation. Inequality~\eqref{eq:gammabound} shows that this expectation also serves as a lower bound after multiplying by the constant $\gamma_S$. Another observation is that if the policy chooses never to match types in $S$, then \eqref{eq:gammabound} holds with equality. Thus, the conditional expectation of the number of available agents from $S$, given that this number is at least one, is maximized by the policy that makes no matches. There is also a conceptual connection to the literature on prophet inequalities, which often uses arguments of the flavor ``either item $i$ matches with high probability, or it is likely to be available to match with other items'' (see, for example, \citealp{lucier2017economic}). \Cref{lem:gammabound} gives an analogous conclusion for the stationary setting that we consider: if agents from $S$ are unlikely to be available (the left side of \eqref{eq:gammabound} is small), then it must be that they are matching at high rates (each $n_i^\pi$ is small). 

We defer the full proof of \Cref{lem:gammabound} to \Cref{apx:omittedproofs-priority}, and sketch main ideas next.

\begin{proof}[Proof sketch of \Cref{lem:gammabound}]
The proof relies solely on a property of static priority policies that we call \emph{independence of irrelevant alternatives} (IIA) (formally in \Cref{def:iia}): if an agent $a$ is available at time $t$, then removing $a$ from the history does not change what happens to any other agent, so the only effect on the state at time $t$ is that the number of available agents of $a$'s type decreases by one. Let $\mathbf{N}^\pol$ be the ``stationary state'' of the policy $\pi$ satisfying IIA. The main step is showing the following inequality, which holds for every type $i$ and every set $C$ of states:
\[
  \E{\Npres{i}^{\pol}\,\indic{\mathbf{N}^{\pol}-\mathbf{e}_i\in C}}
  \;\le\; \frac{\arate{i}}{\drate{i}}\,\PP{\mathbf{N}^{\pol}\in C},
\]
where $\mathbf{e}_i$ is the $i$-th unit vector. We first prove this inequality at a fixed time $t$, with $\mathbf{N}^{\pol}(t)$ in place of $\mathbf{N}^{\pol}$, and then let $t\to\infty$. Intuitively, the left side is the steady-state (or limit) value of the expected number of available type-$i$ agents whose removal from the history $\hist(t)$ would leave the system in a state in $C$. By IIA, removing one agent does not change other agents in distribution. Then, the Slivnyak--Mecke formula (\Cref{fact:poisson-arrival-calculus}) finds another way of counting this expected number by turning the expectation into an integral over the arrival and sojourn times $(s,\ell)$ of an extra type-$i$ agent added to the system, with respect to the rate function $\arate{i}\drate{i}e^{-\drate{i}\ell}$ of these times (\Cref{sec:poisson-background}). The integrand is the probability that this agent is available at time $t$ while the system without it is in a state in $C$. Since the extra agent can be available only if it is present, this integral is at most $\PP{\mathbf{N}^{\pol}(t)\in C}$ times the expected number of type-$i$ agents present at time $t$, which is at most $\arate{i}/\drate{i}$ by \Cref{fact:poisson-occupancy}.

Now fix a nonempty set $S\subseteq\types$, let $N_S=\sum_{i\in S}\Npres{i}^{\pol}$ and $p_m=\PP{N_S=m}$, and apply the inequality with $C$ the set of states in which exactly $m$ available agents have types in $S$. By summing over $i\in S$,
\[
  (m+1)\,p_{m+1}\;\le\;\rho(S)\,p_m \qquad\text{for all } m\ge 0, 
\]
simply because removing a type-$i$ agent leaves the system in a state in $C$ if and only if $N_S=m+1$. A Poisson random variable $L_S$ with mean $\rho(S)$ satisfies this recurrence with equality; $N_S$ has this distribution in steady state when agents with types in $S$ are never matched, because such agents are then available whenever they are present (\Cref{fact:poisson-occupancy}). Comparing the two recurrences shows that the ratio $p_m/\PP{L_S=m}$ is nonincreasing in $m$, so that the conditional distribution of $N_S$ given $N_S>0$ is stochastically dominated by that of $L_S$ given $L_S>0$. In particular,
\[
  \frac{\E{N_S}}{\PP{N_S>0}}=\E{N_S\mid N_S>0}\;\le\;\E{L_S\mid L_S>0}=\frac{\rho(S)}{1-e^{-\rho(S)}}=\frac{1}{\gam{S}}.
\]
Since $\E{N_S}=\sum_{i\in S}\nss{i}^{\pol}$, rearranging gives \eqref{eq:gammabound}.
\end{proof}

Together, \eqref{eq:balance}, \eqref{eq:matchrate}, and \Cref{lem:gammabound} are the only properties of static priority policies that we use. In the next sections, they let us construct families of linear programs whose solutions both identify a static priority policy and lower-bound its reward. In fact, from here we give two approaches for identifying a static priority policy that earns at least $\half \offopt$:

\begin{enumerate}[label=(\roman*)]
\item In \Cref{sec:primal}, we introduce a new family of linear programs indexed by a set $\m{M}\subseteq \TT$ of \emph{allowed} matches. Their constraints mirror \eqref{eq:balance} and \eqref{eq:matchrate} (restricted to $\m{M}$), with the probability in \eqref{eq:matchrate} replaced by its lower bound from \Cref{lem:gammabound}. Given a solution to an LP in this family, we draw inspiration from \Cref{lemma:balance:multiple}, designing a static priority policy such that tight constraints all correspond to prefixes of the priority list $\succ_j$. (Intuitively, if the constraint corresponding to $S$ and $j$ is tight, then the linear program is trying to make as many matches as possible between arriving type $j$ agents and waiting agents from the set $S$, so types in $S$ should appear at the ``top'' of $\succ_j$.) 

We show that any policy consistent with an optimal solution of an LP in this family earns a long-run average reward of at least the LP value. We then provide a polynomial-time algorithm to find a set $\m{M}$ such that a consistent
policy with respect to an optimal solution of the LP corresponding to $\m{M}$ exists, and this LP value is at least $\half \offopt$.

\item In \Cref{sec:values}, we explore an especially simple class of static priority policies, called \emph{value-based} policies, which assign a value $v_i$ to each type $i$, interpreted as the ``opportunity cost'' of matching that type. These policies  prioritize matches based on their ``score'' $r_{ij} - v_i - v_j$. Matches with negative scores are unacceptable, and matches with positive scores are acceptable. 

To find an appropriate choice of $\mathbf{v}$, we introduce a linear program that lower-bounds the reward of a given static priority policy $\pi$. Taking the dual of this program establishes conditions on $\mathbf{v}$ that ensure that the long-run average reward of $\pi$ is at least $\sum_{i \in \types} \lambda_i v_i$. Meanwhile, taking the dual of $\offrbrel$ establishes complementary conditions on $\mathbf{v}$ that ensure that $\sum_{i \in \types} \lambda_i v_i$ is at least $\half \offopt$. Finding a $\mathbf{v}$ that satisfies both sets of conditions simultaneously comes down to solving a particular fixed-point equation, which we show has a unique solution that is computable in polynomial time.
%Determine $\succ_j$ based on the ``value'' or the ``opportunity cost of matching'' of each type, as determined by the dual of $\plpm$. Let $v_i$ be the dual variable associated with the $i^{th}$ constraint in \eqref{eq:balance-lp-glb}. Construct $\succ_j$ by ordering agents $i$ according to their ``score'' $r_{ij} - v_i - v_j$, with the acceptable matches being those for which this score is positive.
\end{enumerate}
Although the two approaches above seem quite different, as we show later in \Cref{sec:connections}, there are deep connections between the two, and they are mathematically linked via linear programming duality. 

\section{Primal Approach}
\label{sec:primal}
In this section, we introduce a family of linear programs $\plpm$, indexed by a set $\m{M} \subseteq \m{T}\times \m{T}$ of allowed matches. We use these linear programs both to identify a static priority policy and to lower-bound its performance.

This is how this section is organized: \Cref{sec:lpAlg} introduces $\plpm$ and shows that the value of $\plpt$ is at least half of $\offrbrel$, and hence of the omniscient benchmark (\Cref{prop:factor2}). \Cref{sec:primal-alg} describes how to identify a static priority policy from an optimal solution of $\plpm$: \Cref{prop:plplowerbound} establishes that any static priority policy that is {\em consistent} with such a solution earns a long-run average reward of at least $\plpm$. Unfortunately, an optimal solution of $\plpt$ might not be {\em suitable}, meaning that no static priority policy is consistent with it. \Cref{sec:suitable} addresses this concern with a pruning algorithm that finds a set $\m{M}$ such that $\plpm$ has a suitable optimal solution and $\plpm\ge\plpt$. It also shows that $\plpm$ can be solved in polynomial time, despite its exponentially many constraints.

\subsection{A Novel Family of Linear Programs} \label{sec:lpAlg}
Our linear program is inspired by the objective \eqref{eq:Vpi} and by the constraints \eqref{eq:balance} and \eqref{eq:matchrate} of static priority policies (\Cref{lemma:balance:multiple}). The decision variables are $\bn = (n_i)_{i \in \m{T}}$ (with $n_i$ interpreted as the expected number of available type-$i$ agents) and $\bx = (x_{ij})_{(i,j) \in \m{M}}$ (with $x_{ij}$ interpreted as the rate of $(i,j)$ matches). 
%We also introduce slack variables $\bm{\psi} = \{\psi_{Sj}\}_{j \in \m{T}, S \subseteq \m{M}_j }$, resulting in the following linear program.
%\na{this might be easier to parse if you first say what are the variables -- i.e. move the sentence right after this program to before}
%\na{We should explicitly state the variables involved in this LP: $\{n_i\}_{i \in \m{T}}$, $\{x_{ij}\}_{i,j \in \m{T}}$ and $\{\psi_{Sj}\}_{j \in \m{T}, S \subseteq S(j,\m{M})}$. We could also change this to $\{x_{ij}\}_{j \in \m{T}, i \in S(j,\m{M})}$} \fs{This last one is not right, it should be $\{x_{ij}\}_{(i,j) \in \m{M}}$}
Given any set $\m{M}$ of allowed matches, define $\m{M}_j = \{i : (i,j) \in \m{M}\}$ to be the set of types that an arriving type-$j$ agent may be matched with, and define $\plpm$---referred to as the \emph{policy LP}---as follows.
\begin{lpbox}{Policy LP --- $\plpm$}
\begin{align}
    \plpm \triangleq \max\limits_{{\bn},{\bx} \ge 0} \quad &  \sum_{(i,j)\in \m{M}}r_{ij}x_{ij} \nonumber \\
\textrm{s.t.}\quad & 
    n_j \mu_j + \sum_{i \in \m{M}_j } x_{ij} + \sum_{i: j \in \m{M}_i } x_{ji} = \lambda_j & \quad \forall j \in \mathcal{T} \label{eq:balance-lp-glb}\tag{\textsc{P-Balance}}\\
    & \sum_{i \in S} x_{ij}  \le \lambda_j \gamma_{S}\sum_{i\in S} n_i & \quad \forall j \in \mathcal{T},   S \subseteq \m{M}_j \label{eq:availability-alg}\tag{\textsc{P-Availability}}
    %\\ x_{ij} &\ge 0 & \quad \forall (i,j) \in \mathcal{M} \label{eq:nonnegative-lp-glb}
\end{align}
\end{lpbox}
%\na{Note that I removed the $\Psi$, so that we only have $|\m{T}|^2 +|\m{T}|$ variables.}

%Note that \eqref{eq:balance-lp-glb} is analogous to \eqref{eq:balance}, and says that the arrival rate of a type $i$ must equal its match rate plus its abandonment rate $n_i \mu_i$. Meanwhile, \eqref{eq:availability-alg} is analogous to \eqref{eq:matchrate}. However, \eqref{eq:matchrate} includes the complex quantity $\PP{\sum_{i \in S} N_i^\pi > 0}$, which cannot be expressed as a linear function of ${\bn}$ and ${\bx}$. In \eqref{eq:availability-alg}, we have replaced this quantity with $\gamma_S \sum_{i \in S} n_i^\pi$. Recall that $\gamma_S$ is defined in \eqref{eq:gammadef}, and that \Cref{lem:gammabound} says that  for static priority policies,  $\PP{\sum_{i \in S} N_i^\pi > 0} \ge \gamma_S \sum_{i \in S} n_i^\pi$. Therefore, we can think of $\gamma_{S}\sum_{i\in S} n_i $ as a {\em pessimistic} approximation of the probability that an agent of some type $i \in S$ is available. This suggests the possibility of showing that $\plpm$ provides a {\em lower bound} on the performance of an appropriately chosen static priority policy. 

We now give some intuition for $\plpm$, comparing it first to the constraints \eqref{eq:balance} and \eqref{eq:matchrate} satisfied by static priority policies, and then to the constraints \eqref{eq:omn-balance} and \eqref{eq:omn-match} that define our upper bound $\offrbrel$.

Comparing \eqref{eq:balance-lp-glb} and \eqref{eq:availability-alg} with their counterparts \eqref{eq:balance} and \eqref{eq:matchrate}, the only difference is that the probability $\mathbb{P}(\sum_{i \in S} N_i^\pi > 0)$---that some type in $S$ is available---which appears in \eqref{eq:matchrate}, has been replaced by the linear term $\gamma_S \sum_{i \in S} n_i$ in \eqref{eq:availability-alg}. This replacement is motivated by \Cref{lem:gammabound}, which says that, for static priority policies, $\mathbb{P}(\sum_{i \in S} N_i^\pi > 0)$ is always at least $\gamma_S \sum_{i \in S} n_i^\pi$. Thus, \eqref{eq:availability-alg} is a {\em tightening} of the corresponding constraint satisfied by static priority policies. In \Cref{sec:primal-alg}, this will be the key fact used to show that the value of $\plpm$ is a lower bound on the reward of any static priority policy consistent with an optimal solution of $\plpm$.

For now, we turn to a different goal: relating the value of $\plpt$ to $\offrbrel$. There is a clear correspondence between \eqref{eq:omn-balance} and \eqref{eq:balance-lp-glb}: given any $\bx$ that satisfies \eqref{eq:omn-balance}, simply define $\bn$ so that \eqref{eq:balance-lp-glb} holds, that is, $n_j\mu_j=\alpha_j$. Thus, $n_j\mu_j$ is the difference between the arrival rate and the match rate of type $j$, that is, the rate at which type-$j$ agents abandon the system. The meaningful difference between $\offrbrel$ and $\plpt$ is the replacement of \eqref{eq:omn-match} with \eqref{eq:availability-alg}. Applying the definition of $\gamma_S$ in \eqref{eq:gammadef}, the right side of \eqref{eq:availability-alg} can be written as 
\[\lambda_j (1-e^{-\rho(S)}) \frac{\sum_{i \in S} n_i}{\rho(S)}.\]
This is simply the right side of \eqref{eq:omn-match}, times a fraction that is at most one. (To see that the fraction is at most one, note that \eqref{eq:balance-lp-glb} ensures that each $n_i \le \lambda_i/\mu_i$; conceptually this makes sense, as $n_i$ represents the expected number of {\em available} agents, while $\lambda_i/\mu_i$ is the expected number of {\em present} agents.) As a result, for any feasible $\bn$, \eqref{eq:availability-alg} is a tighter constraint than \eqref{eq:omn-match}, and thus $\plpm \le \offrbrel$.

The constraint \eqref{eq:availability-alg} also captures the following intuition. Matched agents are no longer available, so the more often a policy matches type-$i$ agents, the fewer type-$i$ agents are available, on average, when other agents arrive. The right side of \eqref{eq:omn-match} ignores this effect, since it does not depend on how often agents are matched. In \eqref{eq:availability-alg} with $S=\{i\}$, by contrast, the fraction that multiplies the right side of \eqref{eq:omn-match} is $n_i\mu_i/\lambda_i$, which by \eqref{eq:balance-lp-glb} is the fraction of type-$i$ agents that abandon rather than match. Thus, the higher the match rate of type-$i$ agents, the smaller $n_i$ becomes, resulting in a tighter availability constraint \eqref{eq:availability-alg}.
% \rn{the previous version was  difficult to read. What do you think about the new version?}
%Rad: this was  difficult to read. What do you think about the new version?
% The form of \eqref{eq:omn-balance} also captures the following intuition. For an online policy $\pi$ and a type $i\in \m{T}$, it should not be possible for \eqref{eq:omn-balance} and \eqref{eq:omn-match} with $S = \{i\}$ to both be tight, or even nearly tight. If \eqref{eq:omn-balance} is (nearly) tight for type $i$, this says that (almost) all agents of type $i$ are matching. If \eqref{eq:omn-balance} is (nearly) tight for $S = \{i\}$, this says the probability that type $i$ is available is (almost) as large as the probability that it is present, which should only be possible if type $i$ rarely matches. The constraint \eqref{eq:availability-alg} captures exactly this intuition: the higher the match rate for agents of type $i$, the smaller $n_i$ becomes, resulting in a tighter availability constraint \eqref{eq:availability-alg}. 

Although $\plpt \le \offrbrel$, it cannot be too much smaller, as the following result shows.

\begin{restatable}{proposition}{factortwo}
$ \plpt \geq \frac{1}{2}\offrbrel$. \label{prop:factor2}
\end{restatable}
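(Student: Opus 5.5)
The plan is to take an optimal solution of $\offrbrel$, scale it by one half, and show that the result (with $\bn$ chosen to satisfy \eqref{eq:balance-lp-glb}) is feasible for $\plpt$. Its objective value is then exactly $\half\offrbrel$, which proves the claim. Since $\m{M}=\TT$, every pair $(i,j)$ is allowed, so $\m{M}_j=\types$ for every $j$, and any vector $\bx$ indexed by $\TT$ is admissible.

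\textbf{Step 1 (construction).} Let $\bx^*$ be an optimal solution of $\offrbrel$; one exists because the feasible region is a nonempty polytope, bounded by \eqref{eq:omn-balance}. Set $x_{ij}=\half x^*_{ij}$, and define $n_j$ by
\[
n_j\mu_j=\lambda_j-\sum_{i}x_{ij}-\sum_i x_{ji}.
\]
By construction, \eqref{eq:balance-lp-glb} holds with equality. By \eqref{eq:omn-balance} for $\bx^*$, the total match rate of type $j$ under $\bx$ is at most $\lambda_j/2$. Hence $n_j\mu_j\ge\lambda_j/2$, which gives both $n_j\ge0$ and the key bound $n_j\ge \lambda_j/(2\mu_j)$.

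\textbf{Step 2 (availability).} Fix $j$ and a nonempty $S\subseteq\types$; the case $S=\emptyset$ is trivial. Summing the bound from Step 1 over $i\in S$ gives $\sum_{i\in S}n_i\ge \rho(S)/2$. Using the definition \eqref{eq:gammadef} of $\gamma_S$, the right side of \eqref{eq:availability-alg} satisfies
\[
\lambda_j\gamma_S\sum_{i\in S}n_i=\lambda_j\big(1-e^{-\rho(S)}\big)\frac{\sum_{i\in S}n_i}{\rho(S)}\ge \tfrac12\lambda_j\big(1-e^{-\rho(S)}\big).
\]
On the other side, \eqref{eq:omn-match} for $\bx^*$ gives
\[
\sum_{i\in S}x_{ij}=\tfrac12\sum_{i\in S}x^*_{ij}\le\tfrac12\lambda_j\big(1-e^{-\rho(S)}\big).
\]
Chaining the two inequalities yields \eqref{eq:availability-alg}.

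\textbf{Step 3 (conclusion).} The pair $(\bn,\bx)$ is feasible for $\plpt$, and its objective value is $\sum r_{ij}x^*_{ij}/2=\half\offrbrel$. Therefore $\plpt\ge\half\offrbrel$.

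There is no real obstacle here. The one point that needs care is the observation in Step 1 that halving the match rates forces at least half of each type's arrivals to abandon. This makes $n_i\ge \lambda_i/(2\mu_i)$, which is exactly what compensates for the factor $\sum_{i\in S}n_i/\rho(S)$ that distinguishes \eqref{eq:availability-alg} from \eqref{eq:omn-match}.
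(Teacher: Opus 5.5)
Your proposal is correct and follows the paper's proof essentially verbatim: halve a solution of $\offrbrel$, set $\bn$ by \eqref{eq:balance-lp-glb} to get $n_i\ge\lambda_i/(2\mu_i)$, and chain the two bounds through $\lambda_j(1-e^{-\rho(S)})/2$ to verify \eqref{eq:availability-alg}. The only cosmetic difference is that the paper starts from an arbitrary feasible $\bar\bx$ rather than an optimal one, so it does not need your argument that an optimum exists.
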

%The proof is direct: it shows that if $\bar{\bx}$ is any feasible solution of $\offrbrel$, then taking $\hat{\bx} =  {\bx}/2$ and $\hat{\bn}$ defined by \eqref{eq:balance-lp-glb} provides a solution to $\plpt$, with an objective that is half as large.
The proof is simple: it shows that for any feasible solution $\bar\bx$ to $\offrbrel$, $\bar\bx/2$ is feasible for $\plpt$ (with $\bn$ defined so that \eqref{eq:balance-lp-glb} holds).
\begin{proof}
Let $\bar\bx$ be feasible for $\offrbrel$. Define $\bx = \bar\bx/2$, and define $\bn$ to make \eqref{eq:balance-lp-glb} hold. Because $\bar \bx$ satisfies \eqref{eq:omn-balance}, we must have $n_i \ge \lambda_i/(2\mu_i)$. Apply \eqref{eq:gammadef} to see that the right side of \eqref{eq:availability-alg} is at least $\lambda_j (1-e^{-\rho(S)})/2$. Because $\bar \bx$ satisfies \eqref{eq:omn-match}, the left side of \eqref{eq:availability-alg} is at most $\lambda_j (1-e^{-\rho(S)})/2$, so \eqref{eq:availability-alg} holds. Thus, $(\bn, \bx)$ is feasible for $\plpt$, with objective value $\frac{1}{2} \sum r_{ij} \bar{x}_{ij}$. 
\end{proof}

Therefore, if we can find a static priority policy whose long-run average reward is at least $\plpt$, this policy also earns at least half of the omniscient benchmark. Identifying such a policy is the goal of \Cref{sec:primal-alg,sec:suitable}.

%\na{Our next task is to identify a static priority policy that outperforms $\plpm$. (Maybe could reference \Cref{lem:gammabound} to motivate why this ``should'' be true.)}

\subsection{Identifying a Policy and Bounding Its Performance} \label{sec:primal-alg}

In this subsection, we define what it means for a static priority policy $\pi$ to be {\em consistent} with a feasible solution $(\bn, \bx)$ of $\plpm$. The first condition is that the policy should only make matches in $\m{M}$, and should not make $(i,j)$ matches if $x_{ij} = 0$. The second condition comes from examining the constraints \eqref{eq:availability-alg}. If \eqref{eq:availability-alg} is tight for a particular $S$ and $j$, then intuitively, the linear program matches arriving type-$j$ agents with types in $S$ as frequently as possible, so $S$ should be a prefix of the priority list $\succ_j$. This condition is also inspired by \Cref{lemma:balance:multiple}, which says that for any static priority policy and any prefix $S$ of $\succ_j$, the constraint \eqref{eq:matchrate} --- the analog of \eqref{eq:availability-alg} --- is tight.

To formalize our definition, for any feasible solution $({\bn}, {\bx})$ to $\plpm$, we define
    \begin{equation}
        \m{P}_j(\bn, \bx) =  \{\emptyset\ne{} S \subseteq \m{M}_j\ : \eqref{eq:availability-alg} \text{ holds with equality}\} \label{eq:Pj}.
    \end{equation}

\begin{definition}[Primal consistency] \label{def:associated-greedy}
A static priority policy $\pol = \{\porder{j}\}_{j \in \types}$ is \textbf{consistent} with a feasible solution $({\bn}, {\bx})$ of $\plpm$ if $\pi$ only makes matches in $\mathcal{M} \cap \{(i,j) : x_{ij} > 0\}$, and $\m{P}_j(\bn,\bx) \subseteq \prefixes{\succ_j}$ for all $j \in \m{T}$.
\end{definition}
%\na{As phrased, if all $x_{ij}$ are positive but no constraints are tight, then any $\pi$ is consistent (i.e. we do not require that $\pi$ makes all matches in $\mathcal{M} \cap \{x_{ij} > 0\}$). Do we want to change this? If we do, the statement about suitability would change.}

There may not be a unique $\pi$ consistent with $(\bn, \bx)$. For example, suppose that for type $j$, we have $x_{1j} = x_{4j} = 0$, and there are two tight constraints in \eqref{eq:availability-alg}, corresponding to the sets $\{2,3\}$ and $\{2,3,5\}$. Then a consistent static priority policy could choose either $2 \succ_j 3 \succ_j 5 \succ_j \emptyset$ or $3 \succ_j 2 \succ_j 5 \succ_j \emptyset$. A separate concern is that there might be {\em no} static priority policy consistent with $(\bn,\bx)$. We address this concern in \Cref{sec:suitable}; for now, we proceed assuming such a policy exists. With this assumption, we prove \Cref{prop:plplowerbound} in \Cref{apx:omittedproofs-primal} and explain the main idea here.

\begin{restatable}{proposition}{plplowerbound}
%\begin{proposition}
    \label{prop:plplowerbound}
    If $({\bn^*}, {\bx^*})$ is an optimal solution of $\plpm$ and $\pi$ is a static priority policy consistent with $({\bn^*}, {\bx^*})$, then $V(\pi) \ge \plpm$.
%\end{proposition}
\end{restatable}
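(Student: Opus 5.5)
The plan is to argue geometrically, as the introduction suggests: let $(\bn^\pi,\bx^\pi)$ be the long-run outcome of $\pi$ from \Cref{lemma:stationaryctmc}, restricted to $\m{M}$. This is legitimate because $\pi$ makes matches only in $\m{M}\cap\{x^*_{ij}>0\}$, so $x^\pi_{ij}=0$ outside $\m{M}$ and $V(\pi)=\sum_{(i,j)\in\m{M}} r_{ij}x^\pi_{ij}$. By \eqref{eq:balance}, $(\bn^\pi,\bx^\pi)$ satisfies \eqref{eq:balance-lp-glb}. It may, however, violate \eqref{eq:availability-alg}.

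I will consider the line through $(\bn^*,\bx^*)$ and $(\bn^\pi,\bx^\pi)$ and extend it beyond the optimum. Set $(\bn^\epsilon,\bx^\epsilon)=(\bn^*,\bx^*)+\epsilon\big((\bn^*,\bx^*)-(\bn^\pi,\bx^\pi)\big)$ for small $\epsilon>0$. The aim is to show that this point is feasible for $\plpm$. Optimality of $(\bn^*,\bx^*)$ then gives $r\cdot\bx^\epsilon\le r\cdot\bx^*$, which rearranges to $r\cdot\bx^\pi\ge r\cdot\bx^*=\plpm$, as desired.

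Feasibility is checked constraint by constraint.
\begin{itemize}
\item Balance is affine and holds at both endpoints, so it holds along the whole line.
\item Nonnegativity of $\bx^\epsilon$: if $x^*_{ij}>0$, small $\epsilon$ suffices. If $x^*_{ij}=0$, consistency forces $x^\pi_{ij}=0$, so $x^\epsilon_{ij}=0$.
\item Nonnegativity of $\bn^\epsilon$: if $n^*_i>0$, small $\epsilon$ suffices. If $n^*_i=0$, I need $n^\pi_i=0$, which fails whenever type $i$ ever waits under $\pi$. So this case has to be handled separately.
\end{itemize}

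The main case is \eqref{eq:availability-alg}. Since the inequality is linear, slack constraints stay satisfied for small $\epsilon$ (there are finitely many). For a tight pair $(S,j)$, consistency gives $S\in\prefixes{\succ_j}$. \Cref{lemma:balance:multiple} then gives $\sum_{i\in S}x^\pi_{ij}=\lambda_j\PP{\sum_{i\in S}N^\pi_i>0}$, and \Cref{lem:gammabound} bounds this below by $\lambda_j\gamma_S\sum_{i\in S}n^\pi_i$. So $\pi$ satisfies the reversed inequality $\sum_S x^\pi_{ij}\ge\lambda_j\gamma_S\sum_S n^\pi_i$. Define the linear slack $g(\bn,\bx)=\lambda_j\gamma_S\sum_S n_i-\sum_S x_{ij}$. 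Then $g^*=0$ and $g^\pi\le0$, so $g^\epsilon=(1+\epsilon)g^*-\epsilon g^\pi\ge0$, and the constraint holds. This is exactly the piercing picture.

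The main obstacle is the boundary issue for $\bn$: $n^*_i=0$ while $n^\pi_i>0$. Any positive mass $n^\pi_i$ comes from type-$i$ arrivals that wait, so $\lambda_i>0$ forces $n^\pi_i>0$ for every type. Hence we need $n^*_i>0$ throughout. I will argue this via the LP itself. If $n^*_i=0$, then every constraint \eqref{eq:availability-alg} with $S=\{i\}$ forces $x^*_{ij}=0$, and more generally every $S\ni i$ gets no help from $n_i$. Balance then requires $\lambda_i=\sum_j x^*_{ji}$, that is, all type-$i$ agents are matched on arrival to waiting partners $j$ with $n^*_j>0$. To rule this out, or to work around it, I would first try a cleaner reformulation. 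Eliminate $\bn$ via balance, writing $n_i=(\lambda_i-\text{match rate}_i)/\mu_i$, and view nonnegativity of $n_i$ as part of the constraint set. If $n_i\ge0$ is tight at the optimum, then type $i$ is fully matched on arrival. Under $\pi$, the prefix equalities and \Cref{lem:gammabound} should force $\pi$'s match rates for $i$ to be comparably saturated. If that cannot be shown directly, the fallback is to perturb: compare with the LP obtained by dropping such types' waiting role, or apply the segment argument to a slightly perturbed instance and take limits. I expect this degenerate-case bookkeeping to be the delicate part, with the tight-constraint argument above being the core.
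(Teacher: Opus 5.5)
Your construction is the paper's own. The point $(1+\epsilon)(\bn^*,\bx^*)-\epsilon(\bn^\pi,\bx^\pi)$ is exactly the paper's $(\hat\bn,\hat\bx)$ with $\epsilon=\beta/(1-\beta)$. Your handling of balance, of $\bx^\epsilon\ge0$ via consistency, of slack availability constraints, and of tight ones (prefix equality from \Cref{lemma:balance:multiple} plus \Cref{lem:gammabound}) is correct.

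The gap is the one you flag and leave open: you need $n^*_i>0$ for every $i$. You are right that this cannot be dodged on the policy side, since from the empty state an arriving type-$i$ agent always waits, so $n^\pi_i>0$. Your fallbacks (saturation of $\pi$'s match rates, perturbing the instance and taking limits) are not arguments. None of them is needed, because $n_i>0$ holds at \emph{every} feasible solution of $\plpm$, by a short LP argument that completes the reasoning you began.

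You already observed two things. First, $n^*_i=0$ forces $x^*_{ik}=0$ for every $k$ with $i\in\m{M}_k$, using \eqref{eq:availability-alg} for arriving type $k$ with $S=\{i\}$. Second, balance then reduces to $\sum_{k\in\m{M}_i}x^*_{ki}=\lambda_i$.

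The missing step is to apply \eqref{eq:availability-alg} for arriving type $i$ with the whole set $S=\m{M}_i$. Balance and nonnegativity give $n_k\le\lambda_k/\mu_k$, so $\gamma_S\sum_{k\in S}n_k\le\gamma_S\rho(S)=1-e^{-\rho(S)}<1$, and hence
\[
\sum_{k\in\m{M}_i}x^*_{ki}\;\le\;\lambda_i\big(1-e^{-\rho(\m{M}_i)}\big)\;<\;\lambda_i,
\]
which holds trivially if $\m{M}_i=\emptyset$. This contradicts the reduced balance equation. In words, the availability constraint caps the fraction of type-$i$ arrivals matched on arrival strictly below one. The remaining type-$i$ arrivals must either wait and match later or abandon, and both require $n_i>0$.

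This is exactly how the paper obtains positivity, in the proof of \Cref{lemma:suitablestructure}. With it, $\bn^\epsilon\ge0$ for all small $\epsilon$, and your argument goes through as written.
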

\begin{proof}[Proof sketch.]The key fact used in the proof is that, by \Cref{lem:gammabound}, \eqref{eq:availability-alg} is a tightening of the constraint \eqref{eq:matchrate} satisfied by static priority policies. As a result, the steady-state quantities $(\bn^\pi,\bx^\pi)$ of $\pi$, defined in \eqref{eq:nxpi}, are typically not feasible for $\plpm$. However, this alone does not imply that $\bx^\pi$ yields a higher objective value than $\bx^*$. %Furthermore, it may not be true that $x_{ij}^\pi \ge x_{ij}^*$ for each $(i,j)$. 
Our proof shows that the ``arrow'' from $\bx^\pi$ to $\bx^*$ ``pierces'' the feasible region (see \Cref{fig:piercing}). This implies that $(\bn^*,\bx^*)$ can be expressed as a convex combination of $(\bn^\pi,\bx^\pi)$ and another feasible point $(\hat{\bn},\hat{\bx})$. If we let $\mathbf{r}\circ\bx=\sum_{(i,j)\in\m{M}}r_{ij}x_{ij}$ denote the objective value of $\bx$, then we have $\mathbf{r} \circ \hat{\bx} \le \mathbf{r} \circ \bx^*$ by optimality of $(\bn^*, \bx^*)$, so it follows that $\val{\pi}=\mathbf{r} \circ \bx^\pi \ge \mathbf{r} \circ \bx^*=\plpm$.
\end{proof}

\sidecaptionvpos{figure}{c}% [Version 3: center the side caption vertically next to the larger drawing]
\begin{SCfigure}
%=====================================================================
%  figures/piercinglemma-tikz.tex
%  TikZ drawing for Figure \ref{fig:piercing} (proof of Proposition 4.2).
%
%  Usage (inside the SCfigure environment of tex/primal-approach.tex):
%      \input{figures/piercinglemma-tikz}
%  Requires: tikz with the libraries calc and arrows.meta (both loaded
%            in tex/macro.tex).
%  Size:     change x= and y= below to rescale the drawing.
%=====================================================================
\begin{tikzpicture}[
    x=0.75cm, y=0.75cm,
    region/.style    = {fill=pierceRegion, draw=pierceEdge, line width=0.7pt, line join=round},
    axis/.style      = {line width=0.8pt, -{Stealth[length=7pt, width=5.5pt]}},
    objective/.style = {line width=1.5pt, draw=pierceRed},
    facet/.style     = {line width=0.8pt, draw=black!55, dash pattern=on 3.5pt off 2.5pt},
    pierce/.style    = {line width=1.8pt, draw=pierceGold, -{Stealth[length=8pt, width=7pt]}, shorten >=2.6pt},
    extend/.style    = {line width=1.4pt, draw=pierceGold, dash pattern=on 3.5pt off 2.5pt},
    dot/.style       = {circle, fill=black, inner sep=0pt, minimum size=4.6pt},
    hollow/.style    = {circle, draw=black, fill=white, line width=0.8pt, inner sep=0pt, minimum size=4.6pt},
    lbl/.style       = {font=\normalsize, inner sep=1.5pt},
  ]
  % ---- Colours (local to the picture) --------------------------------
  \definecolor{pierceRegion}{RGB}{205,229,233}   % feasible region
  \definecolor{pierceEdge}{RGB}{110,160,172}     % its boundary
  \definecolor{pierceRed}{RGB}{214,48,39}        % iso-objective line
  \definecolor{pierceGold}{RGB}{226,154,20}      % arrow from x^pi to x^*

  % ---- Points --------------------------------------------------------
  % O, B, xs, C, D: vertices of the projected feasible region, where xs is
  % the vertex x^*; xpi: the point x^pi (outside the region); xh: the
  % feasible point beyond x^* on the line through x^pi and x^*.
  \coordinate (O)   at (0,0);
  \coordinate (B)   at (0,5.47);
  \coordinate (xs)  at (4.57,4.81);
  \coordinate (C)   at (8.07,1.88);
  \coordinate (D)   at (9.10,0);
  \coordinate (xpi) at (4.16,7.24);
  \coordinate (xh)  at ($(xs)!-0.55!(xpi)$);

  % ---- Feasible region -----------------------------------------------
  \draw[region] (O) -- (B) -- (xs) -- (C) -- (D) -- cycle;

  % ---- Axes ----------------------------------------------------------
  \draw[axis] (O) -- (10.4,0) node[lbl, anchor=north east, yshift=-2pt] {$x_{1j}$};
  \draw[axis] (O) -- (0,8.3) node[lbl, anchor=north east, xshift=-2pt] {$x_{2j}$};

  % ---- Dashed lines: the two constraints that are tight at x^*, ------
  % extended beyond x^*.
  \draw[facet] (xs) -- ($(xs)!-0.75!(C)$);
  \draw[facet] (xs) -- ($(xs)!-0.62!(B)$);

  % ---- Red line: points with the same objective value as x^* ---------
  \draw[objective] ($(xs)+(-4.25,1.275)$) -- ($(xs)+(5.0,-1.5)$);

  % ---- Arrow from x^pi to x^*, and its continuation into the region ---
  \draw[extend] (xs) -- (xh);
  \draw[pierce] (xpi) -- (xs);

  % ---- Points and labels ---------------------------------------------
  \node[dot] at (xpi) {};
  \node[dot] at (xs) {};
  \node[hollow] at (xh) {};
  \node[lbl, anchor=south] at (xpi) {$\mathbf{x}^{\pi}$};
  \node[lbl, anchor=south west, xshift=1pt, yshift=2pt] at (xs) {$\mathbf{x}^{*}$};
  \node[lbl, anchor=west, xshift=2pt] at (xh) {$\hat{\mathbf{x}}$};
\end{tikzpicture}%
\caption{Visualization of the proof of \Cref{prop:plplowerbound}. The shaded region represents the projection of the polytope of feasible solutions to $\plpm$ onto $\bx$-space, and the red line contains the points with the same objective value as the optimal vertex $\bx^*$. The policy's match rates $\bx^\pi$ lie beyond both constraints that are tight at $\bx^*$ (dashed), so the ``arrow'' from $\bx^\pi$ to $\bx^*$ ``pierces'' the feasible region: continued past $\bx^*$, it reaches a feasible point $\hat{\bx}$. Since $\hat{\bx}$ cannot lie above the red line, $\bx^\pi$ cannot lie below it. %\na{Technically, the polytope is in $(\bn,\bx)$ space.}
} \label{fig:piercing}
\end{SCfigure}

\subsection{Finding a Suitable Solution} \label{sec:suitable}
\Cref{sec:lpAlg,sec:primal-alg} establish two main results. First, $\plpt \ge \frac{1}{2}\offrbrel$. Second, if we take an optimal solution of $\plpm$ and let $\pi$ be any static priority policy consistent with this solution, then $V(\pi) \ge \plpm$. Combining these, any static priority policy consistent with an optimal solution $(\bn^*,\bx^*)$ of $\plpt$ has long-run average reward at least $\frac{1}{2}\offrbrel \ge \half\offopt$. 

Two barriers remain. First, and most importantly, we have not guaranteed that there exists a static priority policy consistent with $(\bn^*,\bx^*)$. In fact, such a policy might not exist.\footnote{For example, let there be two types, with $r_{11} = r_{22} = 1$ and $r_{12} = r_{21} = 5$, with $\lambda_1 = \mu_1 = \mu_2 = 1$ and $\lambda_2 = 2$. Then when solving $\plpt$, the constraint for $j = 1$, $S = \{1, 2\}$ is tight, but $x_{11} = 0$. It is impossible to simultaneously have $\{1,2\}$ be a prefix of $\succ_1$ and to have $(1,1)$ matches be unacceptable.} Second, if we care not only about the existence of a good policy but also about our ability to find it, then one might be concerned about solving the linear program, given that \eqref{eq:availability-alg} consists of exponentially many constraints.

In this subsection, we show how to find a set of matches $\m{M}$ and an optimal solution $({\bn^*}, {\bx^*})$ to $\plpm$ such that the following hold:
\begin{enumerate}[(i)]
\item there exists a static priority policy consistent with $({\bn^*}, {\bx^*})$, and
\item $\plpm \ge \plpt$. 
\end{enumerate}
From \Cref{prop:factor2,prop:plplowerbound}, it follows that such a policy earns at least half of the omniscient benchmark. Also, we show that such a policy can be found in time polynomial in the input size.\footnote{\label{fn:arithmetic}For simplicity, we assume that $\boldsymbol{\lambda}$, $\boldsymbol{\mu}$, and $\mathbf{r}$ are rational, and we call $|\types|$ plus their encoding length the \emph{input size}. We also treat the constants $\gamma_S$, which involve exponentials, as rational numbers of polynomial encoding length that can be computed exactly. We do not dwell on the details of the arithmetic and the  bookkeeping of encoding lengths---both could be handled using standard methods. The polynomial-time claims of \Cref{sec:values,sec:connections} rely on the same convention.}

We start by giving a name to solutions with which some static priority policy is consistent.
\begin{definition}[Suitable solution]
A feasible solution $({\bn}, {\bx})$ of $\plpm$ is {\bf suitable} if there is a static priority policy $\pol$ that is consistent with it.
\end{definition}

Examining \Cref{def:associated-greedy}, there are two things that might prevent a consistent policy from existing. One is that it might be impossible to make each $S \in \m{P}_j(\bn,\bx)$ a prefix of $\succ_j$. For example, if for a given $j$, \eqref{eq:availability-alg} is tight for both $\{1\}$ and $\{2\}$, then no consistent policy exists: these sets cannot both be prefixes of $\succ_j$ ($1$ and $2$ cannot both be the highest-priority partner for $j$). Similar issues arise if the tight sets are, for example, $\{1, 3\}$ and $\{1, 2, 4\}$. Our next lemma states that we do not need to worry about this concern: given any feasible solution $(\bn,\bx)$ of $\plpm$, the sets in $\m{P}_j(\bn,\bx)$ are ``nested'', meaning that for any two sets in this family, one contains the other.

A second concern that might prevent a consistent policy from existing is that there might be a $j \in \types$ and $S \in \m{P}_j(\bn,\bx)$, that is, a set whose constraint \eqref{eq:availability-alg} is tight, and yet $x_{ij} = 0$ for some $i \in S$. This causes problems, as one condition of consistency is that $S$ should be a prefix of $\succ_j$, and a second condition is that $i$ should be unacceptable to $j$. \Cref{lemma:suitablestructure} establishes that this possibility is the only barrier that might prevent $(\bn,\bx)$ from being suitable.

\begin{restatable}{lemma}{suitablestructure}
    \label{lemma:suitablestructure}
Let $({\bn}, {\bx})$ be any feasible solution to $\plpm$. Then for each $j \in \m{T}$, the set function 
\[\Psi_j(S) \triangleq \frac{1}{\gamma_S}\left(\lambda_j \gamma_S \sum_{i \in S} n_i - \sum_{i \in S} x_{ij}\right)\]
is submodular. Moreover, if $S,S' \in \m{P}_j(\bn, \bx)$, then either $S \subseteq S'$ or $S' \subseteq S$. Consequently, a feasible solution $({\bn}, {\bx})$ of $\plpm$ is suitable if and only if 
\begin{equation} x_{ij} > 0 \quad \text{for all } j\in\m{T},\ S \in \m{P}_j(\bn, \bx), \text{ and } i \in S. \label{eq:suitability} \end{equation}  
\end{restatable}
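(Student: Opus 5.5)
Write $\Psi_j(S) = \lambda_j \sum_{i\in S} n_i - \frac{1}{\gamma_S}\sum_{i\in S} x_{ij}$. The first term is modular in $S$, so submodularity of $\Psi_j$ reduces to showing that $S \mapsto \frac{1}{\gamma_S}\sum_{i\in S}x_{ij}$ is supermodular. Here $1/\gamma_S = h(\rho(S))$ with $h(\rho)=\rho/(1-e^{-\rho})$. The function $h$ is nondecreasing and convex on $[0,\infty)$; I will verify this by a direct derivative check, or by writing $h(\rho)=\E{L\mid L>0}$ for Poisson $L$ with mean $\rho$, though the derivative route is more concrete.

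I will then establish a general claim. If $a(S)=\sum_{i\in S}a_i$ and $b(S)=\sum_{i\in S}b_i$ with $a_i,b_i\ge 0$, and $h$ is nondecreasing and convex, then $f(S)=h(b(S))\,a(S)$ is supermodular. To prove it, I compare marginals: for $S\subseteq T$ and $k\notin T$, I must show $f(T+k)-f(T)\ge f(S+k)-f(S)$. Expanding gives
\[
h(b(T)+b_k)\,a_k + a(T)\big[h(b(T)+b_k)-h(b(T))\big].
\]
The first term is monotone in $T$ because $h$ is nondecreasing. The second is monotone because $a(T)\ge a(S)\ge0$ and, by convexity, the increment of $h$ over an interval of fixed length $b_k$ grows with the left endpoint. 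Applying the claim with $a_i=x_{ij}$ and $b_i=\lambda_i/\mu_i$ gives submodularity of $\Psi_j$ on subsets of $\m{M}_j$.

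Next comes nestedness. Tightness of $S$ means $\Psi_j(S)=0$, and feasibility means $\Psi_j\ge0$ on all subsets of $\m{M}_j$, since $\gamma_S>0$. For $S,S'\in\m{P}_j$, submodularity gives
\[
0\le \Psi_j(S\cup S')+\Psi_j(S\cap S')\le \Psi_j(S)+\Psi_j(S')=0,
\]
so both $S\cup S'$ and $S\cap S'$ are tight; when $S\cap S'=\emptyset$, use $\Psi_j(\emptyset)=0$. Union-closure alone does not give nestedness, so the key step is to rule out incomparable tight sets. I expect this to be the main obstacle, and I will try strict supermodularity. If $S\not\subseteq S'$ and $S'\not\subseteq S$, take $k\in S\setminus S'$ and $\ell\in S'\setminus S$. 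Since $b_i>0$ for every type, $h$ is strictly convex and strictly increasing, so the supermodularity inequality for the pair $S\cap S'\subsetneq S'$ with element set $S\setminus S'$ is strict whenever $a(S\setminus S')>0$ or $a(S')>0$. If both vanish, I will instead use the degenerate structure directly: a tight set with all $x_{ij}=0$ forces $\lambda_j\sum n_i=0$, so all $n_i=0$ there, and I will handle this case separately, possibly by showing it contradicts balance ($n_i=0$ with $x$ zero on those types). Strictness then gives $\Psi_j(S\cup S')+\Psi_j(S\cap S')<0$, contradicting feasibility. I need strictness of $h$'s convexity on $(0,\infty)$, which I will verify from the derivative.

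The equivalence follows. If the solution is suitable, a consistent $\pi$ makes every tight $S$ a prefix, so all types in $S$ are acceptable to $j$, and consistency forces $x_{ij}>0$ for each; this gives \eqref{eq:suitability}. Conversely, given \eqref{eq:suitability}, the nested tight sets $S_1\subsetneq\dots\subsetneq S_m$ form a chain. I will build $\succ_j$ by listing $S_1$, then $S_2\setminus S_1$, and so on, in any internal order. After these, I list the remaining types $i\in\m{M}_j$ with $x_{ij}>0$, then $\emptyset$, then all other types. Each $S_k$ is then a prefix, and only matches with $x_{ij}>0$ in $\m{M}$ are acceptable, so the policy is consistent.
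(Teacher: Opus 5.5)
Your reduction (a modular term minus $h(\rho(S))\sum_{i\in S}x_{ij}$), the marginal-increment proof of supermodularity, your observation that for incomparable $S,S'$ the inequality is strict exactly when $\sum_{i\in S\cup S'}x_{ij}>0$, and your construction of $\succ_j$ in the equivalence are all correct. They follow the paper's proof closely; the paper writes the same computation as an explicit three-term decomposition instead of through marginals.

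The gap is the degenerate case you postpone: $x_{ij}=0$ for all $i\in S\cup S'$. This case is not a technicality. If it could occur, tightness would force $n_i=0$ on $S\cup S'$, and then \emph{every} nonempty subset of $S\cup S'$ would be tight, so nestedness would genuinely fail. What you need is that every feasible solution of $\plpm$ has $n_i>0$ for all $i\in\m{T}$. The route you sketch (``$n_i=0$ with $x$ zero on those types contradicts balance'') does not give this. You only know $x_{ij}=0$ for the one arriving type $j$, and the balance constraint of type $i$ can a priori be met by other matches, with $i$ as either the earlier or the later arrival.

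The missing argument, which is how the paper closes this case, uses two further availability constraints. Suppose $n_i=0$. The singleton constraint for $S=\{i\}$ and each arriving type $k$ with $i\in\m{M}_k$ forces $x_{ik}=0$. So \eqref{eq:balance-lp-glb} for $i$ requires $\sum_{k\in\m{M}_i}x_{ki}=\lambda_i$. But \eqref{eq:balance-lp-glb} also gives $n_k\le\lambda_k/\mu_k$ for every $k$. Hence the availability constraint for arriving type $i$ and $S=\m{M}_i$ yields
\[
\sum_{k\in\m{M}_i}x_{ki}\le\lambda_i\gamma_{\m{M}_i}\sum_{k\in\m{M}_i}n_k\le\lambda_i\big(1-e^{-\rho(\m{M}_i)}\big)<\lambda_i ,
\]
where the left side is zero if $\m{M}_i=\emptyset$. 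This is a contradiction. So $n_i>0$ for all $i$, the degenerate case cannot arise, and your strict-supermodularity argument then completes the nestedness claim.
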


%This possibility is exactly what is ruled out by \eqref{eq:suitability}.

%there might be $i,j \in \types$ and $S \subseteq \m{M}_j$ such that $x_{ij} = 0$ (so the policy should treat $i$ as unacceptable to $j$), and yet $i \in S$ and $S \in \m{P}_j$ (so the policy should make $S$ a prefix of $\succ_j$, implying that $i$ is acceptable). 

\begin{algorithm}[t]
	\caption{Suitability finder (pruning)} 
	\begin{algorithmic}[1]
        \State {\bf Input:} instance $\m{I} = (\m{T}, \boldsymbol{\lambda}, \boldsymbol{\mu}, \mathbf{r})$.
        \State $\m{M}^{(0)} \leftarrow \m{T} \times \m{T}$.
        \State $R_0 \leftarrow 0$.
        \State $k \leftarrow 0$.
        \State   $exists\_zero \leftarrow \ True$.
        \While {$exists\_zero$}
        \State $k \leftarrow k+1$
        \State $R_k, (\bn^k, \bx^k) \leftarrow$ solve $\plp(\m{M}^{(k-1)})$ \Comment{optimal value and an optimal solution}
        \State   $exists\_zero \leftarrow$ [$x^k_{ij} = 0$ for some $(i,j)\in\m{M}^{(k-1)}$].
        \State $\m{M}^{(k)} \leftarrow \m{M}^{(k-1)} \setminus \{(i,j) : x^k_{ij} = 0\}$.
        \EndWhile
        \State {\bf Return:} $\m{M}^{(k)}$, $R_k$, and $(\bn^k, \bx^k)$.
	\end{algorithmic} 
    \label{algo:suitability}
\end{algorithm}

%\begin{restatable}{lemma}{suitability}
% \label{lemma:suitability}
%A feasible solution $({\bn}, {\bx})$ of $\plpm$ is suitable if and only if 
%\begin{equation} S \in \m{P}_j(\bn, \bx) \implies x_{ij} > 0 \text{ for } i \in S. %\label{eq:suitability} \end{equation}   
%\end{restatable}

%\na{Maybe we could combine \Cref{lemma:suitability} and \Cref{lemma:suitablestructure} into one lemma Once you have \Cref{lemma:suitablestructure}, the proof of \Cref{lemma:suitability} is pretty much automatic.}

\Cref{lemma:suitablestructure}, which we prove in \Cref{apx:omittedproofs-primal}, establishes that \eqref{eq:suitability} is necessary and sufficient for a consistent policy to exist. We now show how to find a choice of $\m{M}$ such that an optimal solution of $\plpm$ is suitable.

Recall that $\m{P}_j(\bn,\bx)$, defined in \eqref{eq:Pj}, consists of the sets for which \eqref{eq:availability-alg} is tight. Thus, \eqref{eq:suitability} says that if $x_{ij}$ is involved in a tight constraint, then it should be positive. We can ensure that this condition is satisfied using the following algorithm. First solve $\plpt$. Any $(i,j)$ such that $x_{ij} = 0$ could potentially cause  unsuitability, so remove these pairs from $\m{M}$ and re-solve $\plpm$. Repeat until $x_{ij} > 0$ for all $(i,j) \in \m{M}$, ensuring suitability of the solution by \Cref{lemma:suitablestructure}. Because the solution from each iteration is feasible for the next iteration, the optimal value never decreases throughout the procedure. We call this procedure \emph{pruning} and formalize it in \Cref{algo:suitability} and \Cref{lemma:algfindssuitable}.

Each iteration of this procedure solves a linear program of the form $\plpm$, which brings us back to the second barrier: $\plpm$ has exponentially many availability constraints. These constraints, however, admit an efficient separation oracle. The key is the submodularity established in \Cref{lemma:suitablestructure}. For a nonnegative candidate solution $(\hat{\bn},\hat{\bx})$, the availability constraints \eqref{eq:availability-alg} for arrivals of type $j$ hold if and only if $\Psi_j(S)\ge0$ for every nonempty subset $S\subseteq\m{M}_j$. Submodular minimization can therefore either certify these constraints or identify a violated one~\citep{schrijver2000combinatorial}. Together with directly checking the other constraints, this gives an exact polynomial-time separation oracle for $\plpm$. We can then use the standard equivalence of separation and optimization~\citep{GrotschelLovaszSchrijver1988}, which yields a polynomial-time algorithm based on the ellipsoid method for computing an exact optimal solution of $\plpm$. We do not dwell on the details of the arithmetic (see footnote~\ref{fn:arithmetic}). Formally, we have the following proposition.

\begin{restatable}{proposition}{polytime}
\label{prop:polytime}
For every $\m{M}\subseteq\TT$, an optimal solution of $\plpm$ can be computed in time polynomial in the input size.

% \rn{do we feel comfortable with this? rational vs. irrational is not really the topic of this paper...}
% \footnote{The constants $\gamma_S$ are not necessarily rational numbers, so the linear program does not necessarily have a short exact representation. However, we can approximate these constants to any given precision, and show that this approximation does not cause a significant multiplicative loss in the value of the linear program or in the guarantee of the resulting policy. We omit the details.}
\end{restatable}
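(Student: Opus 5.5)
The plan is to use the equivalence of separation and optimization \citep{GrotschelLovaszSchrijver1988}. I first check that $\plpm$ fits its hypotheses, then build an exact polynomial-time separation oracle, and finally invoke the ellipsoid method to get an exact optimal vertex.

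\emph{Step 1 (the polyhedron).} The feasible region $P$ lives in $\reals^{\types}\times\reals^{\m{M}}$, so its dimension is $|\types|+|\m{M}|\le|\types|+|\types|^2$. It is defined by the following constraints:
\begin{itemize}
\item nonnegativity;
\item the $|\types|$ equalities \eqref{eq:balance-lp-glb};
\item at most $|\types|2^{|\types|}$ inequalities \eqref{eq:availability-alg}.
\end{itemize}
Every constraint has coefficients drawn from $\lambda_j$, $\mu_j$, $\lambda_j\gamma_S$ and $\pm1$. Under the convention of footnote~\ref{fn:arithmetic}, each of these has encoding length polynomial in the input size. Hence $P$ is a well-described polyhedron with facet complexity polynomial in the input size, even though it has exponentially many facets.

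$P$ is nonempty: the point $\bx=0$, $n_j=\lambda_j/\mu_j$ is feasible. $P$ is also bounded, since \eqref{eq:balance-lp-glb} gives $0\le n_j\le\lambda_j/\mu_j$ and $0\le x_{ij}\le\lambda_j$. So an optimal vertex exists, and its encoding length is polynomial.

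\emph{Step 2 (separation).} Given a rational candidate $(\hat\bn,\hat\bx)$, I first check nonnegativity and the $|\types|$ balance equalities directly, returning a violated one if any fails. Assuming these hold, fix $j$. The constraint for $S$ holds exactly when $\Psi_j(S)\ge0$, because $\gamma_S>0$.

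By \Cref{lemma:suitablestructure}, $\Psi_j$ is submodular on subsets of $\m{M}_j$. I should check that the lemma's proof uses only nonnegativity of $\bn$, and not full feasibility. If it does use more, it still applies here: I only call the oracle after the balance check has passed, and the submodularity argument plausibly needs only $\hat n\ge0$. Under that assumption, the combinatorial strongly polynomial algorithm of \citet{schrijver2000combinatorial} minimizes $\Psi_j$ over nonempty $S\subseteq\m{M}_j$. This is done by minimizing over sets containing each fixed element $i$, which is the restriction to a lattice family and is still submodular, and taking the best. Each evaluation of $\Psi_j(S)$ only requires computing $\gamma_S$ and sums, which is polynomial.

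If the minimum is negative, the minimizer $S$ gives a violated inequality \eqref{eq:availability-alg}, and that inequality is the separating hyperplane. Otherwise all the constraints hold. Repeating this for every $j$ gives an exact separation oracle.

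\emph{Step 3 (optimize).} By \citet{GrotschelLovaszSchrijver1988}, a strong separation oracle for a well-described polyhedron yields, in oracle-polynomial time, an exact optimal vertex for any rational linear objective, here $\sum r_{ij}x_{ij}$.

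I expect the main obstacle to be Step 2, specifically making sure that submodularity holds at arbitrary candidate points queried by the ellipsoid method rather than only at feasible ones. If it needs more than $\hat\bn\ge0$, I would instead separate over the relaxed set where only nonnegativity and balance are imposed first, since those are checked explicitly and first. The treatment of the irrational $\gamma_S$ is deferred to the stated convention.
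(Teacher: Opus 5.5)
Your proposal is correct and follows essentially the paper's proof: check nonnegativity and the balance equalities explicitly, separate the availability constraints of each arriving type $j$ by submodular minimization of $\Psi_j$ (\Cref{lemma:suitablestructure}), and conclude via the equivalence of separation and optimization. Two small corrections. First, your hedging points at the wrong variable: the submodularity argument needs only $\hat{\bx}\ge 0$, since the $\bn$-part of $\Psi_j$ is modular whatever its sign, and your oracle checks this first. Second, since $\Psi_j(\emptyset)=0$, you could simply minimize over all $S\subseteq\m{M}_j$, as the paper does, rather than over the sets containing each fixed element.
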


The proof is deferred to \Cref{apx:omittedproofs-primal}.

\begin{proposition}
    \label{lemma:algfindssuitable}
    \Cref{algo:suitability} returns a set $\m{M}$ such that $\plpm \geq \plpt$, as well as a suitable optimal solution of $\plpm$. Furthermore,  \Cref{algo:suitability} runs in time polynomial in the input size.% [
% \todo{Say what ``polynomial time'' means.}
\end{proposition}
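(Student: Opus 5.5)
The proof has three parts: monotonicity of the optimal values, suitability of the returned solution, and a bound on the number of iterations together with the per-iteration cost. We carry them out in that order.

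\emph{Monotonicity.} Let $(\bn^k,\bx^k)$ be the optimal solution of $\plp(\m{M}^{(k-1)})$ found in round $k$. We restrict it to $\m{M}^{(k)}$ by dropping the coordinates $x_{ij}$ with $x^k_{ij}=0$ and keeping $\bn^k$. This restriction is feasible for $\plp(\m{M}^{(k)})$:
\begin{itemize}
\item The dropped variables are zero, so each \eqref{eq:balance-lp-glb} sum is unchanged.
\item For $j$ fixed, every set $S\subseteq\m{M}^{(k)}_j$ is also a subset of $\m{M}^{(k-1)}_j$. Its constraint \eqref{eq:availability-alg} involves only surviving variables and the same $\bn$, so it still holds.
\end{itemize}
The objective value is unchanged, so $R_{k+1}\ge R_k$. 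Since $\m{M}^{(0)}=\TT$, induction gives that the returned value satisfies $R_k=\plpm\ge R_1=\plpt$.

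\emph{Suitability.} The loop stops only when $x^k_{ij}>0$ for every $(i,j)\in\m{M}^{(k-1)}$. In that round nothing is removed, so $\m{M}^{(k)}=\m{M}^{(k-1)}$ and $(\bn^k,\bx^k)$ is optimal for $\plp(\m{M}^{(k)})$. Every tight set $S\in\m{P}_j(\bn^k,\bx^k)$ is by definition a subset of $\m{M}^{(k)}_j$, so $x^k_{ij}>0$ for all $i\in S$. This is exactly condition \eqref{eq:suitability}, and \Cref{lemma:suitablestructure} then gives suitability.

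\emph{Running time.} Every round except the last removes at least one pair from $\m{M}$, and $|\TT|=|\types|^2$. The loop therefore runs at most $|\types|^2+1$ times. Each round solves one LP, which takes polynomial time by \Cref{prop:polytime}. Checking for zero coordinates and updating $\m{M}$ is linear in $|\types|^2$.

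The one delicate point is that \Cref{prop:polytime} must be applied to a set $\m{M}^{(k-1)}$ whose encoding stays polynomial. This holds because $\m{M}^{(k-1)}$ is just a subset of $\TT$. Another concern is whether zero tests on exact LP solutions are reliable. They are, because the ellipsoid method returns an exact rational optimal solution under the arithmetic convention of footnote~\ref{fn:arithmetic}.
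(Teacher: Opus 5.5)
Your proposal is correct and follows essentially the same argument as the paper. The round-$k$ optimum with its zero coordinates dropped stays feasible for the next program at the same value, giving $R_{k+1}\ge R_k$ and hence $\plpm\ge\plpt$; termination forces $x_{ij}>0$ on all of $\m{M}$, so suitability follows from \Cref{lemma:suitablestructure}; and at most $|\types|^2+1$ rounds, each solved in polynomial time by \Cref{prop:polytime}, give the running time. Your extra remarks on exact zero tests and on $\m{M}^{(k)}=\m{M}^{(k-1)}$ at termination are harmless refinements that the paper leaves implicit.
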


\begin{proof}
Every iteration except the last removes at least one of the $|\types|^2$ pairs, so the process stops after at most $|\types|^2+1$ iterations, and each iteration solves one linear program, which takes polynomial time by \Cref{prop:polytime}.
% \todo{Explain why each iteration runs in poly time.}

When this process stops, by definition $x^k_{ij} > 0$ for each $(i,j) \in \m{M}^{(k)}$, so by \Cref{lemma:suitablestructure}, the solution must be suitable. To see that the $R_k$ form a nondecreasing sequence, notice that the solution $(\bn^k, \bx^k)$ of $\plp(\m{M}^{(k-1)})$ is also feasible for $\plp(\m{M}^{(k)})$, since $\m{M}^{(k)}_j\subseteq\m{M}^{(k-1)}_j$ for every $j$ and therefore \eqref{eq:availability-alg} encodes fewer constraints. (Technically, $\plp(\m{M}^{(k)})$ involves fewer decision variables; we simply ignore the values $x_{ij}^k$ for which $(i,j) \notin \m{M}^{(k)}$.) Furthermore, this feasible solution of $\plp(\m{M}^{(k)})$ still has value $R_k$, since by definition $x_{ij}^k = 0$ for each $(i,j) \in \m{M}^{(k-1)} \setminus \m{M}^{(k)}$. Hence $R_{k+1}\ge R_k$, and the value of the returned program is at least $R_1=\plpt$.
%moving from problem $k$ to problem $k+1$ is equivalent to adding constraints that certain $x_{ij} = 0$ (which by definition does not change the objective value, since this is satisfied by our optimal solution), and then removing some of the constraints in \eqref{eq:availability-alg} (since $\m{M}_j$ has shrunk). The removal of these constraints weakly increases the objective. 
\end{proof}

Let $(\bn,\bx)$ be the solution returned by \Cref{algo:suitability}. A static priority policy consistent with $(\bn,\bx)$ can also be computed in polynomial time. Fix $j\in\m{T}$, and note that $\Psi_j(\emptyset)=0$ and, by feasibility, $\Psi_j(S)\ge0$ for every nonempty $S\subseteq\m{M}_j$. Thus, a type $i\in\m{M}_j$ belongs to some set in $\m{P}_j(\bn,\bx)$ if and only if the minimum of $\Psi_j$ over the sets $S\subseteq\m{M}_j$ that contain $i$ is zero, and some set in $\m{P}_j(\bn,\bx)$ contains $i$ but not $i'$ if and only if the minimum of $\Psi_j$ over the sets $S\subseteq\m{M}_j$ that contain $i$ but not $i'$ is zero. Each of these minimizations is a submodular minimization, which takes polynomial time~\citep{schrijver2000combinatorial}. Since the sets in $\m{P}_j(\bn,\bx)$ are nested, we can then rank as acceptable exactly the types that belong to some set in $\m{P}_j(\bn,\bx)$, with $i\succ_j i'$ whenever some set in $\m{P}_j(\bn,\bx)$ contains $i$ but not $i'$ and with the remaining ties broken arbitrarily. As in the proof of \Cref{lemma:suitablestructure}, since $(\bn,\bx)$ satisfies \eqref{eq:suitability}, the resulting static priority policy is consistent with $(\bn,\bx)$.

\section{Dual Approach}
\label{sec:values}
The primal approach of \Cref{sec:primal} identifies a static priority policy by solving a linear program and reading the priorities off its tight constraints. We now describe a second, dual-based way to identify such policies and certify their reward. This section focuses on the specific class of value-based static priority policies that were introduced earlier. As a reminder, such a policy gives each type a \emph{value} that captures the opportunity cost of matching an agent of that type, and prioritizes matches by their \emph{score}: the reward minus the values of the two matched types. Since the values will serve as dual variables (as we formalize shortly), the scores can be interpreted as dual-adjusted rewards.

As in \Cref{sec:primal}, it is convenient to fix a set of matches $\m{M}\subseteq\TT$ outside of which the policy never matches. Formally, we define such restricted value-based policies, which are described by one number $v_i$ per type $i$ together with the set $\m{M}$, as follows.

\begin{definition}[Dual consistency]
    \label{defn:dual-consistent-policy}
    Given  a vector of values $\mathbf{v} \in \mathbb{R}^{\m{T}}$ and a set of matches $\m{M}\subseteq\TT$, we say a static priority policy $\pol = \{\porder{j}\}_{j \in \types}$ is \textbf{consistent} with $\m{M}$ and $\mathbf{v}$ if it satisfies the following:
\begin{itemize}
    \item The set of acceptable matches $\m{M}(\pi)=\{(i,j): i \porder{j} \emptyset \}$ is a subset of $\m{M}\cap{}\{(i, j) : r_{ij} - v_i - v_j \geq 0\} $.
    \item If $(i,j)\in\m{M}$ and $r_{ij} - v_i - v_j > 0$, then $i \porder{j} \emptyset$.
    \item If $(i,j),(i',j)\in\m{M}$ and $r_{ij} -v_i - v_j > r_{i'j}-v_{i'}-v_j\geq 0$, then $i \porder{j} i'$.
    \end{itemize}
When $\m{M}=\TT$ we simply say that $\pi$ is consistent with $\mathbf{v}$.
\end{definition}
In words, a value-based policy consistent with $\m{M}$ and $\mathbf{v}$ makes only matches in $\m{M}$ with nonnegative score $r_{ij}-v_i-v_j$, makes every match in $\m{M}$ with positive score acceptable, and ranks the acceptable partners of each arriving type in decreasing order of score (i.e., of dual-adjusted reward). Such a policy always exists, although it might not be unique: ties between equal scores may be broken arbitrarily, and a match with score exactly zero may be either acceptable or unacceptable.

In fact, the unrestricted case $\m{M}=\TT$ suffices for the guarantee of this section. Therefore, our main analysis in this section sets $\m{M}=\TT$, and shows how to find values $\mathbf{v}$ that simultaneously (i) certify a lower bound on the long-run average reward of every policy consistent with them, and (ii) make this lower bound at least half of the omniscient benchmark. A general set $\m{M}$ is needed only in \Cref{sec:restricted-values} and later in \Cref{sec:connections}, where we connect the policies of this section with those of \Cref{sec:primal}, which make matches only within the set $\m{M}$ returned by \Cref{algo:suitability}.

% \begin{definition}
%     \label{defn:dual-consistent-policy}
%     Given a vector of values $v \in \mathbb{R}^{\m{T}}$, we say a static priority policy $\pol = \{\porder{j}\}_{j \in \types}$ is \textbf{consistent} with $v$ if it satisfies all of the following:
% \begin{itemize}
%     \item The set of acceptable matches $\{i: i \porder{j} \emptyset \}$ is a subset of $\{(i, j) : r_{ij} - v_i - v_j \geq 0\} $.
%     \item If  $r_{ij} - v_i - v_j > 0$, then $i \porder{j} \emptyset$.
%     \item If  $r_{ij} -v_i - v_j > r_{i'j}-v_{i'}-v_j\geq 0$, then $i \porder{j} i'$.
%     \end{itemize}
% \end{definition}

% \begin{definition}
%     \label{defn:dual-consistent-policy}
%     Given a vector of values $v \in \mathbb{R}^{\m{T}}$, we say a static priority policy $\pol = \{\porder{j}\}_{j \in \types}$ is \textbf{consistent} with $v$ if it satisfies all of the following:
% \begin{itemize}
%     \item The set of acceptable matches $\{i: i \porder{j} \emptyset \}$ is a subset of $\{(i, j) : r_{ij} - v_i - v_j \geq 0\} $.
%     \item If  $r_{ij} - v_i - v_j > 0$, then $i \porder{j} \emptyset$.
%     \item If  $r_{ij} -v_i - v_j > r_{i'j}-v_{i'}-v_j\geq 0$, then $i \porder{j} i'$.
%     \end{itemize}
% \end{definition}

\subsection{Value-Based Certificates} 
\label{sec:value-certificate-1}
For the first goal, we begin with a linear program, inspired by \Cref{lem:gammabound}, that provides a lower bound on the long-run average  reward of an \emph{arbitrary} static priority policy $\pi$; in particular, $\pi$ need not be value-based. As before, $\mathcal M(\pi)=\{(i,j):i\succ_j\emptyset\}$ denotes its set of acceptable matches, and $\mathcal P_j(\pi)$ denotes the collection of prefixes of $\succ_j$ (\Cref{def:prefix}). We then introduce the following LP:

\begin{lpbox}{Lower-bound LP --- $\plblp(\boldsymbol{\pi})$}
\begin{align}
  \plblp(\pi)
  \;\triangleq\; \min_{\bn,\bx \ge 0}\quad
   & \sum_{(i,j)\in\mathcal M(\pi)} r_{ij}x_{ij}
  \nonumber\\[2pt]
  \textrm{s.t.}\quad
   & n_j\mu_j
     + \sum_{i:(i,j)\in\mathcal M(\pi)} x_{ij}
     + \sum_{i:(j,i)\in\mathcal M(\pi)} x_{ji}
     \;\ge\; \lambda_j
   && \forall j\in\mathcal T, \label{eq:balance-lb}\tag{\textsc{$\pi$-Balance}}\\
   & \sum_{i\in S} x_{ij}
     \;\ge\; \lambda_j\gamma_S \sum_{i\in S} n_i
   && \forall j\in\mathcal T,\ S\in\mathcal P_j(\pi).
   \label{eq:availability-lb}\tag{\textsc{$\pi$-Availability}}
\end{align}
\end{lpbox}

% \begin{equation}\label{eq:replacement-primal}
%  \textup{LP}^{\textup{LB}}(\mathcal I,\pi)
%  =\min_{n,x\ge0}\quad \sum_{(i,j)\in\mathcal M(\pi)}r_{ij}x_{ij}
% \end{equation}
% \begin{align}
%  \textup{subject to}\quad
%  n_i\mu_i+\sum_{j:(i,j)\in\m{M}(\pi)}x_{ij} +\sum_{j:(j,i)\in\m{M}(\pi)}x_{ji}\ge\lambda_i 
%        &&\forall i\in\mathcal T, \label{eq:balance-lb}\\
%  \sum_{i\in S}x_{ij}\ge\lambda_j\gamma_S\sum_{i\in S}n_i
%        &&\forall j\in\mathcal T,\ S\in\mathcal P_j(\pi). \label{eq:availability-lb}
% \end{align}

This LP is similar in spirit to $\plpm$, with two important differences. First, its objective and constraints depend on $\pi$, so \eqref{eq:availability-lb} consists of at most $|\types|^2$ constraints, rather than one constraint for each arriving type $j$ and nonempty set $S\subseteq\m{M}_j$. Second, it minimizes rather than maximizes its objective, its availability constraints are reversed, and its balance constraints are inequalities ($\ge$) rather than equalities. As a result, it provides a lower bound on $\val{\pi}$: the steady-state quantities $(\bn^\pi,\bx^\pi)$ defined in \eqref{eq:nxpi} form a feasible solution of $\plblp(\pi)$ with objective value $\sum_{(i,j)\in\mathcal M(\pi)} r_{ij}x_{ij}^{\pi}=\val{\pi}$. Indeed, \Cref{lemma:balance:multiple} shows that these quantities satisfy \eqref{eq:balance-lb} with equality, while \Cref{lemma:balance:multiple,lem:gammabound} together imply that, for every $j\in\mathcal T$ and prefix $S\in\mathcal P_j(\pi)$,
\[
 \sum_{i\in S}x_{ij}^\pi
 =\lambda_j\mathbb P\!\left(\sum_{i\in S}N_i^\pi>0\right)
 \ge\lambda_j\gamma_S\sum_{i\in S}n_i^\pi,
\]
which establishes \eqref{eq:availability-lb}.

We now take the dual of $\plblp(\pi)$, associating a dual variable $v_i$ with the constraint \eqref{eq:balance-lb} of type $i$ and a dual variable $z_{Sj}$ with the constraint \eqref{eq:availability-lb} of type $j$ and prefix $S$. This yields the following maximization linear program:
\begin{lpbox}{Dual of $\plblp(\boldsymbol{\pi})$ }
\begin{align}
 \dlblp(\pi)
  \;\triangleq\; \max_{\mathbf{v},\mathbf{z} \ge 0}\quad
   & \sum_{i\in\mathcal T} \lambda_i v_i
   \nonumber\\[2pt]
  \textrm{s.t.}\quad
   & v_i+v_j
     + \sum_{S\in\mathcal P_j(\pi)} z_{Sj}\mathbf{1}(i\in S)
     \;\le\; r_{ij}
   && \forall (i,j)\in\mathcal M(\pi),
   \label{eq:replacement-reward}\\
   & \mu_i v_i
     \;\le\; \sum_{j\in\mathcal T}\sum_{S\in\mathcal P_j(\pi)}
       \lambda_j\gamma_S z_{Sj}\mathbf{1}(i\in S)
   && \forall i\in\mathcal T.
   \label{eq:replacement-balance}
\end{align}
\end{lpbox}
The notation $v_i$ for the dual variable of the constraint \eqref{eq:balance-lb} of type $i$ is deliberate: we will use a particular (not necessarily optimal) choice of these dual variables as the values of a value-based policy. With this viewpoint, each constraint \eqref{eq:replacement-reward} states that the values of the two agents consumed by an acceptable match, together with the  weights $z_{Sj}$ of the prefixes containing the waiting partner, do not exceed the reward of the match. Also, each constraint \eqref{eq:replacement-balance} states that the rate $\mu_i v_i$ at which value is lost to abandonment per waiting type-$i$ agent does not exceed a pessimistic estimate of the rate at which the agent's availability generates \emph{surplus} for arriving agents, as we explain next.

Fix an arriving type $j$, list its acceptable partners as $i_1\succ_j\cdots\succ_j i_m$, and let $P_k=\{i_1,\ldots,i_k\}$. When the constraints \eqref{eq:replacement-reward} of type $j$ hold with equality, as they will for the values we construct, the score $r_{i_kj}-v_{i_k}-v_j$ equals $\sum_{\ell\ge k}z_{P_\ell j}$, the total weight of the prefixes that contain $i_k$. As a result, this total weight accounts for the surplus of the match, that is, the part of its reward that exceeds the values of the two matched agents, which is its score $r_{i_kj}-v_{i_k}-v_j$.  Equivalently, $z_{P_kj}$ is the drop in  score from the $k$-th to the $(k+1)$-th type on the list, where the score after the last type is taken to be zero; as a result, $z_{P_\ell j}$ can be interpreted as the contribution of the prefix $P_\ell$ to the surplus generated by each match of an arriving type-$j$ agent with a partner in $P_\ell$. By \eqref{eq:availability-lb}, such matches occur at a rate of at least $\lambda_j\gamma_{P_\ell}\sum_{i\in P_\ell}n^\pi_i$. Intuitively speaking, spreading this contribution evenly over the waiting agents in $P_\ell$ credits each of them with surplus at rate $\lambda_j\gamma_{P_\ell}z_{P_\ell j}$. Summing over arriving types and prefixes that contain type $i$, each waiting type-$i$ agent is credited with surplus at rate $\sum_{j}\sum_{S\in\mathcal P_j(\pi)}\lambda_j\gamma_S z_{Sj}\mathbf{1}(i\in S)$, which is the right side of \eqref{eq:replacement-balance}.

% \na{Choice to call this $v_i$ is intentional (connect to value-based policy).}
% \na{I find the preceding paragraph hard to read because I have very little intuition for the meaning of the $z_{Sj}$. Should we do more to interpret/explain ``the weights of the prefixes'' and why the right side corresponds to ``the rate at which availability generates excess rewards''?}
\smallskip
Weak duality now provides conditions under which a vector of values serves as a lower-bound \emph{certificate} for the long-run average reward of a given static priority policy.

\begin{proposition}\label{prop:replacement-certificate}
Let \(\pi\) be a static priority policy and let \((\mathbf{v},\mathbf{z})\) be feasible for
\(\textup{DLP}^{\textup{LB}}(\pi)\). Then 
\[
 V(\pi)\ge\sum_{i\in\mathcal T}\lambda_i v_i.
\]
\end{proposition}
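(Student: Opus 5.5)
The plan is to combine weak LP duality with the feasibility of the policy's steady-state quantities in $\plblp(\pi)$, which the text already establishes.

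First, we recall that $(\bn^\pi,\bx^\pi)$ from \eqref{eq:nxpi} is feasible for $\plblp(\pi)$. \Cref{lemma:balance:multiple} gives \eqref{eq:balance-lb} with equality. \Cref{lemma:balance:multiple} together with \Cref{lem:gammabound} gives \eqref{eq:availability-lb} for every prefix $S\in\m{P}_j(\pi)$. The objective value at this point is $\sum_{(i,j)\in\m{M}(\pi)} r_{ij}x^\pi_{ij}$. This equals $\val{\pi}$ by \eqref{eq:Vpi}, since $x^\pi_{ij}=0$ for every $(i,j)\notin\m{M}(\pi)$: the policy never makes unacceptable matches.

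Second, rather than invoking the duality theorem abstractly, we will verify weak duality directly, since this is cleanest. Take any $(\mathbf{v},\mathbf{z})\ge0$ feasible for $\dlblp(\pi)$.

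1. Multiply \eqref{eq:balance-lb} for type $j$ by $v_j\ge0$ and sum over $j$. This gives $\sum_j\lambda_jv_j\le\sum_j v_j\mu_jn_j+\sum_{(i,j)\in\m{M}(\pi)}(v_i+v_j)x_{ij}$, where the two match sums combine into one sum over acceptable pairs.
2. Multiply \eqref{eq:availability-lb} for each pair $(j,S)$ by $z_{Sj}\ge0$ and sum. Exchanging the order of summation yields $0\le\sum_{(i,j)\in\m{M}(\pi)}x_{ij}\sum_{S\in\m{P}_j(\pi)}z_{Sj}\mathbf{1}(i\in S)-\sum_i n_i\sum_j\sum_{S}\lambda_j\gamma_Sz_{Sj}\mathbf{1}(i\in S)$. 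This uses that every prefix of $\succ_j$ consists only of acceptable types, so $i\in S$ implies $(i,j)\in\m{M}(\pi)$.
3. Add the two inequalities. The coefficient of $n_i\ge0$ is $\mu_iv_i-\sum_j\sum_S\lambda_j\gamma_Sz_{Sj}\mathbf{1}(i\in S)\le0$ by \eqref{eq:replacement-balance}. The coefficient of $x_{ij}\ge0$ is at most $r_{ij}$ by \eqref{eq:replacement-reward}.

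Hence $\sum_i\lambda_iv_i\le\sum_{(i,j)\in\m{M}(\pi)}r_{ij}x_{ij}$ for every feasible primal point, and applying this at $(\bn^\pi,\bx^\pi)$ gives $\val{\pi}\ge\sum_i\lambda_iv_i$.

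There is no real obstacle here. The only points needing care are the bookkeeping in the index exchange and the facts just noted: prefixes contain only acceptable types, and $x^\pi$ vanishes off $\m{M}(\pi)$. These ensure the primal objective matches $\val{\pi}$ and every term pairs up with a dual constraint.
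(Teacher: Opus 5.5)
Your proposal is correct and follows the paper's argument: the steady-state quantities $(\bn^\pi,\bx^\pi)$ are feasible for $\plblp(\pi)$ with objective value $\val{\pi}$, and weak duality then yields $\val{\pi}\ge\sum_i\lambda_iv_i$. The only difference is that you write out the weak-duality computation explicitly, and you do so correctly, including the index exchange that uses the fact that prefixes contain only acceptable types; the paper simply cites weak duality.
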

\begin{proof}
Since the primal program $\plblp(\pi)$ lower-bounds the long-run average reward  $V(\pi)$ of policy \(\pi\), weak duality gives $
 V(\pi)\ge\textup{LP}^{\textup{LB}}(\pi)
 \ge\sum_{i\in\mathcal T}\lambda_i v_i$, as desired.
\end{proof}

In summary, \Cref{prop:replacement-certificate} states that nonnegative values $\mathbf{v}$ provide the desired certificate for a policy $\pi$ if there exist nonnegative prefix weights $\mathbf{z}$ such that the pair $(\mathbf{v},\mathbf{z})$ satisfies \eqref{eq:replacement-reward} and \eqref{eq:replacement-balance}. We next look for such a certificate, together with its corresponding static priority policy, that also provides a bound on the omniscient benchmark.

\subsection{Bounding the Omniscient Benchmark via Values}
\label{sec:value-certificate}
%\na{Goal: use $(v,z)$ to relate to optimum.}
The second objective of this section is to find values $\mathbf{v}$, together with prefix weights $\mathbf{z}$, that ``cover'' the omniscient benchmark, meaning that $\sum_i\lambda_iv_i$ bounds it from above (up to a multiplicative factor). Since $\offrbrel$ upper-bounds the omniscient benchmark (\Cref{prop:offline-bound}), it is natural to look for such a bound through the dual of $\offrbrel$. As it turns out, the key is to reverse the inequalities in \eqref{eq:replacement-reward} and \eqref{eq:replacement-balance}. We formalize this claim in the following proposition, whose proof relates these reversed inequalities to the objective and constraints of the dual of $\offrbrel$. We sketch the proof below and defer the full proof to \Cref{apx:value-certificate-proofs}.
\begin{restatable}{proposition}{replacementhalf}
\label{prop:replacement-half}
Fix a static priority policy $\pi$. Suppose that
$(v_i)_{i\in\mathcal T}$ and
$(z_{Sj})_{j\in\mathcal T,\,S\in\mathcal P_j(\pi)}$
satisfy $v_i,z_{Sj}\ge0$ and
\begin{alignat}{2}
 & v_i+v_j
   +\sum_{S\in\mathcal P_j(\pi)}z_{Sj}\mathbf{1}(i\in S)
   \ge r_{ij}
 &\qquad & \forall i,j\in\mathcal T,
 \label{eq:replacement-reverse-reward}\\[2pt]
 & \mu_i v_i
   \ge\sum_{j\in\mathcal T}\sum_{S\in\mathcal P_j(\pi)}
       \lambda_j\gamma_S z_{Sj}\mathbf{1}(i\in S)
 &\qquad & \forall i\in\mathcal T.
 \label{eq:replacement-reverse-balance}
\end{alignat}
Then $\sum_i\lambda_i v_i\ge\tfrac12\offrbrel$.
\end{restatable}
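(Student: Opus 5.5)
The plan is to use weak duality for $\offrbrel$: from $(\mathbf{v},\mathbf{z})$ I will build a feasible solution of the dual of $\offrbrel$ whose objective is at most $2\sum_i\lambda_iv_i$.

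\textbf{Step 1: the dual of $\offrbrel$.} Attach a variable $u_j\ge0$ to each constraint \eqref{eq:omn-balance} and a variable $w_{Sj}\ge0$ to each constraint \eqref{eq:omn-match}. The variable $x_{ij}$ appears in the supply constraints of both $i$ and $j$, and in the presence constraint of $(S,j)$ exactly when $i\in S$. So the dual reads
\[
\min_{\mathbf{u},\mathbf{w}\ge0}\ \sum_j\lambda_ju_j+\sum_{j}\sum_{S\subseteq\types}\lambda_j\bigl(1-e^{-\rho(S)}\bigr)w_{Sj}
\]
subject to $u_i+u_j+\sum_{S\ni i}w_{Sj}\ge r_{ij}$ for all $i,j\in\types$. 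By weak duality, any feasible $(\mathbf{u},\mathbf{w})$ has objective value at least $\offrbrel$.

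\textbf{Step 2: the candidate solution.} Set $u_i=v_i$. Set $w_{Sj}=z_{Sj}$ when $S\in\mathcal P_j(\pi)$, and $w_{Sj}=0$ otherwise. Nonnegativity is inherited from $\mathbf{v},\mathbf{z}\ge0$. The dual constraint for each pair $(i,j)$ is then exactly \eqref{eq:replacement-reverse-reward}. This is why that inequality is required for all pairs $i,j$, and not only for the matches in $\mathcal M(\pi)$.

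\textbf{Step 3: bounding the objective.} This is the only step with any content. By \eqref{eq:gammadef}, $1-e^{-\rho(S)}=\gamma_S\rho(S)=\gamma_S\sum_{i\in S}\lambda_i/\mu_i$. Substituting this and exchanging the order of summation, the second part of the objective becomes
\[
\sum_{i}\frac{\lambda_i}{\mu_i}\sum_j\sum_{S\in\mathcal P_j(\pi)}\lambda_j\gamma_Sz_{Sj}\mathbf 1(i\in S).
\]
By \eqref{eq:replacement-reverse-balance}, the inner double sum is at most $\mu_iv_i$, so the whole expression is at most $\sum_i\lambda_iv_i$.

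\textbf{Conclusion.} The dual objective is therefore at most $2\sum_i\lambda_iv_i$, and weak duality gives $\offrbrel\le2\sum_i\lambda_iv_i$. The only care needed is the rewriting of $1-e^{-\rho(S)}$ through $\gamma_S$ and the exchange of summation order. The proof has no probabilistic component; it is a direct dual-feasibility check.
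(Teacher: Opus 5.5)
Your proposal is correct and follows essentially the same route as the paper's proof: you extend the prefix weights by zero to get a feasible solution of the dual of $\offrbrel$, then multiply \eqref{eq:replacement-reverse-balance} by $\lambda_i/\mu_i$, sum over $i$, and use $\gamma_S\rho(S)=1-e^{-\rho(S)}$ to bound the $\mathbf{z}$-part of the dual objective by $\sum_i\lambda_iv_i$. Weak duality then gives $\offrbrel\le 2\sum_i\lambda_iv_i$. Your dual is set up correctly, including the diagonal case, where $x_{jj}$ enters the supply constraint of $j$ twice and so matches $v_i+v_j$ with $i=j$.
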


\begin{proof}[Proof sketch]
Extending the prefix weights by zero to all subsets of $\mathcal{T}$, the reversed constraints \eqref{eq:replacement-reverse-reward} say precisely that $(\mathbf{v},\mathbf{z})$ is feasible for the dual of $\offrbrel$, whose objective is $$\sum_{i \in \types} \lambda_iv_i+\sum_{j \in \types} \sum_{S \subseteq \types} \lambda_j(1-e^{-\rho(S)})z_{Sj}.$$
Multiplying the constraints \eqref{eq:replacement-reverse-balance} by $\lambda_i/\mu_i$ and summing over $i$ shows that this objective is at most $2\sum_i \lambda_i v_i$. Weak duality then gives $\offrbrel\le2\sum_i\lambda_iv_i$.\footnote{This is a dual form of \Cref{prop:factor2}, which establishes an analogous bound between $\offrbrel$ and $\plpt$. See also \Cref{sec:suitable-dual}, which relates the dual of $\plpm$ to these conditions.}
\end{proof}

% \na{Proof sketch: show that $v$ and $z$ naturally correspond to a feasible solution of the dual of $\textup{LP}^\textup{OMN}$ with objective value at most $2 \sum_{i \in \m{T}} \lambda_i v_i$.}

\subsection{Satisfying Both Requirements via a Fixed Point}
\label{sec:fixed-point-dual}

\Cref{prop:replacement-certificate,prop:replacement-half} 
show that a static priority policy $\pi$ earns at least $\half\offrbrel$, and hence at least half of the omniscient benchmark, whenever the policy \(\pi\), the values \(\mathbf{v}\), and the prefix weights \(\mathbf{z}\) satisfy \eqref{eq:replacement-reward}, \eqref{eq:replacement-balance}, \eqref{eq:replacement-reverse-reward}, and \eqref{eq:replacement-reverse-balance}. These four sets of conditions are equivalent to the following:
% \na{make sure I understand where we use/need \eqref{eq:replacement-joint-rejected}.}
\begin{align}
 v_i+v_j+\sum_{S\in\mathcal P_j(\pi)}z_{Sj}\mathbf1(i\in S)
 &=r_{ij}
 &&\forall(i,j)\in\mathcal M(\pi),\label{eq:replacement-joint-reward}\\
 \mu_i v_i
 &=\sum_j\sum_{S\in\mathcal P_j(\pi)}
       \lambda_j\gamma_S z_{Sj}\mathbf1(i\in S)
 &&\forall i\in\mathcal T,\label{eq:replacement-joint-balance}\\
v_i+v_j&\ge r_{ij}
 &&\forall(i,j)\notin\mathcal M(\pi),\label{eq:replacement-joint-rejected}\\
 %v_i&\ge0 &&\forall i\in\mathcal T,\label{eq:replacement-joint-v}\\
 z_{Sj}&\ge0
 &&\forall j\in\mathcal T,\ S\in\mathcal P_j(\pi).
 \label{eq:replacement-joint-z}
\end{align}
Specifically, \eqref{eq:replacement-reward} and \eqref{eq:replacement-reverse-reward} are equivalent to \eqref{eq:replacement-joint-reward} and \eqref{eq:replacement-joint-rejected}, while \eqref{eq:replacement-balance} and \eqref{eq:replacement-reverse-balance} are equivalent to \eqref{eq:replacement-joint-balance}. The nonnegativity of $\mathbf{v}$ is implied by \eqref{eq:replacement-joint-balance} and \eqref{eq:replacement-joint-z}.
%Indeed, \eqref{eq:replacement-joint-reward} and \eqref{eq:replacement-joint-balance} make $(\mathbf{v},\mathbf{z})$ feasible for $\textup{DLP}^{\textup{LB}}(\m{I},\pi)$, so that \Cref{prop:replacement-certificate} applies, while \eqref{eq:replacement-joint-reward}--\eqref{eq:replacement-joint-balance} are the conditions from \Cref{prop:replacement-half}: on acceptable pairs the reversed reward constraints hold with equality, and on unacceptable pairs, which belong to no prefix, they reduce to \eqref{eq:replacement-joint-rejected}. 

How can we find a policy \(\pi\), values \(\mathbf{v}\), and prefix weights \(\mathbf{z}\) that satisfy these conditions? The conditions themselves point the way. Fix an arriving type $j$. Recall from an earlier discussion in  \Cref{sec:value-certificate-1} that by \eqref{eq:replacement-joint-reward} and \eqref{eq:replacement-joint-z}, the scores $r_{ij}-v_i-v_j$ of its acceptable partners are nonnegative and nonincreasing along $\succ_j$, and the weight of each prefix is the drop in score from one type on the list to the next. By \eqref{eq:replacement-joint-rejected}, every match with a positive score is acceptable. Thus, once the values are fixed, the prefix weights are determined, and so are the priority lists, up to ties and zero-score matches. The conditions \eqref{eq:replacement-joint-balance} then become a system of equations in the values alone.

We now write this system of equations explicitly for a general set of allowed matches $\m{M}$---as we need it later in  \Cref{sec:restricted-values,sec:connections}. Fix a set of matches $\m{M}\subseteq\TT$, recall that $\m{M}_j=\{i:(i,j)\in\m{M}\}$, and for a vector of values $\mathbf{v}$ write $u_{ij}(\mathbf{v})=(r_{ij}-v_i-v_j)^+$ for the \emph{residual score} of an $(i,j)$ match, that is, the positive part of its score. For an arriving type $j$ and a nonempty set $S\subseteq\m{M}_j$, let
\begin{equation}\label{eq:replacement-Z}
 Z^{\m{M}}_{Sj}(\mathbf{v})\triangleq
 \left(\min_{i\in S}u_{ij}(\mathbf{v})
       -\max_{i\in\m{M}_j\setminus S}u_{ij}(\mathbf{v})\right)^+,
\end{equation}
where the maximum or minimum over the empty set is taken to be zero, and for every $i \in \m{T} $ let
\begin{equation}\label{eq:replacement-F}
 F^{\m{M}}_i(\mathbf{v})\triangleq\frac1{\mu_i}\sum_{j\in\mathcal T}\lambda_j
       \sum_{S\subseteq\m{M}_j}
         \gamma_S Z^{\m{M}}_{Sj}(\mathbf{v})\mathbf1(i\in S).
\end{equation}

The weight $Z^{\m{M}}_{Sj}(\mathbf{v})$ is positive only if every type in $S$ has a positive score that is strictly larger than the score of every type in $\m{M}_j\setminus S$; it is then the gap between the smallest score inside $S$ and the largest score outside $S$, where negative scores are replaced by zero. The sets carrying positive weight for type $j$ are therefore nested: each is a prefix of the list of types in $\m{M}_j$ ranked in decreasing order of the residual scores $u_{ij}(\mathbf{v})$. Moreover, the weights telescope along this chain:
\begin{equation}\label{eq:telescope-M}
 \sum_{S\subseteq\m{M}_j}Z^{\m{M}}_{Sj}(\mathbf{v})\,\mathbf1(i\in S)=u_{ij}(\mathbf{v})\qquad\forall (i,j)\in\m{M}.
\end{equation}
We write $\mathbf{Z}^{\m{M}}(\mathbf{v})=(Z^{\m{M}}_{Sj}(\mathbf{v}))_{j,S}$ for the vector of these weights. When $\m{M}=\TT$ we drop the superscript and write $Z_{Sj}(\mathbf{v})$ and $F_i(\mathbf{v})$.

Returning to the unrestricted case $\m{M}=\TT$, the following proposition, whose proof is deferred to \Cref{apx:tight-value-proofs}, confirms that finding a policy \(\pi\), values \(\mathbf{v}\), and prefix weights \(\mathbf{z}\) that satisfy \eqref{eq:replacement-joint-reward}--\eqref{eq:replacement-joint-z} reduces to finding a \emph{fixed point} of the map $\mathbf{F}$.

% \na{Right now, the definition of $Z_{Sj}^{\m{M}}$ sort of comes from nowhere. My instinct is to include this definition as part of \Cref{lem:replacement-equivalence}. The advantage of this is that then we at least know ``why we care'' about the function $Z$ (we want to know which $v$ admit a good $(v,z,\pi)$ triple). The disadvantage is that \Cref{lem:replacement-equivalence} might be a little more cumbersome to state, and the general $\m{M}$ version is still not motivated.}
\begin{restatable}{proposition}{equivalence}
\label{lem:replacement-equivalence}
A vector \(\mathbf{v}\ge0\) admits a static priority policy \(\pi\) and  weights
\(\mathbf{z}\) satisfying
\eqref{eq:replacement-joint-reward}--\eqref{eq:replacement-joint-z}
only if \(\mathbf{F}(\mathbf{v})=\mathbf{v}\). Conversely, if $\mathbf{F}(\mathbf{v})=\mathbf{v}$, then every static priority policy $\pi$ consistent with $\mathbf{v}$ (\Cref{defn:dual-consistent-policy}), together with the weights $z_{Sj}=Z_{Sj}(\mathbf{v})$ for $j\in\types$ and $S\in\mathcal{P}_j(\pi)$, satisfies \eqref{eq:replacement-joint-reward}--\eqref{eq:replacement-joint-z}.
\end{restatable}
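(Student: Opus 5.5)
For the ``only if'' direction, fix $\mathbf{v}\ge0$, a policy $\pi$ and weights $\mathbf{z}$ satisfying \eqref{eq:replacement-joint-reward}--\eqref{eq:replacement-joint-z}. The plan is to show that $\mathbf{z}$, extended by zero to all nonempty $S\subseteq\types$ that are not prefixes of $\succ_j$, coincides with $\mathbf{Z}(\mathbf{v})$. Then \eqref{eq:replacement-joint-balance} reads exactly $\mu_iv_i=\mu_iF_i(\mathbf{v})$, which is the claim.

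Fix $j$ and list its acceptable partners as $i_1\succ_j\cdots\succ_j i_m$, with prefixes $P_k=\{i_1,\dots,i_k\}$. Equation \eqref{eq:replacement-joint-reward} gives $r_{i_kj}-v_{i_k}-v_j=\sum_{\ell\ge k}z_{P_\ell j}$. Together with \eqref{eq:replacement-joint-z}, this shows the scores along the list are nonnegative and nonincreasing, so they equal $u_{i_kj}(\mathbf{v})$. Moreover, $z_{P_kj}=u_{i_kj}-u_{i_{k+1}j}$, where $u_{i_{m+1}j}:=0$. For unacceptable $i$, \eqref{eq:replacement-joint-rejected} gives $u_{ij}(\mathbf{v})=0$.

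Next I check that this matches \eqref{eq:replacement-Z}. For $S=P_k$, the minimum of $u$ inside $S$ is $u_{i_kj}$. The maximum outside $S$ is $u_{i_{k+1}j}$, because the remaining list entries are smaller and the unacceptable types contribute $0$. Hence $Z_{P_kj}=z_{P_kj}$. For $S$ not of this form, I must show $Z_{Sj}(\mathbf{v})=0$. If $S$ contains an unacceptable type, the minimum inside $S$ is $0$. Otherwise $S$ omits some $i_a$ while containing some $i_b$ with $b>a$, so the minimum inside is at most $u_{i_bj}\le u_{i_aj}$, which is at most the maximum outside. In both cases $Z_{Sj}(\mathbf{v})=0$. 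Finally, when $m=0$, all $Z_{Sj}$ vanish.

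For the converse, assume $\mathbf{F}(\mathbf{v})=\mathbf{v}$, which forces $\mathbf{v}\ge0$. Let $\pi$ be consistent with $\mathbf{v}$ and set $z_{Sj}=Z_{Sj}(\mathbf{v})$ on prefixes, so \eqref{eq:replacement-joint-z} holds immediately. For \eqref{eq:replacement-joint-rejected}, an unacceptable $(i,j)$ has nonpositive score by the second bullet of \Cref{defn:dual-consistent-policy}.

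The key structural step is that every $S$ with $Z_{Sj}(\mathbf{v})>0$ is a prefix of $\succ_j$. Such an $S$ consists of types whose positive scores strictly exceed all scores outside $S$. By the second and third bullets of consistency, all of $S$ is acceptable and ranked above every other acceptable type. Also, every acceptable type outside $S$ has score $\ge0$, and hence strictly less than the scores in $S$. So $S$ is an upper set of the list ending at an acceptable type, that is, a prefix.

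Once this is established, restricting the sums in $\mathbf{F}$ to prefixes loses nothing, and \eqref{eq:replacement-joint-balance} is just $\mathbf{F}(\mathbf{v})=\mathbf{v}$. For \eqref{eq:replacement-joint-reward}, take acceptable $(i,j)$; its score is $\ge0$, so it equals $u_{ij}$. The telescoping identity \eqref{eq:telescope-M} gives $u_{ij}=\sum_S Z_{Sj}\mathbf1(i\in S)$, and by the prefix claim this sum runs over $\mathcal P_j(\pi)$.

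The main obstacle is the bookkeeping with ties and zero-score matches. Equal scores can be ranked arbitrarily, and zero-score types may or may not be acceptable. The strict-inequality requirement in $Z$ is what rules out non-prefix sets, so I would verify the prefix claim carefully in those boundary cases.
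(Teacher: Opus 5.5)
Your proposal is correct and follows essentially the same route as the paper. In the ``only if'' direction you identify $\mathbf{z}$, extended by zero, with $\mathbf{Z}(\mathbf{v})$ through the score differences along each priority list; in the converse you show that every set with $Z_{Sj}(\mathbf{v})>0$ is a prefix of $\succ_j$, then invoke the telescoping identity and the fixed-point equation. Your explicit case analysis for non-prefix sets, and your derivation of the prefix claim from the three consistency conditions (mirroring \Cref{lem:prefix-of-consistent}), are slightly more detailed than the paper's version but do not change the argument.
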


The fixed-point equation $\mathbf{F}(\mathbf{v})=\mathbf{v}$ has an intuitive economic interpretation. For each type $i$, we can write it as $\mu_i v_i=\mu_i F_i(\mathbf{v})$. Its left side is the rate at which value is lost to abandonment per waiting type-$i$ agent, and its right side is the pessimistic estimate, discussed earlier, of the rate at which that agent's availability generates surplus for arrivals of all types. A fixed point makes these two rates equal and, in this sense, yields a ``balanced'' vector of values---cf.\ the balanced thresholds and prices used in prophet inequalities~\citep{kleinberg2012matroid,dutting2020prophet}.

Taken together, these results show that the existence of a fixed point of $\mathbf{F}$ suffices for the guarantee of $\half$ in this section. In fact, it is straightforward to use Brouwer's theorem~\citep{brouwer1911abbildung} to show the existence. We go a step further: for every set of matches $\m{M}$, the fixed point of the restricted map $\mathbf{F}^{\m{M}}$ is unique and can be \emph{computed} in polynomial time. (See footnote~\ref{fn:arithmetic} for our convention about arithmetic representations). We sketch the proof below and defer the full proof to \Cref{apx:tight-value-proofs}. Notably, uniqueness is used in \Cref{sec:restricted-values,sec:connections}. Moreover, \Cref{sec:connections} also gives an alternative, LP-based way to compute the restricted fixed point for a broad class of sets $\m{M}$ and shows that it is sufficient for our purpose of obtaining $\half$-competitive value-based policies.

% \begin{restatable}{lemma}{equivalence}
% \label{lem:replacement-equivalence} For a vector of values $v$, write \(u_{ij}(v)=(r_{ij}-v_i-v_j)^+\) and 
% \begin{equation}\label{eq:replacement-Z}
%  Z_{Sj}(v):=
%  \left(\min_{i\in S}u_{ij}(v)
%        -\max_{i\in\mathcal T\setminus S}u_{ij}(v)\right)^+,
% \end{equation}
% where the maximum over the empty set is zero. For every $i \in \m{T} $ let
% \begin{equation}\label{eq:replacement-F}
%  F_i(v)=\frac1{\mu_i}\sum_{j\in\mathcal T}\lambda_j
%        \sum_{\emptyset\ne S\subseteq\mathcal T}
%          \gamma_S Z_{Sj}(v)\mathbf1(i\in S).
% \end{equation}
% Then, a vector \(v\ge0\) admits a static priority policy \(\pi\) and prefix weights
% \(z\) satisfying
% \eqref{eq:replacement-joint-reward}--\eqref{eq:replacement-joint-z}
% if and only if \(F(v)=v\).
% In any such triple, \(\pi\) is consistent with $v$. \fs{I am tempted to write: The triple is completed by choosing any policy \(\pi\) consistent with $v$ and by setting $z_{Sj}=Z_{Sj}(v)$. }
% %Moreover, extending the prefix weights by zero gives \(z_{Sj}=Z_{Sj}(v)\) for
% %all nonempty \(S\subseteq\mathcal T\).
% %\na{Do we need to talk about extension, or do we just need $z_{Sj} = Z_{Sj}(v)$ for prefixes of $\pi$?}
% \end{restatable}

\begin{restatable}{proposition}{fixedpointunique}
\label{prop:fixed-point}
For every set of matches $\m{M}\subseteq\TT$, the map $\mathbf{F}^{\m{M}}$ has a unique fixed point in $\reals^{\types}$, which we call the \textbf{restricted fixed point} and denote by $\mathbf{v}^{\m{M}}$. Moreover, $\mathbf{v}^{\m{M}}\ge0$, and $\mathbf{v}^{\m{M}}$ can be computed in time polynomial in the input size.
\end{restatable}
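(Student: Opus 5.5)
The plan is to show that a damped version of $\mathbf{F}^{\m{M}}$ is a contraction in a weighted $\ell_\infty$ norm. Uniqueness then follows, existence comes from Banach's theorem (Brouwer also suffices), and polynomial-time computation follows from piecewise affinity together with the cited ellipsoid-type algorithms for contractions.

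\textbf{Step 1: structure of $\mathbf{F}^{\m{M}}$.} By \eqref{eq:telescope-M}, for fixed $j$ the weights $Z^{\m{M}}_{Sj}(\mathbf{v})$ are supported on the chain of prefixes of $\m{M}_j$ sorted by residual score. So $\mu_iF_i^{\m{M}}(\mathbf{v})=\sum_j\lambda_j\sum_{S\ni i}\gamma_S Z^{\m{M}}_{Sj}(\mathbf{v})$ is an Abel-type sum of the sorted residual scores weighted by differences of the $\gamma$'s along the chain. Since $\gamma_S=(1-e^{-\rho(S)})/\rho(S)$ is decreasing in $\rho(S)$, larger prefixes receive smaller weight. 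The map is continuous and piecewise affine, and $\mathbf{F}^{\m{M}}\ge 0$ always, so any fixed point is nonnegative. Moreover, each $u_{ij}$ is nonincreasing in $\mathbf{v}$. I will bound the effect of a perturbation $\mathbf{v}\to\mathbf{v}'$ by examining each $Z^{\m{M}}_{Sj}$, a $1$-Lipschitz function of the $u$'s. A change $\delta_k$ in $v_k$ moves $u_{kj}$ by at most $|\delta_k|$ and all $u_{ij}$ by at most $|\delta_j|$.

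\textbf{Step 2: a weighted Lipschitz bound.} I take the norm $\|\boldsymbol{\delta}\|_w=\max_i |\delta_i|/w_i$ with weights $w_i$ to be chosen. My first try is $w_i=1$, perhaps tuned to $\lambda_i/\mu_i$. The damped map is $\mathbf{G}(\mathbf{v})=(1-\eta)\mathbf{v}+\eta\,\mathbf{F}^{\m{M}}(\mathbf{v})$, or more likely a coordinatewise version $G_i=v_i-\eta_i(\mu_iv_i-\mu_iF_i)$ with $\eta_i\propto 1/(\mu_i+\text{const})$.

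The key estimate should be one-sided. Increasing $v_i$ decreases every $u_{ij}$ involving $i$, which shrinks or preserves $F_i$; the own-coordinate dependence therefore has a favorable sign. The cross effects should be controlled by
\[
\sum_{j}\lambda_j\gamma_{S}\,\rho\text{-type sums}\;\le\;\sum_{j}\lambda_j(1-e^{-\rho(S)})/\rho(S)\cdot\ldots
\]
More precisely, on each linear piece the Jacobian $\partial(\mu_iF_i)/\partial v_k$ should satisfy a diagonal-dominance inequality of the form
\[
\sum_{i}\frac{\lambda_i}{\mu_i}\Bigl|\frac{\partial(\mu_iF_i)}{\partial v_k}\Bigr|\;\le\;c\,\lambda_k\quad\text{with } c<\text{(diagonal term)}.
\]
To get it I will use the identity $\sum_{i\in S}\lambda_i/\mu_i\cdot\gamma_S=1-e^{-\rho(S)}<1$, which is exactly where $\gamma_S$ makes sums over $S$ of $\rho$-weighted terms strictly less than one. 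This suggests the natural weights are $\lambda_i/\mu_i$ in the dual ($\ell_1$) norm. I will then transpose the bound to an $\ell_\infty$-type norm on $\mathbf{v}$, or simply work in the weighted $\ell_1$ norm $\sum_i\lambda_i|\delta_i|$.

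With Lipschitz bounds on every piece, the damped map contracts by a factor $1-\eta(1-\max_S(1-e^{-\rho(S)}))<1$. Since this factor depends on the instance data, the contraction is by a fixed factor with polynomial bit complexity.

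\textbf{Step 3: computation.} The contraction $\mathbf{G}$ is piecewise affine with polynomially many pieces per evaluation; each evaluation only requires sorting the residual scores for each $j$. I will therefore invoke the ellipsoid-based algorithms for contractions \citep{sikorski1993ellipsoid,daskalakis2026polynomial}. These find an approximate fixed point in polynomial time, and I will round it to the exact one. The rounding works because once the sorted order and sign pattern of the scores at the fixed point are identified, the fixed point solves a linear system with rational data, and uniqueness guarantees that this solution is the fixed point.

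\textbf{Main obstacle.} The expected difficulty is Step 2: finding the weights and damping that make the cross-type Lipschitz bound strictly below one on every affine piece. In particular, the change of $v_j$ affects all $u_{ij}$ at once, and reordering of the chain must not break the estimate. I would first verify the inequality for a single arriving type $j$ with a fixed order, then argue that reorderings preserve it by continuity and piecewise linearity.
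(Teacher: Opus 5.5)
Your architecture is the paper's: damp $\mathbf{F}^{\m{M}}$, prove a contraction, apply Banach, then run an ellipsoid method for contractions and round using piecewise affinity. The gap is that Step~2, the only substantive step, is left as a hope, and the form you anticipate for it is false.

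On an affine piece, sort the positive-score partners of an arriving type $j$ as $i_1,\dots,i_m$ and let $S_k=\{i_1,\dots,i_k\}$. The term $\Phi_{i_kj}=\sum_{S\ni i_k}\gamma_S Z^{\m{M}}_{Sj}(\mathbf{v})$ enters $\mu_{i_k}F^{\m{M}}_{i_k}$ with weight $\lambda_j$. Its derivative is $-\gamma_{S_k}$ in $v_{i_k}$, $\beta_\ell=\gamma_{S_{\ell-1}}-\gamma_{S_\ell}\ge0$ in $v_{i_\ell}$ for $\ell>k$, and $-\gamma_{S_m}$ in $v_j$. The cross effects therefore have total size exactly $\gamma_{S_k}$, as large as the own effect. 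Column-wise the diagonal can even vanish: for $\m{M}=\{(1,2)\}$ the Jacobian is $\begin{pmatrix}-a&-a\\0&0\end{pmatrix}$ with $a=\lambda_2\gamma_{\{1\}}/\mu_1$. So no strict diagonal-dominance inequality of the kind you display can hold. The estimate must be one-sided, using the sign of the diagonal to absorb the cross terms, with the strict margin supplied separately.

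The paper's key lemma is $\langle\mathbf{F}^{\m{M}}(\mathbf{v})-\mathbf{F}^{\m{M}}(\mathbf{w}),\mathbf{v}-\mathbf{w}\rangle_{\boldsymbol{\lambda}}\le\tfrac12\|\mathbf{v}-\mathbf{w}\|_{\boldsymbol{\lambda}}^2$. It is proved as follows:
\begin{itemize}
\item multiply the derivative formula by $2\rho_{i_k}\delta_{i_k}$;
\item apply $2ab\le a^2+b^2$ to the cross terms;
\item telescope with $\rho(S)\gamma_S=1-e^{-\rho(S)}$, so that $\delta_{i_\ell}^2$ gets coefficient $e^{-\rho(S_\ell)}-e^{-\rho(S_{\ell-1})}\le0$ and $\delta_j^2$ gets $1-e^{-\rho(S_m)}\le1$.
\end{itemize}
Combined with a Lipschitz constant $L$, the damping $\theta=1/(2L^2)$ then contracts by $1-1/(8L^2)$. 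Incidentally, your first try $w_i=1$ would settle existence and uniqueness. The same derivative formula gives $J_{ii}+\sum_{p\ne i}|J_{ip}|\le0$ in every row of the Jacobian $J$, so $(1-\theta)I+\theta J$ has $\ell_\infty$ norm at most $1-\theta$ for $\theta\le(1+2\Lambda/\mu_{\min})^{-1}$.

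The route you actually lean toward, the $\boldsymbol{\lambda}$-weighted $\ell_1$ norm, yields a signed column bound of only $1-e^{-\rho(S_m)}$ in the arriving type's column; this bound is attained in the example above. That is the source of your factor $1-\eta e^{-\max_S\rho(S)}$, and this factor does not have polynomial bit complexity. Since $\rho=\sum_i\lambda_i/\mu_i$ can be exponential in the binary input size, $\log\frac{1}{1-\kappa}$ can be exponential. Then neither the iteration count of the ellipsoid methods nor the passage from a residual $\varepsilon$ to a distance $\varepsilon/(1-\kappa)$ is polynomial.

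You also need the contraction in a norm the cited algorithms accept. The paper uses the $\boldsymbol{\lambda}$-weighted $\ell_2$ norm precisely because the substitution $\mathbf{y}=D\mathbf{v}$, with $D=\mathrm{diag}(\sqrt{\lambda_i})$, turns it into a Euclidean contraction. An $\ell_1$ or $\ell_\infty$ contraction does not in general transfer to $\ell_2$.

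Finally, Step~3 has two loose ends:
\begin{itemize}
\item When the fixed point has ties or zero scores, you cannot in general read the order and sign pattern of the scores off an approximation.
\item Uniqueness of the fixed point does not by itself make $I-A_Q$ invertible.
\end{itemize}
The paper obtains invertibility from the contraction on the open piece $Q$. It then rounds each coordinate using a denominator bound that is uniform over all pieces.
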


% \na{My original instinct was to make \Cref{lemma:fixedpoint-uniqueness} a proposition and \Cref{lem:replacement-equivalence} a lemma. Might be worth a discussion (maybe I will change my mind), but also not a big deal for a first draft.}

\begin{proof}[Proof sketch]
Existence, uniqueness, and computation all rely on a one-sided Lipschitz bound, which we prove in \Cref{lem:F-one-sided}: with $\langle\mathbf{a},\mathbf{b}\rangle_{\boldsymbol{\lambda}}=\sum_{i}\lambda_ia_ib_i$ and $\|\mathbf{a}\|_{\boldsymbol{\lambda}}=\langle\mathbf{a},\mathbf{a}\rangle_{\boldsymbol{\lambda}}^{1/2}$,
\begin{equation}\label{eq:one-sided}
 \big\langle\mathbf{F}^{\m{M}}(\mathbf{v})-\mathbf{F}^{\m{M}}(\mathbf{w}),\,\mathbf{v}-\mathbf{w}\big\rangle_{\boldsymbol{\lambda}}\;\le\;\tfrac12\,\|\mathbf{v}-\mathbf{w}\|_{\boldsymbol{\lambda}}^2
 \qquad\text{for all }\mathbf{v},\mathbf{w}\in\reals^{\types}.
\end{equation}
Intuitively, raising the value of a type lowers the surplus credited to its availability, and in the $\lambda$-weighted inner product this effect outweighs the surplus that shifts to other types. The map $\mathbf{F}^{\m{M}}$ itself need not be a contraction, but \eqref{eq:one-sided} and a Lipschitz bound imply that the damped map $\mathbf{v}\mapsto(1-\theta)\mathbf{v}+\theta\mathbf{F}^{\m{M}}(\mathbf{v})$, which has the same fixed points, is a contraction with respect to $\|\cdot\|_{\boldsymbol{\lambda}}$ for small enough $\theta>0$. Banach's fixed-point theorem then gives existence and uniqueness, and the fixed point is nonnegative because $\mathbf{F}^{\m{M}}$ takes nonnegative values. For the computation, ellipsoid methods for fixed points of contractions~\citep{sikorski1993ellipsoid,daskalakis2026polynomial} approximate the fixed point to within any accuracy $\varepsilon$ in time polynomial in the input size and $\log(1/\varepsilon)$, using only evaluations of $\mathbf{F}^{\m{M}}$, each of which amounts to sorting scores. Since $\mathbf{F}^{\m{M}}$ is piecewise affine, a sufficiently accurate approximation determines the fixed point exactly.
% Existence follows from Brouwer's theorem, as $\mathbf{F}^{\m{M}}$ is continuous and maps a box into itself (since each $Z^{\m{M}}_{Sj}(\mathbf{v})$ is bounded by $r_{ij}$). To prove uniqueness we compare two fixed points at a coordinate where they differ the most. At such a coordinate, the larger vector has the smaller scores, so the excess reward that its availability generates is accumulated over a shorter range of score thresholds and, at each threshold, over a larger prefix, whose coefficient $\gamma_S$ is smaller because $\gamma$ decreases under inclusion. The right side of its fixed-point equation is therefore no larger than that of the smaller vector, while its left side is strictly larger, which is impossible. \rn{@Pranav: can you check and revise this sketch if needed?} \pn{I think it's okay, but somehow it is a bit too polished.}
\end{proof}

Putting all the pieces together, we can prove the following main result of this section.
\begin{theorem}
\label{thm:values-half}
There exists a vector of values $\mathbf{v} \ge 0$ such that for any static priority policy $\pi$ consistent with $\mathbf{v}$, $V(\pi) \ge \frac{1}{2}\offrbrel$. Moreover, such a vector, and a static priority policy consistent with it, can be computed in time polynomial in the input size.
\end{theorem}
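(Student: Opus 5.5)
Take $\mathbf{v}=\mathbf{v}^{\TT}$, the unique fixed point of the unrestricted map $\mathbf{F}=\mathbf{F}^{\TT}$ given by \Cref{prop:fixed-point} with $\m{M}=\TT$. That proposition already gives $\mathbf{v}\ge0$ and computes $\mathbf{v}$ exactly in polynomial time. What remains is to show that every policy consistent with $\mathbf{v}$ earns at least $\half\offrbrel$, and that one such policy can be built efficiently.

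Fix any static priority policy $\pi$ consistent with $\mathbf{v}$ in the sense of \Cref{defn:dual-consistent-policy}. By the converse direction of \Cref{lem:replacement-equivalence}, since $\mathbf{F}(\mathbf{v})=\mathbf{v}$, the pair consisting of $\mathbf{v}$ and the weights $z_{Sj}=Z_{Sj}(\mathbf{v})$ for $S\in\mathcal{P}_j(\pi)$ satisfies \eqref{eq:replacement-joint-reward}--\eqref{eq:replacement-joint-z}. The paper has already noted that these four conditions are equivalent to \eqref{eq:replacement-reward}, \eqref{eq:replacement-balance}, \eqref{eq:replacement-reverse-reward}, and \eqref{eq:replacement-reverse-balance} holding together.

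Two propositions then finish the bound. First, $(\mathbf{v},\mathbf{z})$ is nonnegative and satisfies \eqref{eq:replacement-reward} and \eqref{eq:replacement-balance}, so it is feasible for $\dlblp(\pi)$, and \Cref{prop:replacement-certificate} gives $V(\pi)\ge\sum_i\lambda_iv_i$. Second, the reversed conditions \eqref{eq:replacement-reverse-reward} and \eqref{eq:replacement-reverse-balance} hold, so \Cref{prop:replacement-half} gives $\sum_i\lambda_iv_i\ge\half\offrbrel$. Chaining the two inequalities yields $V(\pi)\ge\half\offrbrel$.

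For the algorithmic part, once $\mathbf{v}$ is known, compute the $|\types|^2$ scores $r_{ij}-v_i-v_j$. For each arriving type $j$, let $\succ_j$ rank the types with nonnegative score (or only those with positive score) in decreasing order of score, breaking ties arbitrarily, and place all remaining types below $\emptyset$. This satisfies all three bullets of \Cref{defn:dual-consistent-policy}, and it takes polynomial time since it only sorts. The substantive work sits in \Cref{lem:replacement-equivalence} and \Cref{prop:fixed-point}, which may be assumed here. The one point to check is that the weights $Z_{Sj}(\mathbf{v})$ are well defined on the prefixes of \emph{any} consistent $\pi$, including one that breaks ties differently. This is covered by the converse direction of \Cref{lem:replacement-equivalence}, which is stated for every consistent $\pi$.
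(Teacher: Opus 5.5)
Your proposal is correct and follows the paper's proof essentially step for step. You take the unrestricted fixed point $\mathbf{v}^{\TT}$ from \Cref{prop:fixed-point}, apply the converse direction of \Cref{lem:replacement-equivalence} with $z_{Sj}=Z_{Sj}(\mathbf{v})$ on the prefixes of $\pi$, chain \Cref{prop:replacement-certificate} with \Cref{prop:replacement-half}, and obtain a consistent policy by sorting scores. Your explicit check that this sorted ranking satisfies all three conditions of \Cref{defn:dual-consistent-policy}, whether zero-score types are accepted or rejected, is a slightly more careful version of the paper's closing remark.
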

\begin{proof}
Let $\mathbf{v}=\mathbf{v}^{\TT}$ be the fixed point of $\mathbf{F}$ given by \Cref{prop:fixed-point}, and let $\pi$ be any static priority policy consistent with $\mathbf{v}$. By \Cref{lem:replacement-equivalence}, the weights $z_{Sj}=Z_{Sj}(\mathbf{v})$ on the prefixes of $\pi$ complete a triple $(\pi,\mathbf{v},\mathbf{z})$ satisfying \eqref{eq:replacement-joint-reward}--\eqref{eq:replacement-joint-z}. Hence $(\mathbf{v},\mathbf{z})$ is feasible for $\dlblp(\pi)$, and \Cref{prop:replacement-certificate} gives $V(\pi)\ge\sum_i\lambda_iv_i$, while $(\mathbf{v},\mathbf{z})$ satisfies the conditions of \Cref{prop:replacement-half}, which gives $\sum_i\lambda_iv_i\ge\tfrac12\offrbrel$. Combining the two inequalities proves the desired statement. Finally, by \Cref{prop:fixed-point}, $\mathbf{v}^{\TT}$ can be computed in polynomial time, and a policy consistent with it is obtained by ranking, for each arriving type, its partners with positive scores in decreasing order of score.
\end{proof}

% \begin{remark}[\textbf{polynomial-time computability}]
% \label{rem:polytime-dual}
% While our result so far yields a $\half$-competitive value-based static priority policy, it does not provide a polynomial-time construction unless we can compute the fixed-point of the map $\mathbf{F}$ in polynomial time. Interestingly, as we explain in the next subsection and later in \Cref{sec:connections}, restricting acceptable matches to a suitably chosen set $\mathcal M$---in fact, the same set returned by \Cref{algo:suitability} in \Cref{sec:primal}---allows us to construct a $\half$-competitive value-based policy in polynomial time by solving a polynomial-time solvable linear program.\rn{@Pranav: please check} \pn{Looks okay to me if we're not adding the polytime proof.}
% \end{remark}

\subsection{Discussion: A Family of Optimally Competitive Value-based Policies}
 \label{sec:restricted-values}
 The policy in \Cref{sec:primal} makes matches only within a set $\m{M}$ carefully selected by \Cref{algo:suitability}. In contrast, the value-based policy in \Cref{thm:values-half} is defined on the unrestricted set $\TT$, although \Cref{defn:dual-consistent-policy} and the map $\mathbf{F}^{\m{M}}$ allow for a general set $\m{M}\subseteq\TT$. Moreover, \Cref{prop:fixed-point} establishes the existence of a unique fixed point not only for $\mathbf{F}$, but also for every restricted map $\mathbf{F}^{\m{M}}$. This raises the question of whether we can generalize the machinery developed in this section to design a broader class of value-based policies, each of which is restricted to a properly chosen set $\m{M}$ and remains $\half$-competitive against the omniscient benchmark.\footnote{An arbitrary set $\m{M}$ need not preserve this guarantee; for example, $\m{M}=\emptyset$ precludes all matches.}

As a natural step, consider computing the restricted fixed point $\mathbf{v}^{\m{M}}$ and running a static priority policy consistent with $\m{M}$ and $\mathbf{v}^{\m{M}}$.
% Such a policy accepts every match in $\m{M}$ with positive score $r_{ij}-v^{\m{M}}_i-v^{\m{M}}_j$, may also accept zero-score matches in $\m{M}$, and for each type ranks its acceptable partners by nonincreasing score. 
As we show next, we can use the values $\mathbf{v}^{\m{M}}$ to certify the reward of this policy, in a manner similar to the value lower-bound certificate in \Cref{sec:value-certificate-1} (\Cref{prop:replacement-certificate}).

\begin{restatable}{lemma}{restrictedcertificate}
\label{prop:restricted-certificate}
Let $\m{M}\subseteq\TT$, and let $\pi$ be a static priority policy consistent with $\m{M}$ and $\mathbf{v}^{\m{M}}$. Then:
\[
 V(\pi)\ge\sum_{i\in\m{T}}\lambda_i v^{\m{M}}_i.
\]
\end{restatable}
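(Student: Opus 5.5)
The plan is to reuse \Cref{prop:replacement-certificate}. I will exhibit prefix weights $\mathbf{z}\ge 0$ such that $(\mathbf{v}^{\m{M}},\mathbf{z})$ is feasible for $\dlblp(\pi)$; weak duality then gives $V(\pi)\ge\sum_i\lambda_iv^{\m{M}}_i$ directly. The natural choice, mirroring the proof of \Cref{thm:values-half}, is $z_{Sj}=Z^{\m{M}}_{Sj}(\mathbf{v}^{\m{M}})$ for $S\in\mathcal P_j(\pi)$. Nonnegativity of $\mathbf{v}^{\m{M}}$ comes from \Cref{prop:fixed-point}, and nonnegativity of $\mathbf{z}$ is immediate from the definition. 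Only \eqref{eq:replacement-reward} and \eqref{eq:replacement-balance} remain to be checked.

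First I establish a structural claim. Fix $j$ and write $u_{ij}=u_{ij}(\mathbf{v}^{\m{M}})$. Every set $S\subseteq\m{M}_j$ with $Z^{\m{M}}_{Sj}>0$ is a prefix of $\succ_j$. Indeed, such an $S$ consists of types with positive scores, all strictly larger than every score in $\m{M}_j\setminus S$. By consistency (\Cref{defn:dual-consistent-policy}), all types in $S$ are acceptable, since their score is positive and they lie in $\m{M}$. Each of them is ranked above every acceptable $i'\notin S$, because $i'\in\m{M}_j$ has a strictly smaller nonnegative score. Acceptable types all lie in $\m{M}_j$, so $S$ is exactly the set of types weakly above its lowest member, and hence $S\in\mathcal P_j(\pi)$.

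With this claim, the sum over prefixes in both dual constraints equals the sum over all $S\subseteq\m{M}_j$, since the omitted sets carry zero weight.

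\begin{itemize}
\item For \eqref{eq:replacement-balance}: the right side becomes $\mu_iF^{\m{M}}_i(\mathbf{v}^{\m{M}})=\mu_iv^{\m{M}}_i$ by the fixed-point property, so it holds with equality.
\item For \eqref{eq:replacement-reward}: take $(i,j)\in\mathcal M(\pi)\subseteq\m{M}$. By \eqref{eq:telescope-M}, the $z$-sum equals $u_{ij}$. Consistency gives $r_{ij}-v_i-v_j\ge0$, so $u_{ij}=r_{ij}-v_i-v_j$ and the constraint holds with equality.
\end{itemize}

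The main obstacle is the structural claim that positive-weight sets are prefixes of $\pi$. It requires care in two places. Zero-score acceptable types may sit anywhere below the positive ones, and ties must be handled. Both are harmless, because positive $Z$ forces a strict gap with positive scores inside $S$. With the claim in hand, the rest is bookkeeping.
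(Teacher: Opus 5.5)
Your proposal is correct and follows the paper's proof essentially step for step. The paper uses the same weights $z_{Sj}=Z^{\m{M}}_{Sj}(\mathbf{v}^{\m{M}})$ on the prefixes of $\pi$, proves the same structural claim that positive-weight sets are prefixes (its \Cref{lem:prefix-of-consistent}), and then shows that both constraints of $\dlblp(\pi)$ hold with equality via \eqref{eq:telescope-M} and the fixed-point equation before invoking \Cref{prop:replacement-certificate}.
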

We prove this lemma in \Cref{apx:proofs-restricted} using weak duality: the restricted fixed point $\mathbf{v}^{\m{M}}$, together with its associated weights $z_{Sj}=Z^{\m{M}}_{Sj}(\mathbf{v}^{\m{M}})~,~\forall j\in\m{T},\ S\in\mathcal P_j(\pi)$, is  feasible for $\textup{DLP}^{\textup{LB}}(\pi)$ and \Cref{prop:replacement-certificate} gives the desired bound.

% \begin{proof}[Proof sketch]
% Since $\pi$ makes every positive-score match in $\m{M}$ acceptable and ranks acceptable partners of each type in nonincreasing order of score, every set $S$ with $Z^{\m{M}}_{Sj}(\mathbf{v}^{\m{M}})>0$ is a prefix of $\porder{j}$. We therefore set

% By \eqref{eq:telescope-M}, the constraints \eqref{eq:replacement-reward} hold with equality for every $(i,j)\in\m{M}(\pi)$, since these matches have nonnegative scores. Likewise, the fixed-point equation $\mathbf{F}^{\m{M}}(\mathbf{v}^{\m{M}})=\mathbf{v}^{\m{M}}$ implies that the constraints \eqref{eq:replacement-balance} hold with equality. Thus, the nonnegative pair $(\mathbf{v}^{\m{M}},\mathbf{z})$ is feasible for $\textup{DLP}^{\textup{LB}}(\pi)$, and \Cref{prop:replacement-certificate} (proved via weak duality) gives the desired bound .
% \end{proof}

Restricting to $\m{M}$ preserves the lower-bound certificate for the policy but does not necessarily provide a comparison with the omniscient benchmark. To obtain this comparison, one could try applying \Cref{prop:replacement-half} directly and impose reversed constraints \eqref{eq:replacement-reverse-reward} and \eqref{eq:replacement-reverse-balance} on $\mathbf{v}^{\m{M}}$. However, as we show in \Cref{apx:proofs-restricted}, this would result in \emph{exactly} the same fixed point $\mathbf{v}^{\TT}$, and the same policy as in \Cref{thm:values-half}. A restriction that changes the fixed point (and the policy)  requires a different argument to compare the policy's lower-bound certificate with the omniscient benchmark.

\smallskip
The natural yardstick for this new comparison is the program $\plpt$ from \Cref{sec:primal}, since $\plpt\ge\half\offrbrel$ by \Cref{prop:factor2}. Our next lemma shows that, for \emph{every} $\m{M}$, the reward certified by the restricted fixed point is at least the value of $\plpm$.

\begin{restatable}{lemma}{fixedpointdualfeasible}
\label{lem:fixed-point-dual-feasible}
For every $\m{M}\subseteq\TT$, we have $\sum_{i\in\m{T}}\lambda_i v^{\m{M}}_i\ge\plpm$.
\end{restatable}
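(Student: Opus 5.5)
We will show that $(\mathbf{v}^{\m{M}},\mathbf{z})$, with $z_{Sj}=Z^{\m{M}}_{Sj}(\mathbf{v}^{\m{M}})$ for every $j\in\types$ and nonempty $S\subseteq\m{M}_j$, is feasible for the dual of $\plpm$ with objective $\sum_i\lambda_i v^{\m{M}}_i$. Weak duality then gives the claim. It is cleanest to verify the weak-duality inequality directly rather than write the dual out in full.

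Let $(\bn,\bx)$ be any feasible solution of $\plpm$ and write $\mathbf{v}=\mathbf{v}^{\m{M}}$. By the telescoping identity \eqref{eq:telescope-M}, for every $(i,j)\in\m{M}$,
\[
r_{ij}\le v_i+v_j+u_{ij}(\mathbf{v})=v_i+v_j+\sum_{S\subseteq\m{M}_j}Z^{\m{M}}_{Sj}(\mathbf{v})\mathbf1(i\in S).
\]
Since $\bx\ge0$, multiplying by $x_{ij}$ and summing over $(i,j)\in\m{M}$ bounds the objective by
\[
\sum_{(i,j)\in\m{M}}(v_i+v_j)x_{ij}+\sum_{j}\sum_{S\subseteq\m{M}_j}Z^{\m{M}}_{Sj}(\mathbf{v})\sum_{i\in S}x_{ij}.
\]

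We bound the two sums separately. For the second sum, the weights $Z^{\m{M}}_{Sj}\ge0$ allow us to apply \eqref{eq:availability-alg}, which bounds it by
\[
\sum_j\sum_{S}\lambda_j\gamma_S Z^{\m{M}}_{Sj}(\mathbf{v})\sum_{i\in S}n_i.
\]
Exchanging the order of summation, this equals $\sum_i n_i\,\mu_iF^{\m{M}}_i(\mathbf{v})$, which is $\sum_i\mu_i v_i n_i$ by the fixed-point property of \Cref{prop:fixed-point}. For the first sum, regrouping by type and using \eqref{eq:balance-lp-glb} turns it into $\sum_j v_j(\lambda_j-\mu_j n_j)$. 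The two sums therefore add to exactly $\sum_j\lambda_jv_j$, so every feasible objective value of $\plpm$ is at most $\sum_j\lambda_j v^{\m{M}}_j$.

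The only delicate points are sign conditions. Nonnegativity of $\mathbf{v}^{\m{M}}$ is supplied by \Cref{prop:fixed-point}, though it is not even needed here, since the balance constraint is an equality. The remaining condition is $Z^{\m{M}}_{Sj}\ge0$, which holds by definition; this is what permits multiplying the availability inequalities. We also need the index sets to match. $F^{\m{M}}$ sums over $S\subseteq\m{M}_j$, which are exactly the sets for which \eqref{eq:availability-alg} is imposed, and $S=\emptyset$ contributes nothing to either side. We expect the main (mild) obstacle to be this bookkeeping, in particular confirming that the inequality $r_{ij}\le v_i+v_j+u_{ij}$ holds even when the score is negative. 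It does, since in that case $u_{ij}=0$ and $r_{ij}<v_i+v_j$.
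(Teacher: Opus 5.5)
Your proof is correct and follows the paper's argument. The paper likewise shows that $(\mathbf{v}^{\m{M}},\mathbf{Z}^{\m{M}}(\mathbf{v}^{\m{M}}))$ is feasible for $\dlpm$, with the reward constraints coming from the telescoping identity and the balance constraints holding with equality by the fixed-point equation, and then invokes weak duality. The only difference is that you carry out the weak-duality computation explicitly, which also makes transparent your correct remark that nonnegativity of $\mathbf{v}^{\m{M}}$ is not needed because \eqref{eq:balance-lp-glb} is an equality.
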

We prove this lemma in \Cref{apx:proofs-restricted} also using weak duality: the restricted fixed point, together with its associated weights, is feasible for the dual of $\plpm$. Given this result, for a policy consistent with $\m{M}$ and $\mathbf{v}^{\m{M}}$ to earn at least half of the omniscient benchmark, it suffices that
\begin{equation}\label{eq:restricted-requirement}
 \plpm\ge\plpt
\end{equation}
% In other words, a well-chosen set $\m{M}$ should ensure that the reward certified by the restricted fixed point is at least the value of $\plpt$.

% Consequently, \eqref{eq:restricted-requirement} holds whenever $\plpm\ge\plpt$, as is the case for the set returned by \Cref{algo:suitability} (\Cref{lemma:algfindssuitable}). 

Putting everything together suggests a recipe: let $\m{M}$ be any set such that \eqref{eq:restricted-requirement} holds, and let $\pi$ be any static priority policy consistent with $\m{M}$ and $\mathbf{v}^{\m{M}}$. \Cref{prop:restricted-certificate} bounds the policy's reward from below by $\sum_{i\in\m{T}}\lambda_i v^{\m{M}}_i$, while \Cref{lem:fixed-point-dual-feasible} shows that this certified value is at least $\plpm$. Combining these inequalities with \eqref{eq:restricted-requirement} and the benchmark bounds in \Cref{prop:factor2,prop:offline-bound} gives
\begin{equation}\label{eq:chain}
 V(\pi)
 \;\ge\;\sum_{i\in\m{T}}\lambda_i v^{\m{M}}_i
 \;\ge\;\plpm
 \;\ge\;\plpt
 \;\ge\;\half\offrbrel
 \;\ge\;\half\offopt.
\end{equation}
Thus, every such policy is $\half$-competitive against the omniscient benchmark. This more general approach, therefore, yields a larger class of $\half$-competitive value-based static priority policies, whose priorities are based on restricted fixed points $\mathbf{v}^{\m{M}}$ for properly chosen sets $\m{M}$, namely, those satisfying \eqref{eq:restricted-requirement}. This class contains the unrestricted value-based policies of \Cref{thm:values-half}, which correspond to $\m{M}=\TT$.
% For this choice, the above argument gives an alternative proof of the guarantee in \Cref{thm:values-half}: \Cref{prop:restricted-certificate} provides the lower-bound certificate, while \Cref{lem:fixed-point-dual-feasible,prop:factor2} establish the comparison with the benchmark through $\plpt$, rather than through \Cref{prop:replacement-half}. 
As another example, note that \eqref{eq:restricted-requirement} is true for the pruned set $\m{M}$ returned by \Cref{algo:suitability} (\Cref{lemma:algfindssuitable}). Interestingly, the resulting value-based policy \emph{coincides} with the primal policy from \Cref{sec:primal}, which we will show in the next section.

\section{Primal Meets Dual: Connections and Computations}
\label{sec:connections}
\Cref{sec:primal,sec:values} develop two seemingly different approaches to constructing a static priority policy that earns at least half of the omniscient benchmark. In \Cref{sec:primal}, the priority lists are read from the tight availability constraints of a suitable optimal solution to $\plpm$, and the policy's reward is certified by the piercing argument of \Cref{prop:plplowerbound}. In \Cref{sec:values}, the priority lists are determined by a vector of values satisfying the fixed-point equation for $\mathbf{F}$, and the reward is certified through the dual programs $\dlblp(\pi)$. The two approaches thus appear to rely on different objects: tight LP constraints in one case and fixed-point values in the other. In this section, we show that the two routes lead to the same place and  that these objects are directly connected: the primal priorities already encode a vector of restricted fixed-point values.

The key to this connection is the linear programming dual of $\plpm$, which we state in \Cref{sec:suitable-dual}. Here is how this section is organized to provide details of this connection:
\begin{enumerate}[label=(\roman*)]
    \item  We first show in \Cref{sec:suitable-dual} that if $\m{M}$ is properly selected so that $\plpm$ admits a suitable optimal solution (as defined in \Cref{sec:primal}), then the restricted fixed point $\mathbf{v}^{\m{M}}$, together with its associated weights, is an optimal solution to the dual of $\plpm$. Moreover, this optimal dual solution is unique, and solving the dual of $\plpm$ recovers  $\mathbf{v}^{\m{M}}$.
    \item We then show in \Cref{sec:primal-is-dual} that the static priority policy of \Cref{sec:primal} is a value-based policy consistent with $\m{M}$ and these same values $\mathbf{v}^{\m{M}}$. In short, the primal-based algorithm in \Cref{sec:primal} is, in fact, a dual-based algorithm in disguise.

\end{enumerate}
We highlight that this connection works in both directions: it provides a value-based interpretation of the primal priorities and, conversely, turns $\plpm$ and its dual into an LP-based alternative to \Cref{prop:fixed-point} for computing the restricted fixed point $\mathbf{v}^{\m{M}}$ whenever $\plpm$ has a suitable optimal solution. For the set returned by \Cref{algo:suitability}, this yields a $\half$-competitive polynomial-time value-based policy that may differ from the unrestricted policy in \Cref{sec:values}. We provide more details on the polynomial-time computation in \Cref{sec:connection-poly}. All proofs in this section, except the short proof of \Cref{thm:fixed-point-polytime}, are deferred to \Cref{apx:omittedproofs-connections}.

\subsection[Suitability and Recovering the Restricted Fixed Point via LP]{Suitability and Recovering the Restricted Fixed Point via LP}
\label{sec:suitable-dual}
By \Cref{prop:fixed-point}, the restricted fixed point $\mathbf{v}^{\m{M}}$ can be computed in polynomial time by solving the fixed-point equation directly. We now present an alternative based on linear programming that computes $\mathbf{v}^{\m{M}}$ for a broad class of sets $\m{M}$, including the set returned by \Cref{algo:suitability}.

Fix a set of matches $\m{M}\subseteq\TT$. Now consider the dual of $\plpm$ by associating a  dual variable $v_i$ with the balance constraint \eqref{eq:balance-lp-glb} for each type $i$, and a dual variable $z_{Sj}$ with the availability constraint \eqref{eq:availability-alg} for each type $j$ and nonempty set $S\subseteq\m{M}_j$:\footnote{Every feasible solution to $\plpm$ satisfies $n_i>0$ for all $i\in\m{T}$; see the proof of \Cref{lemma:suitablestructure}. Complementary slackness therefore implies that the dual  constraints \eqref{eq:dual-alg-balance} hold with equality at every optimal solution to $\dlpm$.}
\begin{lpbox}{Dual of $\plpm$}
\begin{align}
    \dlpm \triangleq \min_{\mathbf{v}, \mathbf{z} \ge 0} \quad &  \sum_{i \in \m{T}} \lambda_i v_i \nonumber \\
\textrm{s.t.}\quad &
    v_i + v_j + \sum_{\emptyset\ne S \subseteq \m{M}_j} z_{Sj}\,\mathbf{1}(i \in S) \;\ge\; r_{ij} & \quad & \forall (i,j) \in \m{M}, \label{eq:dual-alg-reward}\\
    & \mu_i v_i \;\ge\; \sum_{j \in \m{T}} \sum_{\emptyset\ne S \subseteq \m{M}_j} \lambda_j \gamma_S z_{Sj}\, \mathbf{1}(i \in S) & \quad & \forall i \in \m{T}. \label{eq:dual-alg-balance}
\end{align}
\end{lpbox}
Importantly, these constraints have the same form as those of $\textup{DLP}^{\textup{LB}}(\pi)$, but with the inequalities reversed. The constraints \eqref{eq:dual-alg-reward} are imposed on the pairs in $\m{M}$ rather than on the acceptable pairs of a policy, and the weights are assigned to all nonempty subsets of $\m{M}_j$ rather than only to prefixes. The direction of the inequalities reflects the direction of the certificate: by weak duality, every feasible solution to $\dlpm$ satisfies $\sum_i\lambda_i v_i\ge\plpm$, whereas every feasible solution to $\textup{DLP}^{\textup{LB}}(\pi)$ satisfies $\sum_i\lambda_i v_i\le V(\pi)$.

% the constraints in this dual LP have the same form as the conditions \eqref{eq:replacement-reverse-reward} and \eqref{eq:replacement-reverse-balance} in the statement of \Cref{prop:replacement-half}, and 

The restricted fixed point, together with its associated weights, satisfies both sets of constraints at once: it is feasible for $\dlpm$ (see the proof of \Cref{lem:fixed-point-dual-feasible}), and, with its weights restricted to the prefixes of $\pi$, it is feasible for $\textup{DLP}^{\textup{LB}}(\pi)$ for every policy $\pi$ consistent with $\m{M}$ and $\mathbf{v}^{\m{M}}$ (see the proof of \Cref{prop:restricted-certificate}). Moreover, the constraints \eqref{eq:dual-alg-reward} are tight on every pair in $\m{M}$ with nonnegative score. One might therefore hope to recover $\mathbf{v}^{\m{M}}$ by solving $\dlpm$. However, this need not work for an arbitrary $\m{M}$: \Cref{ex:not_suitable_failure} in \Cref{sec:additional-examples} exhibits an instance with $\m{M}=\TT$ in which the optimal dual values are not a fixed point, and the value-based policy constructed from them earns less than $\plpt$.

Despite the above negative result, we show that the restricted fixed point and its associated weights form the unique optimal solution to $\dlpm$ if and only if $\plpm$ has a suitable optimal solution in the sense of \Cref{sec:suitable}. Thus, the exact same condition that allows us to read a policy from the primal solution in \Cref{sec:primal} also ensures that solving the dual recovers $\mathbf{v}^{\m{M}}$ in \Cref{sec:values}.

\begin{restatable}{proposition}{dualsuitability}
\label{thm:dual-suitability}
Let $\m{M}\subseteq\TT$ be a set of matches. The following statements are equivalent:
\begin{enumerate}[(a)]
\item $\plpm$ has a suitable optimal solution.
\item Every optimal solution $(\mathbf{v}^*,\mathbf{z}^*)$ to $\dlpm$ satisfies
$\mathbf{z}^*=\mathbf{Z}^{\m{M}}(\mathbf{v}^*)$.
\end{enumerate}
Moreover, when these conditions hold, $\dlpm$ has a unique optimal solution, consisting of the restricted fixed point and its associated weights:
\[
 (\mathbf{v}^*,\mathbf{z}^*)
 =\big(\mathbf{v}^{\m{M}},\mathbf{Z}^{\m{M}}(\mathbf{v}^{\m{M}})\big).
\]
In particular,
\[
 \plpm=\dlpm=\sum_{i\in\m{T}}\lambda_i v^{\m{M}}_i.
\]
\end{restatable}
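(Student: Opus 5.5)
The plan is to exploit complementary slackness between $\plpm$ and $\dlpm$, together with the uniqueness of the restricted fixed point (\Cref{prop:fixed-point}) and the feasibility of $(\mathbf{v}^{\m{M}},\mathbf{Z}^{\m{M}}(\mathbf{v}^{\m{M}}))$ for $\dlpm$ (proof of \Cref{lem:fixed-point-dual-feasible}). First I record the structure of any dual optimum $(\mathbf{v}^*,\mathbf{z}^*)$. Every feasible primal solution has $n_i>0$, so \eqref{eq:dual-alg-balance} is tight at every dual optimum. Hence
\[
\mu_i v_i^*=\sum_j\sum_{S}\lambda_j\gamma_S z^*_{Sj}\mathbf 1(i\in S).
\]
If in addition $\mathbf{z}^*=\mathbf{Z}^{\m{M}}(\mathbf{v}^*)$, this identity reads exactly $\mathbf{v}^*=\mathbf{F}^{\m{M}}(\mathbf{v}^*)$, and uniqueness in \Cref{prop:fixed-point} forces $\mathbf{v}^*=\mathbf{v}^{\m{M}}$. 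Consequently (b) implies the ``moreover'' claim: the dual optimum is unique and equals $(\mathbf{v}^{\m{M}},\mathbf{Z}^{\m{M}}(\mathbf{v}^{\m{M}}))$. Strong duality then gives $\plpm=\dlpm=\sum_i\lambda_iv_i^{\m{M}}$.

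\textbf{(a)$\Rightarrow$(b).} Take a suitable optimal primal $(\bn^*,\bx^*)$ and any dual optimum $(\mathbf{v}^*,\mathbf{z}^*)$, and apply complementary slackness. First, $x^*_{ij}>0$ makes \eqref{eq:dual-alg-reward} tight at $(i,j)$. Second, $z^*_{Sj}>0$ forces $S\in\m{P}_j(\bn^*,\bx^*)$. By \Cref{lemma:suitablestructure}, these tight sets form a chain $S_1\subsetneq\cdots\subsetneq S_m$, and suitability gives $x^*_{ij}>0$ for all $i\in S_m$. Thus for $i\in S_m$ we get $r_{ij}-v^*_i-v^*_j=\sum_{\ell:\,i\in S_\ell}z^*_{S_\ell j}\ge0$. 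So the scores are nonnegative and nonincreasing along the chain, and each $z^*_{S_\ell j}$ is the drop in score between consecutive layers. For $i\in\m{M}_j\setminus S_m$ we have $r_{ij}-v_i^*-v_j^*\le 0$ by dual feasibility.

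It remains to compare with the definition of $Z^{\m{M}}_{Sj}$ in \eqref{eq:replacement-Z}. For types outside $S_m$ the residual score $u_{ij}$ is zero. Within each layer $S_\ell\setminus S_{\ell-1}$ the scores are equal. Hence the positive gaps occur exactly at the $S_\ell$ carrying positive weight, and their sizes match $z^*_{S_\ell j}$. This gives $\mathbf{z}^*=\mathbf{Z}^{\m{M}}(\mathbf{v}^*)$. The delicate point is a tight set with $z^*_{S_\ell j}=0$. In that case layers $\ell$ and $\ell+1$ have equal scores, $Z^{\m{M}}$ assigns zero weight there, and the two still agree. I expect this step to be routine.

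\textbf{(b)$\Rightarrow$(a).} This is the main obstacle. By the first paragraph, (b) gives a unique dual optimum $(\mathbf{v}^{\m{M}},\mathbf{Z}^{\m{M}}(\mathbf{v}^{\m{M}}))$. I would argue by contraposition: suppose no primal optimum is suitable, and produce a dual optimum with $\mathbf{z}^*\ne\mathbf{Z}^{\m{M}}(\mathbf{v}^*)$.

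The first attempt is strict complementary slackness (Goldman--Tucker). Choose a primal optimum in the relative interior of the optimal face, so its support and its tight constraints are simultaneously as large and as small as possible. Pair it with a strictly complementary dual optimum. Unsuitability yields some $j$, some tight $S$, and some $i\in S$ with $x_{ij}=0$ at every primal optimum. Strict complementarity then makes \eqref{eq:dual-alg-reward} slack at $(i,j)$, so $r_{ij}-v_i^*-v_j^*<\sum_{S'\ni i}z^*_{S'j}$. The same strictly complementary choice makes $z^*_{Sj}>0$. With a strict slack at $(i,j)$, the telescoping identity \eqref{eq:telescope-M} cannot hold for $\mathbf{z}^*$, since $\mathbf{Z}^{\m{M}}$ satisfies it with equality for nonnegative scores and assigns zero to pairs with nonpositive score. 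Hence $\mathbf{z}^*\neq\mathbf{Z}^{\m{M}}(\mathbf{v}^*)$, contradicting (b).

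The care needed is in the case $u_{ij}(\mathbf v^*)=0$ with $r_{ij}-v_i^*-v_j^*<0$. There I would verify that a positive $z^*_{Sj}$ on a set containing $i$ is still inconsistent with $Z^{\m{M}}$, because $Z^{\m{M}}_{Sj}>0$ requires every type in $S$ to have a positive score.
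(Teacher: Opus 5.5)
Your proposal is correct and follows essentially the same route as the paper: complementary slackness plus the nestedness from \Cref{lemma:suitablestructure} for (a)$\Rightarrow$(b), a Goldman--Tucker strictly complementary pair for (b)$\Rightarrow$(a), and uniqueness of the restricted fixed point for the ``moreover'' part; the only differences are that you phrase (b)$\Rightarrow$(a) contrapositively, and that in (a)$\Rightarrow$(b) you use the whole chain $\m{P}_j(\bn^*,\bx^*)$ of tight sets. Suitability makes every member of that chain reward-tight at once, so this replaces the paper's threshold argument on the positive-weight sets and gives a slightly shorter identification $\mathbf{z}^*=\mathbf{Z}^{\m{M}}(\mathbf{v}^*)$.
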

At a high level, the proof of this proposition is basically an exercise in complementary slackness and linear programming duality, and is deferred to \Cref{apx:proofs-dual-suitable}.

\subsection{The Primal Policy Is a Value-Based Policy}
\label{sec:primal-is-dual}
We now return to the policy we constructed in \Cref{sec:primal} via the primal approach. Given a suitable optimal solution $(\bn^*,\bx^*)$ to $\plpm$, consider any static priority policy consistent with it in the sense of \Cref{def:associated-greedy}. Such a policy makes only matches in $\m{M}$ with $x^*_{ij}>0$, and every tight set $S\in\m{P}_j(\bn^*,\bx^*)$ is a prefix of $\porder{j}$. The following result shows that this policy is consistent with $\m{M}$ and the dual values $\mathbf{v}^{\m{M}}$.

\begin{restatable}{proposition}{primaldualpolicies}
\label{prop:primal-dual-policies}
Let $\m{M}\subseteq\TT$, and let $(\bn^*,\bx^*)$ be a suitable optimal solution to $\plpm$. Then every static priority policy consistent with $(\bn^*,\bx^*)$ is also consistent with $\m{M}$ and $\mathbf{v}^{\m{M}}$.
\end{restatable}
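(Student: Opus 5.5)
The plan is to use \Cref{thm:dual-suitability} together with complementary slackness between $\plpm$ and $\dlpm$. Since $\plpm$ has a suitable optimal solution $(\bn^*,\bx^*)$, \Cref{thm:dual-suitability} says that $(\mathbf{v}^*,\mathbf{z}^*)=(\mathbf{v}^{\m{M}},\mathbf{Z}^{\m{M}}(\mathbf{v}^{\m{M}}))$ is the unique optimal dual solution. Complementary slackness between this pair and $(\bn^*,\bx^*)$ then gives two facts:
\begin{enumerate}[(1)]
\item if $x^*_{ij}>0$, then \eqref{eq:dual-alg-reward} is tight at $(i,j)$;
\item if $z^*_{Sj}>0$, then $S\in\m{P}_j(\bn^*,\bx^*)$.
\end{enumerate}

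Fix a policy $\pi$ consistent with $(\bn^*,\bx^*)$ and an arriving type $j$. We check the three bullets of \Cref{defn:dual-consistent-policy} in turn, writing $u_{ij}=u_{ij}(\mathbf{v}^{\m{M}})$.

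\textbf{First bullet.} Let $(i,j)$ be acceptable under $\pi$. By primal consistency, $(i,j)\in\m{M}$ and $x^*_{ij}>0$, so (1) gives
\[
v_i+v_j+\sum_S Z^{\m{M}}_{Sj}\mathbf 1(i\in S)=r_{ij}.
\]
By the telescoping identity \eqref{eq:telescope-M}, the sum equals $u_{ij}$. Hence $r_{ij}-v_i-v_j=u_{ij}\ge 0$, so the score is nonnegative.

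\textbf{Second bullet.} Suppose $(i,j)\in\m{M}$ has positive score, so $u_{ij}>0$. Let $S^\star=\{i'\in\m{M}_j: u_{i'j}\ge u_{ij}\}$. This set contains $i$ and has $Z^{\m{M}}_{S^\star j}>0$: its minimum residual score is $u_{ij}>0$, and every type outside it has strictly smaller residual score. By (2), $S^\star$ is tight. Primal consistency makes every tight set a prefix of $\porder{j}$, and prefixes contain only acceptable types. Therefore $i\porder{j}\emptyset$.

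\textbf{Third bullet.} Suppose $u_{ij}>u_{i'j}\ge0$ with both pairs in $\m{M}$. If $i'$ is unacceptable, then $i\porder{j}\emptyset\porder{j}i'$ by the second bullet. Otherwise take $S=\{k\in\m{M}_j: u_{kj}\ge u_{ij}\}$. Its minimum residual score is $u_{ij}$, and every type outside it has residual score strictly less than $u_{ij}$, so $Z^{\m{M}}_{Sj}>0$. By (2), $S$ is tight and hence a prefix of $\porder{j}$. Since $i\in S$ and $i'\notin S$, we get $i\porder{j}i'$.

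The main obstacle is making sure the complementary slackness facts apply with exactly these dual variables. This is why uniqueness in \Cref{thm:dual-suitability} is essential: complementary slackness holds between any primal optimum and any dual optimum, and uniqueness identifies the dual optimum with the fixed-point pair. The remaining care is with strict versus weak inequalities in $Z^{\m{M}}$, so the prefix sets must be chosen by thresholding residual scores at exactly $u_{ij}$. Zero-score acceptable matches need no extra argument, since the definition permits them.
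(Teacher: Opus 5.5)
Your proof is correct and follows the paper's argument essentially step for step. You use \Cref{thm:dual-suitability} to identify $(\mathbf{v}^{\m{M}},\mathbf{Z}^{\m{M}}(\mathbf{v}^{\m{M}}))$ as an optimal dual solution, then complementary slackness with the telescoping identity for the first bullet, and the thresholded sets $\{i'\in\m{M}_j:u_{i'j}\ge u_{ij}\}$ (which carry positive weight and hence are tight prefixes) for the second and third bullets. As minor points, the case split in your third bullet is unnecessary, since a prefix containing $i$ but not $i'$ already forces $i\porder{j}i'$, and complementary slackness only needs the fixed-point pair to be dual-optimal, not unique.
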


\begin{proof}[Proof sketch] The full proof of the proposition is deferred to \Cref{apx:proofs-primal-is-dual}, and here we sketch the main ideas. The proposition again follows from complementary slackness, applied to $(\bn^*,\bx^*)$ and the unique optimal solution of $\dlpm$, which consists of $\mathbf{v}^{\m{M}}$ and its associated weights by \Cref{thm:dual-suitability}. A positive match rate $x^*_{ij}>0$ makes the reward constraint for $(i,j)$ tight, so every acceptable match has a nonnegative score. Moreover, every nonempty set obtained by thresholding the positive scores carries a positive dual weight and therefore has a tight primal availability constraint. Each such set must be a prefix of $\porder{j}$, which forces the policy to accept every positive-score match and rank acceptable partners by nonincreasing score.
% \footnote{The inclusion in (i) can be strict. If two types are symmetric in the restricted instance, an optimal vertex of $\plpm$ may break this symmetry by making one of their singleton availability constraints tight. Every policy consistent with this vertex must then rank that type first for the corresponding arriving type. However, the two types have equal scores under $\mathbf{v}^{\m{M}}$, so either order is consistent with $\m{M}$ and $\mathbf{v}^{\m{M}}$. In contrast, the optimal solution in (ii) lies in the relative interior of the optimal face, so an availability constraint is tight there only if it is tight at every primal optimum.}
\end{proof}
\Cref{prop:primal-dual-policies} completes the unifying picture promised at the beginning of this section. The primal approach constructs priorities by solving a linear program, without computing values, while the dual approach computes values as a fixed point, without solving a linear program. Nevertheless, when $\plpm$ has a suitable optimal solution, every policy that the primal approach constructs is also a value-based policy restricted to $\m{M}$: whichever suitable optimal solution of $\plpm$ the primal approach uses, the resulting policy accepts only matches in $\m{M}$ with nonnegative score $r_{ij}-v^{\m{M}}_i-v^{\m{M}}_j$, accepts every match in $\m{M}$ with a positive score, and ranks acceptable partners in decreasing order of score. In other words, the tight availability constraints that the primal approach reads from its solution agree with the ranking by score under the restricted fixed point. For the set returned by \Cref{algo:suitability}, every pair in $\m{M}$ is used at a positive rate, so every pair in $\m{M}$ has a nonnegative score, and the policy of \Cref{sec:primal} accepts every match in $\m{M}$ with a positive score, possibly together with some matches of score zero.

\subsection{Computing Values in Polynomial Time via LP}
\label{sec:connection-poly}
It remains to describe the computational requirements of the procedure: running \Cref{algo:suitability}, solving $\dlpm$, and sorting the acceptable partners of each arriving type by score. By \Cref{lemma:algfindssuitable}, \Cref{algo:suitability} runs in polynomial time, since each program $\plpm$ can be solved in polynomial time (\Cref{prop:polytime}). Although $\dlpm$ has exponentially many variables, the ellipsoid method applied to $\plpm$ also yields an exact optimal solution of $\dlpm$ with only polynomially many nonzero weights~\citep{GrotschelLovaszSchrijver1988}. Formally, we have the following proposition.

\begin{restatable}{proposition}{polytimedual}
\label{prop:polytime-dual}
For every $\m{M}\subseteq\TT$, an optimal solution of $\dlpm$ with polynomially many nonzero weights $z_{Sj}$ can be computed in time polynomial in the input size.
\end{restatable}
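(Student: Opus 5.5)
The plan is to obtain an optimal dual solution from the same ellipsoid run that solves $\plpm$ in \Cref{prop:polytime}. I will use the standard fact that the separation-based ellipsoid method for a linear program also produces an optimal dual solution supported on the constraints it actually queried \citep{GrotschelLovaszSchrijver1988}.

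The steps are as follows. First, fix the separation oracle from \Cref{prop:polytime}. It checks nonnegativity and the equality constraints \eqref{eq:balance-lp-glb} directly. For each $j$, it runs submodular minimization of $\Psi_j$ over nonempty $S\subseteq\m{M}_j$, using \Cref{lemma:suitablestructure}, and returns a violated constraint \eqref{eq:availability-alg} when one exists. Second, run the ellipsoid method on $\plpm$, or on the sliding-objective feasibility versions of it. This run makes polynomially many oracle calls, so it generates a polynomial-size family $\m{C}$ of availability constraints indexed by pairs $(S,j)$. Third, form the restricted program $\plpm_{\m{C}}$. It keeps all balance constraints and all nonnegativity constraints, but only the availability constraints in $\m{C}$. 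By the correctness argument for ellipsoid with separation (as in \citet{GrotschelLovaszSchrijver1988}, Thm.~6.5.14 type statements), its optimal value equals $\plpm$. The reason is that the ellipsoid run on $\plpm_{\m{C}}$ would have received exactly the same oracle answers, so it certifies the same optimal value. Fourth, $\plpm_{\m{C}}$ is an explicit LP of polynomial size, so we solve it and its dual exactly in polynomial time. Its dual is precisely $\dlpm$ with $z_{Sj}$ forced to $0$ for $(S,j)\notin\m{C}$. Padding with zeros gives a solution that is feasible for $\dlpm$, because the constraints \eqref{eq:dual-alg-reward} and \eqref{eq:dual-alg-balance} only lose zero terms. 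Its objective equals $\plpm_{\m{C}}=\plpm=\dlpm$ by strong duality, so it is optimal. Its nonzero weights are indexed by $\m{C}$, a polynomial-size family.

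The main obstacle is the third step: justifying rigorously that $\plpm_{\m{C}}$ has the same optimal value as $\plpm$ when the constraint set is exponential. This is delicate because the $\gamma_S$ are irrational, and because feasibility may not be full-dimensional given the equality constraints. I would handle this by invoking the strong-optimization-with-dual-certificate result of Grötschel--Lovász--Schrijver, under the arithmetic convention of footnote~\ref{fn:arithmetic}. Under that convention each $\gamma_S$ is treated as a rational number of polynomial encoding length, so the facet complexity of the polyhedron is polynomially bounded. That theorem states directly that an optimal dual solution with polynomially many nonzero components can be found in oracle-polynomial time. As a fallback, I would argue explicitly through basic optimal dual solutions of the restricted LP.
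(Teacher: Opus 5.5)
Your proposal is correct and follows the same route as the paper: reuse the ellipsoid run with the submodular-minimization separation oracle from \Cref{prop:polytime}, note that it generates only polynomially many availability constraints, and extract an optimal solution of $\dlpm$ supported on those constraints via the Gr\"otschel--Lov\'asz--Schrijver results (their Theorem~6.5.14 and Lemma~6.5.15). Your restricted-LP argument (identical oracle answers give the same optimal value, then pad the restricted dual with zeros) spells out explicitly what the paper only cites and omits as standard.
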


The proof is deferred to \Cref{apx:proofs-connection-poly}. Combined with \Cref{lemma:algfindssuitable}, this result gives the following computational consequence of the primal--dual connection.

\begin{theorem}
\label{thm:fixed-point-polytime}
Let $\m{M}$ be the set of matches returned by \Cref{algo:suitability}, and let $(\mathbf{v}^*,\mathbf{z}^*)$ be an optimal solution of $\dlpm$. Then $\mathbf{v}^*=\mathbf{v}^{\m{M}}$ is the fixed point restricted to $\m{M}$, every static priority policy consistent with the suitable optimal solution returned by \Cref{algo:suitability} is consistent with $\m{M}$ and $\mathbf{v}^*$, and every static priority policy consistent with $\m{M}$ and $\mathbf{v}^*$ satisfies \eqref{eq:chain}. Moreover, $\m{M}$, $\mathbf{v}^*$, and a static priority policy consistent with $\m{M}$ and $\mathbf{v}^*$ can be computed in time polynomial in the input size.
\end{theorem}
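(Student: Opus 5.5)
The plan is to assemble \Cref{thm:fixed-point-polytime} from the results already established; no new estimate is needed.

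\textbf{Identifying $\mathbf{v}^*$.} By \Cref{lemma:algfindssuitable}, \Cref{algo:suitability} returns a set $\m{M}$ with $\plpm\ge\plpt$, together with a suitable optimal solution $(\bn,\bx)$ of $\plpm$. Statement (a) of \Cref{thm:dual-suitability} therefore holds. The ``moreover'' part of that proposition then says that $\dlpm$ has a unique optimal solution, namely $(\mathbf{v}^{\m{M}},\mathbf{Z}^{\m{M}}(\mathbf{v}^{\m{M}}))$. So any optimal $(\mathbf{v}^*,\mathbf{z}^*)$ satisfies $\mathbf{v}^*=\mathbf{v}^{\m{M}}$. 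The one point to state carefully is that this holds for every optimal solution, not just a particular one, and uniqueness gives exactly that.

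\textbf{The two consistency claims and \eqref{eq:chain}.} Since $(\bn,\bx)$ is suitable and optimal, \Cref{prop:primal-dual-policies} shows that every static priority policy consistent with $(\bn,\bx)$ is consistent with $\m{M}$ and $\mathbf{v}^{\m{M}}=\mathbf{v}^*$. For \eqref{eq:chain}, take any policy $\pi$ consistent with $\m{M}$ and $\mathbf{v}^*$. Then:
\begin{itemize}
\item \Cref{prop:restricted-certificate} gives $V(\pi)\ge\sum_i\lambda_iv^{\m{M}}_i$;
\item \Cref{lem:fixed-point-dual-feasible} gives $\sum_i\lambda_iv^{\m{M}}_i\ge\plpm$ (equality actually holds here by \Cref{thm:dual-suitability});
\item \Cref{lemma:algfindssuitable} gives $\plpm\ge\plpt$;
\item \Cref{prop:factor2} gives $\plpt\ge\half\offrbrel$;
\item \Cref{prop:offline-bound} gives $\half\offrbrel\ge\half\offopt$.
\end{itemize}
Chaining these yields \eqref{eq:chain}.

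\textbf{Computation.} The running time follows from three steps.
\begin{itemize}
\item \Cref{algo:suitability} runs in polynomial time and returns $\m{M}$ (\Cref{lemma:algfindssuitable}).
\item \Cref{prop:polytime-dual} computes an optimal $(\mathbf{v}^*,\mathbf{z}^*)$ of $\dlpm$ in polynomial time. By the first step, $\mathbf{v}^*=\mathbf{v}^{\m{M}}$ regardless of which optimum is returned.
\item A policy consistent with $\m{M}$ and $\mathbf{v}^*$ is built directly. For each $j$, make acceptable exactly the $i\in\m{M}_j$ with $r_{ij}-v^*_i-v^*_j>0$, and rank them in decreasing order of score, breaking ties arbitrarily. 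This takes $O(|\types|^2\log|\types|)$ arithmetic operations on numbers of polynomial encoding length.
\end{itemize}
Alternatively, one can compute a policy consistent with the suitable solution as described after \Cref{lemma:algfindssuitable}; by the consistency claim above, it is also consistent with $\m{M}$ and $\mathbf{v}^*$.

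\textbf{Expected obstacle.} There is essentially none; every step is an invocation. The only point needing care is that \Cref{thm:dual-suitability} applies to the specific $\m{M}$ returned by the algorithm, which is exactly what the suitability guarantee of \Cref{lemma:algfindssuitable} provides.
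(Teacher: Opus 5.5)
Your proposal is correct and follows essentially the same route as the paper. The suitability guarantee of \Cref{lemma:algfindssuitable} triggers the uniqueness part of \Cref{thm:dual-suitability}, which gives $\mathbf{v}^*=\mathbf{v}^{\m{M}}$ for every optimal solution. \Cref{prop:primal-dual-policies} gives the consistency claim, \eqref{eq:chain} is assembled from \Cref{prop:restricted-certificate}, \Cref{lem:fixed-point-dual-feasible} and $\plpm\ge\plpt$, and the running time comes from the pruning loop, \Cref{prop:polytime-dual}, and sorting by score.
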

\begin{proof}
By \Cref{lemma:algfindssuitable}, $\plpm\ge\plpt$ and $\plpm$ has a suitable optimal solution, so \Cref{thm:dual-suitability} gives $\mathbf{v}^*=\mathbf{v}^{\m{M}}$, \Cref{prop:primal-dual-policies} gives the consistency claim, and \Cref{lem:fixed-point-dual-feasible} and the discussion following it in \Cref{sec:restricted-values} give \eqref{eq:chain}. \Cref{algo:suitability} solves at most $|\m{T}|^2+1$ linear programs of the form $\plp(\cdot)$, since every iteration but the last removes a pair from the current set; each solve takes polynomial time by \Cref{prop:polytime}, the final solve of $\dlpm$ takes polynomial time by \Cref{prop:polytime-dual}, and sorting by score takes polynomial time as well.
\end{proof}

We close with two remarks. First, the values computed in this way solve the fixed-point equation restricted to the final set $\m{M}$, but need not solve the unrestricted equation $\mathbf{F}(\mathbf{v})=\mathbf{v}$ used in \Cref{thm:values-half}: a deleted pair may have a positive score under $\mathbf{v}^{\m{M}}$. Second, the guarantee does not depend on which suitable optimal solution to $\plpm$ is used to construct a policy consistent with it (\Cref{def:associated-greedy}). By \Cref{prop:primal-dual-policies}, every such policy is consistent with $\m{M}$ and $\mathbf{v}^{\m{M}}$ and therefore satisfies \eqref{eq:chain}. More generally, the same guarantee holds for every policy consistent with $\m{M}$ and $\mathbf{v}^{\m{M}}$, regardless of how ties among equal scores are broken. 
% Add content showing two things. If we take the dual of $LP^{ALG}(\m{M})$ and let $(v^*,z^*)$ be the dual-optimal solution.
% \begin{enumerate}
% \item Then the policy defined by $(v^*,z^*)$ is equivalent to the policy defined by $(n^*,x^*)$. \na{What is the precise statement?}
% \item The pair $(v^*,z^*)$ is a solution to the $\m{M}$-restricted fixed point equation (which we could introduce here).
% \end{enumerate}

% Pull from Sections 4.2 and 4.4 in main.tex.

% \begin{lemma}\label{lem:replacement-existence}
% For every instance \(\mathcal I\), the map \(F\) has a unique fixed point in
% \(\mathbb R_+^{\mathcal T}\).
% \end{lemma}
% \begin{proof}
% Each \(Z_{Sj}(v)\) is continuous,
% since its definition uses only finite minima, maxima, and positive parts of
% continuous functions. Thus \(F\) is continuous. Moreover,
% \(0\le\gamma_S\le1\) and \eqref{eq:replacement-telescope}, which holds for every
% \(v\ge0\), imply
% \[
%  0\le F_i(v)\le\frac1{\mu_i}\sum_j\lambda_j u_{ij}(v)
%  \le C_i:=\frac1{\mu_i}\sum_j\lambda_j r_{ij}.
% \]
% Therefore, \(F\) maps the nonempty compact convex set
% \(\prod_{i\in\mathcal T}[0,C_i]\) into itself. Brouwer's theorem yields a fixed
% point \(v\).
% \end{proof}

% By Lemmas~\ref{lem:replacement-equivalence} and~\ref{lem:replacement-existence},
% there exist \(\pi,v,z\) satisfying
% \eqref{eq:replacement-joint-reward}--\eqref{eq:replacement-joint-z}.
% Propositions~\ref{prop:replacement-certificate} and~\ref{prop:replacement-half} give
% \[
%  V(\pi,\mathcal I)\ge\sum_{i\in\mathcal T}\lambda_i v_i
%  \ge\tfrac12\textup{OMN}(\mathcal I).
% \]

\section{Conclusion}
\label{sec:conclusion}

\noindent{\textbf{Summary.}} We studied dynamic matching with abandonment, in which agents of different types arrive over time, wait in a common pool from which they may abandon, and are matched in pairs for rewards that depend on their types. We showed that a simple, deterministic static priority policy, computable in polynomial time, earns at least half of the omniscient benchmark, which is the best possible guarantee for online policies~\citep{kessel2022stationary}. Two complementary approaches lead to this guarantee. The primal approach reads priority lists from the tight constraints of a family of linear programs and uses pruning to find a program with a suitable optimal solution. The dual approach computes one value per type, as the unique solution of a fixed-point equation, and ranks matches by their scores. Both approaches rest on the same lower bound on agent availability (\Cref{lem:gammabound}), and they meet through the dual of these linear programs: the policies of the primal approach are value-based policies restricted to a set of allowed matches.

\medskip
\noindent{\textbf{Future directions.}} 
Our analysis certifies the reward of a policy through linear programs, rather than by computing the stationary distribution of the underlying Markov chain. It is this approach that makes static priority policies tractable to analyze in markets with many types, type-specific arrival and abandonment rates, and general compatibility structures.
% As we sketch in \Cref{sec:general-sojourn-time}, the guarantee also extends to sojourn times that are new better than used in expectation. 
We hope that these tools, and in particular the availability bound of \Cref{lem:gammabound}, will prove useful in other stochastic matching and resource allocation problems in which simple policies are desirable.

Our work also suggests several directions for future research. First, in the general model of dynamic matching with abandonment, a natural goal is to improve upon the competitive ratio of one half against the optimal online policy, which is a weaker benchmark than the omniscient one. Second, one could study online learning and optimization versions of the model, in which parameters such as the arrival and abandonment rates or the rewards are unknown. Finally, it would be interesting to extend our results to more general arrival processes and sojourn-time distributions, or to settings in which rewards change over time, for example, as a function of how long an agent has waited.

\section*{Disclosure of AI Usage}
\phantomsection\addcontentsline{toc}{section}{Disclosure of AI Usage}
One result in this paper was obtained with the help of an AI system. The proof that the restricted fixed point $\mathbf{v}^{\m{M}}$ can be computed in polynomial time for every set of matches $\m{M}$ (\Cref{prop:fixed-point}) was first found by OpenAI's Codex Astra, after several rounds of prompting and interaction with the authors, and was subsequently verified and simplified by the authors. All other results are the authors' own work. In particular, the authors established that, for the set $\m{M}$ returned by \Cref{algo:suitability}, the restricted fixed point $\mathbf{v}^{\m{M}}$ is given by the unique optimal solution of the dual of $\plpm$, and can therefore be computed in polynomial time by linear programming (\Cref{thm:dual-suitability,prop:polytime-dual} and \Cref{thm:fixed-point-polytime}). Apart from the proof of \Cref{prop:fixed-point}, AI tools were used only to edit the writing.

% The proof that the restricted fixed point $\mathbf{v}^{\m{M}}$ can be computed in polynomial time (\Cref{prop:fixed-point}) was first discovered by the OpenAI's Codex Astra after several prompting and interactions, and was then verified and simplified by the authors. All other mathematical and non-mathematical content of this paper is the authors' own work; we used AI tools only to polish the writing.

\newpage
\bibliographystyle{plainnat}
{\footnotesize
\bibliography{refs}
}

\newpage
\begin{APPENDIX}{}
\crefalias{section}{appendix}\crefalias{subsection}{subappendix}\crefalias{subsubsection}{subsubappendix}

\section{Further Related Work}
\label{sec:related-work}
This appendix complements \Cref{sec:intro} by positioning our model, results, and techniques within the broader literature on dynamic matching and online allocation in operations research, economics, and computer science.

\medskip
\noindent\textbf{Dynamic matching with abandonment.}
The guarantees known for our model and its special cases are summarized in \Cref{sec:framework} and \Cref{fig:positioning}; here we describe the approaches behind them. \citet{collina2020dynamic} and \citet{aouad2022dynamic} introduced the model independently. \citet{collina2020dynamic} upper-bound the omniscient benchmark by a linear program and match according to its solution with a randomized policy. \citet{aouad2022dynamic} upper-bound the optimal online policy by a linear program and use a feasible solution of a quadratically constrained variant of this program to guide a randomized policy; they also study a cost-minimization version motivated by car-pooling. \citet{li2023fully} design an LP-based randomized policy for a fully online model in which agents arrive i.i.d.\ over discrete time and have type-dependent sojourn-time distributions known only through their means; they extend its analysis to Poisson arrivals. When one side of the market must be matched upon arrival, \citet{kessel2022stationary} introduce the stationary prophet inequality and design pricing-based policies, and \citet{patel2024combinatorial} allow buyers with combinatorial demands and relate the guarantees achievable against their linear programming benchmark to offline contention resolution schemes. The relaxations in these papers have availability constraints for one waiting type at a time; \citet{amanihamedani2024improved} strengthen them to all subsets of waiting types and round the resulting program with correlated proposals. Our relaxation $\offrbrel$ extends these subset constraints to the general model (\Cref{sec:benchmark-guarantee}).

Other work approximates the optimal online policy arbitrarily well in structured settings. When one side must be matched upon arrival and all waiting agents abandon at a common rate, \citet{amanihamedani2025adaptive} use queue-length information to design adaptive policies that approximate the optimal online policy to within a factor of $1-\varepsilon$, provided that the number of waiting types is constant or the instance is Euclidean of fixed dimension. They also show by example that, for this objective, adaptive policies can substantially outperform static ones on specific instances; by contrast, our results imply that static priority policies lose nothing relative to adaptive ones in terms of the worst-case ratio to the omniscient benchmark.

\medskip
\noindent\textbf{Prophet inequalities and Bayesian online allocation.}
Competitive analysis against the Bayesian omniscient benchmark follows the tradition of prophet inequalities. For a single item, a single threshold guarantees half of the expected maximum of independent nonnegative values, and no online policy can guarantee more \citep{krengel-sucheston_1978,samuelcahn1984comparison}; our main result can be viewed as an analog of this fact for dynamic matching with abandonment, with static priority lists in place of the threshold. Prophet inequalities have since been extended to richer feasibility constraints, including matroids \citep{kleinberg2012matroid} and matchings \citep{gravin2023prophet,ezra2022prophet}; for instance, \citet{ezra2022prophet} obtain the tight ratio of $\half$ for matchings in general graphs with vertex arrivals. Several of these guarantees can be obtained through prices that charge an allocation for the future value that it forecloses, a principle formalized by the balanced prices of \citet{dutting2020prophet}; the values of our value-based policies play a similar role (\Cref{sec:values}). Another common approach solves an ex-ante relaxation and rounds its solution online, using tools such as the magician's problem \citep{alaei2014bayesian} and online contention resolution schemes \citep{feldman2016ocrs}.

A related line of work compares policies with the optimal online policy instead of the prophet. For a single item, \citet{niazadeh2018prophet} show that the tight guarantees of single-threshold policies relative to the prophet remain tight relative to the optimal online policy; in contrast, in the single-good stationary prophet inequality, a posted-price policy earns at least $0.656$ of the reward of the optimal online policy \citep{kessel2022stationary}, more than the best possible ratio of $\half$ against the omniscient benchmark. Beyond single-item settings, \citet{anari2019pricing} give a polynomial-time approximation scheme relative to the optimal online policy for Bayesian online selection under laminar matroid constraints of constant depth and under production constraints, by strengthening the ex-ante relaxation into a hierarchy of linear programs. \citet{papadimitriou2024online} give a polynomial-time $0.51$-approximation of the optimal online policy for online stochastic max-weight bipartite matching with vertex arrivals, together with PSPACE-hardness of approximating this policy within some constant.

\medskip
\noindent\textbf{Reusable resources.}
Online allocation of reusable resources, in which an allocated unit is occupied for some time and then returns to inventory, has been studied under stochastic \citep{rusmevichientong2020reusable,dickerson2021allocation,feng2024reusable} and adversarial \citep{gong2022reusable,delong2024reusable,goyal2025reusable} arrivals. With Poisson arrivals and $k$ units of each resource, \citet{aminian2026reusable} attain a $1-\sqrt{2/(\pi k)}+O(1/k)$ fraction of the omniscient benchmark, improving upon \citet{feng2024reusable}, through a propose-then-admit scheme in which each resource runs a state-dependent admission-control policy; relatedly, \citet{aminian2026stationaryocrs} introduce stationary online contention resolution schemes, in which the distribution of the selected set does not depend on the arrival order. Like ours, the analysis of \citet{aminian2026reusable} relies on Poisson arrivals, steady-state arguments, and a linear program that bounds the omniscient benchmark. In reusable-resource models, however, a fixed set of resources cycles between busy and idle states, whereas in our model matched agents leave the system and future matching opportunities are created by new arrivals on every side of the market.

\medskip
\noindent\textbf{Online matching with arrivals and departures.}
In the classical online bipartite matching problem of \citet{kvv_1990} and its variants with known arrival distributions \citep{feldman2009online,manshadi2012online,jaillet2014online}, the vertices on one side are present throughout, and each vertex on the other side must be matched upon arrival or lost, so the policy never chooses when to match (see \citealp{mehta2013online} for a survey). In min-cost bipartite perfect matching with delays \citep{ashlagi2017metric}, the algorithm does choose when to match, but waiting incurs a linear cost and no request abandons. In the adversarial \emph{fully online matching} model, all vertices arrive online and remain available until a deadline, at which the algorithm can still match them; for unweighted matching, competitive ratios above $\half$ are achievable \citep{huang2019tight,huang2018match,huang_2020}. For edge-weighted matching in general graphs with adversarial arrivals and structured departures, in which agents can be matched when they become critical, \citet{ashlagi2023windowed} obtain constant competitive ratios and show that no algorithm is better than $\half$-competitive. \citet{truong2019prophet} study a finite-horizon two-sided model with stochastic arrivals, in which supply waits for a deterministic amount of time and demand must be matched upon arrival. In all of these models, the planner never loses an agent without warning: departures are known in advance, absent, or preceded by a last opportunity to match. In our model, the planner learns of an abandonment only after the agent has left.

\medskip
\noindent\textbf{Matching queues and large-market analysis.}
Dynamic matching also has a long history in applied probability, where work on matching rates under first-come-first-served policies \citep{caldentey2009fcfs,adan2012exact} and on the stability of stochastic matching models \citep{busic2013stability,mairesse2016stability,jonckheere2023maxweight} analyzes given policies rather than designing reward-maximizing ones. A control-oriented literature designs policies that are asymptotically optimal as the market grows \citep{gurvich2014control,ozkan2020dynamic}, in some cases only within natural classes of policies \citep{blanchet2022asymptotically}. \citet{aveklouris2025matching} show that, with heterogeneous match values, holding costs, and general patience distributions on both sides, a state-independent priority policy that ranks the edges is asymptotically optimal when patience hazard rates are increasing; \citet{liu2026datadriven} study a data-driven version of this problem. In a discrete-time model with abandonment, \citet{hu2022dynamic} show that, under certain conditions on the rewards, the optimal policy matches type pairs according to a priority hierarchy, combined with thresholds. In contrast to this literature, our guarantee requires neither a large market nor structural assumptions on the rewards: it holds for every instance, including thin markets in which both the option value of waiting and the risk of abandonment matter.

\medskip
\noindent\textbf{Dynamic matching without abandonment.}
When agents never abandon, queues can grow, and policies can track the solution of a fluid relaxation. Under a general position condition, several policies attain constant regret relative to hindsight or fluid benchmarks \citep{kerimov2024dynamic,gupta2024greedy,wei2026constant}. Under the same condition, \citet{kerimov2023optimality} show that, in two-way matching networks, a greedy static priority policy is hindsight optimal when the residual network of matches is acyclic (e.g., in two-sided networks), and \citet{kerimov2026availability} show that policies that only observe which queues are nonempty, including a static priority policy for acyclic networks, can attain the optimal regret. With abandonment, the gap to the omniscient benchmark can instead grow linearly over time: even in the single-good stationary prophet inequality, no online policy can guarantee more than half of the omniscient benchmark \citep{kessel2022stationary}.

\medskip
\noindent\textbf{Market thickness, departure information, and other objectives.}
A complementary literature studies when to match in markets with random compatibility, typically measuring the fraction of agents who leave unmatched or their waiting times rather than the total weighted reward. \citet{akbarpour2020thickness} show that, when the planner knows which agents are about to leave, waiting to thicken the market can sharply reduce the fraction of agents who leave unmatched, whereas without this information, greedy matching is nearly optimal; \citet{ashlagi2023matching} show that greedy matching is also asymptotically optimal in large imbalanced markets. \citet{baccara2020optimal} characterize optimal dynamic matching with waiting costs and match values that depend on the types of both agents. Closer to the queueing formulation, \citet{kohlenberg2024cost} derive nonasymptotic scaling laws for the loss caused by abandonment in two-sided matching queues, and \citet{kohlenberg_2025} identifies conditions under which restricting the compatibility graph and then matching greedily is scale-optimal, a policy structure close in spirit to static priority policies. In contrast to these papers, our guarantee allows arbitrary type-dependent rewards, is multiplicative, and holds for every instance.

\medskip
\noindent\textbf{Applications.}
The applications that motivate our model have extensive literatures of their own, including work on kidney exchange \citep{roth2004kidney,roth2005pairwise,unver2010dynamic,anderson2014dynamic,ashlagi2018matchrun,agarwal2019market,ashlagi2021kidney}, deceased-donor organ allocation \citep{bertsimas2013fairness,agarwal2021equilibrium}, public and affordable housing \citep{arnosti2020design,waldinger2021targeting,leshno2022dynamic}, and homeless services \citep{azizi2018designing,arnosti2026target}. These papers address application-specific questions, such as incentives, fairness, equilibrium behavior, and empirical performance, which complement our worst-case analysis of a common stochastic matching primitive.

\section[Poisson Processes: Definitions and Proofs]{Poisson Processes: Definitions and Proofs}
\label{sec:Poisson process}

% \na{This reads as quite technical. It introduces a lot of notation that the reader will not need in the rest of the paper. I think the main things we need are that the number of present agents follows a Poisson distribution, and some version of PASTA.
% If we need some of the remaining facts later, we can put them in the appendix. }
This appendix complements \Cref{sec:poisson-background}: it gives the formal definition of a Poisson process and the marking property, proves \Cref{fact:poisson-occupancy}, derives the consequence of \Cref{fact:poisson-arrival-calculus} that our analysis uses, and proves \Cref{cor:pasta}. For further background on Poisson processes, see \citet{lastpenrose2017lectures}.

A Poisson process with rate $\lambda(s)$ is a way to formalize the idea of creating a random set $\Phi$ in $\mathbb{R}^n$ by taking each tiny volume $ds$ in $\mathbb{R}^n$  and including it in $\Phi$ with probability $\lambda(s) ds $ independently.

\begin{definition}[Poisson process]\label{def:Poisson process}
A \textbf{Poisson process} $\Phi$ on a Borel set $\mathcal{X}$ $\subseteq \mathbb R^n$  with locally integrable rate function $\lambda:\mathcal{X}\to[0,\infty)$ is a random set of countably many points in $\mathcal{X}$ such that:
\begin{enumerate}
    \item For any Borel-measurable set $I \mathrel{\subseteq} \mathcal{X}$, $|\Phi \cap I|$ is a Poisson random variable with mean $\int_{I}\lambda(s) ds$.
    \item For any finite collection of disjoint Borel-measurable sets $I_1, I_2, \ldots, I_k$ each contained in $\mathcal{X}$, the random variables $|\Phi \cap I_j|$ are independent of each other.
\end{enumerate}
\end{definition}
We allow general rate functions on $\reals^n$ because marking (\Cref{fact:productppp}) produces Poisson processes on $\reals^2$ with nonconstant rates, such as the process of arrival times and sojourn times of the agents of a given type.

% \na{I am still a little confused about why we need an inhomogeneous process $\lambda(s)$ on an $n$-dimensional space. I guess it makes the proof of \ref{fact:productppp} succinct? But it seems more natural to focus on homogeneous processes on $\mathbb{R}$ (since that is all we need for the arrival process). From there, we should be able to prove \ref{fact:productppp} directly.}\pn{It is also useful in the proof of the stochastic lemma.}
The theory of Poisson processes shows that such a $\Phi$ exists. When we say that the agents arrive according to a Poisson process at rate $\lambda(s) = \lambda$, we are formalizing the idea that an average of $\lambda$ agents arrive in every unit of time, but the arrival of agents is independent across time.

The following property concerns points that carry independent random marks, such as the sojourn times of arriving agents \citep[see, e.g.,][Section~5.2]{lastpenrose2017lectures}.

\begin{fact}[Marking]
\label{fact:productppp}
If $\Phi$ is a Poisson process on a Borel set $\mathcal{X}\subseteq\mathbb R^n$ with rate function $\lambda(s)$, and, given $\Phi$, the points $s\in\Phi$ are associated with i.i.d.\ random variables $Y_s \in \mathbb{R}$ with density $f$, then $\{(s,Y_s):s\in\Phi\}$ is a Poisson process on $\mathcal{X}\times \mathbb R$ with rate function $\lambda(s)f(y)$.
\end{fact}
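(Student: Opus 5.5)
The plan is to check the two properties of \Cref{def:Poisson process} directly for the marked set $\Psi=\{(s,Y_s):s\in\Phi\}$ with rate function $\lambda(s)f(y)$, using a conditioning (``order statistics'') argument on regions of finite mass. I would first localize: since $\lambda$ is locally integrable, partition $\mathcal{X}$ into countably many disjoint bounded Borel sets $I_1,I_2,\ldots$ with $\Lambda_k=\int_{I_k}\lambda(s)\,ds<\infty$. By property 2 for $\Phi$, together with the independence of the marks given $\Phi$, the marked restrictions $\Psi\cap(I_k\times\reals)$ are mutually independent. So it suffices to prove the claim inside a single slab $I\times\reals$ with $\Lambda=\int_I\lambda<\infty$, and then reassemble. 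The reassembly uses two facts: a countable sum of independent Poisson variables is Poisson with the summed mean (possibly infinite), and independence is preserved when we sum over slabs.

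Inside one slab, I would use the standard description of a Poisson process with finite total mass. Conditional on $|\Phi\cap I|=m$, the $m$ points are i.i.d.\ with density $\lambda(s)/\Lambda$ on $I$. This follows from properties 1 and 2 by computing the joint law of the counts in an arbitrary finite partition of $I$ and comparing it with the multinomial law. Since the marks are i.i.d.\ with density $f$ given $\Phi$, conditional on $m$ points the marked pairs are i.i.d.\ on $I\times\reals$ with density $\lambda(s)f(y)/\Lambda$.

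Now take disjoint Borel sets $A_1,\ldots,A_k\subseteq I\times\reals$ and let $A_0$ be the complement of their union in the slab. Write $p_\ell=\int_{A_\ell}\lambda(s)f(y)\,ds\,dy/\Lambda$. Conditional on $m$ points, the vector of counts is multinomial$(m;p_0,\ldots,p_k)$. Mixing over $m\sim\mathrm{Poisson}(\Lambda)$ is the usual Poisson splitting computation, done once with generating functions. It gives independent Poisson counts with means $\Lambda p_\ell=\int_{A_\ell}\lambda f$, which are exactly properties 1 and 2 for $\Psi$ inside the slab.

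The main obstacle is the first step within the slab: deriving the i.i.d.\ conditional structure of $\Phi$ given its count, for arbitrary Borel sets rather than rectangles, from the bare \Cref{def:Poisson process}. I would handle this by showing that the finite-dimensional count distributions determine the law of the point process, via a $\pi$--$\lambda$ argument on sets of the form $\{|\Phi\cap B_1|=m_1,\ldots,|\Phi\cap B_r|=m_r\}$. Then I would verify that the i.i.d.\ construction has those same count distributions. Alternatively, one could compute the Laplace functional,
\[
\E{\exp\Big(-\sum_{s\in\Phi}u(s,Y_s)\Big)}=\exp\Big(-\int\!\!\int\big(1-e^{-u(s,y)}\big)\lambda(s)f(y)\,dy\,ds\Big),
\]
obtained by conditioning on $\Phi$ and applying the Laplace functional of $\Phi$ to $s\mapsto-\log\int e^{-u(s,y)}f(y)\,dy$. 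One would then invoke the fact that the Laplace functional characterizes a Poisson process. This route is slicker, but it leans on that uniqueness theorem, so I would fall back on \citet{lastpenrose2017lectures} for it.
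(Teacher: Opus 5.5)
Your proof is correct, but your main route differs from the one behind the paper's statement. The paper gives no proof of this fact. It cites Section~5.2 of \citet{lastpenrose2017lectures}, where the marking theorem is proved by exactly the argument you list as a fallback. Conditioning on $\Phi$ turns the Laplace functional of the marked process into that of $\Phi$ evaluated at $u^*(s)=-\log\int e^{-u(s,y)}f(y)\,dy$. This equals $\exp\big(-\int\!\!\int(1-e^{-u(s,y)})\lambda(s)f(y)\,dy\,ds\big)$, and the characterization of Poisson processes by their Laplace functionals finishes the job. That route needs no localization and covers $s$-finite intensities and position-dependent mark kernels at no extra cost, but it relies on the Laplace-functional uniqueness theorem.

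Your primary route works directly from the count-based \Cref{def:Poisson process}: finite-mass slabs, the conditional i.i.d.\ representation, multinomial--Poisson splitting, and reassembly via countable additivity of independent Poisson counts. This is more elementary and better matched to how the paper defines a Poisson process. It does not really avoid a uniqueness theorem, though: the $\pi$--$\lambda$ argument showing that finite-dimensional count laws determine the law of a point process plays that role.

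The same $\pi$--$\lambda$ argument is also needed in your localization step. Property~2 as stated only gives independence of counts in finitely many disjoint sets. To conclude that the restrictions $\Phi\cap I_k$ are independent as random point sets, apply property~2 to the atoms generated by finitely many subsets of each slab, then extend by $\pi$--$\lambda$. With that made explicit, the whole array of counts $|\Psi\cap A_\ell\cap(I_m\times\reals)|$ of your marked set $\Psi$ is mutually independent, and your reassembly goes through.
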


In the setting of our paper, agents of type $i$, arriving according to a Poisson process, are associated with independent sojourn times that are exponentially distributed with mean $\frac{1}{\mu_i}$. Thus, by \Cref{fact:productppp}, the pairs of arrival times and sojourn times of type-$i$ agents form a Poisson process as well.

It is useful for us to understand the number of agents of any particular type present at any time. This number is described by \Cref{fact:poisson-occupancy} in \Cref{sec:poisson-background}, which is a simple consequence of \Cref{fact:productppp}. We restate it here and give its proof.

\occupancy*
\begin{proof}
By \Cref{fact:productppp}, the type-$i$ pairs $(s,\ell)$ of arrival times $s$ and sojourn times $\ell$ (so that the departure time is $s+\ell$) form a Poisson process on $[0,\infty)\times[0,\infty)$ with rate $(s,\ell)\mapsto \arate{i}\drate{i}e^{-\drate{i}\ell}$.

% \na{I think this is missing an indicator ${\bf 1}(\ell \ge s)$, if I am understanding correctly.}\pn{It's arrival time, soujourn-time pairs, not arrival time departure time.}
The agents that are present at time $t$ are exactly the points in:
\[
  R=\{(s,\ell):0\le s\mathrel{\le} t,\ \ell>t-s\}.
\]
By \Cref{def:Poisson process}, the number of such points is Poisson with mean
\[
  \int_{R}\arate{i}\drate{i}e^{-\drate{i}\ell}\,d\ell\,ds
  =
  \arate{i}\int_0^t\int_{t-s}^{\infty}
      \drate{i}e^{-\drate{i}\ell}\,d\ell\,ds
  =
  \arate{i}\int_0^t e^{-\drate{i}(t-s)}\,ds
  =
  \tfrac{\arate{i}}{\drate{i}}\big(1-e^{-\drate{i}t}\big).
\]
The counts are independent across types because the arrival and sojourn processes are independent across types.
\end{proof}
The formula in \Cref{fact:poisson-arrival-calculus} is an expression of the fact that, conditional on a Poisson process having a point at a given location, the remaining points still form a Poisson process with the same rate function~\citep{slivnyak1962some, mecke1967stationare}. See \citet{coeurjolly2017tutorial} for a modern exposition, where this formula appears as equation $(17)$. A particularly useful corollary that we will apply is the following:

% \pn{Some explanation for my choices here: In the end, I settled on putting this Campbell-Mecke in the preliminaries because there's the next corollary that's also needed in the omniscient benchmark upper bound proof. The reason we have the next corollary and not just PASTA is because, at least formally speaking, we have to analyze the omniscient policy  for finite horizons $t$. I also decided on this deletion version of this Campbell Mecke formula, because eq(17) in Coerurjolly is exactly what we need, but it's not stated in the same way in Last and Penrose. Finally PASTA itself is a corollary of the next corollary, so it's nice to have it all here.}
\begin{corollary}
\label{cor:poisson-arrival-count}
Fix a type $j$ of agents with arrival rate $\lambda_j$. If, for each $s\ge0$, $E(s)$ is an event determined by the history $\hist(s)$ of the agents that arrived before time $s$, then for every $t\ge 0$,
\[
  \E{\sum_{a\in\agentset_j(t)} \indic{E(\atime{a})}}
  =
  \arate{j}\int_0^t \PP{E(s)}ds.
\]
\end{corollary}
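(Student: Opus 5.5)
The plan is to apply the Slivnyak--Mecke formula (\Cref{fact:poisson-arrival-calculus}) to the Poisson process of type-$j$ arrivals, marked by their sojourn times. By the marking property (\Cref{fact:productppp}), this is a Poisson process $\Phi_j$ on $[0,\infty)^2$ with rate $(s,\ell)\mapsto\lambda_j\mu_je^{-\mu_j\ell}$. It is independent of the marked processes $\Psi$ of all other types. Given the realization of $\Psi$, we will define
\[
g\big((s,\ell),\Phi'\big)=\indic{s\le t}\,\indic{E(s)\text{ holds for the history formed by }\Phi'\text{ and }\Psi\text{ restricted to arrivals before }s}.
\]
The left side of the corollary is then $\E{\sum_{x\in\Phi_j}g(x,\Phi_j\setminus\{x\})}$. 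This identification holds because the event $E(\atime{a})$ is determined by $\hist(\atime{a})$. That history contains only agents that arrived strictly before $\atime{a}$, so it does not contain $a$ itself. Hence removing the point $a$ from $\Phi_j$ does not change whether $E(\atime{a})$ holds. The only other subtlety is that $\hist(s)$ is defined over $\agentset(s)$, the agents arriving before $s$. Since arrivals almost surely do not coincide, strict versus weak inequality at $s$ is immaterial.

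Next, we will condition on $\Psi$ and apply \Cref{fact:poisson-arrival-calculus} to $\Phi_j$. The function $g$ is nonnegative and measurable, being an indicator of a measurable event of the point configuration. This gives
\[
\int_0^\infty\!\!\int_0^\infty \indic{s\le t}\,\PP{E(s)\mid\Psi}\,\lambda_j\mu_je^{-\mu_j\ell}\,d\ell\,ds.
\]
Here the inner probability is evaluated with $\Phi_j$ itself in place of $\Phi_j\setminus\{x\}$, which is exactly the unconditioned history. Integrating out $\ell$ gives the factor $\lambda_j$. Taking expectation over $\Psi$, by independence across types and Fubini (which applies since everything is nonnegative), yields $\lambda_j\int_0^t\PP{E(s)}\,ds$.

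The main obstacle is measurability and the bookkeeping that makes the function of (distinguished point, remaining configuration) well defined. Specifically, $E(s)$ must be a jointly measurable function of $s$ and of the point configuration; I would take this as part of the meaning of ``determined by $\hist(s)$''. One can also avoid conditioning on $\Psi$ by applying Mecke on the superposition of all marked type processes, indexed by type. In that version, $g$ vanishes unless the distinguished point has type $j$, and the rate restricted to type $j$ is the one above. This route is cleaner and is the one I would write.
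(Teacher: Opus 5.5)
Your proposal is correct and follows the paper's proof essentially step for step. The paper likewise conditions on the independent process $\Psi$ of the other types and applies the Slivnyak--Mecke formula to the marked type-$j$ process with $g\big((s,\ell),\varphi\big)=\indic{s<t}\,\indic{E(s)}$. It then uses the fact that $E(t_a)$ does not involve $a$ itself, integrates out $\ell$ to obtain the factor $\lambda_j$, and averages over $\Psi$. Your alternative of applying Mecke to the superposition of all types' marked processes would also work.
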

\begin{proof}
Let $\Phi$ be the Poisson process of pairs of arrival times and sojourn times of type-$j$ agents, which has rate function $(s,\ell)\mapsto\lambda_j\mu_je^{-\mu_j\ell}$ by \Cref{fact:productppp}, and let $\Psi$ collect the arrival times and sojourn times of the agents of all other types. Since $\Psi$ is independent of $\Phi$, we may condition on $\Psi$ and apply \Cref{fact:poisson-arrival-calculus} to $\Phi$ with
\[
 g\big((s,\ell),\varphi\big)=\indic{s<t}\,\indic{E(s)\text{ occurs when the type-$j$ agents are given by }\varphi\text{ and the others by }\Psi}.
\]
For a type-$j$ agent $a$, whose arrival and sojourn times form the point $x\in\Phi$, the event $E(t_a)$ depends only on the agents that arrived before $t_a$, which do not include $a$ itself, so $g(x,\Phi\setminus\{x\})=\indic{t_a<t}\indic{E(t_a)}$. Similarly, $g\big((s,\ell),\Phi\big)=\indic{s<t}\indic{E(s)}$ by definition. \Cref{fact:poisson-arrival-calculus} therefore gives
\[
 \E{\sum_{a\in\agentset_j(t)}\indic{E(t_a)}\ \Big|\ \Psi}
 =\int_0^t\int_0^\infty\PP{E(s)\mid\Psi}\,\lambda_j\mu_je^{-\mu_j\ell}\,d\ell\,ds
 =\lambda_j\int_0^t\PP{E(s)\mid\Psi}\,ds ,
\]
and taking expectations over $\Psi$ completes the proof.
\end{proof}

\pasta*
\begin{proof}
Suppose first that $\pol$ is deterministic. Since $\pol$ is online, the state $\mathbf{N}^{\pol}(s)$ just before time $s$ is determined by the arrivals and departures that occur before time $s$. Hence, for each $s$, the event $\{\mathbf{N}^{\pol}(s)\in C\}$ depends only on the history before time $s$, and the first claim follows from \Cref{cor:poisson-arrival-count}. If $\pol$ uses randomization that is independent of the arrival and sojourn processes, the same argument applies conditionally on this randomization. For the second claim, divide the first by $t$ and note that $\tfrac1t\int_0^t\PP{\mathbf{N}^{\pol}(s)\in C}\,ds\to p$ whenever $\PP{\mathbf{N}^{\pol}(s)\in C}\to p$.
\end{proof}

\section{Omitted Proofs}
\label{sec:omitted-proofs}
\subsection{Omitted Proofs from \Cref{sec:preliminaries}}
\label{apx:omittedproofs-prelim}
\offlinebound*

\begin{proof}
Recall that for a policy $\pi$, $X_{ij}^\pi(t)$ is the number of $(i,j)$ matches made before time $t$, and
\[\xrate{i}{j}^{\pol}(t) = \tfrac{1}{t} \E{\Xmatch{i}{j}^{\pol}(t)}.\]

It suffices to show that the match rates $\xrate{i}{j}^{\pol}(t)$ of an arbitrary policy $\pol$ with hindsight information satisfy  \eqref{eq:omn-balance} and \eqref{eq:omn-match}.

\paragraph{Constraint~\eqref{eq:omn-balance}.} Fix $j \in \types$. Since each agent of type $j$ is matched at most once in every sample path, we have:

\[\sum_{i \in \types} \Xmatch{i}{j}^{\pol}(t) + \sum_{i \in \types} \Xmatch{j}{i}^{\pol}(t) \leq |\agentset_j(t)|.\]
Taking expectations and dividing by $t$  gives
$\sum_{i}  \xrate{i}{j}^{\pol}(t) + \sum_{i} \xrate{j}{i}^{\pol}(t) \le \arate{j}$, since $\tfrac1t \E{|\agentset_j(t)|} = \arate{j}$, which results in \eqref{eq:omn-balance}.

\paragraph{Constraint~\eqref{eq:omn-match}.} Fix $j \in \types$ and
$S \subseteq \types$. Any match counted in $\Xmatch{i}{j}^{\pol}(t)$ pairs a type-$j$ agent with a type-$i$ agent that arrived earlier and was still present when the type-$j$ agent arrived. For an arrival time $s \in [0,t]$ let
\[
  P_i(s) = \big|\{ b \in \agentset_i(s) : \dtime{b} > s \}\big|,
  \qquad
  P_S(s) = \sum_{i \in S} P_i(s)
\]
count the number of  type-$i$ agents, and of agents with types in $S$, that are present at time $s$. Thus, in every sample path, we have:
\[  \sum_{i \in S}\Xmatch{i}{j}^{\pol}(t) \leq  \big|\{ a \in \agentset_j(t) : P_S(\atime{a}) > 0 \}\big|.\]

We showed in \Cref{fact:poisson-occupancy} that for each fixed $s$, the counts $\{P_i(s)\}_{i \in \types}$ are Poisson with mean $\tfrac{\arate{i}}{\drate{i}}\big(1 - e^{-\drate{i} s}\big).$ Since this mean is at most $\arate{i}/\drate{i}$, and the counts are independent across types, we conclude that:
\[
  \PP{P_S(s) > 0} \le 1- e^{-\sum_{i \in S}\arate{i}/\drate{i}}.
\]
By \Cref{cor:poisson-arrival-count}, we conclude that:
\[\E{\sum_{i \in S}\Xmatch{i}{j}^{\pol}(t)} \leq \E{\big|\{ a \in \agentset_j(t) : P_S(\atime{a}) > 0 \}\big|}= \arate{j}\int_0^t \PP{P_S(s) > 0} ds \leq \lambda_j t \left( 1- e^{-\sum_{i \in S}\arate{i}/\drate{i}}\right),\]
as needed.
\end{proof}

\subsection{Omitted Proofs from \Cref{sec:priority-policies}}
\label{apx:omittedproofs-priority}
\stationaryctmc*
Fix a static priority policy $\pol$. The state only changes when an agent arrives or abandons the system, so from a state $\mathbf{n}$, the time to the next transition follows an exponential distribution with rate $\sum_{i \in \types}(\arate{i} + n_i \drate{i})$, and the next state depends only on $\mathbf{n}$. Therefore, $\mathbf{N}^{\pol}(t)$ is a continuous-time Markov chain. Note that the state changes only finitely many times (almost surely) in any finite interval.

Let $\m{N}^{\pol}$ be the set of states reachable from the empty state under policy $\pi$. (Note that $\m{N}^\pi$ does indeed depend on $\pi$: for example, if $\pi$ makes all matches as soon as possible, then $\m{N}^\pi$ is the set of states with at most one agent available.) The chain is irreducible on $\m{N}^{\pol}$, since from any state, the empty state is reached with positive probability within any interval of positive length (if all available agents abandon before the next arrival), and every state of $\m{N}^{\pol}$ is reachable from the empty state by definition.

It remains to prove ergodicity, for which we apply the Foster--Lyapunov drift
criterion of \citet[Theorem~2.8]{prieto2016uniform}: it suffices to exhibit a
nonnegative function $W$ on the state space, a finite set $C$, and constants
$c > 0$ and $b \in \reals$ such that
\begin{equation}
  \sum_{\mathbf{n}'} q_{\mathbf{n} \mathbf{n}'}\big(W(\mathbf{n}') - W(\mathbf{n})\big) \le -c + b\indic{\mathbf{n} \in C}
  \quad \text{for every state } \mathbf{n},
  \label{eq:lyapunov}
\end{equation}
where $q_{\mathbf{n}\mathbf{n}'}$ is the transition rate from $\mathbf{n}$ to $\mathbf{n}'$. Take
$W(\mathbf{n}) = \sum_{i \in \types} n_i$, the total number of available agents. From a
state $\mathbf{n}$, the quantity $W$ increases by one at total rate at most
$\sum_{i \in \types}\arate{i}$ (an arrival) and decreases by one at total rate
at least $\sum_{i \in \types} n_i \drate{i}$ (an abandonment), so
\[
  \sum_{\mathbf{n}'} q_{\mathbf{n} \mathbf{n}'}\big(W(\mathbf{n}') - W(\mathbf{n})\big)
   \le \sum_{i \in \types}\arate{i} - \sum_{i \in \types} n_i\drate{i}
   \le \sum_{i \in \types}\arate{i} - W(\mathbf{n})\min_{i}\drate{i}.
\]
Fix any $c > 0$ and put $C = \{ \mathbf{n} : W(\mathbf{n}) < k \}$ with
$k > (\sum_{i}\arate{i} + c)/\min_i \drate{i}$. For $\mathbf{n} \notin C$ we have $W(\mathbf{n}) \ge k$, so the
right side is at most $\sum_i \arate{i} - k\min_i\drate{i} < -c$; for
$\mathbf{n} \in C$ it is at most $\sum_i \arate{i} = -c + b$ with
$b = c + \sum_i \arate{i}$. Thus \eqref{eq:lyapunov} holds, and the chain is
ergodic and has a unique stationary distribution.

Now, we know that as $t \to \infty$, $\mathbf{N}^{\pol}(t)$ converges in distribution to $\mathbf{N}^{\pol}$, while $N_i^{\pol}(t)$ is uniformly integrable---it is dominated by
the number of type-$i$ agents present at time $t$, a Poisson variable with mean $\tfrac{\arate{i}}{\drate{i}}\left(1-e^{-\mu_i t}\right) \leq \frac{\lambda_i}{\mu_i}$. We conclude that:
\[\E{N_i^{\pol}}=
  \lim_{t \to \infty} \E{\Npres{i}^{\pol}(t)}.\]
Finally,  let $C_{ij}^{\pol}$ be the set of states in which an arriving type-$j$ agent is matched to an available type-$i$ agent. Since every $(i,j)$ match is made when a type-$j$ agent arrives, based on the state just before its arrival, we have $\Xmatch{i}{j}^{\pol}(t)=\sum_{a\in\agentset_j(t)}\indic{\mathbf{N}^{\pol}(t_a)\in C_{ij}^{\pol}}$. Moreover, $\PP{\mathbf{N}^{\pol}(s)\in C_{ij}^{\pol}}\to\PP{\mathbf{N}^{\pol}\in C_{ij}^{\pol}}$ as $s\to\infty$, because the state space is countable and $\mathbf{N}^{\pol}(s)$ converges in distribution. Since $\pol$ is online, applying \Cref{cor:pasta} with
$C=C_{ij}^{\pol}$ gives
\[
  \xrate{i}{j}^{\pol}
  =
  \lim_{t \to \infty} \tfrac{1}{t}
      \E{\Xmatch{i}{j}^{\pol}(t)}
  =
  \arate{j}\PP{\mathbf{N}^{\pol}\in C_{ij}^{\pol}}.\]
Consequently, the limit $\val{\pol}=\lim_{t\to\infty}\valt{\pol}$ also exists. Indeed, ${\valt{\pol}=\sum_{i,j\in\types}\rew{i}{j}\,\xrate{i}{j}^{\pol}(t)}$ is a finite linear combination of the rates $\xrate{i}{j}^{\pol}(t)$, each of which converges to $\xrate{i}{j}^{\pol}$, so ${\val{\pol}=\sum_{i,j\in\types}\rew{i}{j}\,\xrate{i}{j}^{\pol}}$.

\balancemultiple*

\begin{proof} Recall that $N_i^\pi(t)$ represents the number of type-$i$ agents available just before time $t$ and $X_{ij}^\pi(t)$ is the number of $(i,j)$ matches made before time $t$. The proof of Equation~\eqref{eq:Vpi} was already shown using the linearity argument at the end of the proof of \Cref{lemma:stationaryctmc}, so it remains to prove \eqref{eq:balance} and \eqref{eq:matchrate}.

\paragraph{Constraint~\eqref{eq:balance}.} Let $D_i^{\pi}(t)$ denote the number of type-$i$ agents that have abandoned the system (departed unmatched) by time $t$. Then for any time $t$ and any $i \in \m{T}$, we have:
\begin{equation}
     D_i^{\pi}(t) + N_i^{\pi}(t) + \sum_{j \in \mathcal{T}} X^\pi_{ij}(t) + \sum_{j \in \mathcal{T}} X_{ji}^\pi(t) =|\mathcal{A}_i(t)|.\label{departures-balance}
\end{equation}
This says that each agent who arrives before time $t$ has either abandoned, been matched with another agent, or is still available. Note that:
\[ \E{D_i^{\pi}(t)} = \E{\mu_i\int_0^t N_i^\pi(s)ds},\] because for any static priority policy $\pi$, $D_i^{\pi}(t) - \mu_i\int_0^t N_i^\pi(s)ds$ is a martingale \citep[see, e.g.,][]{pang2007martingale}. So by taking expectations, dividing by $t$, and taking the limit as $t \to \infty$ in \eqref{departures-balance}, constraint~\eqref{eq:balance} follows.

\paragraph{Constraint~\eqref{eq:matchrate}.} Let $C_{ij}^{\pol}$ be the set of states in which an arriving type-$j$ agent is matched to an available type-$i$ agent. Note that we showed (using \Cref{cor:pasta})
in the proof of \cref{lemma:stationaryctmc} that:
\[ \xrate{i}{j}^{\pol}
  = \arate{j}\PP{\mathbf{N}^{\pol}\in C_{ij}^{\pol}}.\] 
The events $\{C_{ij}^{\pol}:i\in\types\}$ are disjoint, so
\[\sum_{i \in S}\xrate{i}{j}^{\pol}
  =
  \arate{j}
  \PP{\mathbf{N}^{\pol}\in \bigcup_{i \in S} C_{ij}^{\pol}}.
\]
The event on the right can occur only if some type in $S$ is available, so
\eqref{eq:matchrate} follows. Furthermore, if
$S \in \prefixes{\porder{j}}$, then $S$ consists of the highest-ranked acceptable types for $j$, so whenever some type in $S$ is available the best
acceptable available type already lies in $S$ and the arriving type-$j$ agent is matched into $S$. Thus
\[
  \left\{\textstyle\sum_{i \in S}\Npres{i}^{\pol} > 0\right\}
  =
  \left\{\mathbf{N}^{\pol}\in \bigcup_{i \in S} C_{ij}^{\pol}\right\},
\]
 so \eqref{eq:matchrate} holds with equality.
\end{proof}

Before proving \Cref{lem:gammabound}, we define the IIA property that all static priority policies satisfy, and prove a crucial lemma that holds for any IIA policy.

Fix a policy $\pi$.  Throughout our discussion, we will omit $\pi$ from the notation. Recall that for $t\geq 0$, we write $\mathbf{N}(t)=(N_i(t))_{i\in\m{T}}$ for the state at time $t$ after running policy $\pi$, with $N_i(t)$ representing the number of type-$i$ agents available just before time $t$. Furthermore, $\mathcal{A}_i(t)$ is the set of agents of type $i$ that arrive before time $t$, and
\[
    \m{H}(t)=\{(i_a,t_a,d_a)\}_{a\in\m{A}(t)}
\]
is the history up to time $t$. Let $\mathbf{e}_i$ denote the $i$-th unit vector. 

Let us abuse the notation slightly and define $\mathbf{N}(t;\m{H})$ to be the state at time $t$ obtained by applying the deterministic policy $\pi$ to a deterministic history $\m{H}$. For an additional agent $a$ of type $i$, write
$\hist \cup \{a\}$ for the history with $a$ added, and let $Q_a(t;\m{H})$ be the indicator that this added agent is available at time $t$ when the policy is run on $\m{H}\cup\{a\}$.

\begin{definition}[IIA property]
\label{def:iia}
A policy satisfies \textbf{independence of irrelevant alternatives (IIA)} if, for every deterministic history $\m{H}$, every $t\geq 0$, and every added agent $a$:
\begin{equation}
  Q_a(t;\hist) = 1
  \quad\Longrightarrow\quad
  \mathbf{N}(t;\hist \cup \{a\}) = \mathbf{N}(t;\hist) + e_{i_a}.
  \label{eq:iia}
  \tag{IIA}
\end{equation}
In other words, if the added agent $a$ is still available at time $t$, the rest of the system is at exactly the state it would have reached had agent $a$ never arrived.
\end{definition}

% fix the earliest-arrival rule (or any rule under which removing an
% unselected agent does not change the selection); by footnote 8 this is without loss of generality for Lemma 3.3, since the distribution
% of N^pi does not depend on the tie-breaking rule.

Suppose we fix a deterministic rule for choosing among agents of the same type, under which removing an
unselected agent does not change the selection; for example, always choose the earliest-arriving agent of the selected type. By \Cref{footnote:tie-break} this is without loss of generality for \Cref{lem:gammabound}, since the distribution of $\mathbf{N}^\pol$ does not depend on the tie-breaking rule.  Once
we fix such a rule, we claim that any static priority policy satisfies \Cref{def:iia}.

To see this, compare two systems: one run on $\m{H}$, and the other run on $\m{H}\cup\{a\}$, with both using the same arrival and sojourn times for all agents in $\m{H}$. Up to time $t_a$, the two systems coincide.  Since $Q_a(t;\m{H})=1$, agent $a$ does not get matched upon  its arrival, so the systems continue to coincide immediately after time $t_a$, except for the addition of agent $a$.

Between time $t_a$ and $t$, the abandonments coincide in both systems. Consider an arrival after time $t_a$. Since $Q_a(t;\m{H})=1$, this arrival is not matched to $a$ in the system with $\m{H}\cup\{a\}$.  Otherwise, the static priority policy selects the same agent to match in both systems. Thus all original agents have the same status at time $t$ in the two systems, and the only difference is the additional available type-$i$ agent $a$.  This is exactly what is needed in \Cref{def:iia}.

Thus, every static priority policy satisfies the IIA property. The following lemma about policies satisfying the IIA property is the main stochastic step in the proof of \Cref{lem:gammabound}.

\begin{lemma}
\label{lemma:iia}
Suppose that a policy $\pi$ satisfies IIA.  Then, for every $i\in\m{T}$, every $t\geq 0$, and every bounded nonnegative measurable function $h:\mathbb{N}^{\m{T}}\to \Rnn$,
\begin{equation}
     \E{N_i(t) h(\mathbf{N}(t)-\mathbf{e}_i)}
    \le
    \frac{\lambda_i}{\mu_i}\left(1-e^{-\mu_i t}\right)\E{h(\mathbf{N}(t))}, 
    \label{eq:iiabound}
\end{equation}
where the term $N_i(t)h(\mathbf{N}(t)-\mathbf{e}_i)$ is defined to be zero if $N_i(t)=0$. Hence, if for the policy $\pi$ its induced $\mathbf{N}(t)$  converges in distribution to $\mathbf{N}$, where $\mathbf{N}$ is drawn from the stationary distribution (for example, if $\pi$ is a static priority policy, \Cref{lemma:stationaryctmc}),
\[
    \E{N_i h(\mathbf{N}-\mathbf{e}_i)}
    \le
    \frac{\lambda_i}{\mu_i}\E{h(\mathbf{N})}.
\]
\end{lemma}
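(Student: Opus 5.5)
The plan is to rewrite the left side of \eqref{eq:iiabound} as a sum over type-$i$ agents, turn that sum into an integral with the Slivnyak--Mecke formula (\Cref{fact:poisson-arrival-calculus}), and then bound the integrand using IIA. I first assume $\pi$ is deterministic, with a fixed tie-breaking rule as discussed above; a randomized IIA policy is handled by conditioning on its independent randomness.

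\textbf{Step 1: write the left side as a sum over points.} Let $\Phi$ be the Poisson process of pairs (arrival time, sojourn time) of type-$i$ agents. Its rate is $(s,\ell)\mapsto\lambda_i\mu_ie^{-\mu_i\ell}$ by \Cref{fact:productppp}. Let $\Psi$ collect all other agents; it is independent of $\Phi$, so I condition on it. Write $\hist$ for the full history and $\hist\setminus\{a\}$ for the history with agent $a$ removed. Then
\[
N_i(t)\,h(\mathbf{N}(t)-\mathbf{e}_i)=\sum_{a\in\agentset_i(t)} \mathbf{1}(a\text{ available at }t)\,h(\mathbf{N}(t;\hist)-\mathbf{e}_i).
\]
For each $a$ in this sum, apply \eqref{eq:iia} to the history $\hist\setminus\{a\}$ with added agent $a$. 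The condition $Q_a(t;\hist\setminus\{a\})=1$ is exactly that $a$ is available at time $t$, so IIA gives $\mathbf{N}(t;\hist)-\mathbf{e}_i=\mathbf{N}(t;\hist\setminus\{a\})$. Each summand therefore equals $g(x,\Phi\setminus\{x\})$, where $x$ is the point of agent $a$ and
\[
g((s,\ell),\varphi)=\mathbf{1}(s<t)\,Q_{(s,\ell)}(t;\varphi\cup\Psi)\,h(\mathbf{N}(t;\varphi\cup\Psi)).
\]

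\textbf{Step 2: apply Slivnyak--Mecke and bound the integrand.} \Cref{fact:poisson-arrival-calculus} converts the expectation of this sum into
\[
\int_0^t\!\int_0^\infty \E{Q_{(s,\ell)}(t;\hist)\,h(\mathbf{N}(t;\hist))}\,\lambda_i\mu_ie^{-\mu_i\ell}\,d\ell\,ds .
\]
A hypothetical added agent can be available at time $t$ only if it is present, which requires $\ell>t-s$. Since $h\ge0$, I bound $Q_{(s,\ell)}\le\mathbf{1}(\ell>t-s)$. The remaining factor $\E{h(\mathbf{N}(t))}$ does not depend on $(s,\ell)$, so the integral is at most $\E{h(\mathbf{N}(t))}$ times the measure of the presence region. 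As computed in \Cref{fact:poisson-occupancy}, that measure is $\tfrac{\lambda_i}{\mu_i}(1-e^{-\mu_it})$. Taking the expectation over $\Psi$ then gives \eqref{eq:iiabound}.

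\textbf{Step 3: pass to the limit.} As $t\to\infty$, the right side converges to $\tfrac{\lambda_i}{\mu_i}\E{h(\mathbf{N})}$ because $h$ is bounded and $\mathbf{N}(t)$ converges in distribution. For the left side, I truncate. The function $\min(N_i,K)\,h(\mathbf{N}-\mathbf{e}_i)$ is bounded, and because the state space is discrete it is trivially continuous, so its expectation converges along $t\to\infty$. Its value is at most the left side of \eqref{eq:iiabound}, hence at most the right side. Letting $K\to\infty$ and applying monotone convergence yields the stationary inequality. Alternatively, one can use uniform integrability, since $N_i(t)$ is dominated by a Poisson variable with mean at most $\lambda_i/\mu_i$.

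\textbf{Main obstacle.} I expect the delicate step to be Step 1. It needs the identification of the summand as $g(x,\Phi\setminus\{x\})$ to hold exactly, and it needs $g$ to be measurable in the distinguished point and the point set. The key to the identification is that IIA lets us replace the state of the full system minus $\mathbf{e}_i$ by the state of the system with $a$ removed. This is also why the tie-breaking rule must be one under which removing an unselected agent does not change which agent is selected.
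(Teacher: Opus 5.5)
Your proposal is correct and follows essentially the same route as the paper: use IIA to rewrite $N_i(t)h(\mathbf{N}(t)-\mathbf{e}_i)$ as a sum over type-$i$ agents of $g(x,\Phi\setminus\{x\})$, apply the Slivnyak--Mecke formula conditional on the other types, bound the availability indicator by the presence indicator $\mathbf{1}(\ell>t-s)$, and integrate. The only small difference is in the limit step. You use truncation plus monotone convergence, while the paper uses uniform integrability (which you also mention as an alternative); both arguments are valid.
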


\begin{proof} To start, notice that $N_i(t) h(\mathbf{N}(t)-\mathbf{e}_i)$ is just a sum of $h(\mathbf{N}(t)-\mathbf{e}_i)$ over the available type-$i$ agents, and furthermore, that every available type-$i$ agent $a$ at time $t$ satisfies:
\[\mathbf{N}(t)-\mathbf{e}_i = \mathbf{N}(t;\hist(t)\setminus\{a\})\] by \eqref{eq:iia} (applied with
$\hist = \hist(t)\setminus\{a\}$). Therefore:
\begin{align*}
   N_i(t) h(\mathbf{N}(t)-\mathbf{e}_i)
    = \sum_{a\in \m{A}_i(t)}
    Q_a(t;\m{H}(t)\setminus\{a\})
    h\left(\mathbf{N}(t;\m{H}(t)\setminus\{a\})\right), 
\end{align*}
Next, we take the expectation on both sides, using the Slivnyak--Mecke formula (\Cref{fact:poisson-arrival-calculus}). Let $\Phi$ be the Poisson process on $[0,\infty)^2$ of type-$i$ arrival and sojourn time pairs (\Cref{fact:productppp}), and let $\Psi$ collect the arrival and sojourn times of the agents of all other types, which are independent of $\Phi$. Conditional on $\Psi$, we apply \Cref{fact:poisson-arrival-calculus} to $\Phi$ with $g(a,\varphi)=Q_a(t;\varphi\cup\Psi)\,h(\mathbf{N}(t;\varphi\cup\Psi))$, where $\varphi\cup\Psi$ denotes the history formed by the type-$i$ agents in $\varphi$ and the other agents in $\Psi$, and then average over $\Psi$, as in the proof of \Cref{cor:poisson-arrival-count}. This gives:
\begin{align*}
    \E{N_i(t) h(\mathbf{N}(t)-\mathbf{e}_i)} =\lambda_i\int_0^t\int_0^\infty
    \E{Q_{a}(t;\m{H}(t))h(\mathbf{N}(t;\m{H}(t)))}
    \mu_i e^{-\mu_i\ell}d\ell ds, 
\end{align*}
where, in the integrand, $a$ denotes a type-$i$ agent with $t_{a}=s$, $s_{a}=\ell$, and $d_{a}=s+\ell$. $\lambda_i \mu_i e^{-\mu_i\ell}d\ell ds$ is the rate of the Poisson process associated with type-$i$ arrival and sojourn time pairs.  Now clearly, an agent can be available at $t$ only if it has not yet abandoned, so
\[Q_{a}(t;\hist(t)) \le \indic{\dtime{a} > t} = \indic{s+\ell > t}.\]
It follows that,
\begin{align*}
    \E{N_i(t) h(\mathbf{N}(t)-\mathbf{e}_i)}
    &\le
    \E{h(\mathbf{N}(t))}
    \lambda_i\mu_i\int_0^t\int_0^\infty \indic{s+\ell>t}\,e^{-\mu_i\ell}\,d\ell\,ds \\
    &=    \E{h(\mathbf{N}(t))}\lambda_i\int_0^t e^{-\mu_i(t-s)}ds
    \\&=
    \frac{\lambda_i}{\mu_i}\left(1-e^{-\mu_i t}\right)\E{h(\mathbf{N}(t))},
\end{align*}
as desired. As $t \to \infty$, $\mathbf{N}(t)$ converges in distribution to $\mathbf{N}$, while $N_i(t)$ is uniformly integrable, since it is dominated by the number of type-$i$ agents present at time $t$, a Poisson variable with mean $\tfrac{\arate{i}}{\drate{i}}\left(1-e^{-\mu_i t}\right) \leq \frac{\lambda_i}{\mu_i}$. So both sides of \eqref{eq:iiabound} converge and the inequality passes to the stationary distribution:
\[
  \E{N_i h(\mathbf{N} - \mathbf{e}_i)} \leq \tfrac{\arate{i}}{\drate{i}}\E{h(\mathbf{N})},
\]as needed.
\end{proof}

We are now ready to complete the proof of \Cref{lem:gammabound}.

\gammabound*

\begin{proof}
Fix a nonempty $S\subseteq\m{T}$ (for $S=\emptyset$, both sides of \eqref{eq:gammabound} are zero) and define $N_S \triangleq\sum_{i\in S}N_i$. Let $p_m=\PP{N_S=m},$ and apply \Cref{lemma:iia}  with
\[ h(\mathbf{N})=\indic{\textstyle\sum_{k\in S}N_k=m}.\]
For each $i\in S$, we conclude that:
\[\E{N_i\indic{N_S=m+1}}\le \frac{\lambda_i}{\mu_i}p_m,\]
so summing over $i\in S$ gives
\begin{equation}
    (m+1)p_{m+1}\le \rho(S) p_m. \label{eq:steady-state-recurrence}
\end{equation}

Let $L_S$ be a Poisson random variable with mean $\rho(S)$, and let $l_m=\PP{L_S=m}$.  Since $L_S$ satisfies \eqref{eq:steady-state-recurrence} with equality, i.e., $(m+1)l_{m+1}=\rho(S) l_m$, the likelihood ratios $q_m=\frac{p_m}{l_m}$ are nonincreasing in $m$. So $N_S$, conditioned on being positive, is stochastically dominated by $L_S$, i.e., for any $k\geq 1$,
\begin{align}
    \PP{N_S\geq  k\mid N_S>0}
    &=
    \frac{\sum_{m\geq  k}q_ml_m}{\sum_{m\geq 1}q_ml_m}
    \le
    \frac{\sum_{m\geq  k}l_m}{\sum_{m\geq 1}l_m}
    =
    \PP{L_S\geq  k\mid L_S>0}, \label{eq:tail-domination}
\end{align}
where the inequality follows because $r_{m'} \leq r_{m}$ for $m' \geq k > m$.  Summing \eqref{eq:tail-domination} over $k\geq 1$ turns this tail comparison into a comparison of conditional means:
\[\frac{\E{N_S}}{\PP{N_S>0}}=\E{N_S\mid N_S>0} \le \E{L_S\mid L_S>0}=\frac{\rho(S)}{1-e^{-\rho(S)}}\mathrel{=}\frac{1}{\gam{S}}. \]
Since $\E{N_S}=\sum_{i\in S}\nss{i}^{\pol}$, rearranging gives \eqref{eq:gammabound}. This proves the desired result. 
\end{proof}

\subsection{Omitted Proofs from \Cref{sec:primal}}
\label{apx:omittedproofs-primal}

\subsubsection{Proofs from \Cref{sec:primal-alg}}
\label{apx:proofs-primal-alg}

\plplowerbound*

\begin{proof}
Write $(\bn,\bx)=(\bn^*,\bx^*)$ for the optimal solution in the statement. Let $({\bn}^\pi, {\bx}^\pi)$ be the steady-state quantities of $\pi$ defined in \eqref{eq:nxpi}. We show that $(\bn, \bx)$ can be expressed as a convex combination of $({\bn}^\pi, {\bx}^\pi)$ and a feasible point  $(\hat \bn, \hat \bx)$ of $\plpm$. That is, there is $\beta \in (0,1)$ such that 
\begin{equation} \beta {\bx}^\pi + (1-\beta) \hat{{\bx}} = {\bx}.\label{eq:beta}\end{equation} 
By optimality of $(\bn, \bx)$ we have $\mathbf{r} \circ \bx \ge \mathbf{r} \circ {\bf \hat{x}}$, so it follows from \eqref{eq:beta} that $\mathbf{r} \circ \bx^\pi \ge \mathbf{r} \circ \bx$. Since $\pi$ makes matches only in $\m{M}$, \eqref{eq:Vpi} gives $\val{\pi}=\sum_{(i,j)\in\m{M}}r_{ij}x^\pi_{ij}=\mathbf{r}\circ\bx^\pi$, and therefore $\val{\pi}\ge \mathbf{r}\circ\bx=\plpm$.

Given $\beta \in (0,1)$ define 
    \begin{align}
        \hat{x}_{ij} &= \frac{x_{ij} - \beta x_{ij}^\pi}{1-\beta} & \quad \forall (i,j) \in \m{M}, \label{eq:hatx}\\
        \hat{n}_i & = \frac{n_i - \beta n_i^\pi}{1-\beta} & \quad \forall i \in \m{T}. \label{eq:hatn}
        %\\ \hat{n}_i &= \frac{\lambda_i - \sum_{j \in \m{T}: (i,j) \in \m{M}}\hat{x}_{ij} - \sum_{j\in\m{T} : (j,i) \in \m{M}}\hat{x}_{ji} }{\mu_i} & \quad \forall i \in \m{T},\label{eq:hatni}\\
        %\hat{\psi}_{Sj} &= \lambda_j \gamma_S \sum_{i \in S} \hat{n}_i - \sum_{i \in S} \hat{x}_{ij} & \quad \forall j \in \m{T}, S \subseteq T(M,j). \label{eq:hatpsiSj}
    \end{align}
    Then \eqref{eq:beta} holds by construction. We will show that $(\hat{{\bn}}, \hat{{\bx}})$ is feasible for all sufficiently small $\beta$. 

    We first show that $\hat{\bx}$ is nonnegative. If $x_{ij} = 0$, then by the definition of consistency, $i$ is not acceptable to $j$ according to $\pi$. Therefore $x_{ij}^\pi = 0$ and $\hat{x}_{ij} = 0$. Meanwhile, if $x_{ij} > 0$, then $\hat{x}_{ij} > 0$ for all sufficiently small $\beta$. Similarly, $\hat{\bn}\ge0$ for all sufficiently small $\beta$, because $n_i>0$ for every $i\in\m{T}$, as we show in the proof of \Cref{lemma:suitablestructure} in \Cref{apx:proofs-suitable}, which does not use \Cref{prop:plplowerbound}.
    
    Turning to \eqref{eq:balance-lp-glb}, this holds for both $(\bn, \bx)$ (because $(\bn, \bx)$ is feasible for $\plpm$) and for $(\bn^\pi, \bx^\pi)$ (by \Cref{lemma:balance:multiple}, since $x^\pi_{ij}=0$ for $(i,j)\notin\m{M}$). Because \eqref{eq:balance-lp-glb} is a linear equality constraint, it also holds for any affine combination of $(\bn, \bx)$ and $(\bn^\pi, \bx^\pi)$, that is, any combination whose coefficients sum to one, and in particular for $(\hat{{\bn}}, \hat{{\bx}})$.

    It remains to show that \eqref{eq:availability-alg} holds. We wish to show that for all $j \in \m{T}$ and all nonempty $S \subseteq \m{M}_j$, the following quantity is nonnegative:
    \begin{equation}        (1-\beta)(\lambda_j \gamma_S \sum_{i \in S} \hat{n}_i - \sum_{i \in S} \hat{x}_{ij}) = (\lambda_j \gamma_S \sum_{i \in S} n_i - \sum_{i \in S} x_{ij}) - \beta(\lambda_j\gamma_S \sum_{i \in S} n_i^\pi - \sum_{i \in S} x_{ij}^\pi). \label{eq:hatpsiSj-rewritten}
    \end{equation}
    (Note that this equality follows immediately from \eqref{eq:hatx} and \eqref{eq:hatn}.) 
    If $(\bn, \bx)$ satisfies \eqref{eq:availability-alg} strictly, then the first term on the right side of \eqref{eq:hatpsiSj-rewritten} is positive and thus the entire expression is positive for sufficiently small $\beta$. If $(\bn, \bx)$ satisfies \eqref{eq:availability-alg} with equality, then the first term on the right side of \eqref{eq:hatpsiSj-rewritten} is zero and $S \in \m{P}_j(\bn,\bx)$. Because $\pi$ is consistent with $(\bn,\bx)$, $S$ must be a prefix of $\succ_j$. Applying \Cref{lemma:balance:multiple} and \Cref{lem:gammabound}, we conclude that 
    \[\sum_{i \in S} x_{ij}^\pi = \lambda_j\PP{\sum_{i \in S} N_i^\pi > 0} \ge \lambda_j \gamma_S \sum_{i \in S} n_i^\pi,\]
    and therefore the expression in \eqref{eq:hatpsiSj-rewritten} is nonnegative. Since $\plpm$ has finitely many constraints, all of them hold for every sufficiently small $\beta$, so $(\hat\bn,\hat\bx)$ is feasible, as claimed.
% [Version 4: the paragraph below proved the last claim of the suitability lemma, not of this proposition, and used the obsolete slack variables psi; a rewritten version now ends the proof of that lemma.]
%     We now turn to the final claim of the lemma, that the condition \eqref{eq:suitability} is necessary and sufficient for a consistent $\pi$ to exists. 
%     
%  This condition is clearly necessary: if there were $i,j,S$ such that $\psi_{Sj} = 0$, $x_{ij} = 0$ and $i \in S$, then the definition of consistency would require both that $(i,j)$ matches are unacceptable (since $x_{ij} = 0$) and that $S$ is a prefix of $\succ_j$.
% To see that it is also sufficient, apply the conclusion that the sets in $\m{P}_j(\bn,\bx)$ are nested. Therefore, it is possible to create a single ordering $\succ_j$ that has all of these sets as prefixes. Furthermore, if \eqref{eq:suitability} holds, then this does not require that $\succ_j$ rank as acceptable any $i$ with $x_{ij} = 0$. Therefore, it is possible to construct a static priority policy $\pi$ consistent with $(\bn,\bx)$.
\end{proof}

\subsubsection{Proofs from \Cref{sec:suitable}}
\label{apx:proofs-suitable}

\suitablestructure*

We first recall the definition of submodular and supermodular set functions.

\begin{definition}[Submodularity/supermodularity] A set function $\varphi : 2^\m{T} \to \mathbb{R}$ is \textbf{submodular} if for all sets $S$ and $S'$ 
\begin{equation} \varphi(S) +  \varphi(S') \ge \varphi(S \cap S') + \varphi(S \cup S'). \label{eq:submodular} \end{equation}
It is {\bf strictly submodular} if \eqref{eq:submodular} holds strictly whenever $S$ and $S'$ are incomparable (meaning that $S \nsubseteq S'$ and $S' \nsubseteq S$). 
The function $\varphi$ is {\bf supermodular} if $-\varphi$ is submodular, and {\bf strictly supermodular} if $-\varphi$ is strictly submodular.
\end{definition}

\begin{proofof}{\Cref{lemma:suitablestructure}}
Fix $(\bn,\bx)$ and $j\in\m{T}$, and define $x_j(S)\triangleq\sum_{i\in S}x_{ij}$ and
\[f(S)\triangleq\frac{1}{\gamma_S}=\frac{\rho(S)}{1-e^{-\rho(S)}},\]
with $f(\emptyset)=1$. Then
\[\Psi_j(S)=\lambda_j\sum_{i\in S}n_i-x_j(S)f(S).\]
We wish to show that $\Psi_j$ is submodular. Since $S\mapsto \lambda_j\sum_{i\in S}n_i$ is modular, it suffices to show that $S\mapsto x_j(S)f(S)$ is supermodular.

We first establish that $f$ is strictly increasing and strictly supermodular. Let $g(y)\triangleq y/(1-e^{-y})$, with $g(0)=1$, so that $f(S) = g(\rho(S))$. The function $g$ is strictly increasing and strictly convex on $[0,\infty)$.\footnote{For $y>0$, $g'(y)=e^y(e^y-1-y)/(e^y-1)^2>0$. Also, $g''(y)=e^y h(y)/(e^y-1)^3$, where $h(y)=y(e^y+1)-2e^y+2$. Since $h(0)=h'(0)=0$ and $h''(y)=ye^y>0$ for $y>0$, we have $h(y)>0$ for $y>0$.} Because $\rho$ is additive and positive on every nonempty set, $f(S)=g(\rho(S))$ is strictly increasing. To see that $f$ is strictly supermodular, suppose that $S,S'$ are incomparable and let $a=\rho(S\cap S')$, $b=\rho(S\setminus S')$, $c=\rho(S'\setminus S)$. Then $b,c>0$, and
\[f(S\cup S')+f(S\cap S')-f(S)-f(S')=[g(a+b+c)-g(a+c)]-[g(a+b)-g(a)]>0,\]
where the strict inequality follows from strict convexity of $g$, since increments of length $b>0$ are strictly increasing in the starting point and $c>0$. 

% [Version 5, fifth round: the substitutions, previously in a footnote, are now stated in the text, and the column-aligned array is replaced by a standard multi-line display.]
Now fix arbitrary $S,S'\subseteq\m{T}$. Since $x_j$ is additive,
\[
\begin{aligned}
x_j(S\cup S')&=x_j(S\cap S')+x_j(S'\setminus S)+x_j(S\setminus S'),\\
x_j(S)&=x_j(S\cap S')+x_j(S\setminus S'),\\
x_j(S')&=x_j(S\cap S')+x_j(S'\setminus S).
\end{aligned}
\]
Substituting these expressions on the left side below and collecting the terms that multiply ${x_j(S\cap S')}$, ${x_j(S'\setminus S)}$, and ${x_j(S\setminus S')}$ gives
\begin{equation}\label{eq:decomposition}
\begin{split}
&x_j(S\cup S')\,f(S\cup S')+x_j(S\cap S')\,f(S\cap S')-x_j(S)\,f(S)-x_j(S')\,f(S')\\
&\qquad=x_j(S\cap S')\,\big[f(S\cup S')+f(S\cap S')-f(S)-f(S')\big]\\
&\qquad\quad+x_j(S'\setminus S)\,\big[f(S\cup S')-f(S')\big]\\
&\qquad\quad+x_j(S\setminus S')\,\big[f(S\cup S')-f(S)\big].
\end{split}
\end{equation}
Because $f$ is supermodular and  increasing, each of the three quantities in square brackets above is nonnegative, implying that the entire expression is nonnegative. Thus, the function $S \mapsto x_j(S)f(S)$ is supermodular, and $\Psi_j$ is submodular.

We now use strict supermodularity of $f$ and submodularity of $\Psi_j$ to prove the second claim. Let $S,S'\in\m{P}_j(\bn,\bx)$, which is equivalent to $\Psi_j(S)=\Psi_j(S')=0$. Since $\Psi_j$ is nonnegative by feasibility of $(\bn,\bx)$, submodularity of $\Psi_j$ gives
\[0=\Psi_j(S)+\Psi_j(S')\geq\Psi_j(S\cap S')+\Psi_j(S\cup S')\geq0.\]
Hence
\[\Psi_j(S\cap S')=\Psi_j(S\cup S')=0.\]
It follows that the expression in \eqref{eq:decomposition} must be equal to zero, since, by modularity of $S\mapsto\lambda_j\sum_{i\in S}n_i$, it equals $\Psi_j(S)+\Psi_j(S')-\Psi_j(S\cap S')-\Psi_j(S\cup S')$.

We now show that either $S \subseteq S'$ or $S' \subseteq S$. Seeking a contradiction, suppose that $S$ and $S'$ were incomparable. Strict supermodularity and strict monotonicity of $f$ would make all three bracketed quantities on the right side of \eqref{eq:decomposition} strictly positive. Because the expression in \eqref{eq:decomposition} equals zero, this implies that $x_j(S \cap S'), x_j(S' \setminus S)$, and $x_j(S \setminus S')$ must all be zero, and thus that $x_{j}(S \cup S') = 0$. But $\Psi_j(S \cup S') = 0$ implies 
\[x_j(S\cup S')f(S\cup S')=\lambda_j\sum_{i\in S\cup S'}n_i.\]
To complete our proof we establish that $n_i > 0$ for every $i\in\m{T}$, contradicting the conclusion that $x_{j}(S \cup S') = 0$.

Applying \eqref{eq:availability-alg} with $S=\{j : (j,i) \in \m{M}\} = \m{M}_i$ and using $\sum_{j\in S}n_j\leq\rho(S)$, which follows from \eqref{eq:balance-lp-glb}, gives
\[\sum_{j \in \m{M}_i}x_{ji}\leq\lambda_i\gamma_{\m{M}_i}\sum_{j\in\m{M}_i}n_j\leq\lambda_i(1-e^{-\rho(\m{M}_i)})<\lambda_i\]
(if $\m{M}_i=\emptyset$, the left side is zero).
Because \eqref{eq:balance-lp-glb} holds, we must have 

\[n_i \mu_i + \sum_{j : (i,j) \in \m{M}} x_{ij} > 0.\]
If $n_i = 0$, then applying \eqref{eq:availability-alg} with $S=\{i\}$ requires that $x_{ij}=0$ for every term in the sum. Thus, we must have $n_i > 0$.

% \rn{This pragraph is somewhat, not fully redundent; happy to explain}

We now turn to the final claim of \Cref{lemma:suitablestructure}, that \eqref{eq:suitability} is necessary and sufficient for $(\bn,\bx)$ to be suitable. The condition is necessary: if there were $j\in\m{T}$, $S\in\m{P}_j(\bn,\bx)$, and $i\in S$ with $x_{ij}=0$, then the definition of consistency would require both that $(i,j)$ matches are unacceptable (since $x_{ij}=0$) and that $S$ is a prefix of $\succ_j$ (so that $i$ is acceptable). To see that it is also sufficient, recall that the sets in $\m{P}_j(\bn,\bx)$ are nested. Therefore, for each $j\in\m{T}$, there is an order $\succ_j$ that has all of these sets as prefixes and ranks as acceptable only the types in their union. If \eqref{eq:suitability} holds, then every type $i$ that $\succ_j$ ranks as acceptable belongs to a set in $\m{P}_j(\bn,\bx)$, so $(i,j)\in\m{M}$ and $x_{ij}>0$. The resulting static priority policy is therefore consistent with $(\bn,\bx)$.
\end{proofof}

\polytime*
\begin{proof}
The program $\plpm$ has $|\m{T}|+|\m{M}|$ variables, the $|\m{T}|$ balance constraints \eqref{eq:balance-lp-glb}, nonnegativity constraints, and one availability constraint \eqref{eq:availability-alg} for every $j\in\m{T}$ and nonempty $S\subseteq\m{M}_j$; only the last family is exponentially large. We show how to construct an exact polynomial-time separation oracle for it. Given a point $(\hat\bn,\hat\bx)$, we first check the balance and nonnegativity constraints explicitly. If they hold, then for each $j\in\m{T}$ we consider the set function
\[
 \Psi_j(S)=\frac{1}{\gamma_S}\Big(\lambda_j\gamma_S\sum_{i\in S}\hat n_i-\sum_{i\in S}\hat x_{ij}\Big)
 =\lambda_j\sum_{i\in S}\hat n_i-\frac{1}{\gamma_S}\sum_{i\in S}\hat x_{ij},
 \qquad S\subseteq\m{M}_j,
\]
with $\Psi_j(\emptyset)=0$, which is submodular by \Cref{lemma:suitablestructure} (the submodularity argument there uses only $\hat\bx\ge0$). Since $\gamma_S>0$, the availability constraint of $(S,j)$ holds if and only if $\Psi_j(S)\ge0$, so all availability constraints of $j$ hold if and only if $\min_{S\subseteq\m{M}_j}\Psi_j(S)\ge0$. A submodular function can be minimized in strongly polynomial time \citep{schrijver2000combinatorial}, and each evaluation of $\Psi_j$ takes polynomial time. If the minimum is negative, the minimizing set $S$ identifies a violated constraint, whose defining inequality is a separating hyperplane; otherwise $(\hat\bn,\hat\bx)$ is feasible. By the equivalence of separation and optimization \citep[Theorem~6.4.9]{GrotschelLovaszSchrijver1988}, $\plpm$ can therefore be solved in polynomial time.
\end{proof}

\subsection{Omitted Proofs from \Cref{sec:values}}
\label{apx:omittedproofs-dual}
\subsubsection{Proofs from \Cref{sec:value-certificate}}
\label{apx:value-certificate-proofs}
\replacementhalf*
\begin{proof}
The dual of \(\offrbrel\) is
\begin{equation}\label{eq:replacement-omn-dual}
\begin{aligned}
 \textup{DLP}^{\textup{OMN}}
 \mathrel{\triangleq}\min_{\mathbf{v},\mathbf{z}\ge0}\quad
 &\sum_i\lambda_i v_i
   +\sum_{j\in\mathcal T}\sum_{\emptyset\ne S\subseteq\mathcal T}
           \lambda_j(1-e^{-\rho(S)})z_{Sj}\\
 \textrm{s.t.}\quad
 &v_i+v_j+\sum_{\emptyset\ne S\subseteq\mathcal T}z_{Sj}\mathbf1(i\in S)
       \ge r_{ij}\qquad\forall i,j\in\mathcal T.
\end{aligned}
\end{equation}
Extend the prefix weights by setting \(z_{Sj}=0\) for
\(S\notin\mathcal P_j(\pi)\). Constraint
\eqref{eq:replacement-reverse-reward} then makes \((\mathbf{v},\mathbf{z})\) feasible for
\(\textup{DLP}^{\textup{OMN}}\). By weak duality,
\begin{equation}\label{eq:replacement-offline-dual}
 \offrbrel
 \le\sum_i\lambda_i v_i
     +\sum_j\sum_{S\in\mathcal P_j(\pi)}\lambda_j(1-e^{-\rho(S)})z_{Sj}.
\end{equation}
Multiply \eqref{eq:replacement-reverse-balance} by \(\lambda_i/\mu_i\), sum over
all types, and use \(\gamma_S\rho(S)=1-e^{-\rho(S)}\):
\begin{align*}
 \sum_i\lambda_i v_i
 &\ge\sum_i\frac{\lambda_i}{\mu_i}
          \sum_j\sum_{S\in\mathcal P_j(\pi)}
               \lambda_j\gamma_S z_{Sj}\mathbf1(i\in S)\\
 &=\sum_j\sum_{S\in\mathcal P_j(\pi)}\lambda_j\gamma_S z_{Sj}
           \sum_{i\in S}\frac{\lambda_i}{\mu_i}\\
 &=\sum_j\sum_{S\in\mathcal P_j(\pi)}\lambda_j(1-e^{-\rho(S)})z_{Sj}.
\end{align*}
Substitution into \eqref{eq:replacement-offline-dual} gives
\( \offrbrel\le2\sum_i\lambda_i v_i\).
\end{proof}

\subsubsection{Proofs from \Cref{sec:fixed-point-dual}}
\label{apx:tight-value-proofs}
Before turning to the remaining proofs, we record a representation of the weights $Z^{\m{M}}_{Sj}(\mathbf{v})$ of \eqref{eq:replacement-Z} that is used repeatedly. Fix $\m{M}\subseteq\TT$, a vector $\mathbf{v}$, and an arriving type $j$, and for $y>0$ let
\begin{equation}\label{eq:score-prefix}
 S_j(y)\triangleq\{i\in\m{M}_j : u_{ij}(\mathbf{v})\ge y\}
\end{equation}
be the set of types whose residual score for an arriving type-$j$ agent is at least $y$; we call these sets the \emph{positive-score prefixes} of $j$. Since $S_j(y)=S$ exactly when $\max_{i\in\m{M}_j\setminus S}u_{ij}(\mathbf{v})<y\le\min_{i\in S}u_{ij}(\mathbf{v})$, the definition \eqref{eq:replacement-Z} gives
\begin{equation}\label{eq:Z-integral}
 Z^{\m{M}}_{Sj}(\mathbf{v})=\int_0^\infty \mathbf{1}(S=S_j(y))\,dy
 \qquad\text{for every nonempty } S\subseteq\m{M}_j.
\end{equation}
Thus $Z^{\m{M}}_{Sj}(\mathbf{v})>0$ only if $S$ is a positive-score prefix, i.e., $S=S_j(y)$ for some $y>0$; summing \eqref{eq:Z-integral} over the sets containing a type $i\in\m{M}_j$ gives the telescoping identity \eqref{eq:telescope-M}; and multiplying by $\gamma_S\mathbf{1}(i\in S)$ before summing gives, for every $i\in\m{M}_j$,
\begin{equation}\label{eq:gamma-integral}
 \sum_{\emptyset\ne S\subseteq\m{M}_j}\gamma_S Z^{\m{M}}_{Sj}(\mathbf{v})\,\mathbf{1}(i\in S)
 =\int_0^{u_{ij}(\mathbf{v})}\gamma_{S_j(y)}\,dy ,
\end{equation}
where the integrand is well defined because $i\in S_j(y)$ for $y\le u_{ij}(\mathbf{v})$. Finally, we recall that $\gamma_S=1/g(\rho(S))$, where $g(y)=y/(1-e^{-y})$ is the strictly increasing function from the proof of \Cref{lemma:suitablestructure}, and $\rho$ is a weakly increasing set function. As a result, $\gamma_S$ is a weakly decreasing set function, that is,  $\gamma_S\ge\gamma_{S'}$ whenever $\emptyset\ne S\subseteq S'$.

The proof of \Cref{prop:fixed-point} relies on two properties of $\mathbf{F}^{\m{M}}$, which we establish first. Throughout, we write $\rho_i=\lambda_i/\mu_i$, $\Lambda=\sum_{i\in\types}\lambda_i$, $\lambda_{\min}=\min_{i\in\types}\lambda_i$, and $\mu_{\min}=\min_{i\in\types}\mu_i$, and we use the weighted inner product and norm
\[
 \langle\mathbf{a},\mathbf{b}\rangle_{\boldsymbol{\lambda}}=\sum_{i\in\types}\lambda_ia_ib_i,
 \qquad
 \|\mathbf{a}\|_{\boldsymbol{\lambda}}=\langle\mathbf{a},\mathbf{a}\rangle_{\boldsymbol{\lambda}}^{1/2},
\]
which satisfy $\sqrt{\lambda_{\min}}\,\|\mathbf{a}\|_\infty\le\|\mathbf{a}\|_{\boldsymbol{\lambda}}\le\sqrt{\Lambda}\,\|\mathbf{a}\|_\infty$ for every $\mathbf{a}\in\reals^{\types}$.

\begin{lemma}\label{lem:F-one-sided}
Fix $\m{M}\subseteq\TT$, write $\mathbf{F}=\mathbf{F}^{\m{M}}$, and let $L=1+2\Lambda^2/(\mu_{\min}\lambda_{\min})$. For all $\mathbf{v},\mathbf{w}\in\reals^{\types}$,
\begin{enumerate}[(i)]
\item $\langle\mathbf{F}(\mathbf{v})-\mathbf{F}(\mathbf{w}),\,\mathbf{v}-\mathbf{w}\rangle_{\boldsymbol{\lambda}}\le\tfrac12\|\mathbf{v}-\mathbf{w}\|_{\boldsymbol{\lambda}}^2$;
\item $\|\mathbf{F}(\mathbf{v})-\mathbf{F}(\mathbf{w})\|_\infty\le(2\Lambda/\mu_{\min})\,\|\mathbf{v}-\mathbf{w}\|_\infty$;
\item $\|\mathbf{F}(\mathbf{v})-\mathbf{F}(\mathbf{w})\|_{\boldsymbol{\lambda}}\le L\,\|\mathbf{v}-\mathbf{w}\|_{\boldsymbol{\lambda}}$.
\end{enumerate}
\end{lemma}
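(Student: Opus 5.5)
The plan is to exploit the integral representation \eqref{eq:gamma-integral} and the fact that $\mathbf{F}=\mathbf{F}^{\m{M}}$ is continuous and piecewise affine, so it suffices to argue along the segment $\mathbf{v}_t=\mathbf{w}+t(\mathbf{v}-\mathbf{w})$. Write $\mathbf{d}=\mathbf{v}-\mathbf{w}$. On each of the finitely many pieces the map $t\mapsto\mathbf{F}(\mathbf{v}_t)$ is affine, and integrating its derivative over $[0,1]$ reduces (i) to the pointwise inequality
\[
\langle J\mathbf{d},\mathbf{d}\rangle_{\boldsymbol{\lambda}}\le\tfrac12\|\mathbf{d}\|^2_{\boldsymbol{\lambda}}
\]
for the Jacobian $J$ of $\mathbf{F}$ at a generic point. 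Since $\lambda_i/\mu_i=\rho_i$, this form splits over arriving types:
\[
\langle J\mathbf{d},\mathbf{d}\rangle_{\boldsymbol{\lambda}}=\sum_j\lambda_j Q_j,\qquad
Q_j=\sum_i\rho_i d_i\,\tfrac{d}{dt}G_{ij},\qquad
G_{ij}(\mathbf{v})=\int_0^{u_{ij}(\mathbf{v})}\gamma_{S_j(y)}\,dy .
\]
I will aim for the per-$j$ bound $Q_j\le\tfrac12 d_j^2$; summing it against $\lambda_j$ gives (i).

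To compute $Q_j$, fix $j$ and sort the types of $\m{M}_j$ with positive residual score as $u_{(1)}>\dots>u_{(m)}>0$, with prefixes $P_k$ and $u_{(m+1)}=0$. Abel summation gives
\[
G_{(k)j}=\gamma_{P_k}u_{(k)}+\sum_{l>k}\bigl(\gamma_{P_l}-\gamma_{P_{l-1}}\bigr)u_{(l)} .
\]
So the matrix $A=\partial G/\partial u$ is upper triangular, with diagonal entries $\gamma_{P_k}$ and off-diagonal entries $\gamma_{P_l}-\gamma_{P_{l-1}}\le0$; the sign uses the monotonicity of $\gamma$ recorded above. Put $a_l=d_{(l)}+d_j$, so that $\dot u_{(l)}=-a_l$, and substitute $d_i=a_i-d_j$:
\[
Q_j=-\sum_{i,l}\rho_i a_iA_{il}a_l+d_j\sum_l c_la_l,\qquad
c_l=\sum_{i}\rho_iA_{il}=\rho(P_l)\gamma_{P_l}-\rho(P_{l-1})\gamma_{P_{l-1}} .
\]
Here $c_l\ge0$ because $y\mapsto 1-e^{-y}$ is increasing, and the $c_l$ telescope to $1-e^{-\rho(P_m)}<1$. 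The elementary inequality $d_jy-\tfrac12y^2\le\tfrac12d_j^2$ then yields $Q_j\le\tfrac12 d_j^2$, provided we can prove
\[
\sum_{i,l}\rho_ia_iA_{il}a_l\ \ge\ \tfrac12\Bigl(\sum_l c_la_l\Bigr)^2 .
\]

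This quadratic inequality is the main obstacle. I would first try to express both sides through the single concave increasing function $\phi(y)=1-e^{-y}$, using $\rho(P_l)\gamma_{P_l}=\phi(\rho(P_l))$. Writing $R_l=\rho(P_l)$, the left side should rearrange into a sum of the form
\[
\sum_l\bigl(\gamma_{P_{l-1}}-\gamma_{P_l}\bigr)\Bigl(\textstyle\sum_{k\le l}\rho_{(k)}a_k\Bigr)\Bigl(\ldots\Bigr)+\sum_l \gamma_{P_l}\rho_{(l)}a_l^2,
\]
which is a combination of squares. It should then be compared with $\tfrac12\bigl(\sum_l(\phi(R_l)-\phi(R_{l-1}))a_l\bigr)^2$ by Cauchy--Schwarz, with weights $\rho_{(l)}$ and the bound $\sum_l c_l\le 1$. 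If the direct bookkeeping is messy, a fallback is to prove the inequality by induction on $m$, adding the lowest-scored type last. A second fallback is to verify the $2\times2$ case explicitly and then extend via the integral form $\int_0^\infty\gamma_{S_j(y)}(\cdot)\,dy$.

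For (ii), \eqref{eq:gamma-integral} gives $|G_{ij}(\mathbf{v})-G_{ij}(\mathbf{w})|\le\int|\gamma_{S^{\mathbf{v}}_j(y)}-\gamma_{S^{\mathbf{w}}_j(y)}|\,dy+|u_{ij}(\mathbf{v})-u_{ij}(\mathbf{w})|$. From the Abel form, the Jacobian rows have $\ell_1$ norm at most $\gamma_{P_k}+\sum_{l>k}(\gamma_{P_{l-1}}-\gamma_{P_l})\le 2\gamma_{P_k}\le1$ in $\mathbf{u}$. Each $u_{ij}$ is $2$-Lipschitz in $\mathbf{v}$ in $\|\cdot\|_\infty$, and $G_{ij}$ is $1$-Lipschitz in $\mathbf{u}$ in $\|\cdot\|_\infty$. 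Hence $|F_i(\mathbf{v})-F_i(\mathbf{w})|\le\mu_i^{-1}\sum_j\lambda_j\cdot2\|\mathbf{d}\|_\infty\le(2\Lambda/\mu_{\min})\|\mathbf{d}\|_\infty$. Part (iii) then follows from the stated norm equivalences: the distortion factor is $\sqrt{\Lambda/\lambda_{\min}}\le\Lambda/\lambda_{\min}$, so the constant is $2\Lambda^2/(\mu_{\min}\lambda_{\min})\le L$.
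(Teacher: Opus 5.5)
\textbf{Part (i): the key inequality is not proved.} Your segment reduction, the per-$j$ split, and the target $Q_j\le\tfrac12 d_j^2$ match the paper. Your algebra after that is also correct: with $a_l=d_{(l)}+d_j$ you do get $Q_j=-\sum_{k,l}\rho_{(k)}a_kA_{kl}a_l+d_j\sum_lc_la_l$, where $c_l=\rho(P_l)\gamma_{P_l}-\rho(P_{l-1})\gamma_{P_{l-1}}\ge0$ and $\sum_lc_l=1-e^{-\rho(P_m)}\le1$. The gap is that the quadratic inequality you then need is the entire content of (i), and you leave it unproved. The ``combination of squares'' is written with an unspecified factor. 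Cauchy--Schwarz with weights $\rho_{(l)}$, induction on $m$, and the $2\times2$ case are only candidate strategies. The missing step is short. Write $\beta_l=\gamma_{P_{l-1}}-\gamma_{P_l}\ge0$, so the form equals $\sum_k\rho_{(k)}\gamma_{P_k}a_k^2-\sum_{k<l}\rho_{(k)}\beta_la_ka_l$. Bound each cross term by $2a_ka_l\le a_k^2+a_l^2$, which is valid because its coefficient is nonpositive. After collecting terms, the coefficient of $a_k^2$ is $\tfrac12c_k+\tfrac12\rho_{(k)}\gamma_{P_m}\ge\tfrac12c_k$. Then apply Cauchy--Schwarz with weights $c_k$ (not $\rho_{(k)}$), together with $\sum_kc_k\le1$, to get $\tfrac12\sum_kc_ka_k^2\ge\tfrac12(\sum_kc_ka_k)^2$. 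The paper does the same AM--GM bookkeeping directly in the $\boldsymbol{\delta}$ variables, without isolating $d_j$. There, after multiplying by $\rho_{i_k}$ and summing, each $\delta_{i_\ell}^2$ has coefficient $e^{-\rho(S_\ell)}-e^{-\rho(S_{\ell-1})}\le0$ and $\delta_j^2$ has coefficient $1-e^{-\rho(S_m)}\le1$.

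\textbf{Part (ii): a false inequality.} Your argument uses $2\gamma_{P_k}\le1$, which fails whenever $\rho(P_k)$ is small, since $\gamma_S\to1$ as $\rho(S)\to0$. The row sum $2\gamma_{P_k}-\gamma_{P_m}$ of your Jacobian in $\mathbf{u}$ can approach $2$; take $m=2$, $\rho_{(1)}\to0$, $\rho_{(2)}\to\infty$. So $G_{ij}$ is only $2$-Lipschitz in $\mathbf{u}$. Composing with the $2$-Lipschitz map $\mathbf{v}\mapsto\mathbf{u}$ gives only $4\Lambda/\mu_{\min}$, which misses (ii) by a factor of two. Your route to (iii) would then give $4\Lambda^2/(\mu_{\min}\lambda_{\min})$, which need not be at most $L$. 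Factoring through $\mathbf{u}$ discards a cancellation: every $\dot u_{(l)}$ contains the same term $-d_j$, and these telescope. Differentiating in $\mathbf{v}$ directly gives $\dot G_{(k)j}=-\gamma_{P_k}d_{(k)}+\sum_{l>k}\beta_ld_{(l)}-\gamma_{P_m}d_j$. The absolute values of these coefficients sum to $2\gamma_{P_k}\le2$. This is the paper's bound, and with it your derivation of (iii) from the norm equivalences goes through.
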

\begin{proof}
Fix $\mathbf{v},\mathbf{w}\in\reals^{\types}$, let $\boldsymbol{\delta}=\mathbf{v}-\mathbf{w}$, and let $\mathbf{v}(t)=\mathbf{w}+t\boldsymbol{\delta}$ for $t\in[0,1]$. For each $j\in\types$ and $i\in\m{M}_j$, define
\[
 \Phi_{ij}(t)=\sum_{\emptyset\ne S\subseteq\m{M}_j}\gamma_S\,Z^{\m{M}}_{Sj}(\mathbf{v}(t))\,\mathbf{1}(i\in S),
 \qquad\text{so that}\qquad
 F_i(\mathbf{v}(t))=\frac{1}{\mu_i}\sum_{j:\,i\in\m{M}_j}\lambda_j\Phi_{ij}(t)
\]
by \eqref{eq:replacement-F}. Each $\Phi_{ij}$ is continuous, since each $Z^{\m{M}}_{Sj}$ is a continuous function of $\mathbf{v}$.

\emph{Structure of $\Phi_{ij}$.} Each residual score $u_{ij}(\mathbf{v}(t))$ is the positive part of an affine function of $t$. Hence, $[0,1]$ can be divided into finitely many intervals such that, in the interior of each interval and for each arriving type $j$, neither the set of types $i\in\m{M}_j$ with a positive score $u_{ij}(\mathbf{v}(t))$ nor the order of these scores changes. (It suffices to cut $[0,1]$ at the points where one of the affine functions $t\mapsto r_{ij}-v_i(t)-v_j(t)$ that is not identically zero vanishes, and at the points where two of these functions that do not coincide are equal.) Fix such an interval and an arriving type $j$. Writing $u_{ij}$ for $u_{ij}(\mathbf{v}(t))$, list the types with positive scores as $i_1,\ldots,i_m$, so that $u_{i_1j}\ge\cdots\ge u_{i_mj}>0$ throughout the interval, and let $S_k=\{i_1,\ldots,i_k\}$. Recall that a set carries positive weight only if its members have positive scores that are strictly larger than all scores outside it. Hence, by \eqref{eq:replacement-Z}, $Z^{\m{M}}_{S_kj}=u_{i_kj}-u_{i_{k+1}j}$ for $k=1,\ldots,m$, where $u_{i_{m+1}j}\triangleq 0$, and $Z^{\m{M}}_{Sj}=0$ for every other nonempty $S\subseteq\m{M}_j$. Consequently, $\Phi_{ij}=0$ on the interval for every type $i\in\m{M}_j$ with a zero score, while, for $k=1,\ldots,m$,
\[
 \Phi_{i_kj}
 =\sum_{\ell=k}^{m}\gamma_{S_\ell}\big(u_{i_\ell j}-u_{i_{\ell+1}j}\big)
 =\gamma_{S_k}u_{i_kj}-\sum_{\ell=k+1}^{m}\beta_\ell\,u_{i_\ell j},
 \qquad\text{where }\beta_\ell=\gamma_{S_{\ell-1}}-\gamma_{S_\ell}\ \ (\ell=2,\ldots,m).
\]
Here, $\beta_\ell\ge0$ because $\gamma$ is a decreasing set function. On the interval, $u_{i_\ell j}=r_{i_\ell j}-v_{i_\ell}(t)-v_j(t)$ has derivative $-(\delta_{i_\ell}+\delta_j)$, also when $i_\ell=j$. Differentiating and using $\sum_{\ell=k+1}^m\beta_\ell=\gamma_{S_k}-\gamma_{S_m}$, we obtain
\begin{equation}\label{eq:Phi-derivative}
 \Phi'_{i_kj}
 =-\gamma_{S_k}\delta_{i_k}+\sum_{\ell=k+1}^{m}\beta_\ell\,\delta_{i_\ell}-\gamma_{S_m}\delta_j ,
 \qquad k=1,\ldots,m.
\end{equation}
In particular, each $\Phi_{ij}$ is continuous and piecewise linear, so $\Phi_{ij}(1)-\Phi_{ij}(0)=\int_0^1\Phi'_{ij}(t)\,dt$.

\emph{Part (i).} By the formula for $F_i(\mathbf{v}(t))$,
\[
 \langle\mathbf{F}(\mathbf{v})-\mathbf{F}(\mathbf{w}),\boldsymbol{\delta}\rangle_{\boldsymbol{\lambda}}
 =\sum_{i\in\types}\lambda_i\delta_i\big(F_i(\mathbf{v}(1))-F_i(\mathbf{v}(0))\big)
 =\sum_{j\in\types}\lambda_j\int_0^1\sum_{i\in\m{M}_j}\rho_i\,\delta_i\,\Phi'_{ij}(t)\,dt .
\]
Since $\tfrac12\|\boldsymbol{\delta}\|_{\boldsymbol{\lambda}}^2=\sum_{j\in\types}\lambda_j\cdot\tfrac12\delta_j^2$, it suffices to show that $\sum_{i\in\m{M}_j}\rho_i\delta_i\Phi'_{ij}\le\tfrac12\delta_j^2$ for each $j$ in the interior of each interval above. Only the types $i_1,\ldots,i_m$ contribute to this sum, and if $m=0$, there is nothing to prove. Multiplying \eqref{eq:Phi-derivative} by $2\delta_{i_k}$ and applying $2ab\le a^2+b^2$ to the products $\delta_{i_k}\delta_{i_\ell}$ and $-\delta_{i_k}\delta_j$, whose coefficients $\beta_\ell$ and $\gamma_{S_m}$ are nonnegative, gives
\[
 2\delta_{i_k}\Phi'_{i_kj}
 \le-2\gamma_{S_k}\delta_{i_k}^2+\sum_{\ell=k+1}^{m}\beta_\ell\big(\delta_{i_k}^2+\delta_{i_\ell}^2\big)+\gamma_{S_m}\big(\delta_{i_k}^2+\delta_j^2\big)
 =-\gamma_{S_k}\delta_{i_k}^2+\sum_{\ell=k+1}^{m}\beta_\ell\,\delta_{i_\ell}^2+\gamma_{S_m}\delta_j^2 ,
\]
where the equality again uses $\sum_{\ell=k+1}^m\beta_\ell=\gamma_{S_k}-\gamma_{S_m}$. We now multiply by $\rho_{i_k}$ and sum over $k=1,\ldots,m$. Recall that $\rho(S)\gamma_S=1-e^{-\rho(S)}$. The total coefficient of $\delta_{i_1}^2$ is $-\rho_{i_1}\gamma_{S_1}=e^{-\rho(S_1)}-1\le0$, and, for $\ell=2,\ldots,m$, the total coefficient of $\delta_{i_\ell}^2$ is
\[
 -\rho_{i_\ell}\gamma_{S_\ell}+\beta_\ell\sum_{k=1}^{\ell-1}\rho_{i_k}
 =\rho(S_{\ell-1})\gamma_{S_{\ell-1}}-\rho(S_\ell)\gamma_{S_\ell}
 =e^{-\rho(S_\ell)}-e^{-\rho(S_{\ell-1})}\le0 .
\]
Dropping these nonpositive terms, which is valid also if one of the types $i_\ell$ is $j$ itself, yields
\[
 2\sum_{k=1}^{m}\rho_{i_k}\delta_{i_k}\Phi'_{i_kj}\le\rho(S_m)\gamma_{S_m}\delta_j^2=\big(1-e^{-\rho(S_m)}\big)\delta_j^2\le\delta_j^2 ,
\]
as needed.

\emph{Part (ii).} By \eqref{eq:Phi-derivative} and $\beta_\ell\ge0$, in the interior of each interval,
\[
 |\Phi'_{i_kj}|\le\Big(\gamma_{S_k}+\sum_{\ell=k+1}^{m}\beta_\ell+\gamma_{S_m}\Big)\|\boldsymbol{\delta}\|_\infty=2\gamma_{S_k}\|\boldsymbol{\delta}\|_\infty\le2\|\boldsymbol{\delta}\|_\infty ,
\]
while $\Phi'_{ij}=0$ for the types $i$ with a zero score. Hence, $|\Phi_{ij}(1)-\Phi_{ij}(0)|\le2\|\boldsymbol{\delta}\|_\infty$ for all $j$ and $i\in\m{M}_j$, and therefore
\[
 |F_i(\mathbf{v})-F_i(\mathbf{w})|\le\frac{1}{\mu_i}\sum_{j:\,i\in\m{M}_j}\lambda_j\cdot2\|\boldsymbol{\delta}\|_\infty\le\frac{2\Lambda}{\mu_{\min}}\|\boldsymbol{\delta}\|_\infty
 \qquad\forall i\in\types.
\]

\emph{Part (iii).} By the norm inequalities above and part (ii),
\[
 \|\mathbf{F}(\mathbf{v})-\mathbf{F}(\mathbf{w})\|_{\boldsymbol{\lambda}}
 \le\sqrt{\Lambda}\,\frac{2\Lambda}{\mu_{\min}}\|\boldsymbol{\delta}\|_\infty
 \le\frac{2\Lambda}{\mu_{\min}}\sqrt{\frac{\Lambda}{\lambda_{\min}}}\,\|\boldsymbol{\delta}\|_{\boldsymbol{\lambda}}
 \le\frac{2\Lambda^2}{\mu_{\min}\lambda_{\min}}\|\boldsymbol{\delta}\|_{\boldsymbol{\lambda}}
 \le L\,\|\boldsymbol{\delta}\|_{\boldsymbol{\lambda}} ,
\]
where the third inequality uses $\Lambda/\lambda_{\min}\ge1$.
\end{proof}

\begin{lemma}\label{lem:damped-contraction}
Fix $\m{M}\subseteq\TT$, let $L$ be as in \Cref{lem:F-one-sided}, let $\theta=1/(2L^2)$, and define the damped map $\mathbf{F}_\theta(\mathbf{v})=(1-\theta)\mathbf{v}+\theta\mathbf{F}^{\m{M}}(\mathbf{v})$. Then $\mathbf{F}_\theta(\mathbf{v})=\mathbf{v}$ if and only if $\mathbf{F}^{\m{M}}(\mathbf{v})=\mathbf{v}$. Moreover, for all $\mathbf{v},\mathbf{w}\in\reals^{\types}$,
\[
 \|\mathbf{F}_\theta(\mathbf{v})-\mathbf{F}_\theta(\mathbf{w})\|_{\boldsymbol{\lambda}}\le \kappa\,\|\mathbf{v}-\mathbf{w}\|_{\boldsymbol{\lambda}},
 \qquad\text{where } \kappa=1-\frac{1}{8L^2}<1 .
\]
\end{lemma}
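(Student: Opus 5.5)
The first claim is immediate. $\mathbf{F}_\theta(\mathbf{v})-\mathbf{v}=\theta(\mathbf{F}^{\m{M}}(\mathbf{v})-\mathbf{v})$ and $\theta>0$, so the fixed points of $\mathbf{F}_\theta$ and $\mathbf{F}^{\m{M}}$ coincide.

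For the contraction bound, fix $\mathbf{v},\mathbf{w}$ and set $\boldsymbol{\delta}=\mathbf{v}-\mathbf{w}$ and $\boldsymbol{\Delta}=\mathbf{F}^{\m{M}}(\mathbf{v})-\mathbf{F}^{\m{M}}(\mathbf{w})$. Then $\mathbf{F}_\theta(\mathbf{v})-\mathbf{F}_\theta(\mathbf{w})=(1-\theta)\boldsymbol{\delta}+\theta\boldsymbol{\Delta}$. Expand the square in the inner product $\langle\cdot,\cdot\rangle_{\boldsymbol{\lambda}}$:
\[
\|(1-\theta)\boldsymbol{\delta}+\theta\boldsymbol{\Delta}\|_{\boldsymbol{\lambda}}^2=(1-\theta)^2\|\boldsymbol{\delta}\|_{\boldsymbol{\lambda}}^2+2\theta(1-\theta)\langle\boldsymbol{\Delta},\boldsymbol{\delta}\rangle_{\boldsymbol{\lambda}}+\theta^2\|\boldsymbol{\Delta}\|_{\boldsymbol{\lambda}}^2 .
\]
Bound the cross term with the one-sided bound in part (i) of \Cref{lem:F-one-sided}. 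This is valid because $\theta\le1$, which holds since $L\ge1$, so the coefficient $2\theta(1-\theta)$ is nonnegative. Bound the last term with part (iii). The right side is then at most
\[
\big[(1-\theta)^2+\theta(1-\theta)+\theta^2L^2\big]\|\boldsymbol{\delta}\|_{\boldsymbol{\lambda}}^2=\big[1-\theta+\theta^2L^2\big]\|\boldsymbol{\delta}\|_{\boldsymbol{\lambda}}^2 .
\]
With $\theta=1/(2L^2)$ we have $\theta^2L^2=\theta/2$, so the bracket equals $1-\theta/2=1-1/(4L^2)$.

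It remains to take square roots. Since $\sqrt{1-x}\le1-x/2$ for $x\in[0,1]$, we get $\sqrt{1-1/(4L^2)}\le1-1/(8L^2)=\kappa$, which is less than $1$. This gives the claimed inequality.

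There is no real obstacle here: all the work sits in \Cref{lem:F-one-sided}. The only care needed is to check the sign of the coefficient $2\theta(1-\theta)$ before applying the one-sided bound, because that bound controls the inner product only from above.
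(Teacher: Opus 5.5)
Your proposal is correct and follows the paper's proof essentially line for line: the same fixed-point observation, the same expansion of the squared $\|\cdot\|_{\boldsymbol{\lambda}}$-norm with parts (i) and (iii) of \Cref{lem:F-one-sided} (after checking $1-\theta\ge0$), the same simplification to $1-1/(4L^2)$, and the same use of $\sqrt{1-x}\le 1-x/2$. Nothing is missing.
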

\begin{proof}
The first claim holds because $\mathbf{F}_\theta(\mathbf{v})-\mathbf{v}=\theta\big(\mathbf{F}^{\m{M}}(\mathbf{v})-\mathbf{v}\big)$ and $\theta>0$. For the second claim, write $\boldsymbol{\delta}=\mathbf{v}-\mathbf{w}$ and $\boldsymbol{\Delta}=\mathbf{F}^{\m{M}}(\mathbf{v})-\mathbf{F}^{\m{M}}(\mathbf{w})$. Since $L\ge1$, we have $\theta\le\half$, so $1-\theta\ge0$, and parts (i) and (iii) of \Cref{lem:F-one-sided} give
\[
 \|\mathbf{F}_\theta(\mathbf{v})-\mathbf{F}_\theta(\mathbf{w})\|_{\boldsymbol{\lambda}}^2
 =(1-\theta)^2\|\boldsymbol{\delta}\|_{\boldsymbol{\lambda}}^2+2\theta(1-\theta)\langle\boldsymbol{\Delta},\boldsymbol{\delta}\rangle_{\boldsymbol{\lambda}}+\theta^2\|\boldsymbol{\Delta}\|_{\boldsymbol{\lambda}}^2
 \le\big((1-\theta)^2+\theta(1-\theta)+\theta^2L^2\big)\|\boldsymbol{\delta}\|_{\boldsymbol{\lambda}}^2 .
\]
The factor in parentheses equals $1-\theta+\theta^2L^2=1-1/(4L^2)$, and the claim follows from $\sqrt{1-x}\le1-x/2$ for $x\in[0,1]$.
\end{proof}

\fixedpointunique*
\begin{proof} Fix $\mathcal{M}\subseteq \TT$, and write $\mathbf{F}=\mathbf{F}^{\m{M}}$. We prove the existence, uniqueness, and polynomial-time computation of the fixed point of $\mathbf{F}^\mathcal{M}$.

\medskip
\noindent\underline{\emph{Existence and uniqueness}}: By \Cref{lem:damped-contraction}, the damped map $\mathbf{F}_\theta$ is a contraction with respect to $\|\cdot\|_{\boldsymbol{\lambda}}$, and $\reals^{\types}$ is complete under this norm. Banach's fixed-point theorem~\citep{banach1922operations} therefore shows that $\mathbf{F}_\theta$ has a unique fixed point. Since $\mathbf{F}_\theta$ and $\mathbf{F}$ have the same fixed points (\Cref{lem:damped-contraction}), $\mathbf{F}$ has a unique fixed point, which we denote by $\mathbf{v}^{\m{M}}$. Moreover, $\mathbf{F}^{\m{M}}(\mathbf{v})\ge0$ for every $\mathbf{v}\in\reals^{\types}$, so $\mathbf{v}^{\m{M}}=\mathbf{F}(\mathbf{v}^{\m{M}})\ge0$.
% [Version 2: existence and uniqueness now both follow from the contraction property (Lemma lem:damped-contraction and Banach's theorem). The Brouwer argument for existence, which followed the bound below, and the direct uniqueness argument of the previous draft were removed.]

\medskip
\noindent\underline{\emph{Computation}}: We first locate the fixed point in a box. Since $0\le\gamma_S\le1$, the telescoping identity \eqref{eq:telescope-M} gives, for every $\mathbf{v}\ge0$,
\[
 F^{\m{M}}_i(\mathbf{v})\le\frac{1}{\mu_i}\sum_{j:\,i\in\m{M}_j}\lambda_j\sum_{\emptyset\ne S\subseteq\m{M}_j}Z^{\m{M}}_{Sj}(\mathbf{v})\,\mathbf{1}(i\in S)
 =\frac{1}{\mu_i}\sum_{j:\,i\in\m{M}_j}\lambda_ju_{ij}(\mathbf{v})
 \le C_i\triangleq\frac{1}{\mu_i}\sum_{j\in\m{T}}\lambda_jr_{ij}.
\]
Therefore $\mathbf{F}^{\m{M}}$ maps the nonempty compact convex set $\prod_{i\in\m{T}}[0,C_i]$ into itself, and so does $\mathbf{F}_\theta$, because $\mathbf{F}_\theta(\mathbf{v})$ is a convex combination of $\mathbf{v}$ and $\mathbf{F}(\mathbf{v})$. The same holds for the box $B=[0,U]^{\types}$ with $U=1+\max_{i\in\types}C_i$, which has nonempty interior, as the algorithm below requires. Let $D$ be the diagonal matrix with diagonal entries $\sqrt{\lambda_i}$. The change of variables $\mathbf{y}=D\mathbf{v}$ turns $\|\cdot\|_{\boldsymbol{\lambda}}$ into the Euclidean norm, so the map $\mathbf{y}\mapsto D\,\mathbf{F}_\theta(D^{-1}\mathbf{y})$ is a contraction with factor $\kappa$ (as in \Cref{lem:damped-contraction}) with respect to the Euclidean norm, and it maps the box $DB$ into itself. The algorithm of \citet[Theorem~3.4 with $p=2$]{daskalakis2026polynomial}, or the ellipsoid algorithm of \citet{sikorski1993ellipsoid}, therefore computes, for every $\eta>0$, a vector $\tilde{\mathbf{v}}\in B$ with $\|\mathbf{F}_\theta(\tilde{\mathbf{v}})-\tilde{\mathbf{v}}\|_{\boldsymbol{\lambda}}\le\eta$, using a number of arithmetic operations and evaluations of $\mathbf{F}$ that is polynomial in the input size and $\log(1/\eta)$. Since $\mathbf{F}_\theta(\mathbf{v}^{\m{M}})=\mathbf{v}^{\m{M}}$, the triangle inequality and \Cref{lem:damped-contraction} give
\[
 \|\tilde{\mathbf{v}}-\mathbf{v}^{\m{M}}\|_{\boldsymbol{\lambda}}
 \le\|\tilde{\mathbf{v}}-\mathbf{F}_\theta(\tilde{\mathbf{v}})\|_{\boldsymbol{\lambda}}+\|\mathbf{F}_\theta(\tilde{\mathbf{v}})-\mathbf{F}_\theta(\mathbf{v}^{\m{M}})\|_{\boldsymbol{\lambda}}
 \le\eta+\kappa\,\|\tilde{\mathbf{v}}-\mathbf{v}^{\m{M}}\|_{\boldsymbol{\lambda}} ,
\]
so $\|\tilde{\mathbf{v}}-\mathbf{v}^{\m{M}}\|_{\boldsymbol{\lambda}}\le\eta/(1-\kappa)=8L^2\eta$. Since $\|\mathbf{a}\|_\infty\le\|\mathbf{a}\|_{\boldsymbol{\lambda}}/\sqrt{\lambda_{\min}}$ for every $\mathbf{a}$, an appropriate choice of $\eta$ yields, for every $\varepsilon\in(0,1)$, a vector $\tilde{\mathbf{v}}$ with $\|\tilde{\mathbf{v}}-\mathbf{v}^{\m{M}}\|_\infty\le\varepsilon$ in time polynomial in the input size and $\log(1/\varepsilon)$.

It remains to show that $\mathbf{F}$ can be evaluated efficiently and that a sufficiently accurate approximation determines $\mathbf{v}^{\m{M}}$ exactly. For the first claim, fix $\mathbf{v}$ and an arriving type $j$, and sort the types $i\in\m{M}_j$ with positive scores as $u_{i_1j}(\mathbf{v})\ge\cdots\ge u_{i_mj}(\mathbf{v})>0$. As in the proof of \Cref{lem:F-one-sided}, the only sets that can carry positive weight are $S_k=\{i_1,\ldots,i_k\}$, $k=1,\ldots,m$, with $Z^{\m{M}}_{S_kj}(\mathbf{v})=u_{i_kj}(\mathbf{v})-u_{i_{k+1}j}(\mathbf{v})$, where $u_{i_{m+1}j}(\mathbf{v})\triangleq 0$. Hence, the contribution of arriving type $j$ to $F_{i_k}(\mathbf{v})$ is $(\lambda_j/\mu_{i_k})\sum_{\ell=k}^{m}\gamma_{S_\ell}Z^{\m{M}}_{S_\ell j}(\mathbf{v})$. All of these contributions can be computed from the sorted scores and the partial sums $\rho(S_\ell)$ using $O(|\types|\log|\types|)$ arithmetic operations and $O(|\types|)$ evaluations of the exponential function. Summing over arriving types, $\mathbf{F}(\mathbf{v})$ can be evaluated with $O(|\types|^2\log|\types|)$ operations.

\medskip
\noindent\underline{\emph{Exact computation}}: The formula above shows that $\mathbf{F}$ is piecewise affine. Consider the finitely many affine functions $\mathbf{v}\mapsto r_{ij}-v_i-v_j$ for $(i,j)\in\m{M}$, together with their differences $\mathbf{v}\mapsto(r_{ij}-v_i-v_j)-(r_{i'j}-v_{i'}-v_j)$ for distinct $i,i'\in\m{M}_j$. On each connected component $Q$ of the set of vectors $\mathbf{v}\in\reals^{\types}$ at which none of these functions vanishes, the set of positive scores and their order are fixed for every arriving type. Hence, $\mathbf{F}(\mathbf{v})=A_Q\mathbf{v}+\mathbf{b}_Q$ for all $\mathbf{v}\in Q$, where the entries of the matrix $A_Q$ and of the vector $\mathbf{b}_Q$ are formed from $\boldsymbol{\lambda}$, $\boldsymbol{\mu}$, $\mathbf{r}$, and at most $|\types|^2$ of the constants $\gamma_S$. Since there are finitely many components and their union is dense in $\reals^{\types}$, we can choose a component $Q$ whose closure contains $\mathbf{v}^{\m{M}}$, and since $\mathbf{F}$ is continuous (\Cref{lem:F-one-sided}), $\mathbf{v}^{\m{M}}=\mathbf{F}(\mathbf{v}^{\m{M}})=A_Q\mathbf{v}^{\m{M}}+\mathbf{b}_Q$. Moreover, $I-A_Q$ is invertible. Indeed, on the open set $Q$, the map $\mathbf{F}_\theta$ coincides with the affine map $\mathbf{v}\mapsto\big((1-\theta)I+\theta A_Q\big)\mathbf{v}+\theta\mathbf{b}_Q$, and since $\mathbf{F}_\theta$ is a contraction with factor $\kappa$, the matrix $(1-\theta)I+\theta A_Q$ has operator norm at most $\kappa<1$ with respect to $\|\cdot\|_{\boldsymbol{\lambda}}$. Hence $\theta(I-A_Q)=I-\big((1-\theta)I+\theta A_Q\big)$ is invertible.

Thus, $\mathbf{v}^{\m{M}}$ is the unique solution of the linear system $(I-A_Q)\mathbf{v}=\mathbf{b}_Q$. By standard bounds on the encoding length of solutions of linear systems (see, e.g., \citealp{GrotschelLovaszSchrijver1988}), every coordinate of $\mathbf{v}^{\m{M}}$ is therefore a rational number with denominator at most $2^{K}$, for some $K$ that is polynomial in the input size and does not depend on $Q$; here we use the convention of footnote~\ref{fn:arithmetic} on the constants $\gamma_S$. Two distinct rational numbers with denominators at most $2^K$ differ by at least $2^{-2K}$. Hence, it suffices to compute $\tilde{\mathbf{v}}$ as above with $\|\tilde{\mathbf{v}}-\mathbf{v}^{\m{M}}\|_\infty\le2^{-2K-2}$, and to round each coordinate of $\tilde{\mathbf{v}}$ to the unique rational number with denominator at most $2^K$ within distance $2^{-2K-2}$, which the continued fraction method does in polynomial time (see, e.g., \citealp{GrotschelLovaszSchrijver1988}). This recovers $\mathbf{v}^{\m{M}}$ exactly, in time polynomial in the input size.
\end{proof}

\equivalence*

\begin{proof}
Suppose that \eqref{eq:replacement-joint-reward}--\eqref{eq:replacement-joint-z} hold. Fix an arriving type $j$, list its acceptable partner types as
$i_1\succ_j\cdots\succ_j i_m$, and write $P_k=\{i_1,\ldots,i_k\}$.
First consider the case $m\ge1$. By \eqref{eq:replacement-joint-reward},
\begin{equation}\label{eq:replacement-cumulative}
 q_k\triangleq r_{i_kj}-v_{i_k}-v_j
 =\sum_{\ell=k}^m z_{P_\ell j}\ge0,
 \qquad k=1,\ldots,m.
\end{equation}
Set $q_{m+1}=0$. Nonnegativity of the weights implies
\begin{equation}\label{eq:replacement-differences}
 q_1\ge\cdots\ge q_m\ge0,
 \qquad z_{P_kj}=q_k-q_{k+1}.
\end{equation}
Thus, acceptable matches have nonnegative scores and are ordered by nonincreasing score. Moreover, \eqref{eq:replacement-joint-rejected} implies that every positive-score match is acceptable. Any additional acceptable matches have score zero and appear at the end of the list.

For each prefix $P_k$, the minimum residual score $u_{ij}(\mathbf{v})$ inside the set is $q_k$, and the maximum outside is $q_{k+1}$, since all rejected matches have zero residual score. Here, we use the convention that the maximum and minimum over an empty set are zero. Therefore, \eqref{eq:replacement-Z} and \eqref{eq:replacement-differences} give
\[
 Z_{P_kj}(\mathbf{v})=q_k-q_{k+1}=z_{P_kj}.
\]
If $Z_{Sj}(\mathbf{v})>0$, every type in $S$ has a positive residual score strictly larger than every residual score outside $S$. Such a set must be a prefix of the acceptable list, so $Z_{Sj}(\mathbf{v})=0$ for every nonprefix set.

If $m=0$, \eqref{eq:replacement-joint-rejected} implies that $u_{ij}(\mathbf{v})=0$ for every type $i$, and hence $Z_{Sj}(\mathbf{v})=0$ for every nonempty set $S$. Thus, in either case, extending the prefix weights by zero gives
\[
 z_{Sj}=Z_{Sj}(\mathbf{v})
 \qquad\forall j\in\mathcal T,\ \emptyset\ne S\subseteq\mathcal T.
\]
The preceding argument also shows that $\pi$ is consistent with $\mathbf{v}$. Substituting these weights into \eqref{eq:replacement-joint-balance} gives $v_i=F_i(\mathbf{v})$ for every type $i$.

Conversely, suppose that $\mathbf{v}\ge0$ is a fixed point, that is, it satisfies $\mathbf{F}(\mathbf{v})=\mathbf{v}$, and let $\pi$ be any static priority policy consistent with $\mathbf{v}$. By \Cref{defn:dual-consistent-policy}, $\pi$ accepts every match with positive score $r_{ij}-v_i-v_j$ and ranks these matches by nonincreasing score, with ties broken arbitrarily. It may also accept matches with score exactly zero, which it ranks below all positive-score matches. Fix an arriving type $j$, list its distinct positive residual scores $u_{ij}(\mathbf{v})=(r_{ij}-v_i-v_j)^+$ as $c_1>\cdots>c_d>0$, set $c_{d+1}=0$, and define
\[
 S_k=\{i\in\mathcal T:u_{ij}(\mathbf{v})\ge c_k\},
 \qquad k=1,\ldots,d.
\]
The only positive weights in \eqref{eq:replacement-Z} (after setting $\mathcal{M}=\TT$) are
$Z_{S_kj}(\mathbf{v})=c_k-c_{k+1}$, and each $S_k$ is a prefix of $\pi$.
Summing these successive differences gives
\begin{equation}\label{eq:replacement-telescope}
 \sum_{\emptyset\ne S\subseteq\mathcal T}
 Z_{Sj}(\mathbf{v})\mathbf{1}(i\in S)
 =u_{ij}(\mathbf{v})
 \qquad\forall i,j\in\mathcal T.
\end{equation}
If there are no positive residual scores, all the weights are zero, and the same identity holds.

Set $z_{Sj}=Z_{Sj}(\mathbf{v})$ for every prefix $S\in\mathcal P_j(\pi)$. Since acceptable matches have nonnegative scores, \eqref{eq:replacement-telescope} establishes \eqref{eq:replacement-joint-reward}, while the consistency of $\pi$ gives \eqref{eq:replacement-joint-rejected}. Finally, $\mathbf{F}(\mathbf{v})=\mathbf{v}$ gives \eqref{eq:replacement-joint-balance}, and the required nonnegativity conditions hold by assumption and construction.
\end{proof}

\subsubsection{Proofs from \Cref{sec:restricted-values}}
\label{apx:proofs-restricted}
The following lemma is used in the proof of \Cref{prop:restricted-certificate}.

\begin{lemma}
\label{lem:prefix-of-consistent}
Let $\pi=\{\porder{j}\}_{j\in\types}$ be consistent with $\m{M}$ and $\mathbf{v}$, and fix $j\in\m{T}$. If a nonempty set $S\subseteq\m{M}_j$ satisfies $Z^{\m{M}}_{Sj}(\mathbf{v})>0$, then $S\in\prefixes{\porder{j}}$.
\end{lemma}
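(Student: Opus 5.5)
The plan is to unpack the definition of $Z^{\m{M}}_{Sj}(\mathbf{v})$ in \eqref{eq:replacement-Z} and combine it with the three conditions of \Cref{defn:dual-consistent-policy}. Fix $j$ and a nonempty $S\subseteq\m{M}_j$ with $Z^{\m{M}}_{Sj}(\mathbf{v})>0$. Write $a=\min_{i\in S}u_{ij}(\mathbf{v})$ and $b=\max_{i\in\m{M}_j\setminus S}u_{ij}(\mathbf{v})$, with $b=0$ if $\m{M}_j\setminus S=\emptyset$. Then $Z^{\m{M}}_{Sj}(\mathbf{v})>0$ means $a>b\ge0$. This has two consequences. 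First, every $i\in S$ has a strictly positive score $r_{ij}-v_i-v_j=u_{ij}(\mathbf{v})\ge a>0$. Second, every $i'\in\m{M}_j\setminus S$ has residual score at most $b<a$.

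The first step uses the second bullet of \Cref{defn:dual-consistent-policy}. Since every $i\in S$ satisfies $(i,j)\in\m{M}$ and has a positive score, every type in $S$ is acceptable, so $i\porder{j}\emptyset$.

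The second step shows that every acceptable type outside $S$ ranks below every type in $S$. Take an acceptable $i'\notin S$. By the first bullet, $(i',j)\in\m{M}$ and its score is nonnegative, so $i'\in\m{M}_j\setminus S$ and its score equals $u_{i'j}(\mathbf{v})\le b<a\le u_{ij}(\mathbf{v})$ for each $i\in S$. Since both scores are nonnegative and the inequality is strict, the third bullet gives $i\porder{j}i'$.

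Finally, let $i^\ast$ be the $\porder{j}$-lowest element of $S$. It is acceptable by the first step, and $\{i'\in\types:i'\succeq_j i^\ast\}$ consists of acceptable types. By the second step, no acceptable type outside $S$ lies weakly above $i^\ast$, and every element of $S$ does by the choice of $i^\ast$. So this set equals $S$, and $S\in\prefixes{\porder{j}}$ by \Cref{def:prefix}.

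The only delicate point is the strictness needed for the third bullet. It requires strictly positive residual scores inside $S$ and a strict gap to scores outside $S$; both are exactly what $Z^{\m{M}}_{Sj}(\mathbf{v})>0$ provides. One also has to take care that zero-score acceptable matches outside $S$ are handled, which the bound $b\ge0$ together with $a>b$ does.
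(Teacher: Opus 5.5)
Your proof is correct and follows the paper's argument essentially step by step: positive scores in $S$ make its members acceptable by the second condition, the first and third conditions rank every acceptable type outside $S$ below every type in $S$, and the lowest-ranked element of $S$ then shows that $S$ is a prefix. The only cosmetic difference is that the paper obtains a separating threshold $y>0$ from the integral representation \eqref{eq:Z-integral}, whereas you read the strict gap $a>b\ge 0$ directly off the definition \eqref{eq:replacement-Z}; this amounts to taking $y=a$.
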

\begin{proof}
By \eqref{eq:Z-integral}, there exists $y>0$ such that
\[
 S=\{i\in\m{M}_j:u_{ij}(\mathbf{v})\ge y\}.
\]
Thus, $u_{ij}(\mathbf{v})\ge y>0$ for every $i\in S$, while
$u_{i'j}(\mathbf{v})<y$ for every $i'\in\m{M}_j\setminus S$.

Fix $i\in S$. Since $(i,j)\in\m{M}$ and
$r_{ij}-v_i-v_j=u_{ij}(\mathbf{v})>0$, the second condition of
\Cref{defn:dual-consistent-policy} gives $i\porder{j}\emptyset$.
Next, let $i'\notin S$ be acceptable to $j$, so that
$i'\porder{j}\emptyset$. By the first condition of
\Cref{defn:dual-consistent-policy}, $(i',j)\in\m{M}$ and
$r_{i'j}-v_{i'}-v_j\ge0$. Hence, $i'\in\m{M}_j\setminus S$ and
\[
 r_{i'j}-v_{i'}-v_j
 =u_{i'j}(\mathbf{v})
 <y
 \le u_{ij}(\mathbf{v})
 =r_{ij}-v_i-v_j.
\]
The third condition of \Cref{defn:dual-consistent-policy} therefore gives
$i\porder{j}i'$. Thus, every type in $S$ is acceptable and ranks above
every acceptable type outside $S$. Letting $i_0$ be the lowest-ranked
type in $S$, we conclude that
\[
 S=\{i'\in\types:i'\succeq_j i_0\},
\]
which is a prefix of $\porder{j}$ by \Cref{def:prefix}.
\end{proof}

\restrictedcertificate*

\begin{proof}
Write $\mathbf{v}=\mathbf{v}^{\m{M}}$, so that $\mathbf{v}\ge0$ and
$\mathbf{F}^{\m{M}}(\mathbf{v})=\mathbf{v}$ by
\Cref{prop:fixed-point}. Recall that $\m{P}_j(\pi)$ denotes
the collection of nonempty prefixes of $\porder{j}$. By the first
condition of \Cref{defn:dual-consistent-policy}, every such prefix is
a subset of $\m{M}_j$. Define
\[
 z_{Sj}\triangleq Z^{\m{M}}_{Sj}(\mathbf{v})
 \qquad\forall j\in\m{T},\ S\in\m{P}_j(\pi).
\]
By \Cref{lem:prefix-of-consistent}, every nonempty set
$S\subseteq\m{M}_j$ with $Z^{\m{M}}_{Sj}(\mathbf{v})>0$ is a prefix
of $\porder{j}$. Thus, restricting the sums to prefixes omits only
zero weights. In particular, \eqref{eq:telescope-M} gives
\begin{equation}\label{eq:z-on-prefixes}
\begin{aligned}
 \sum_{S\in\m{P}_j(\pi)}z_{Sj}\,\mathbf{1}(i\in S)
 &=\sum_{\emptyset\ne S\subseteq\m{M}_j}
     Z^{\m{M}}_{Sj}(\mathbf{v})\,\mathbf{1}(i\in S)\\
 &=u_{ij}(\mathbf{v})
 \qquad\forall i\in\m{M}_j.
\end{aligned}
\end{equation}

We claim that $(\mathbf{v},\mathbf{z})$ is feasible for
$\textup{DLP}^{\textup{LB}}(\pi)$ in \Cref{sec:value-certificate-1}. Nonnegativity holds by assumption and
construction. To verify the constraints
\eqref{eq:replacement-reward}, fix $(i,j)\in\m{M}(\pi)$. The first
condition of \Cref{defn:dual-consistent-policy} gives
$(i,j)\in\m{M}$ and $r_{ij}-v_i-v_j\ge0$, so
$u_{ij}(\mathbf{v})=r_{ij}-v_i-v_j$. Therefore,
\eqref{eq:z-on-prefixes} implies
\[
 v_i+v_j+\sum_{S\in\m{P}_j(\pi)}z_{Sj}\,\mathbf{1}(i\in S)
 =r_{ij},
\]
which establishes \eqref{eq:replacement-reward} with equality.

To verify the constraints \eqref{eq:replacement-balance},
fix $i\in\m{T}$. Using the fixed-point equation, the definition
\eqref{eq:replacement-F}, and the fact that all positive weights
are supported on prefixes, we obtain
\begin{align*}
 \mu_i v_i
 &=\mu_i F^{\m{M}}_i(\mathbf{v})\\
 &=\sum_{j\in\m{T}}\lambda_j
   \sum_{\emptyset\ne S\subseteq\m{M}_j}
   \gamma_S Z^{\m{M}}_{Sj}(\mathbf{v})\,\mathbf{1}(i\in S)\\
 &=\sum_{j\in\m{T}}\sum_{S\in\m{P}_j(\pi)}
   \lambda_j\gamma_S z_{Sj}\,\mathbf{1}(i\in S).
\end{align*}
Thus, \eqref{eq:replacement-balance} also holds with equality.
The pair $(\mathbf{v},\mathbf{z})$ is therefore feasible, and
\Cref{prop:replacement-certificate} gives
$V(\pi)\ge\sum_{i\in\m{T}}\lambda_i v_i$, as claimed.
\end{proof}

% [Version 5: moved here from the former Appendix B.4.1. The lemma now states only the bound, and the dual feasibility of the restricted fixed point is shown in the proof.]
\fixedpointdualfeasible*
\begin{proof}
We use the dual $\dlpm$ of $\plpm$, stated in \Cref{sec:suitable-dual}. Write $\mathbf{v}=\mathbf{v}^{\m{M}}$, which is nonnegative by \Cref{prop:fixed-point}, and
$\mathbf{z}=\mathbf{Z}^{\m{M}}(\mathbf{v})\ge0$. By the telescoping identity
\eqref{eq:telescope-M}, the left side of
\eqref{eq:dual-alg-reward} equals
\[
 v_i+v_j+u_{ij}(\mathbf{v})
 \ge v_i+v_j+(r_{ij}-v_i-v_j)
 =r_{ij}
 \qquad\forall (i,j)\in\m{M}.
\]
Thus, the constraints \eqref{eq:dual-alg-reward} hold. Moreover, by the definition
\eqref{eq:replacement-F} of $\mathbf{F}^{\m{M}}$ and the fixed-point
equation, the right side of \eqref{eq:dual-alg-balance} equals
$\mu_i F^{\m{M}}_i(\mathbf{v})=\mu_i v_i$ for every $i\in\m{T}$,
so the constraints \eqref{eq:dual-alg-balance} hold with equality. Hence,
$(\mathbf{v},\mathbf{z})$ is feasible for $\dlpm$, and weak duality gives $\sum_{i\in\m{T}}\lambda_i v_i\ge\plpm$, which finishes the proof of the lemma.
\end{proof}

\medskip
\noindent{\textbf{Failure of direct application of \Cref{prop:replacement-half}  in the restricted case.}} A direct application of \Cref{prop:replacement-half}, following the approach in \Cref{sec:value-certificate}, requires the reversed constraints \eqref{eq:replacement-reverse-reward} to hold for \emph{all} pairs of types. For pairs in $\m{M}$, they follow from \eqref{eq:telescope-M}. However, if $(i,j)\notin\m{M}$, no prefix of $\pi$ for arrivals of type $j$ contains $i$, so the corresponding constraint reduces to
\[
 v^{\m{M}}_i+v^{\m{M}}_j\ge r_{ij}.
\]
Thus, to apply \Cref{prop:replacement-half} with the restricted fixed point $\mathbf{v}^{\m{M}}$ and its associated prefix weights, every excluded match must have a nonpositive score under $\mathbf{v}^{\m{M}}$.
Under this condition, however, the restriction does not change the fixed point:
\begin{equation}\label{eq:equivalence-fixed}
 \mathbf{v}^{\m{M}}=\mathbf{v}^{\TT}.
\end{equation}
To see this, note that all excluded matches have nonpositive scores at $\mathbf{v}^{\m{M}}$, so $u_{ij}(\mathbf{v}^{\m{M}})=0$ for every $(i,j)\notin\m{M}$. Removing them therefore changes neither the sets carrying positive weight nor their weights, so the restricted and unrestricted maps agree at this vector:
\[
 \mathbf{F}(\mathbf{v}^{\m{M}})
 =\mathbf{F}^{\m{M}}(\mathbf{v}^{\m{M}})
 =\mathbf{v}^{\m{M}}.
\]
Uniqueness of the unrestricted fixed point, established in \Cref{prop:fixed-point}, then gives \eqref{eq:equivalence-fixed}. Moreover, if both constructions reject zero-score matches and use the same tie-breaking rule among equal positive scores, the resulting value-based static priority policies make the same matching decisions.

\subsection{Omitted Proofs from \Cref{sec:connections}}
\label{apx:omittedproofs-connections}

Throughout this subsection, we fix $\m{M}\subseteq\TT$ and use the notation of the previous subsection. We also use the following consequences of complementary slackness for an optimal primal solution $(\bn^*,\bx^*)$ to $\plpm$ and an optimal dual solution $(\mathbf{v}^*,\mathbf{z}^*)$ to $\dlpm$. If $z^*_{Sj}>0$, then the availability constraint \eqref{eq:availability-alg} indexed by $(S,j)$ is tight at $(\bn^*,\bx^*)$, that is, $S\in\m{P}_j(\bn^*,\bx^*)$. Similarly, if $x^*_{ij}>0$, then the constraint \eqref{eq:dual-alg-reward} indexed by $(i,j)$ is tight at $(\mathbf{v}^*,\mathbf{z}^*)$. Finally, since $n^*_i>0$ for every $i\in\m{T}$ (see the proof of \Cref{lemma:suitablestructure}), the  constraints \eqref{eq:dual-alg-balance} hold with equality at every optimal solution to $\dlpm$.

% [Version 5: the proof of Lemma 6.1 (now Lemma 5.7) was moved to Appendix B.3.3.]

\subsubsection{Proofs from \Cref{sec:suitable-dual}}
\label{apx:proofs-dual-suitable}

\dualsuitability*
\begin{proof}
\emph{(a) $\Rightarrow$ (b).} Let $(\bn^*,\bx^*)$ be a suitable optimal solution to $\plpm$, and let $(\mathbf{v}^*,\mathbf{z}^*)$ be any optimal solution to $\dlpm$. Fix $j\in\m{T}$. For each $i\in\m{M}_j$, write $w_i=r_{ij}-v_i^*-v_j^*$ for its score, and let
\[
 \m{Z}_j\triangleq\{\emptyset\ne S\subseteq\m{M}_j:z^*_{Sj}>0\}.
\]
By complementary slackness, every set in $\m{Z}_j$ has a tight availability constraint, so
$\m{Z}_j\subseteq\m{P}_j(\bn^*,\bx^*)$. Suitability implies that $x^*_{ij}>0$ for every type $i$ contained in one of these sets. Applying complementary slackness again gives
\begin{equation}\label{eq:score-is-weight}
 w_i=\sum_{S'\ni i}z^*_{S'j}
 \qquad\forall i\in\bigcup_{S\in\m{Z}_j}S,
\end{equation}
where the sum ranges over subsets of $\m{M}_j$ containing $i$. Moreover, by \Cref{lemma:suitablestructure}, the tight availability sets are nested, and hence so are the sets in $\m{Z}_j$.

If $\m{Z}_j=\emptyset$, the reward constraints \eqref{eq:dual-alg-reward} imply that $w_i\le0$ for every $i\in\m{M}_j$. All residual scores are therefore zero, so
$z^*_{Sj}=Z^{\m{M}}_{Sj}(\mathbf{v}^*)=0$ for every nonempty $S\subseteq\m{M}_j$. We may thus assume that $\m{Z}_j\ne\emptyset$.

Fix $S_1\in\m{Z}_j$. For any $i\in S_1$ and $i'\in\m{M}_j\setminus S_1$, every positive-weight set containing $i'$ must also contain $i$. Indeed, if $S\in\m{Z}_j$ contains $i'$, then $S\not\subseteq S_1$, so nestedness implies $S\supsetneq S_1$. Since $S_1$ contains $i$ but not $i'$, we obtain
\begin{align*}
 w_i
 =\sum_{S\ni i}z^*_{Sj}
 \ge z^*_{S_1j}+\sum_{S\ni i'}z^*_{Sj}
>\sum_{S\ni i'}z^*_{Sj}
 \ge w_{i'}.
\end{align*}
The equality follows from \eqref{eq:score-is-weight}, and the last inequality follows from the reward constraint \eqref{eq:dual-alg-reward} for $(i',j)$. Moreover, \eqref{eq:score-is-weight} gives $w_i\ge z^*_{S_1j}>0$ for every $i\in S_1$. Thus, every type in $S_1$ has a positive score strictly larger than the score of every type outside $S_1$. Equivalently,
\[
 S_1=\{i\in\m{M}_j:w_i\ge c\},
 \qquad c\triangleq\min_{i\in S_1}w_i>0.
\]

Let $c_1>\cdots>c_K>0$ be the distinct positive scores among
$\{w_i\}_{i\in\m{M}_j}$, set $c_{K+1}=0$, and define
\[
 P_k\triangleq\{i\in\m{M}_j:w_i\ge c_k\},
 \qquad k=1,\ldots,K.
\]
We have shown that $\m{Z}_j\subseteq\{P_1,\ldots,P_K\}$. For each $k$, choose a type $i$ with $w_i=c_k$. Its reward constraint gives
$\sum_{S\ni i}z^*_{Sj}\ge c_k>0$, so $i$ belongs to a positive-weight set and \eqref{eq:score-is-weight} applies. Since $i\in P_\ell$ if and only if $\ell\ge k$,
\[
 c_k=w_i=\sum_{S\ni i}z^*_{Sj}
 =\sum_{\ell=k}^{K}z^*_{P_\ell j}
 \qquad k=1,\ldots,K.
\]
Taking successive differences yields
$z^*_{P_kj}=c_k-c_{k+1}$ for $k=1,\ldots,K$, while every other weight is zero. Since $u_{ij}(\mathbf{v}^*)=w_i^+$, the definition \eqref{eq:replacement-Z} gives exactly the same weights:
\[
 Z^{\m{M}}_{P_kj}(\mathbf{v}^*)=c_k-c_{k+1}
 \qquad k=1,\ldots,K,
\]
with $Z^{\m{M}}_{Sj}(\mathbf{v}^*)=0$ for every other nonempty set $S$. Hence,
$z^*_{Sj}=Z^{\m{M}}_{Sj}(\mathbf{v}^*)$ for every nonempty $S\subseteq\m{M}_j$. Since $j$ was arbitrary, we conclude that
$\mathbf{z}^*=\mathbf{Z}^{\m{M}}(\mathbf{v}^*)$.

\medskip\noindent
\emph{(a) $\Leftarrow$ (b).} By the strict complementary slackness theorem of \citet{goldman1956theory}, there exist optimal solutions $(\bn^*,\bx^*)$ to $\plpm$ and $(\mathbf{v}^*,\mathbf{z}^*)$ to $\dlpm$ that are strictly complementary. In particular, for every $(i,j)\in\m{M}$, we have $x^*_{ij}>0$ if and only if the corresponding reward constraint is tight at $(\mathbf{v}^*,\mathbf{z}^*)$. Similarly, for every $j$ and nonempty $S\subseteq\m{M}_j$, we have $z^*_{Sj}>0$ if and only if $S\in\m{P}_j(\bn^*,\bx^*)$.

By (b), $\mathbf{z}^*=\mathbf{Z}^{\m{M}}(\mathbf{v}^*)$. Fix $j\in\m{T}$ and $S\in\m{P}_j(\bn^*,\bx^*)$. Strict complementarity gives
\[
 z^*_{Sj}=Z^{\m{M}}_{Sj}(\mathbf{v}^*)>0.
\]
By \eqref{eq:replacement-Z}, every $i\in S$ therefore has a positive residual score, so
$u_{ij}(\mathbf{v}^*)=r_{ij}-v_i^*-v_j^*>0$. The telescoping identity \eqref{eq:telescope-M} then gives
\[
 \sum_{S'\ni i}z^*_{S'j}
 =\sum_{S'\ni i}Z^{\m{M}}_{S'j}(\mathbf{v}^*)
 =u_{ij}(\mathbf{v}^*)
 =r_{ij}-v_i^*-v_j^*.
\]
Thus, the reward constraint for $(i,j)$ is tight, and strict complementarity implies $x^*_{ij}>0$. This holds for every member of every tight availability set, establishing \eqref{eq:suitability}. Hence, $(\bn^*,\bx^*)$ is suitable by \Cref{lemma:suitablestructure}.

\medskip\noindent
\emph{Uniqueness.} Suppose that (a) and (b) hold, and let $(\mathbf{v}^*,\mathbf{z}^*)$ be any optimal solution to $\dlpm$. By (b),
$\mathbf{z}^*=\mathbf{Z}^{\m{M}}(\mathbf{v}^*)$. Since the constraints \eqref{eq:dual-alg-balance} hold with equality at every dual optimum,
\[
 \mu_i v_i^*
 =\sum_{j\in\m{T}}\lambda_j
   \sum_{\emptyset\ne S\subseteq\m{M}_j}
   \gamma_S Z^{\m{M}}_{Sj}(\mathbf{v}^*)\,\mathbf{1}(i\in S)
 \qquad\forall i\in\m{T}.
\]
The right side is nonnegative, so $\mathbf{v}^*\ge0$. By the definition of $\mathbf{F}^{\m{M}}$, these equalities are precisely the restricted fixed-point equation
$\mathbf{F}^{\m{M}}(\mathbf{v}^*)=\mathbf{v}^*$. Therefore, \Cref{prop:fixed-point} gives
$\mathbf{v}^*=\mathbf{v}^{\m{M}}$, and consequently
$\mathbf{z}^*=\mathbf{Z}^{\m{M}}(\mathbf{v}^{\m{M}})$. The optimal dual solution is thus unique. Finally, strong duality yields
\[
 \plpm=\dlpm=\sum_{i\in\m{T}}\lambda_i v^{\m{M}}_i,
\]
which finishes the proof.
\end{proof}

\subsubsection{Proofs from \Cref{sec:primal-is-dual}}
\label{apx:proofs-primal-is-dual}

\primaldualpolicies*

\begin{proof}
Since $(\bn^*,\bx^*)$ is a suitable optimal solution to $\plpm$, \Cref{thm:dual-suitability} shows that $\dlpm$ has a unique optimal solution $(\mathbf{v}^{\m{M}},\mathbf{z}^*)$, with
$\mathbf{z}^*=\mathbf{Z}^{\m{M}}(\mathbf{v}^{\m{M}})$. Fix an arriving type $j\in\m{T}$, and write
$w_i=r_{ij}-v^{\m{M}}_i-v^{\m{M}}_j$ for the score of each $i\in\m{M}_j$. The telescoping identity \eqref{eq:telescope-M} gives
\[
 \sum_{S\ni i}z^*_{Sj}=u_{ij}(\mathbf{v}^{\m{M}})
 \qquad\forall i\in\m{M}_j,
\]
where the sum ranges over subsets of $\m{M}_j$ containing $i$.
For $y>0$, let $S_j(y)$ be the set defined in \eqref{eq:score-prefix} with $\mathbf{v}=\mathbf{v}^{\m{M}}$.
Whenever $w_i>0$, the set $S_j(w_i)$ contains $i$ and has positive dual weight. Indeed, the minimum residual score within this set is $w_i$, while the maximum outside it is strictly smaller, with the maximum and minimum over an empty set taken to be zero. Thus, \eqref{eq:replacement-Z} implies
\[
 z^*_{S_j(w_i)j}
 =Z^{\m{M}}_{S_j(w_i)j}(\mathbf{v}^{\m{M}})>0.
\]

Now let $\pi$ be consistent with $(\bn^*,\bx^*)$ in the sense of \Cref{def:associated-greedy}. Complementary slackness applies to this primal solution and the optimal dual solution $(\mathbf{v}^{\m{M}},\mathbf{z}^*)$. We verify the three conditions of \Cref{defn:dual-consistent-policy}.

First, if $i\porder{j}\emptyset$, consistency with $(\bn^*,\bx^*)$ implies that $(i,j)\in\m{M}$ and $x^*_{ij}>0$. The reward constraint for $(i,j)$ is therefore tight, giving
\[
 w_i=\sum_{S\ni i}z^*_{Sj}\ge0.
\]

Second, suppose that $(i,j)\in\m{M}$ and $w_i>0$. Since
$z^*_{S_j(w_i)j}>0$, complementary slackness gives
$S_j(w_i)\in\m{P}_j(\bn^*,\bx^*)$. Consistency then implies
$S_j(w_i)\in\prefixes{\porder{j}}$. Since $i\in S_j(w_i)$ and every member of a prefix is acceptable, we have $i\porder{j}\emptyset$.

Third, suppose that $(i,j),(i',j)\in\m{M}$ and $w_i>w_{i'}\ge0$. As above, $S_j(w_i)$ is a prefix of $\porder{j}$. It contains $i$ but not $i'$, since
$u_{i'j}(\mathbf{v}^{\m{M}})=w_{i'}<w_i$. By \Cref{def:prefix}, every member of this prefix ranks above every type outside it, so $i\porder{j}i'$. Since $j$ was arbitrary, $\pi$ is consistent with $\m{M}$ and $\mathbf{v}^{\m{M}}$.

%\rn{Please check this proof}
\end{proof}

\subsubsection{Proofs from \Cref{sec:connection-poly}}
\label{apx:proofs-connection-poly}

% [Version 5: the first part of this proof (on LP^ALG(M)) was moved to Appendix B.2 with Proposition 4.4.]
\polytimedual*
\begin{proof}
By the proof of \Cref{prop:polytime} in  \Cref{apx:proofs-suitable}, the ellipsoid method, together with an exact polynomial-time separation oracle, solves $\plpm$ in polynomial time. Although $\dlpm$ has a variable $z_{Sj}$ for each availability constraint, the ellipsoid method applied to $\plpm$ generates only polynomially many constraints, and a basic optimal solution of $\dlpm$ that is supported on these constraints---hence with polynomially many nonzero weights $z_{Sj}$---can be computed in polynomial time \citep[Theorem~6.5.14; see also Lemma~6.5.15 and Theorem~6.5.17]{GrotschelLovaszSchrijver1988}. We omit further details for brevity, as this is a standard result.
\end{proof}

% \section{Uniqueness of Fixed-point Restricted to $\m{M}$}
% \label{sec:uniqueness}
% \input{tex/apx-fixedpoint-uniqueness}

\section{Additional Examples}
\label{sec:additional-examples}
The following instance shows that the vector of values in an optimal solution to $\dlplong$ need not be a fixed point of $\mathbf{F}$, and a value-based policy constructed from these values can earn less than $\plpt$; see \Cref{sec:suitable-dual}.

\begin{example}
\label{ex:not_suitable_failure}
    Consider an instance with $\m{T}  = \{1,2,3\}$,
    \begin{align*}
        \lambda_1 = 1,\ \lambda_2 &= 0.1,\ \lambda_3 = 1,\\
        \mu_1 = 0.5,\ \mu_2 &=0.05,\ \mu_3 = 0.5,
    \end{align*}
    and rewards
    \begin{equation*}
        \mathbf{r} = \begin{pmatrix}
            0.1 & 5 & 0.6\\
            5 & 20 & 5.5\\
            0.6 & 5.5 & 1.1
            \end{pmatrix}.
    \end{equation*}
\end{example}

If we solve $\dlplong$, we get the vector of values $\mathbf{v}^* = (0.033, 4.965, 0.531)$. Every score $r_{ij}-v^*_i-v^*_j$ is positive, so the only policy $\pi = \{\succ_j\}_{j\in \m{T}}$ consistent with these values is: 
    \begin{align*}
        3 \succ_1 1 &\succ_1 2 \succ_1 \emptyset \\
        2 \succ_2 3 &\succ_2 1 \succ_2 \emptyset\\
        3 \succ_3 1 &\succ_3 2 \succ_3 \emptyset
    \end{align*}
 In this instance, we then have that\footnote{We can check that the exact value of the policy is $\frac{4267}{4060}$. The calculation is simplified by the fact that every match is acceptable, so there is never more than one available agent.}
    \[
    \plpt = \dlplong = \sum_{i \in \m{T}} \lambda_i v_i^* \approx 1.061 > 1.051 \approx \sum_{i,j \in \m{T}} r_{ij} x^\pi_{ij} = V(\pi).
    \]

    To understand why this policy underperforms, consider the optimal primal solution $(\mathbf{n}^*,\mathbf{x}^*)$ for $\plpt$, where $x^*_{32}=0$ and $x^*_{22}\approx0.0281$, compared with $x^\pi_{32}\approx0.0212$ and $x^\pi_{22}\approx0.0026$. Thus, $\pi$ makes the most valuable $(2,2)$ matches at less than one-tenth of the rate prescribed by the primal LP. A $(2,2)$ match requires a type-$2$ agent to remain available until another type-$2$ agent arrives. By matching type-$2$ agents with other types instead, $\pi$ loses opportunities to make these more valuable matches.

In the terminology of \Cref{sec:suitable}, the optimal primal solution $(\mathbf{n}^*,\mathbf{x}^*)$ is not suitable: type $3$ belongs to a tight availability set for arrivals of type $2$, even though $x^*_{32}=0$. In this instance, $\mathbf{v}^*$ is also not a fixed point of $\mathbf{F}$. \Cref{algo:suitability} removes this obstruction: it first removes $(3,2)$, then the four additional zero-rate pairs $(1,1),(1,2),(2,1),(2,3)$, and returns $\m{M}=\{(1,3),(2,2),(3,1),(3,3)\}$. The resulting program $\plpm$ has a suitable optimal solution, and its value satisfies $\plpm\approx1.083>\plpt$.

For this $\m{M}$, the vector of values in the optimal solution to $\dlpm$ is the restricted fixed point $\mathbf{v}^{\m{M}}\approx(0.074,6.336,0.376)$. Under the value-based policy consistent with $\m{M}$ and $\mathbf{v}^{\m{M}}$, an arriving type $1$ accepts only type $3$, an arriving type $2$ accepts only type $2$, and an arriving type $3$ ranks type $3$ above type $1$. This policy reserves type-$2$ agents for matches with one another and earns approximately $1.338$, well above $\plpm$.

\end{APPENDIX}

\end{document}